\def\blx@maxline{77}
\tikzset{
  photon/.style={decorate, decoration={snake}, draw=black},
  fermion/.style={draw=black, postaction={decorate},decoration={markings,mark=at position .55 with {\arrow{>}}}},
  vertex/.style={draw,shape=circle,fill=black,minimum size=5pt,inner sep=0pt},
particle/.style={thick,draw=black},
particle2/.style={thick,draw=blue},
avector/.style={thick,draw=black, postaction={decorate},
    decoration={markings,mark=at position 1 with {\arrow[black]{triangle 45}}}},
gluon/.style={decorate, draw=black,
    decoration={coil,aspect=0}}
 }
\NewDocumentCommand\semiloop{O{black}mmmO{}O{above}}
{%
\draw[#1] let \p1 = ($(#3)-(#2)$) in (#3) arc (#4:({#4+180}):({0.5*veclen(\x1,\y1)})node[midway, #6] {#5};)
}
\newcommand{\calT}{\mathcal{T}}
\newtheoremstyle{own}
  {3pt}
  {3pt}
  {\sffamily}
  {0pt}
  {\bfseries}
  {.}
  {5pt plus 1pt minus 1pt}
  {}
\theoremstyle{plain}
\newtheorem{thm}{Theorem}[subsection]
\theoremstyle{definition}
\newtheorem{defn}{Definition}[subsection]
\newtheorem{prop}{Proposition}[subsection]
\newtheorem{lem}{Lemma}[subsection]
\newtheorem{cor}{Corollary}[subsection]
\theoremstyle{definition}
\newtheorem{exe}{Exercise}[subsection]
\newtheorem{ex}{Example}[subsection]
\theoremstyle{remark}
\newtheorem{rem}{Remark}[subsection]
\newcommand{\R}{\mathbb{R}}
\newcommand{\A}{\mathcal{A}}
\newcommand{\N}{\mathbb{N}}
\newcommand{\Exp}{\textnormal{Exp}}
\newcommand{\sgn}{\textnormal{sgn}}
\newcommand{\dd}{{\mathrm{d}}}
\DeclareMathOperator{\Path}{\text{Path}}
\newcommand{\p}{\widehat{p}}
\newcommand{\x}{\widehat{x}}
\DeclareMathOperator{\ev}{ev}
\newcommand{\id}{\mathrm{id}}
\newcommand{\calA}{\mathcal{A}}
\newcommand{\calB}{\mathcal{B}}
\newcommand{\calH}{\mathcal{H}}
\newcommand{\calS}{\mathcal{S}}
\newcommand{\calL}{\mathcal{L}}
\newcommand{\calE}{\mathcal{E}}
\newcommand{\calP}{\mathcal{P}}
\def\gpd{\,\lower1pt\hbox{$\longrightarrow$}\hskip-.24in\raise2pt
               \hbox{$\longrightarrow$}\,}
\newcommand{\I}{\mathrm{i}}
\newcommand{\F}{\mathcal{F}}
\newcommand{\ee}{\textnormal{e}}
\newcommand{\calR}{\mathcal{R}}
\begin{document}

\title[Quantum Field Theory and Functional Integrals]{Quantum Field Theory and Functional integrals}
\author[N. Moshayedi]{Nima Moshayedi}
\address{Institut f\"ur Mathematik\\ Universit\"at Z\"uich\\ 
Winterthurerstrasse 190
CH-8057 Z\"urich}
\email[N.~Moshayedi]{nima.moshayedi@math.uzh.ch}

\maketitle

%
%

\begin{abstract}
These notes were inspired by the course ``Quantum Field Theory from a Functional Integral Point of View'' given at the University of Zurich in Spring 2017 by Santosh Kandel.
We describe Feynman's path integral approach to quantum mechanics and quantum field theory from a functional integral point of view, where the main focus lies in Euclidean field theory. The notion of Gaussian measure and the construction of the Wiener measure are covered. Moreover, we recall the notion of classical mechanics and the Schr\"odinger picture of quantum mechanics, where it shows the equivalence to the path integral formalism, by deriving the quantum mechanical propagator out of it. Additionally, we give an introduction to elements of constructive quantum field theory. 
\end{abstract}

\tableofcontents

\section{Introduction}
We want to give a review of quantum field theory, using perturbative methods with the notion of Feynman path integrals. In classical mechanics we consider an action functional\footnote{In the physics literature, it is common to denote the time-derivatives by ``dots'', i.e. $\frac{\dd}{\dd t}q(t)=\dot{q}(t)$.} $$S(q)=\int_{t_0}^{t_1} L(q(t),\dot{q}(t))\dd t,$$ 

\begin{center}
\begin{figure}[h!]

\begingroup%
  \makeatletter%
  \providecommand\color[2][]{%
    \errmessage{(Inkscape) Color is used for the text in Inkscape, but the package 'color.sty' is not loaded}%
    \renewcommand\color[2][]{}%
  }%
  \providecommand\transparent[1]{%
    \errmessage{(Inkscape) Transparency is used (non-zero) for the text in Inkscape, but the package 'transparent.sty' is not loaded}%
    \renewcommand\transparent[1]{}%
  }%
  \providecommand\rotatebox[2]{#2}%
  \ifx\svgwidth\undefined%
    \setlength{\unitlength}{114.30579976bp}%
    \ifx\svgscale\undefined%
      \relax%
    \else%
      \setlength{\unitlength}{\unitlength * \real{\svgscale}}%
    \fi%
  \else%
    \setlength{\unitlength}{\svgwidth}%
  \fi%
  \global\let\svgwidth\undefined%
  \global\let\svgscale\undefined%
  \makeatother%
  \begin{picture}(1,0.67732551)%
    \put(0,0){\includegraphics[width=\unitlength]{path_CL.eps}}%
    \put(0.56948513,0.21842524){\color[rgb]{0,0,0}\makebox(0,0)[lb]{\smash{$q$}}}%
    \put(0.09700054,0.01566911){\color[rgb]{0,0,0}\makebox(0,0)[lb]{\smash{$q(t_0)$}}}%
    \put(0.73540662,0.62451362){\color[rgb]{0,0,0}\makebox(0,0)[lb]{\smash{$q(t_1)$}}}%
  \end{picture}%
\endgroup%

\caption{The path of least action, i.e. the solution to $\delta S=0$, between two points $x=q(t_0)$ and $y=q(t_1)$ in space-time.}
\label{least_action_path}
\end{figure}
\end{center}

where $L(q,\dot{q})=\frac{1}{2}m\|\dot{q}\|^2-V(q)$ is called the \emph{Lagrangian} function of the paths $q\colon [t_0,t_1]\to\R^n$ with some function $V\in C^\infty(\R^n)$ depending on $q$, called the \emph{potential energy}. We denote by $\Path_{(x,y)}^{[t_0,t_1]}(\R^n)$ the space of all such paths with $q(t_0)=x$ and $q(t_1)=y$. By considering the methods of variational calculus, one can show that the solutions of the equation $\delta S=0$ for fixed endpoints (i.e. the extremal points of $S$) give us the classical trajectory of the particle with mass $m\in\R^+$. The equations following from $\delta S=0$ are called the Euler--Lagrange equations (EL), and they are exactly the equations of motion obtained from Newtonian mechanics. Netwon's equations of motion appear from the law $F=ma(t)=m\ddot{q}(t)$ (read it ``force equals mass times acceleration''). To see this, we recall that the \emph{momentum} in physics is given by $p=mv$, where $v$ denotes the velocity of the particle with mass $m$. Then, by the fact that $v=\dot{q}$, one considers the coordinates $\dot{q}=\frac{p}{m}$ and $\dot{p}=-\nabla V$, where $\nabla$ denotes the gradient operator. The \emph{Hamiltonian} approach considers the space with these coordinates to be the classical phase space (classical space of \emph{states}) given by $T^*\R^n\ni(q,p)$ endowed with a \emph{symplectic form}\footnote{we will not always write $\land$ between forms but secretly always mean the exterior product between them, i.e. for two differential forms $\alpha,\beta$, we have $\alpha\beta=\alpha\land \beta$.} given by $$\omega=\sum_{i=1}^n\dd q^{i}\dd p_i.$$ Moreover, one considers a \emph{total energy function} (or a \emph{Hamiltonian function}) $H(q,p)=\frac{\|p\|^2}{2m}+V$, where $V$ is again a potential energy function. In the physics literature, the first term of $H$ is called the \emph{kinetic energy}. This function is said to be \emph{Hamiltonian} if there is a vector field $X_H$ such that $$\iota_{X_H}\omega=-\dd H,$$ where $\iota$ denotes the \emph{contraction} map (also called \emph{interior derivative}). The vector field $X_H$ is called the \emph{Hamiltonian vector field} of $H$. In the case at hand, since $\omega$ is nondegenerate, every function is Hamiltonian and its Hamiltonian vector field is uniquely determined. For $H$ being the total energy function and the canonical symplectic form on the cotangent space, we get the following Hamiltonian vector field: A vector field on $T^*\R^n$ has the form general form $X=X^{i}\partial_{q^{i}}+X_i\partial_{p_i}$. Thus, applying the equation for being the Hamiltonian vector field of $H$ we get $-\dd H=X_i\dd q^{i}+X^{i}\dd p_i=\iota_X\omega$. Now since $\dd H=\partial_i V\dd q^{i}+\frac{p_i}{m}$, we get the coefficients of the vector field to be $X_i=-\partial_i V$ and $X^{i}=\frac{p_i}{m}$. Hence, we get the Hamiltonian vector field 
$$X_H=-\partial_i V\partial_{q_i}+\frac{p_i}{m}\partial_{p_i}.$$
Naturally, $X_H$ induces a \emph{Hamiltonian flow} $T^*\R^n\to T^*\R^n$.

\vspace{0.5cm}

An approach of quantization of the above is to associate to $T^*\R^n$ the space of square integrable functions $L^2(\R^n)$ on $\R^n$. The Hamiltonian flow can then be replaced by a linear map $$\ee^{\frac{\I}{\hbar}\widehat{H}}\colon L^2(\R^n)\to L^2(\R^n),$$ where $\widehat{H}:=-\frac{\hbar^2}{2m}\Delta+V$ denotes the \emph{Hamilton operator}, which is the canonical quantization of the classical Hamiltonian function, where $\Delta=\sum_{1\leq j\leq n}(\partial_{x^j})^2$ denotes the Laplacian. Note that the space of states is now given by a Hilbert space $\calH_0$ and the observables as operators on $\calH_0$. One can show that the action of this operator can be expressed as an integral of the form 
\[
\left(\ee^{\frac{\I}{\hbar}\widehat{H}}\psi\right)(x)=\int K(x,y)\psi(y)\dd y,
\]
for $\psi\in \calH_0$, where $K$ denotes the \emph{integral kernel} for the operator.  Feynman showed in \cite{Feynman1942} that this kernel (quantum mechanical \emph{propagator}) can be seen as a \emph{path integral}, which is given by 
\[
K(x,y)=\int_{\Path^{[t_0,t_1]}_{(x,y)}(\R^n)}\ee^{\frac{\I}{\hbar}S(q)}\mathscr{D}q.
\]
where $S$ denotes the action of the classical system and $\mathscr{D}$ a measure on the path space (see also figure \ref{QM_paths}). 

Since $\mathscr{D}$ is suppose to be a ``measure'' on an infinite-dimensional space, it is mathematically ill-defined. However, one can still make sense of such an integral in several ways; one of them is by considering its \emph{perturbative expansion} in formal power series with \emph{Feynman diagrams} as coefficients. This procedure is mathematically well-defined.
These notes are based on \cite{A,VB,NE,GJ,AG,BH,SJ,JL,HK,MP,RS,S,S2,T}.

\subsection*{Acknowledgements}
The author acknowledges partial support of the SNF grant No. 200020 172498/1 and by the Forschungskredit of the University of Zurich, grant no. FK-18-095. Moreover, the author wants to thank Santosh Kandel for sharing his lecture notes with him. 

\begin{center}
\begin{figure}[h!]

\begingroup%
  \makeatletter%
  \providecommand\color[2][]{%
    \errmessage{(Inkscape) Color is used for the text in Inkscape, but the package 'color.sty' is not loaded}%
    \renewcommand\color[2][]{}%
  }%
  \providecommand\transparent[1]{%
    \errmessage{(Inkscape) Transparency is used (non-zero) for the text in Inkscape, but the package 'transparent.sty' is not loaded}%
    \renewcommand\transparent[1]{}%
  }%
  \providecommand\rotatebox[2]{#2}%
  \ifx\svgwidth\undefined%
    \setlength{\unitlength}{131.29432958bp}%
    \ifx\svgscale\undefined%
      \relax%
    \else%
      \setlength{\unitlength}{\unitlength * \real{\svgscale}}%
    \fi%
  \else%
    \setlength{\unitlength}{\svgwidth}%
  \fi%
  \global\let\svgwidth\undefined%
  \global\let\svgscale\undefined%
  \makeatother%
  \begin{picture}(1,1.01600495)%
    \put(0,0){\includegraphics[width=\unitlength]{paths_QM.eps}}%
    \put(-0.00369337,0.01329828){\color[rgb]{0,0,0}\makebox(0,0)[lb]{\smash{$q(t_0)$}}}%
    \put(0.81556436,0.97862504){\color[rgb]{0,0,0}\makebox(0,0)[lb]{\smash{$q(t_1)$}}}%
  \end{picture}%
\endgroup%

\caption{Illustration of the fact that all the paths between $x=q(t_0)$ and $y=q(t_1)$ are taken into account.}
\label{QM_paths}
\end{figure}
\end{center}

\part{A Brief Recap of Classical Mechanics}
\section{Newtonian Mechanics with examples}

Consider a particle of mass $m$ moving in $\R^n$. The position of a praticle $x=(x_1,...,x_n)$ is a vector in $\R^n$. More precisely $x(t)=(x_1(t),...,x_n(t))$ is the position of the particle at time $t$. Let $v(t)$ and $a(t)$ denote the velocity and the acceleration at time $t$ respectively. Then 
\begin{align}
v(t)&=\dot{x}(t)=(\dot{x}_1(t),...,\dot{x}_n(t)),\\
a(t)&=\ddot{x}(t)=(\ddot{x}_1(t),...,\ddot{x}_n(t)),
\end{align}
where $\dot{x}_i(t)=\frac{\dd}{\dd t}x_i(t)$ and $\ddot{x}_i(t)=\frac{\dd}{\dd t}\dot{x}_i(t)=\frac{\dd^2}{\dd t^2}x_i(t)$. We recall Newton's second law of motion:
\begin{equation}
\label{Newton}
m\ddot{x}(t)=F(x(t),\dot{x}(t)),
\end{equation}
where $F$ is a force acting on the particle with mass $m$. Hence, the trajectories of motion are given by solutions of \eqref{Newton}. We note that \eqref{Newton} is a system of second order ordinary differential equations and is nonlinear in general\footnote{Nonlinearity depends on the nature of $F$}.

\begin{ex}[The free particle on $\R^n$] 
The force $F=0$, which implies that \eqref{Newton} becomes $\ddot{x}=0$, hence the trajectories of motion are given by $x(t)=at+b$ with $a,b\in\R^n$.
\end{ex}

\begin{ex}[Harmonic oscillator in one dimension ($n=1$)]
The force is given by $F=-Kx$ (Hooke's law), where $K=\omega^2m$ is the so-called \textsf{spring constant}. Then the equation of motion becomes $m\ddot{x}+Kx=0$. Hence the trajectories of motion are given by 
\begin{equation}
x(t)=a\cos(\omega t)+b\sin(\omega t),
\end{equation}
with $a,b\in\R$.
\end{ex}

Thus, in Newtonian mechanics, we are interested in \textsf{solving} the \textsf{equation} \eqref{Newton}. One way to try to solve \eqref{Newton} would be to try to find conserved quantities which may help simplifying the problem.

\subsection{Conservation of Energy}
Assume that the force $F$ depends only on the position and it has the form $F=-\nabla V(x)$, where $V:\R^n\to\R$ is some function. Such a force $F$ is called a \textsf{conservative} force and $V$ is called the \textsf{potential energy} of $F$. Since \eqref{Newton} is a second order differential equation, the state space or \textsf{phase space} of \eqref{Newton} is $\R^{2n}=\{(x,v)\mid x,v\in\R^n\}$. Define the \textsf{total energy function} $E$ by 
\begin{equation}
\label{total_energy}
E(x,v)=\frac{1}{2}m\|v\|^2+V(x),
\end{equation}
where $\|v\|^2=\langle v,v\rangle$ with the standard inner product $\langle\enspace,\enspace\rangle$ on $\R^n$. The main significance of the total energy function is that it is conserved, meaning that its value along any trajectory of motion is constant. 
\begin{prop}
\label{conservation}
Suppose a particle moving on $\R^n$ satisfying Newton's law of the form \eqref{Newton}. Then 
\begin{equation}
\frac{\dd}{\dd t}E(x(t),\dot{x}(t))=0,
\end{equation}
along any trajectory $x(t)$ satisfying \eqref{Newton}.
\end{prop}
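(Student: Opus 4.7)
The plan is to compute $\frac{\dd}{\dd t}E(x(t),\dot x(t))$ directly by the chain rule and then substitute Newton's equation \eqref{Newton} to cancel the two contributions. Since the force is assumed conservative with $F=-\nabla V$, the potential energy is available and smooth, so differentiation under the composition is justified without further regularity hypotheses beyond $x\in C^2$.

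First I would expand $E$ using \eqref{total_energy} as
\begin{equation*}
E(x(t),\dot x(t))=\tfrac{1}{2}m\langle \dot x(t),\dot x(t)\rangle+V(x(t)),
\end{equation*}
and apply the product rule to the kinetic term together with the chain rule to the potential term. This yields
\begin{equation*}
\frac{\dd}{\dd t}E(x(t),\dot x(t))=m\langle \dot x(t),\ddot x(t)\rangle+\langle \nabla V(x(t)),\dot x(t)\rangle.
\end{equation*}

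Next I would invoke the hypothesis that $x(t)$ satisfies \eqref{Newton} with the conservative force $F=-\nabla V$, namely $m\ddot x(t)=-\nabla V(x(t))$. Substituting this into the first term on the right-hand side gives
\begin{equation*}
m\langle \dot x(t),\ddot x(t)\rangle=\langle \dot x(t),-\nabla V(x(t))\rangle=-\langle \nabla V(x(t)),\dot x(t)\rangle,
\end{equation*}
by symmetry of the inner product, so the two contributions cancel exactly.

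There is essentially no obstacle here beyond careful bookkeeping: the only subtlety is making sure the chain rule on $V\circ x$ is written as $\langle\nabla V(x(t)),\dot x(t)\rangle$ with the correct sign, so that it matches the force term after Newton's law is applied. I would close by noting that, since the derivative vanishes identically along any trajectory, $E(x(t),\dot x(t))$ is constant in $t$, which is the conservation statement. This argument uses only that $F$ is conservative; for velocity-dependent or explicitly time-dependent forces one should not expect the same cancellation, which clarifies why the hypothesis $F=-\nabla V(x)$ is essential.
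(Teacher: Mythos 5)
Your proposal is correct and follows exactly the same route as the paper's proof: differentiate $E$ along the trajectory via the chain rule, obtaining $m\langle\dot x,\ddot x\rangle+\langle\nabla V(x),\dot x\rangle$, and then substitute Newton's law with $F=-\nabla V$ to cancel the two terms. The only difference is cosmetic (you write the computation in inner-product notation rather than componentwise sums), so nothing further is needed.
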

\begin{proof}
Along a solution $x(t)$ of \eqref{Newton} we have
\begin{align}
\begin{split}
\frac{\dd}{\dd t}E(x,v)&=\sum_{i=1}^n\frac{\partial E}{\partial x_i}\dot{x}_i+\sum_{i=1}^n\frac{\partial E}{\partial v_i}\dot{v}_i\\
&=\sum_{i=1}^n\frac{\partial V}{\partial x_i}v_i+m\sum_{i=1}^n v_i\dot{v_i}\\
&=(\nabla V+ma)v\\
&=(-F+ma)v\\
&=0,
\end{split}
\end{align}
\end{proof}
\begin{defn}[Constant of motion]
Let $f$ be a function on the phase space $\R^{2n}$. We say $f$ is a \textsf{constant of motion} if $\frac{\dd}{\dd t}f=0$ along $(x(t),\dot{x}(t))$, whenever $x(t)$ is a trajectory of motion.
\end{defn}
\begin{rem}
Constants of motion are conserved quantities.
\end{rem}
By proposition \ref{conservation}, the total energy is a constant of motion. Next, using an example, we investigate that the conservation of energy helps us to understand the solution of the equation of motion. Let us rewrite \eqref{Newton} in terms of first order equations
\begin{align}
\label{system_newton}
\begin{split}
\frac{\dd}{\dd t}x_i(t)&=v_i(t),\hspace{0.3cm}i=1,2,...,n\\
\frac{\dd}{\dd t}v_i(t)&=\frac{1}{m}F_i(x(t)),\hspace{0.3cm}i=1,2,...,n
\end{split}
\end{align}
For simplicity, assume $n=1$. Hence we have 
\begin{equation}
\label{first_ord}
\begin{split}
\frac{\dd}{\dd t}x(t)&=v(t),\\
\frac{\dd}{\dd t}v(t)&=\frac{1}{m}F(x(t))
\end{split}
\end{equation}
By conservation of energy, we know that $\frac{\dd}{\dd t}E(x,v)=0$ along $(x(t),v(t))$, whenever $(x(t),v(t))$ satisfy \eqref{first_ord}. Let $E(x(t),v(t))=E_0$. Then 
\begin{equation}
\frac{1}{2}m\dot{x}(t)^2+V(x(t))=E_0,
\end{equation}
and thus 
\begin{equation}
\label{sol}
\dot{x}(t)=\pm\sqrt{\frac{2(E_0-V(x(t))}{m}},
\end{equation}
which can be solved using separation of variables. From this example, we learned that the conservation of energy helps us simplify the given system of equation in the one dimensional case (previous example), we were able to reduce the second order equation into a first order equation and even solve the equation. A general ``mantra'' is: \textsf{the knowledge of conserved quantities helps to simplify the equation of motion}.

\section{Hamiltonian Mechanics}

\subsection{The general formulation}
Hamiltonian mechanics gives a systematic approach to understand conserved quantities. Consider a particle moving in $\R^n$. The idea is to think of the total energy as a function of position and momentum rather than a function of position and velocity:
\begin{equation}
\label{Hamilton}
H(x,p)=\frac{1}{2m}\sum_{j=1}^np_j^2+V(x),
\end{equation}
where $p_j=m\dot{x}_j$. Now the system of equations \eqref{Hamilton} can be written as 
\begin{equation}
\label{canonical_eq}
\begin{split}
\frac{\dd}{\dd t}x_i(t)&=x_i(t)=\frac{1}{m}p_i=\frac{\partial H}{\partial p_i}\\
\frac{\dd}{\dd t}p_i(t)&=m\frac{\dd}{\dd t}x_i(t)=-\frac{\partial V}{\partial x_i}=-\frac{\partial H}{\partial x_i}.
\end{split}
\end{equation}
The equations of \eqref{canonical_eq}, i.e. 
\begin{equation}
\dot{x}_i=\frac{\partial H}{\partial p_i},\hspace{1cm}\dot{p}_i=-\frac{\partial H}{\partial x_i}
\end{equation}
are called \textsf{Hamilton's equations}. 
\subsection{The Poisson bracket}
The previous observation implies that in Hamiltonian mechanics we consider the phase space to be 
\[
\R^{2n}:=\{(x,p)\mid x,p\in \R^n\}.
\]
It turns out that $\R^{2n}$ has more structures. If $f$ and $g$ are smooth functions on $\R^{2n}$, one can define the \textsf{Poisson bracket} 
\begin{equation}
\label{P_bracket}
\{f,g\}:=\sum_{j=1}^n\left(\frac{\partial f}{\partial x_j}\frac{\partial g}{\partial p_j}-\frac{\partial g}{\partial x_j}\frac{\partial f}{\partial p_j}\right).
\end{equation}
\begin{exe}
Verify that the Poisson bracket satisfies the following properties. Let $f,g$ and $h$ be smooth function on $\R^{2n}$. Then 
\begin{enumerate}
\item{$\{f,g\}=-\{g,f\}$}
\item{$\{f,g+ch\}=\{f,g\}+c\{f,h\},\hspace{0.5cm}c\in\R$}
\item{$\{f,gh\}=\{f,g\}h+\{f,h\}g$}
\item{$\{f,\{g,h\}\}=\{\{f,g\},h\}+\{g,\{f,h\}\}$ (\textsf{Jacobi identity})}
\end{enumerate}
\end{exe}
\begin{ex}
Let $p_j$ and $x_j$ be momentum and position observables as images of the following maps respectively.
\begin{align}
\begin{split}
(x,p)&\longmapsto p_j\\
(x,p)&\longmapsto x_j.
\end{split}
\end{align}
Then $\{x_i,x_j\}=0=\{p_i,p_j\}$ and $\{x_i,p_j\}=\delta_{ij}$, where $\delta_{ij}$ denotes the Kronecker delta.
\end{ex}
Next we will see that we can use the Poisson bracket to describe the conserved quantities. For that we need the following proposition.
\begin{prop}
\label{time_ev}
Let $f\in C^\infty(\R^{2n})$. Then 
\begin{equation}
\frac{\dd}{\dd t}f=\{f,H\}
\end{equation}
along a solution of Hamilton's equations $\{(x(t),p(t))\}\subset \R^{2n}$.
\end{prop}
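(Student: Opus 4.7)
The plan is to prove this by a direct chain rule computation, substituting Hamilton's equations to convert time derivatives of $x_i$ and $p_i$ into partial derivatives of $H$, and then recognizing the resulting expression as the Poisson bracket.

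First, I would fix a trajectory $(x(t), p(t)) \subset \R^{2n}$ satisfying Hamilton's equations \eqref{canonical_eq}, and regard $f(x(t), p(t))$ as a composition of smooth maps $t \mapsto (x(t), p(t)) \mapsto f(x(t), p(t))$. Applying the chain rule yields
\begin{equation*}
\frac{\dd}{\dd t} f(x(t), p(t)) = \sum_{i=1}^n \frac{\partial f}{\partial x_i} \dot{x}_i(t) + \sum_{i=1}^n \frac{\partial f}{\partial p_i} \dot{p}_i(t).
\end{equation*}

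Next, I would substitute Hamilton's equations, replacing $\dot{x}_i$ with $\partial H/\partial p_i$ and $\dot{p}_i$ with $-\partial H/\partial x_i$. This gives
\begin{equation*}
\frac{\dd}{\dd t} f = \sum_{i=1}^n \left( \frac{\partial f}{\partial x_i} \frac{\partial H}{\partial p_i} - \frac{\partial f}{\partial p_i} \frac{\partial H}{\partial x_i} \right),
\end{equation*}
which matches the definition \eqref{P_bracket} of $\{f, H\}$ verbatim.

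There is no substantive obstacle here; the argument is essentially a one-line unwinding of definitions. The only thing worth emphasizing is that the identity holds \emph{along a solution} of Hamilton's equations, so the substitution of $\dot{x}_i$ and $\dot{p}_i$ is where the hypothesis is used. As a minor side remark, this also immediately recovers Proposition \ref{conservation} by taking $f = H$, since $\{H, H\} = 0$ by antisymmetry of the Poisson bracket, giving conservation of energy as a direct corollary.
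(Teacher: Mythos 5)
Your proof is correct; the paper leaves this proposition as an exercise, and your direct chain-rule computation followed by substitution of Hamilton's equations is exactly the intended argument. The closing remark that taking $f=H$ recovers conservation of energy is a nice observation consistent with the paper's surrounding discussion.
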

\begin{proof}
Exercise.
\end{proof}
\begin{cor}
Let $f\in C^\infty(\R^{2n})$. Then $f$ is conserved along solutions of Hamilton's equations iff $$\{f,H\}=0.$$
\end{cor}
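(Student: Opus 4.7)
The corollary is essentially an immediate unpacking of Proposition \ref{time_ev} combined with the definition of a constant of motion, so my plan is to treat it as a short two-sided equivalence built on top of that proposition.

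For the ``if'' direction, I would assume $\{f,H\}=0$ on all of $\R^{2n}$. Let $(x(t),p(t))$ be any solution of Hamilton's equations. By Proposition \ref{time_ev}, along this solution we have
\[
\frac{\dd}{\dd t} f(x(t),p(t)) \;=\; \{f,H\}(x(t),p(t)) \;=\; 0,
\]
so $f$ is constant along the trajectory. Since the trajectory was arbitrary, $f$ is a constant of motion in the sense of the earlier definition.

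For the ``only if'' direction, I would argue contrapositively or, more directly, pointwise. Suppose $f$ is conserved along every solution of Hamilton's equations. Fix an arbitrary point $(x_0,p_0)\in \R^{2n}$. By the standard existence theorem for ODEs (applied to the Hamiltonian system, which is smooth provided $H\in C^\infty$), there is a solution $(x(t),p(t))$ with initial condition $(x(0),p(0))=(x_0,p_0)$ defined on some open interval around $0$. Conservation of $f$ along this solution gives $\frac{\dd}{\dd t}f(x(t),p(t))\big|_{t=0}=0$, and Proposition \ref{time_ev} then identifies this derivative with $\{f,H\}(x_0,p_0)$. Since $(x_0,p_0)$ was arbitrary, $\{f,H\}\equiv 0$ on $\R^{2n}$.

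The only non-routine point is the ``only if'' direction, where one needs to know that through every phase-space point there actually passes a Hamiltonian trajectory; this is exactly the place where smoothness of $H$ (hence local Lipschitz continuity of the Hamiltonian vector field) is used to invoke Picard--Lindel\"of. Beyond that, everything reduces to Proposition \ref{time_ev}, so I would keep the written proof very short and simply cite that proposition in both directions.
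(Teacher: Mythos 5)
Your proof is correct and follows the same route as the paper: both directions reduce to Proposition \ref{time_ev} together with the definition of a constant of motion. You are in fact slightly more careful than the paper's one-line argument, since you make explicit that the ``only if'' direction needs a trajectory through each phase-space point (via existence of solutions for the smooth Hamiltonian vector field), a detail the paper leaves implicit.
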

\begin{proof}
By proposition \ref{time_ev} $\frac{\dd}{\dd t}f=\{f,H\}$ along solutions $(x(t),p(t))$ of Hamilton's equations. By definition, $f$ is conserved if $\frac{\dd}{\dd t}f=0$ iff $\{f,H\}=0$.
\end{proof}
\begin{rem}
Given any $f\in C^\infty(\R^{2n})$, we can define Hamilton's equations by 
\begin{equation}
\label{Hamiltonian_function}
\begin{split}
\dot{x}_i&=\frac{\partial f}{\partial p_i}\\
\dot{p}_i&=-\frac{\partial f}{\partial x_i}\\
i&=1,2,...,n
\end{split}
\end{equation}
\end{rem}
For the next remarks we assume familliarity with basic differential geometry notions such as vector fiedls, differential forms etc.
\begin{rem}
$\R^{2n}$ has a canonical symplectic structure $\omega=\sum_{i=1}^ndp_i\land dx_i$. Given $f\in C^\infty(\R^{2n})$ there exists a vector field $X_f$, called the \textsf{Hamiltonian vector field of $f$}, defined by 
\begin{equation}
\label{Hamiltonian_vector_field}
\omega(X_f,\enspace)=-\dd f
\end{equation}
The flow of $X_f$ is given by solutions of \eqref{Hamiltonian_function}. In this case, one can check that 
\begin{equation}
\label{symp_poisson}
\{f,g\}=\omega(X_f,X_g).
\end{equation}
This means that if $(N,\omega)$ is a symplectic manifold, then we can define the Poisson bracket of $f,g\in C^\infty(N)$ using \eqref{symp_poisson}.
\end{rem}
\begin{rem}
Let $f\in C^\infty(\R^{2n})$ and $X_f$ be the corresponding Hamiltonian vector field. The flow of $X_f$ (or in other words the solutions of \eqref{Hamiltonian_function}) defines one-parameter diffeomorphisms
\begin{align}
\begin{split}
\Phi^t_{X_f}:\R^{2n}&\longrightarrow \R^{2n}\\
(x,p)&\longmapsto \Phi^t_{X_f}(x,p)=(x(t),p(t)),
\end{split}
\end{align}
where $(x(t),p(t))$ satisfy Hamilton's equations with $x(0)=x$ and $p(0)=p$. Then, assuming that the flow is complete, we get 
\begin{enumerate}
\item{$\Phi^t_{X_f}$ preserves $\omega$ (i.e. $(\Phi^t_{X_f})^*\omega=\omega$). Such a map is called a \textsf{symplectomorphism}.}
\item{$\Phi^t_{X_f}$ preserves the volume form $v=\dd x_1\dd x_2\dotsm \dd x_n\dd p_1\dd p_2\dotsm \dd p_n$ (i.e. $(\Phi^t_{X_f})^*v=v$). This is known as \textsf{Liouville's theorem}.}
\end{enumerate}
\end{rem}
\begin{rem}
Let $f,g\in C^\infty(\R^{2n})$. Then $f$ is conserved along the solutions of Hamilton's equations of $g$ iff $\{f,g\}=0$ (This is an instance of \textsf{Noether's theorem}).
\end{rem}

\section{Lagrangian Mechanics}
There are two important points in this formalism:
\begin{itemize}
\item{Mechanics on a configuration space.}
\item{Basic theorems are invariant under actions of diffeomorphisms of the configuration space. It is useful to compute conserved quantities.}
\end{itemize}
\subsection{Lagrangian system}
Let $M$ be a smooth manifold (we will usually consider $M=\R^n$). A \textsf{Lagrangian system} with configuration space $M$ consists of a smooth real valued function $L:TM\times \R\to \R$, where $TM$ denotes the tangent bundle of $M$ (e.g. if $M=\R^n$, then $T\R^n=\R^n\times\R^n=\{(x,v)\}$). $L$ is called the \textsf{Lagrangian function} or simply \textsf{Lagrangian}. Lagrangian mechanics uses special ideas such as the \textsf{least action principle} from calculus of variation. Let $x_0,x_1\in M$ and $P(M,x_0,x_1):=\{\gamma:[t_0,t_1]\subset\R\to M\mid \gamma(t_0)=x_0,\gamma(t_1)=x_1\}$, which is the space of paramterized paths joining $x_0$ to $x_1$.
\begin{defn}[Action functional]
The \textsf{action functional} $S:P(M,x_0,x_1)\to\R$ of the Lagrangian system $(M,L)$ is defined by 
\begin{equation}
\label{action}
S(\gamma)=\int_{t_0}^{t_1}L(\gamma(t),\dot{\gamma}(t),t)\dd t.
\end{equation}
\end{defn}
From now on we take $M=\R^n$. We are interested in understanding ``critical points'' of $S$. Let $h:[t_0,t_1]\to \R^n$ be such that $\gamma+h\in P(\R^n,x_0,x_1)$ and $h(t_0)=h(t_1)=0$. We think of $h$ as a small variation of $\gamma\in P(\R^n,x_0,x_1)$. Then, if we change $\gamma(t)$ by $h$, we get 
\begin{equation}
S(\gamma+\varepsilon h)=\int_{t_0}^{t_1}L(\gamma(t)+\varepsilon h(t),\dot{\gamma}(t)+\varepsilon h(t),t)\dd t,
\end{equation}
which needs to be extremal with respect to the parameter $\varepsilon$. Hence 
\begin{equation}
\frac{\dd}{\dd\varepsilon}S(\gamma+\varepsilon h)=\int_{t_0}^{t_1}\left(\frac{\partial L}{\partial \gamma}h+\frac{\partial L}{\partial \dot{\gamma}}\dot{h}\right)\dd t=0.
\end{equation}
For the second part, we use integration by parts, which gives
\begin{equation}
\int_{t_0}^{t_1}\frac{\partial L}{\partial \dot{\gamma}}\dot{h}(t)\dd t=\underbrace{\frac{\partial L}{\partial \dot{\gamma}}h\Big|_{t_0}^{t_1}}_{=0}-\int_{t_0}^{t_1}\frac{\dd}{\dd t}\frac{\partial L}{\partial \dot{\gamma}}h(t)\dd t.
\end{equation}
The last term remains and by the product rule we get 
\begin{equation}
\int_{t_0}^{t_1}\left(\frac{\partial L}{\partial \gamma}-\frac{\dd}{\dd t}\frac{\partial L}{\partial \dot{\gamma}}\right)h(t)\dd t=0.
\end{equation}
\begin{defn}[Extremal point/Critical point]
An \textsf{extremal (or critical) point} of $S$ is some $x\in P(\R^n,x_0,x_1)$ such that 
\begin{equation}
\int_{t_0}^{t_1}\left(\frac{\partial L}{\partial x}-\frac{\dd}{\dd t}\frac{\partial L}{\partial \dot{x}}\right)h\dd t=0
\end{equation}
along $x$ for all paths $h$ such that $h(t_0)=h(t_1)=0$.
\end{defn}
\begin{thm}
A path $x \in P(\R^n,t_0,t_1)$ is an extremal of $S$ iff along $x$ we have
\begin{equation}
\label{EL1}
\frac{\partial L}{\partial x}-\frac{\dd}{\dd t}\frac{\partial L}{\partial \dot{x}}=0
\end{equation}
\end{thm}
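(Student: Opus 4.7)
The backward implication ($\Leftarrow$) is trivial: if the Euler--Lagrange expression vanishes identically along $x$, then the integrand in the displayed condition for an extremal is zero, so the integral is zero for every variation $h$.

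For the forward implication ($\Rightarrow$), the computation carried out just before the theorem has already done most of the work: the first variation of $S$ at $x$ in the direction of $h$ (with $h(t_0)=h(t_1)=0$) has been reduced via integration by parts to
\[
\frac{\dd}{\dd\varepsilon}\bigg|_{\varepsilon=0} S(x+\varepsilon h) = \int_{t_0}^{t_1}\left(\frac{\partial L}{\partial x} - \frac{\dd}{\dd t}\frac{\partial L}{\partial \dot{x}}\right)h(t)\,\dd t.
\]
So the plan is simply to invoke the fundamental lemma of the calculus of variations (du Bois--Reymond): if $f\colon[t_0,t_1]\to\R^n$ is continuous and $\int_{t_0}^{t_1}\langle f(t),h(t)\rangle\,\dd t = 0$ for every smooth $h$ with $h(t_0)=h(t_1)=0$, then $f\equiv 0$. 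Applying this componentwise to $f(t) = \frac{\partial L}{\partial x}(x(t),\dot x(t),t) - \frac{\dd}{\dd t}\frac{\partial L}{\partial \dot{x}}(x(t),\dot x(t),t)$ yields \eqref{EL1}.

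The only nontrivial step is the fundamental lemma itself, which I would establish by contradiction. Suppose some component $f_i$ is nonzero, say $f_i(t_*)>0$ for some $t_*\in(t_0,t_1)$. By continuity, $f_i>0$ on an open interval $(a,b)\subset(t_0,t_1)$ containing $t_*$. Choose a smooth nonnegative bump function $\varphi$ supported in $(a,b)$ with $\varphi(t_*)>0$, and let $h(t)=\varphi(t)\,e_i$, where $e_i$ is the $i$-th standard basis vector of $\R^n$. Then $h(t_0)=h(t_1)=0$, $h$ is smooth, and
\[
\int_{t_0}^{t_1}\langle f(t),h(t)\rangle\,\dd t = \int_a^b f_i(t)\varphi(t)\,\dd t > 0,
\]
contradicting the hypothesis. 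Hence $f\equiv 0$ on $[t_0,t_1]$.

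The only delicate point worth flagging is the smoothness assumption: the integration by parts step silently assumes that $\frac{\partial L}{\partial \dot x}$ evaluated along $x$ is differentiable in $t$, which is fine if we restrict attention to $C^2$ paths $x$ and a smooth $L$. Under those standing regularity assumptions, the two implications together give the claimed equivalence, and the theorem follows.
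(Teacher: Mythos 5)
Your proposal is correct and follows essentially the same route as the paper: the backward direction is immediate, and the forward direction reduces to the fundamental lemma of the calculus of variations, which the paper also proves by contradiction with a bump function supported where the (component of the) integrand is positive. Your componentwise treatment of general $n$ and the remark about the regularity needed for the integration by parts are minor refinements of the same argument.
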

The proof for this theorem follows from the following lemma. 
\begin{lem}
Let $f:[t_0,t_1]\to\R^n$ be a continuous path and 
\begin{equation}
\int_{t_0}^{t_1}fh\dd t=0
\end{equation}
for all continuous $h:[t_0,t_1]\to\R^n$ such that $h(t_0)=h(t_1)=0$. Then $f\equiv0$ on $[t_0,t_1]$.
\end{lem}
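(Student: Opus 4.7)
The plan is to prove the lemma by contraposition: assume $f$ is not identically zero on $[t_0,t_1]$ and then construct an explicit continuous test path $h$ vanishing at the endpoints for which the integral $\int_{t_0}^{t_1} f\cdot h\,\dd t$ is strictly positive, contradicting the hypothesis. Throughout I interpret $fh$ as the Euclidean inner product $\langle f(t),h(t)\rangle=\sum_{i=1}^n f_i(t)h_i(t)$, which is the meaning used in the variational computation preceding the lemma.

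Suppose $f\not\equiv 0$. Then there exists $t^*\in[t_0,t_1]$ and an index $i\in\{1,\dots,n\}$ with $f_i(t^*)\neq 0$. Replacing $f$ by $-f$ if necessary, we may assume $f_i(t^*)>0$; and by continuity of $f_i$ (and possibly shifting $t^*$ slightly into the open interval) there exist $a,b$ with $t_0<a<t^*<b<t_1$ and a constant $c>0$ such that $f_i(t)\geq c$ for all $t\in[a,b]$. This is the first key step, and it uses nothing beyond the continuity of $f$ and the definition of ``not identically zero''.

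Next I construct the test path. Define $h:[t_0,t_1]\to\R^n$ componentwise by $h_j\equiv 0$ for $j\neq i$ and
\[
h_i(t)=\begin{cases}(t-a)(b-t), & t\in[a,b],\\ 0, & t\notin[a,b].\end{cases}
\]
Then $h$ is continuous on $[t_0,t_1]$, satisfies $h(t_0)=h(t_1)=0$, and is pointwise nonnegative with $h_i(t)>0$ on $(a,b)$. Therefore
\[
\int_{t_0}^{t_1} f(t)\cdot h(t)\,\dd t=\int_a^b f_i(t)\,h_i(t)\,\dd t\geq c\int_a^b (t-a)(b-t)\,\dd t=\frac{c(b-a)^3}{6}>0,
\]
contradicting the hypothesis $\int_{t_0}^{t_1}fh\,\dd t=0$. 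Hence $f\equiv 0$.

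The only mildly delicate point, and the one I would expect to be the main obstacle if any, is ensuring that the localization interval $[a,b]$ can be chosen strictly inside $(t_0,t_1)$ so that the bump $h$ still vanishes at both endpoints; this is handled by noting that if the only point where $f_i\neq 0$ were an endpoint, continuity would force $f_i$ to remain nonzero on a one-sided neighborhood which we can then trim to lie in the open interval. Everything else is a routine construction, and the argument is componentwise, so the vector-valued case reduces to the scalar one.
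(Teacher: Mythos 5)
Your proof is correct and follows essentially the same route as the paper's: localize to an interval where a component of $f$ is bounded away from zero and pair against a nonnegative bump supported there to get a strictly positive integral. You are somewhat more careful than the paper (explicit bump $(t-a)(b-t)$, explicit treatment of the vector-valued case and of the endpoint issue), but the underlying idea is identical.
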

\begin{proof}
For simplicity assume $n=1$, i.e. $f:[t_0,t_1]\to \R$ and $h:[t_0,t_1]\to\R$. By contradiction assume there is some $t\in[t_0,t_1]$ such that $f(t)>0$. Then by continuity there is some $\delta >0$ such that $f>0$ on $(t-\delta,t+\delta)$. Let $h$ be a continuous function on $[t_0,t_1]$ such that $h$ vanishes outside $(t-\delta,t+\delta)$ but $h>0$ on $(t-\delta/2,t+\delta/2)$. Then 
\[
\int_{t_0}^{t_1}fh\dd t>0,
\]
which is a contradiction.
\end{proof}
\begin{defn}[Euler-Lagrange equations]
The equations 
\begin{equation}
\label{EL2}
\frac{\partial L}{\partial x}-\frac{\dd}{\dd t}\frac{\partial L}{\partial \dot{x}}=0
\end{equation}
are called the \textsf{Euler-Lagrange (EL) equations} of $S(x)$.
\end{defn}
\begin{cor}
A path $x \in P(\R^n,x_0,x_1)$ is an extremal of $S$ iff it satisfies the Euler-Lagrange equations.
\end{cor}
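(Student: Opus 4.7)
The corollary is essentially a tautological restatement of the preceding theorem once one invokes the just-given definition of the Euler--Lagrange equations. The plan is therefore very short: appeal to the theorem, which asserts that $x\in P(\R^n,x_0,x_1)$ is an extremal of $S$ iff $\frac{\partial L}{\partial x}-\frac{\dd}{\dd t}\frac{\partial L}{\partial \dot{x}}=0$ holds pointwise along $x$, and then identify this pointwise equation with the EL equations by definition \eqref{EL2}. So the proof is essentially a single sentence chaining the theorem with the definition.

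If instead one wants a self-contained proof that bypasses the statement of the theorem, the approach is to redo the variational computation that already appears in the text leading up to the theorem. First, given $x\in P(\R^n,x_0,x_1)$ and a variation $h\colon[t_0,t_1]\to\R^n$ with $h(t_0)=h(t_1)=0$, differentiate $S(x+\varepsilon h)$ under the integral sign at $\varepsilon=0$ to obtain
\[
\frac{\dd}{\dd\varepsilon}\bigg|_{\varepsilon=0} S(x+\varepsilon h)=\int_{t_0}^{t_1}\!\left(\frac{\partial L}{\partial x}h+\frac{\partial L}{\partial \dot{x}}\dot{h}\right)\dd t.
\]
Second, integrate the $\dot{h}$ term by parts; the boundary term vanishes thanks to $h(t_0)=h(t_1)=0$, producing
\[
\int_{t_0}^{t_1}\!\left(\frac{\partial L}{\partial x}-\frac{\dd}{\dd t}\frac{\partial L}{\partial \dot{x}}\right)h\,\dd t=0.
\]
Third, apply the fundamental lemma of the calculus of variations (the preceding Lemma) to conclude that the bracketed quantity vanishes identically along $x$, which is precisely \eqref{EL2}. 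The converse direction is immediate: if \eqref{EL2} holds, then the integral above vanishes for every admissible $h$, so the first variation of $S$ at $x$ is zero and $x$ is extremal.

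There is no real obstacle. The only step with any content is the application of the fundamental lemma, and this requires only that the class of admissible variations $h$ be rich enough to localize the integrand; continuous variations vanishing at the endpoints suffice, and this is exactly what the preceding Lemma provides. All the heavy lifting has already been done before the corollary is stated, so the write-up will amount to invoking the theorem and the definition.
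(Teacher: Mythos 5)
Your proposal is correct and matches the paper, which gives no separate proof precisely because the corollary is the theorem restated through the definition of the EL equations \eqref{EL2}. Your optional self-contained argument (first variation, integration by parts with vanishing boundary term, then the fundamental lemma) is exactly the computation the paper carries out in the lead-up to the theorem, so nothing is missing.
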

\subsection{Hamilton's least action principle}
Recall that we defined the total energy function by 
\[
E(x,v)=\frac{1}{2}m\|v\|^2+V(x),
\]
where the first term is the kinetic energy and the second the potential energy. 
\begin{thm}
Define $L(\gamma(t),\dot{\gamma}(t),t)=\frac{1}{2}m\|\dot{\gamma}(t)\|^2-V(\gamma(t))$. Then an extremal path $\gamma(t)$ of $S$ solves the system \eqref{system_newton}.
\end{thm}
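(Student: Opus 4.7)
The plan is to invoke the Corollary immediately preceding the theorem, which identifies extremal paths of $S$ with solutions of the Euler--Lagrange equations \eqref{EL2}. The remaining task is then purely computational: substitute the specific Lagrangian $L = \tfrac12 m\|\dot\gamma\|^2 - V(\gamma)$ into \eqref{EL2}, recover Newton's second-order equation \eqref{Newton}, and re-express it as the first-order system \eqref{system_newton}.

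First, I would compute the two ingredients of the Euler--Lagrange equations. Writing $\gamma = (\gamma_1,\ldots,\gamma_n)$, the separation of $L$ into a purely-kinetic term (depending only on $\dot\gamma$) and a purely-potential term (depending only on $\gamma$) gives immediately
\[
\frac{\partial L}{\partial \gamma_i} = -\frac{\partial V}{\partial \gamma_i}, \qquad \frac{\partial L}{\partial \dot\gamma_i} = m\dot\gamma_i,
\]
so that $\frac{\dd}{\dd t}\frac{\partial L}{\partial \dot\gamma_i} = m\ddot\gamma_i$. Plugging into \eqref{EL2} yields $-\partial_i V(\gamma) - m\ddot\gamma_i = 0$ for each $i$, i.e. $m\ddot\gamma_i = -\partial_i V(\gamma) = F_i(\gamma)$, which is precisely Newton's law \eqref{Newton}.

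Finally, to match the first-order form of \eqref{system_newton}, I would introduce the auxiliary variable $v_i := \dot\gamma_i$. The first line of \eqref{system_newton} is then tautological, while the second line, $\dot v_i = \tfrac{1}{m}F_i(\gamma)$, is exactly the Newton equation derived above divided by $m$. This completes the identification of extremals of $S$ with solutions of \eqref{system_newton}.

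There is essentially no obstacle in this proof: once the Corollary is invoked, the argument is a two-line computation relying on the fact that $V$ depends only on position and the kinetic term only on velocity. The only minor point worth stressing is that the force is implicitly assumed conservative ($F = -\nabla V$), which is precisely the setting in which the Lagrangian $L = T - V$ is defined.
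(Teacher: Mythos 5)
Your proof is correct and is exactly the intended argument; the paper leaves this as an exercise, and your computation (invoking the preceding Corollary, differentiating the kinetic and potential terms separately to recover $m\ddot\gamma_i=-\partial_i V=F_i$, then rewriting as the first-order system via $v_i=\dot\gamma_i$) is the standard way to fill it in. The remark that the force must be conservative, $F=-\nabla V$, is the right caveat to flag, since \eqref{system_newton} is stated for a general $F$ but the Lagrangian form requires a potential.
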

\begin{proof}
Exercise.
\end{proof}
\begin{rem}
Even though only an extremal path of $S$ is involved here, it is called \textsf{Hamilton's least action principle}.
\end{rem}
Next we briefly investigate how Hamilton's equations and the EL equations are related. 

\section{The Legendre Transform}
Let $f$ be a convex function, i.e. $f''(x)>0$. Let $p\in \R$ and define $g(x)=px-f(x)$. Then $g'(x)=p-f'(x)$. Since $f$ is convex (i.e. $f'$ is increasing), there is a unique $x_0$ such that $g(x_0)=0$. We denote this $x_0$ by $x(p)$. Moreover, $f''(x)>0$ implies$g''(x)<0$, and hence $g$ has a maximum at $x(p)$. In this case the \textsf{Legendre transform} of $f$ is defined by 
\begin{equation}
\label{legendre}
\mathcal{L}f(p)=\max_xg(x)=\max_x(px-f(x)).
\end{equation}
\begin{ex}
Let $f(x)=x^2$, then $\mathcal{L}f(p)=\frac{1}{4}p^2$.
\end{ex}
\begin{ex}
Let $f(x)=\frac{1}{2}x^2$, then $\mathcal{L}f(p)=\frac{1}{2}p^2$.
\end{ex}
More generally, let $V$ be a finite dimensional vector space and $V^*$ be its dual and $f:V\to \R$ be a function. Then $\mathcal{L}f:V^*\to\R$ is defined by 
\begin{equation}
\label{legendre2}
\mathcal{L}f(p)=\max_{x\in V}(p(x)-f(x)),
\end{equation}
where $p(x)$ is the pairing between $x\in V$ and $p\in V^*$. If $f$ is convex, then $\mathcal{L}f$ exists.
\begin{exe}
Show that if $f$ is convex, then so is $\mathcal{L}f$. Moreover, show that $\mathcal{L}(\mathcal{L}f)=f$.
\end{exe}
\begin{ex}
Let $A$ be an $n\times n$ positiv definite matrix and $f:\R^n\to \R$, $f(x)=\frac{1}{2}\langle Ax,x\rangle$, where $\langle\enspace,\enspace\rangle$ is the standard inner product on $\R^n$. Then 
\[
\mathcal{L}f(\omega)=\frac{1}{2}\langle A^{-1}\omega,\omega\rangle.
\]
\end{ex}
Let us now consider a Lagrangian system $(\R^n,L)$, i.e. $L:\underbrace{\R^n\times\R^n\times\R}_{\ni(x,v,t)}\to\R$. Let $H(x,p,t)$ be the Legendre transform of $L$ in $v$-direction.
\begin{thm}
The system of EL equations are equivalent to Hamilton's equation with $H$ defined as above.
\end{thm}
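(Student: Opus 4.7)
The plan is to show the two systems of ODEs coincide by differentiating the Legendre transform carefully and invoking the defining extremality relation for the new momentum variable. The key observation is that once we identify $p_i := \partial L/\partial v_i$, the rest is just the chain rule.

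First I would make precise the passage from $L$ to $H$. By definition
\[
H(x,p,t) \;=\; \max_{v}\bigl(p\cdot v - L(x,v,t)\bigr),
\]
and the maximum is attained at the unique $v=v(x,p,t)$ solving $p_i = \partial L/\partial v_i$ (here I implicitly assume $L$ is convex and smooth enough in $v$, so that the implicit function theorem yields a smooth inverse; this nondegeneracy condition is the subtle point one must keep in mind). Thus
\[
H(x,p,t) \;=\; p_i\, v_i(x,p,t) - L\bigl(x,v(x,p,t),t\bigr).
\]

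Next I would compute the partial derivatives of $H$. Differentiating in $p_i$ and using $p_j = \partial L/\partial v_j$, the two terms involving $\partial v_j/\partial p_i$ cancel, leaving
\[
\frac{\partial H}{\partial p_i} \;=\; v_i.
\]
Differentiating in $x_i$, the same cancellation produces
\[
\frac{\partial H}{\partial x_i} \;=\; -\frac{\partial L}{\partial x_i}.
\]

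Finally I would compare the equations. A curve $x(t)$ satisfies the Euler--Lagrange equation iff, setting $p_i(t):=\partial L/\partial v_i\bigl(x(t),\dot x(t),t\bigr)$, one has $\dot p_i = \partial L/\partial x_i$. Combining this with the two identities above:
\begin{align*}
\dot x_i \;&=\; v_i \;=\; \frac{\partial H}{\partial p_i}, \\
\dot p_i \;&=\; \frac{\partial L}{\partial x_i} \;=\; -\frac{\partial H}{\partial x_i}.
\end{align*}
This is precisely Hamilton's system. Conversely, starting from Hamilton's equations, the first equation $\dot x_i = \partial H/\partial p_i = v_i$ identifies $p$ with $\partial L/\partial v$, and the second equation rearranges to $\frac{d}{dt}\frac{\partial L}{\partial \dot x_i} = \frac{\partial L}{\partial x_i}$, recovering EL. The main obstacle, as noted, is not the computation itself but verifying that the Legendre transform is well-defined, i.e. that $v\mapsto \partial L/\partial v$ is invertible along the trajectory in question; this is automatic in the physically relevant case $L=\tfrac12 m\|v\|^2 - V(x)$ used in the previous theorem, where $p=mv$.
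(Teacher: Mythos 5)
Your proof is correct and is precisely the standard argument this exercise intends: the paper leaves the proof as an exercise, and your computation of $\partial H/\partial p_i = v_i$ and $\partial H/\partial x_i = -\partial L/\partial x_i$ via cancellation of the $\partial v/\partial(\cdot)$ terms, followed by the identification $\dot p_i = \partial L/\partial x_i$ with the EL equation, is exactly the expected route. Your remark that invertibility of $v \mapsto \partial L/\partial v$ (guaranteed here by convexity of $L$ in $v$, as the paper assumes for the Legendre transform) is the one genuine hypothesis needed is well placed.
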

\begin{proof}
Exercise.
\end{proof}

\part{The Schr\"odinger Picture of Quantum Mechanics}
Classical physics is inconsistent at the level of atoms and molecules. For example, the hydrogen atom which is composed of two particles a proton of charge $+e$ and an electron of charge $-e$. If we follow classical mechanics, then the charged electron would radiate energy continuously causing the atom to collapse. But this is not true. We need \textsf{quantum mechanics} to explain the stability of molecules and atoms.\footnote{We refer to a standard physics book on quantum mechanics for the motivation leading to postulates of quantum mechanics.} 
\section{Postulates of Quantum Mechanics}
\subsection{First Postulate}
The pure states of a quantum mechanical system are rays in a Hilbet space $\mathcal{H}$, i.e. one dimensional subspaces of $\mathcal{H}$. The Hilbert space $\mathcal{H}$ is called the \textsf{space of states}. Define 
\[
P\mathcal{H}:=(\mathcal{H}\setminus\{0\})/(\mathbb{C}\setminus\{0\}).
\]
Let $\phi,\psi\in\mathcal{H}\setminus\{0\}$. We say $\phi\sim\psi$ iff there is an $\alpha\in\mathbb{C}\setminus\{0\}$ such that $\phi=\alpha\psi$. Then $P\mathcal{H}$ is the set of equivalence classes with respect to this equivalence relation.
\begin{lem}
There is a canonical bijection 
\[
\Big\{\text{1-dimensional subspaces of $\mathcal{H}$}\Big\}\longleftrightarrow P\mathcal{H}.
\]
\end{lem}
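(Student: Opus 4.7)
The plan is to exhibit explicit maps in both directions and verify they are mutually inverse, which makes the bijection canonical (i.e., defined without any choices that depend on a basis or other auxiliary structure).

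First I would define the forward map $\Phi$ from $1$-dimensional subspaces to $P\mathcal{H}$. Given a $1$-dimensional subspace $L \subset \mathcal{H}$, pick any nonzero vector $v \in L$ and set $\Phi(L) := [v] \in P\mathcal{H}$. To see this is well-defined, note that any other nonzero $w \in L$ satisfies $w = \alpha v$ for some $\alpha \in \mathbb{C}\setminus\{0\}$ (since $L$ is $1$-dimensional), so by definition of the equivalence relation $w \sim v$, hence $[v]=[w]$.

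Next I would define the inverse map $\Psi: P\mathcal{H} \to \{\text{1-dim subspaces of }\mathcal{H}\}$ by sending an equivalence class $[v]$ to the line $\mathbb{C}\cdot v := \{\alpha v \mid \alpha \in \mathbb{C}\}$. Again one must check well-definedness: if $[v]=[w]$ then $w = \alpha v$ for some $\alpha \in \mathbb{C}\setminus\{0\}$, and it is immediate that $\mathbb{C}\cdot v = \mathbb{C}\cdot w$ as subsets of $\mathcal{H}$. Since $v \neq 0$, the set $\mathbb{C}\cdot v$ is indeed a $1$-dimensional subspace.

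Finally I would verify that $\Phi$ and $\Psi$ are mutually inverse. For any $1$-dimensional subspace $L$ with chosen nonzero representative $v$, we have $\Psi(\Phi(L)) = \Psi([v]) = \mathbb{C}\cdot v = L$, where the last equality holds because every element of $L$ is a scalar multiple of $v$. Conversely, for any class $[v] \in P\mathcal{H}$ we have $\Phi(\Psi([v])) = \Phi(\mathbb{C}\cdot v) = [v]$, since $v$ is a valid nonzero representative of the line $\mathbb{C}\cdot v$. There is no real obstacle here; the only subtle point is the word ``canonical,'' which I would emphasize by noting that neither $\Phi$ nor $\Psi$ depends on any choice: $\Phi$ is independent of the chosen representative $v \in L$, and $\Psi$ is manifestly defined on equivalence classes.
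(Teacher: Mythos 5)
Your proposal is correct and follows essentially the same route as the paper: the paper defines the same forward map $L \mapsto [\phi]$ for a nonzero $\phi \in L$, checks well-definedness exactly as you do, and leaves the bijectivity as an easy check, which you simply carry out explicitly by exhibiting the inverse $[v] \mapsto \mathbb{C}\cdot v$. No issues.
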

\begin{proof}
Let $L$ be a one dimensional subspace of $\mathcal{H}$, and $\phi\in L$ such that $\phi\not=0$. Define 
\[
\beta(L)=[\phi],\hspace{1cm}[\phi]\in P\mathcal{H}.
\]
Let us check that $\beta$ is well defined. Let $\psi\in L\setminus\{0\}$. Then there is an $\alpha\in \mathbb{C}\setminus\{0\}$ such that $\psi=\alpha\phi$ (since $L$ is a one dimensional subspace). Thus $[\psi]=[\phi]$. This shows that $\beta$ is well defined. One can easily check that $\beta$ is a bijection.
\end{proof}
\begin{rem}
More precisely, the \textsf{space of pure states} is $P\mathcal{H}$.
\end{rem}
From now on when we say a state we mean $\psi\in\mathcal{H}$ such that $\|\psi\|=1$ (these are called \textsf{normalized states}). The concept of a state as a ray in $\mathcal{H}$ leads to the probability interpretation in quantum mechanics. This means that if a physical system is in the state $\psi$, then the probability that it is in the state $\phi$ is $\vert\langle \psi,\phi\rangle\vert^2$. Since we assume $\|\phi\|=1$, $\|\psi\|=1$, clearly $0\leq \vert\langle\psi,\phi\rangle\vert^2\leq 1$.
\subsection{Second Postulate}
Quantum mechanical observables are \textsf{self adjoint} operators on $\mathcal{H}$. Let $A$ be an observable. Then the \textsf{expectation} of $A$ in the state $\psi$ is defined as 
\begin{equation}
\label{expectation}
\langle A\rangle_\psi=\frac{\langle A\psi,\psi\rangle}{\langle\psi,\psi\rangle}
\end{equation}

\subsection{Third Postulate}
The Hamiltonian $\widehat{H}$ is the infinitesimal generator of the \textsf{unitary} group $U(t)=\ee^{-\frac{\I}{\hbar}t\widehat{H}}$. It describes the dynamics of the system. Let $\psi$ be a state. Then time evolution is described by the \textsf{Schr\"odinger equation}
\begin{equation}
\label{Schroedinger}
\I\hbar\frac{\dd}{\dd t}\psi(t)=\widehat{H}\psi(t).
\end{equation}
Using an Ansatz for equation \eqref{Schroedinger}, we get a solution of the form $\psi(t)=\ee^{-\frac{\I}{\hbar}t\widehat{H}}\psi(0)$. In the so-called \textsf{Heisenberg picture}, the Schr\"odinger equation takes the form 
\begin{equation}
\label{Heisenberg_picture}
\hbar\frac{\dd}{\dd t}A(t)=[\I\widehat{H},A(t)],
\end{equation}
where $A$ is an observable and $[\enspace,\enspace]$ is the commutator of operators, defined by $[A,B]=AB-BA
$.
\begin{lem}
Let $\phi(t)$ and $\psi(t)$ be solutions of \eqref{Schroedinger}, such that $\phi(0)=\phi$ and $\psi(0)=\psi$. Then 
\[
\langle \phi(t),\psi(t)\rangle=\langle\phi,\psi\rangle,\hspace{1cm}\forall t
\]
\end{lem}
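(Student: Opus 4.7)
The plan is to show that $\langle \phi(t),\psi(t)\rangle$ is constant in $t$ by differentiating and using the self-adjointness of $\widehat{H}$; evaluating at $t=0$ will then give the claim. This is the standard unitarity argument, transported to the Schrödinger picture.

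Concretely, I would first observe that since $\phi(t)$ and $\psi(t)$ satisfy \eqref{Schroedinger}, we have
\[
\frac{\dd}{\dd t}\phi(t)=-\frac{\I}{\hbar}\widehat{H}\phi(t),\qquad
\frac{\dd}{\dd t}\psi(t)=-\frac{\I}{\hbar}\widehat{H}\psi(t).
\]
Next I would differentiate the inner product (assuming the usual sesquilinearity: linear in the first slot, conjugate-linear in the second), applying the product rule:
\[
\frac{\dd}{\dd t}\langle\phi(t),\psi(t)\rangle
= \Big\langle\tfrac{\dd}{\dd t}\phi(t),\psi(t)\Big\rangle
+ \Big\langle\phi(t),\tfrac{\dd}{\dd t}\psi(t)\Big\rangle.
\]
Substituting the Schrödinger equation into both terms, the scalar $-\I/\hbar$ comes out of the first slot unchanged and out of the second slot complex-conjugated, so these contributions carry opposite signs:
\[
\frac{\dd}{\dd t}\langle\phi(t),\psi(t)\rangle
= -\frac{\I}{\hbar}\langle \widehat{H}\phi(t),\psi(t)\rangle
+ \frac{\I}{\hbar}\langle \phi(t),\widehat{H}\psi(t)\rangle.
\]
By the second postulate, $\widehat{H}$ is self-adjoint, so $\langle\widehat{H}\phi(t),\psi(t)\rangle=\langle\phi(t),\widehat{H}\psi(t)\rangle$, and the right-hand side vanishes. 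Thus $t\mapsto\langle\phi(t),\psi(t)\rangle$ is constant, and evaluating at $t=0$ yields $\langle\phi(t),\psi(t)\rangle=\langle\phi,\psi\rangle$ for all $t$.

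The only real subtlety lies in the self-adjointness step: for the computation to be rigorous one needs $\phi(t),\psi(t)$ to lie in the domain of $\widehat{H}$ for all $t$, which is exactly the condition ensuring that the formal solution $\ee^{-\frac{\I}{\hbar}t\widehat{H}}\psi(0)$ obtained from Stone's theorem produces a genuine strong solution of \eqref{Schroedinger}. Since the statement implicitly assumes such regularity (it speaks of ``solutions'' of the Schrödinger equation), the formal calculation above suffices. Note that the result also follows at once from $\psi(t)=U(t)\psi(0)$ with $U(t)=\ee^{-\frac{\I}{\hbar}t\widehat{H}}$ unitary, which could be used as an alternative one-line proof; I prefer the differentiation argument because it uses only the definition of the Schrödinger equation and the self-adjointness postulate directly, without invoking Stone's theorem.
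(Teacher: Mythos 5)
Your proof is correct, but it takes a different route from the one in the text. The paper's proof is the one-line argument you mention at the end: it writes $\phi(t)=\ee^{-\frac{\I}{\hbar}t\widehat{H}}\phi(0)$ and $\psi(t)=\ee^{-\frac{\I}{\hbar}t\widehat{H}}\psi(0)$ and simply invokes the unitarity of $\ee^{-\frac{\I}{\hbar}t\widehat{H}}$, which is supplied by Stone's theorem (stated earlier in the same part of the notes). Your differentiation argument instead uses only the differential equation and the self-adjointness of $\widehat{H}$, which is more self-contained and arguably more illuminating about \emph{why} unitarity holds; the price, as you correctly note, is that you must worry about whether $\phi(t)$ and $\psi(t)$ stay in $D(\widehat{H})$ so that the product rule and the identity $\langle\widehat{H}\phi(t),\psi(t)\rangle=\langle\phi(t),\widehat{H}\psi(t)\rangle$ are legitimate — and the cleanest way to guarantee that regularity is again Stone's theorem, so you do not entirely escape it. Two small remarks: the self-adjointness of $\widehat{H}$ is really part of the third postulate (it is the generator of the unitary group $U(t)$), not the second, though the distinction is cosmetic here; and the paper's convention for the inner product is conjugate-linear in the \emph{first} slot (see the computation $\langle f,Xg\rangle=\int\overline{f(x)}xg(x)\,\dd x$), which flips the signs of your two terms but of course they still cancel, so nothing breaks.
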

\begin{proof}
We have 
\begin{align}
\phi(t)&=\ee^{-\frac{\I}{\hbar}t\widehat{H}}\phi(0),\\
\psi(t)&=\ee^{-\frac{\I}{\hbar}t\widehat{H}}\psi(0),
\end{align}
and since $\ee^{-\frac{\I}{\hbar}t\widehat{H}}$ is a unitary operator, we get the result.
\end{proof}
\subsection{Summary of CM and QM}
The following should summarize the differences of classical and quantum mechanics. 
\begin{center}
    \begin{tabular}{ | l | p{5cm} | p{5cm} |}
    \hline
     & {\bf Classical Mechanics} & {\bf Quantum Mechanics}  \\ \hline
    State space & $T^*M$ (cotangent bundle) & $P\mathcal{H}$, where $\mathcal{H}$ is a Hilbert space  \\ \hline
    Observables & $C^\infty(T^*M)$ (smooth functions on the cotagent bundle) & Self adjoint operators on $\mathcal{H}$  \\ \hline
    Dynamics & Described by Hamilton's equation associated to a Hamiltonian function $H\in C^\infty(T^*M)$ & Described by the Schr\"odinger equation associated to a quantum Hamiltonian operator $\widehat{H}$: \newline $\I\hbar\frac{\dd}{\dd t}\psi(t)=\widehat{H}\psi(t)$ \\
    \hline
    \end{tabular}
\end{center}

Next, we will define basic notations and concepts used to define quantum mechanical systems.\footnote{This is fairly standard. One can find them in any text book about functional analysis.}

\section{Elements of Functional Analysis}
Let $\mathcal{H}$ be a Hilbert space (we always assume it is seperable, i.e. there exists a basis). An operator in $\mathcal{H}$ is a pair $(A,D(A))$ where $D(A)$ is a subspace of $\mathcal{H}$, called the \textsf{domain} of $A$, and $A:D(A)\to \mathcal{H}$ is a linear map. We can always assume that $D(A)$ is dense in $\mathcal{H}$.
\begin{defn}[Bounded operator]
A linear map $A:D(A)\to\mathcal{H}$ is called \textsf{bounded} if there exists some $\varepsilon>0$ such that for all $\psi \in D(A)$
\[
\|A\psi\|\leq \varepsilon\|\psi\|.
\]
Otherwise, we say $A$ is \textsf{unbounded}.
\end{defn}
\begin{rem}
If $A$ is bounded, $A$ can be always extended to a bounded operator $\widetilde A:\mathcal{H}\to \mathcal{H}$. Hence, when we talk about bounded operator, we always consider $A:\mathcal{H}\to\mathcal{H}$. 
\end{rem}
Let $A:\mathcal{H}\to\mathcal{H}$ be a bounded operator. Then there is a unique operator $A^*:\mathcal{H}\to\mathcal{H}$ such that 
\[
\langle \phi,A\psi\rangle=\langle A^*\phi,\psi\rangle,\hspace{0.5cm}\forall \phi,\psi\in \mathcal{H}.
\]
\begin{defn}[Adjoint/Self adjoint operator]
We call $A^*$ the \textsf{adjoint} of $A$. Moreover, a bounded operator $A:\mathcal{H}\to\mathcal{H}$ is called \textsf{self adjoint} if $A^*=A$.
\end{defn}
\begin{ex}
Let $\mathcal{H}=L^2([0,1])$ and $X:\mathcal{H}\to\mathcal{H}$, $(Xf)(x)=xf(x)$. Then 
\[
\|Xf\|^2=\int_0^1x^2\vert f(x)\vert^2\dd x\leq \int_0^1\vert f(x)\vert^2\dd x=\|f\|^2,
\]
which implies that $\|Xf\|\leq \|f\|$ and thus $X$ is bounded. Let now $f,g\in L^2([0,1])$. Then 
\begin{equation}
\langle f,Xg\rangle=\int_0^1\overline{f(x)}xg(x)\dd x
=\int_0^1\overline{xf(x)}g(x)\dd x
=\langle Xf,g\rangle,
\end{equation}
and thus $A^*=A$. Hence $A$ is self adjoint. 
\end{ex}
\subsection{Unbounded operators}
\begin{ex}
Let $\mathcal{H}=L^2(\R)$. Let $X$ be the multiplication operator like before and define its domain $D(X)=\{\phi\in L^2(\R)\mid x\phi(x)\in L^2(\R)\}$. We claim that 
\begin{enumerate}
\item{$D(X)$ is dense in $L^2(\R)$.}
\item{$X$ is unbounded.}
\end{enumerate}
Let $\phi\in L^2(\R)$. Define $\phi_n=\phi\chi_{[-n,n]}$, where $\chi$ denotes the characteristic function. Then it is clear that $x\phi_n\in L^2(\R)$ and by the dominated convergence theorem $\phi_n\xrightarrow{n\to\infty} \phi$, in $L^2(\R)$. This proves $(1)$. To see that $X$ is unbounded, consider $\phi_n=\frac{1}{\sqrt{n}}\chi_{[0,n]}$, then $\|\phi_n\|=1$ for all $n$, but 
\[
\|X\phi_n\|^2=\frac{1}{n}\int_0^1x^2\dd x=\frac{n^2}{3}\xrightarrow{n\to\infty}\infty.
\]
Thus $X$ is unbounded, proving $(2)$.
\end{ex}

\subsection{Adjoint of an unbounded operator}
Let $A$ be an unbounded operator in $\mathcal{H}$ with domain $D(A)$. Define $D(A^*)=\{\phi\in\mathcal{H}\mid\langle \phi,A \enspace\rangle \text{ is a bounded linear functional on $D(A^*)$}\}$. Using \textsf{Riesz's theorem}, one can show that if $\phi\in D(A^*)$, then there is a unique $\psi\in\mathcal{H}$ such that 
\[
\langle \psi,\chi\rangle=\langle \phi,A\chi\rangle,\hspace{0.5cm}\forall\chi\in D(A).
\]
We define $A^*\phi=\psi$.
\begin{defn}[Symmetric operator]
Let $A$ be an unbounded operator with $D(A)$. We say $A$ is \textsf{symmetric} if 
\[
\langle \phi,A\psi\rangle=\langle A\phi,\psi\rangle,\hspace{0.5cm}\forall \phi,\psi\in D(A).
\]
Moreover, $A$ is self adjoint if $D(A)=D(A^*)$ and $A^*\phi=A\phi$.
\end{defn}
\begin{exe}
Show that if $A$ is symmetric, then $D(A)\subseteq D(A^*)$. Hence, $A$ is self adjoint iff $A$ is symmetric and $D(A)=D(A^*)$.
\end{exe}
\begin{exe}
Let $\mathcal{H}=L^2(\R)$ and $V:\R\to \R$ be a measurable map. Define the domain $$D(V(x))=\{\phi\in L^2(\R)\mid V(x)\phi(x)\in L^2(\R)\}$$ for the operator 
\begin{align}
V(X):D(V(X))&\longrightarrow L^2(\R)\\
\phi&\longmapsto V(x)\phi
\end{align}
\begin{prop}
$V(X)$ is self adjoint.
\end{prop}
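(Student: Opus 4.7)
The plan is to verify the three ingredients packaged in the exercise just above the proposition: density of the domain, symmetry, and the equality $D(V(X)) = D(V(X)^*)$. The only non\ndash trivial step is the last one; the first two are essentially bookkeeping.

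\emph{Step 1: density of $D(V(X))$.} Since $V$ is measurable and real valued, it is finite almost everywhere, so the level sets $E_n := \{x \in \R \mid \abs{V(x)} \leq n\}$ satisfy $E_n \subseteq E_{n+1}$ and $\bigcup_n E_n = \R$ up to a null set. For any $\phi \in L^2(\R)$ set $\phi_n := \phi \chi_{E_n}$; then $\abs{V\phi_n}^2 \leq n^2 \abs{\phi}^2$, so $\phi_n \in D(V(X))$, and $\phi_n \to \phi$ in $L^2(\R)$ by dominated convergence with dominating function $\abs{\phi}^2$.

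\emph{Step 2: symmetry.} For $\phi,\psi \in D(V(X))$, both $V\phi$ and $V\psi$ lie in $L^2(\R)$, and since $V$ is real\ndash valued,
\begin{equation*}
\langle \phi, V(X)\psi\rangle = \int_\R \overline{\phi(x)}\, V(x)\, \psi(x)\, \dd x = \int_\R \overline{V(x)\phi(x)}\, \psi(x)\, \dd x = \langle V(X)\phi, \psi\rangle.
\end{equation*}
In particular, the exercise preceding the proposition yields $D(V(X)) \subseteq D(V(X)^*)$ and $V(X)^*$ agrees with $V(X)$ on $D(V(X))$.

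\emph{Step 3: reverse inclusion $D(V(X)^*) \subseteq D(V(X))$.} This is the main step, and the idea is to test the adjoint against the truncated multipliers used in Step~1. Let $\phi \in D(V(X)^*)$ and set $\psi := V(X)^*\phi \in L^2(\R)$. For every $\eta \in L^2(\R)$ the function $\chi_{E_n}\eta$ lies in $D(V(X))$ (because $V\chi_{E_n}$ is bounded by $n$), so the definition of the adjoint gives
\begin{equation*}
\langle \phi, V(X)(\chi_{E_n}\eta)\rangle = \langle \psi, \chi_{E_n}\eta\rangle.
\end{equation*}
Rewriting the left\ndash hand side using that $V$ is real and that $V\chi_{E_n}\phi \in L^2(\R)$ yields $\langle V\chi_{E_n}\phi, \eta\rangle$, while the right\ndash hand side equals $\langle \chi_{E_n}\psi, \eta\rangle$. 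Since $\eta \in L^2(\R)$ is arbitrary, one concludes that $V\chi_{E_n}\phi = \chi_{E_n}\psi$ almost everywhere for every $n$. Passing to the limit $n \to \infty$ on the nested union $\bigcup_n E_n = \R$ (a.e.), one obtains $V\phi = \psi$ pointwise almost everywhere, so $V\phi \in L^2(\R)$ and therefore $\phi \in D(V(X))$ with $V(X)^*\phi = V(X)\phi$.

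\emph{Expected difficulty.} Steps~1 and~2 are mechanical. The one subtlety is in Step~3: one must avoid writing $\langle V\phi, \eta\rangle$ directly, because a priori $V\phi$ need not even be locally integrable. Inserting the cut\ndash off $\chi_{E_n}$ on the \emph{test} side of the inner product, rather than on $\phi$, is the move that makes every integral absolutely convergent, and it is where the realness of $V$ is essential in order to move $V$ from one side of the pairing to the other.
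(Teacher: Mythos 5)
Your proof is correct. The overall skeleton (density of the domain, symmetry, then the reverse inclusion $D(V(X)^*)\subseteq D(V(X))$) matches the paper's, but your execution of the crucial third step takes a genuinely different and tighter route. The paper extends the bounded functional $\psi\mapsto\langle\phi,V(X)\psi\rangle$ from $D(V(X))$ to all of $L^2(\R)$, invokes Riesz's theorem to produce $\chi\in L^2(\R)$ with $\langle\chi,\psi\rangle=\langle\phi,V(X)\psi\rangle$, and then simply asserts ``which shows that $\chi=V(x)\phi$ a.e.'' --- a step that is not justified, precisely because $V\phi$ is not yet known to be in $L^2$ (or even locally integrable), so one cannot freely identify the Riesz representative by inspection of the integrand. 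Your insertion of the cut-offs $\chi_{E_n}$ on the test side is exactly what closes that gap: testing against $\chi_{E_n}\eta$ keeps every pairing between genuine $L^2$ functions (since $|V\chi_{E_n}|\leq n$), yields the identity $V\chi_{E_n}\phi=\chi_{E_n}\psi$ a.e.\ for each $n$, and then the exhaustion $\bigcup_n E_n=\R$ delivers $V\phi=\psi\in L^2(\R)$ without ever invoking the extension-plus-Riesz detour. As a bonus, your Step 1 supplies the density argument that the paper dismisses as ``easy to check.'' The paper's route is shorter on the page; yours is the one that actually proves the a.e.\ identification and would survive scrutiny for a general real measurable $V$.
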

\begin{proof}
We need to check that $D(V(X))$ is dense in $L^2(\R)$. $V(X)$ is symmetric and $D(V(X))=D(V(X)^*)$. It is easy to check that $D(V(X))$ is dense in $L^2(\R)$. Since $V$ is a real valued function, $V(X)$ is symmetric as well. We only need to show $D(V(X)^*)\subseteq D(V(X))$. For this let $\phi\in D(V(X)^*)$. We want to show that $V(x)\phi(x)\in L^2(\R)$. Since $\phi\in D(V(X)^*)$, we get that $\psi\mapsto \langle \phi,V(X)\psi\rangle$ is a bounded linear functional on $D(V(X))$. In fact, it can be extended to a bounded linear functional on $L^2(\R)$ (since $D(V(X))$ is dense). Hence by Riesz's theorem there is a unique $\chi\in L^2(\R)$ such that 
\begin{equation}
\langle\chi,\psi\rangle=\langle\phi,V(X)\psi\rangle,\hspace{0.5cm}\forall\psi\in L^2(\R),
\end{equation}
thus 
\begin{equation}
\int_\R\overline{\chi(x)}\psi(x)\dd x=\int_\R\overline{\phi(x)}V(x)\psi(x)\dd x,\hspace{0.5cm}\forall\psi\in L^2(\R),
\end{equation}
and hence
\begin{equation}
\int_\R\overline{\chi(x)}\psi(x)\dd x=\int_\R\overline{\phi(x)V(x)}\psi(x)\dd x,\hspace{0.5cm}\forall\psi\in L^2(\R).
\end{equation}
which shows that $\chi=V(x)\phi$ a.e., and therefore $V(x)\phi\in L^2(\R)$. Hence $\phi\in D(V(X))$.
\end{proof}
\end{exe}
Similarly one can show that the operator $P$, defined by $P\psi(x)=-\I\hbar\frac{\dd}{\dd x}\psi(x)$, is a self adjoint operator with domain
\[
D(P)=\{\psi\in L^2(\R)\mid k\widehat{\psi}(k)\in L^2(\R)\},
\]
where 
\[
\widehat{\psi}(k)=\frac{1}{2\pi}\int_\R\ee^{-\I kx}\psi(x)\dd x
\]
is the \textsf{Fourier transform} of $\psi$. Next we mention two techincal results without proof.

\begin{thm}[Spectral theorem/ Functional calculus]
Let $A$ be a self adjoint operator on $\mathcal{H}$. Let $L(\mathcal{H})$ denote the space of bounded linear operators in $\mathcal{H}$. Then, there is a unique map
\[
\Big\{\text{Bounded measurable functions on $\R$}\Big\}\xrightarrow{\widehat{\phi}}L(\mathcal{H}),
\]
such that 
\begin{enumerate}
\item{$\widehat{\phi}$ is linear and $\widehat{\phi}(fg)=\widehat{\phi}(f)\widehat{\phi}(g)$ for all bounded measureable functions $f,g$ on $\R$.}
\item{$\widehat{\phi}(f)=(\widehat{\phi}(f))^*$}
\item{$\|\widehat{\phi}(h)\|\leq \|h\|_\infty$}
\item{If $h_n\xrightarrow{n\to\infty} x$ and for all $n$ we have $\vert h_n(x)\vert\leq \vert x\vert$, then for all $\psi\in D(A)$ $$\widehat{\phi}(h_n)\psi\xrightarrow{n\to\infty}A\psi.$$}
\item{$A\psi=\lambda\psi$ for $\lambda\in\mathbb{C}$.}
\end{enumerate}
\end{thm}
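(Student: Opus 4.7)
The plan is to build $\widehat{\phi}$ in three stages: first a continuous functional calculus for bounded self-adjoint operators, then an extension to bounded Borel functions via Riesz--Markov, and finally a reduction of the unbounded case to the bounded one. Assume first $A\in L(\mathcal{H})$. For a polynomial $p(\lambda)=\sum_k a_k\lambda^k$ set $\widehat{\phi}(p):=\sum_k a_kA^k$. The numerical identity $\|A^2\|=\|A\|^2$ for self-adjoint $A$, iterated, gives $r(A)=\|A\|$, and combining this with spectral mapping and the $C^*$-identity $\|p(A)\|^2=\|p(A)^*p(A)\|$ yields $\|p(A)\|=\sup_{\lambda\in\sigma(A)}|p(\lambda)|$. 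By Stone--Weierstrass, the assignment $p\mapsto p(A)$ then extends uniquely to an isometric $*$-homomorphism on $C(\sigma(A))$, delivering properties (1)--(3) on continuous $h$.

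Next, for each $\phi,\psi\in\mathcal{H}$ the map $f\mapsto\langle\phi,\widehat{\phi}(f)\psi\rangle$ is a bounded linear functional on $C(\sigma(A))$, so the Riesz--Markov theorem produces a unique complex Borel measure $\mu_{\phi,\psi}$ with $\langle\phi,\widehat{\phi}(f)\psi\rangle=\int f\,\dd\mu_{\phi,\psi}$. For a bounded Borel $h$ the form $(\phi,\psi)\mapsto\int h\,\dd\mu_{\phi,\psi}$ is sesquilinear and bounded by $\|h\|_\infty\|\phi\|\|\psi\|$, so Riesz representation yields a unique bounded operator which I declare to be $\widehat{\phi}(h)$; this delivers (3). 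Linearity (1) and the adjoint relation (2) are immediate from the defining measures, the bounded convergence statement (4) is dominated convergence in $\mu_{\phi,\psi}$, and for an eigenvector $A\psi=\lambda_0\psi$ one checks $\mu_{\psi,\psi}=\|\psi\|^2\delta_{\lambda_0}$, which gives (5).

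For the general unbounded case I would pass through the Cayley transform $U:=(A-\I)(A+\I)^{-1}$, which is a unitary (hence bounded normal) operator with $1\notin\sigma(U)$; an analogue of the bounded construction applied to $U$ produces a projection-valued measure on $S^1$, which pulls back through $z\mapsto\I(1+z)/(1-z)$ to a projection-valued measure $E$ on $\R$, and I would set $\widehat{\phi}(h):=\int_{\R}h(\lambda)\,\dd E(\lambda)\in L(\mathcal{H})$ whenever $h$ is bounded Borel. The convergence property (4) with $h_n(x)\to x$ and $|h_n(x)|\leq|x|$ then recovers $A\psi$ on its natural domain $D(A)=\{\psi:\int\lambda^2\,\dd\langle\psi,E(\lambda)\psi\rangle<\infty\}$ by dominated convergence. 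Uniqueness is forced on polynomials by (1) and (3), hence on $C(\sigma(A))$ by Stone--Weierstrass, and on bounded Borel functions by a monotone-class argument using (4). I expect the main obstacle to be verifying multiplicativity $\widehat{\phi}(h_1h_2)=\widehat{\phi}(h_1)\widehat{\phi}(h_2)$ on the Borel extension, since the Riesz--Markov step a priori only gives a linear assignment; the standard workaround is to verify it first when one factor is continuous (by approximating the other by polynomials in a dominated fashion via (4)), then bootstrap to both factors Borel.
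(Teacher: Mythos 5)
The paper states this result explicitly without proof (``Next we mention two technical results without proof''), so there is no in-text argument to compare against; your proposal supplies the standard construction, and as a sketch it is sound. The three-stage architecture --- polynomial calculus with $\|p(A)\|=\sup_{\sigma(A)}|p|$ via the $C^*$-identity and spectral mapping, Stone--Weierstrass extension to $C(\sigma(A))$, Riesz--Markov plus a bounded sesquilinear form to reach bounded Borel functions, and the Cayley transform to handle unbounded $A$ --- is exactly the route of Reed--Simon (Theorems VII.2 and VIII.5, which the paper cites in its bibliography). You also correctly identify the genuinely delicate point, namely multiplicativity of the Borel extension, and the standard one-factor-continuous-at-a-time bootstrap is the right fix. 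Your reading of the (garbled) item (5) as the eigenvector property $\widehat{\phi}(h)\psi=h(\lambda_0)\psi$, verified through $\mu_{\psi,\psi}=\|\psi\|^2\delta_{\lambda_0}$, is the intended one.

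One small repair is needed in your uniqueness argument for the unbounded case: you cannot say uniqueness is ``forced on polynomials by (1) and (3),'' because when $A$ is unbounded the nonconstant polynomials are not bounded measurable functions on $\R$ and hence lie outside the domain of $\widehat{\phi}$. The standard substitute is to show that (1)--(4) force $\widehat{\phi}\bigl((x-z)^{-1}\bigr)=(A-z)^{-1}$ for $\operatorname{Im} z\neq 0$ (using (4) with $h_n(x)=x(1+x^2/n)^{-1}$ or similar truncations to recover $A$ on $D(A)$), then apply Stone--Weierstrass to the $*$-algebra generated by the resolvent functions, which is dense in the continuous functions vanishing at infinity, and finish with the monotone-class argument you already describe. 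Alternatively, run the polynomial uniqueness argument for the Cayley transform $U$ on $S^1$, where polynomials in $z$ and $\bar z$ are bounded, and pull back. With that adjustment the proof is complete.
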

We can use this theorem to produce bounded operators from a self adjoint operator, e.g. let $f(x)=\ee^{\I tx}$. We can see that $f$ is bounded and measurable. Hence we can talk about $f(A)=\ee^{\I tA}$ as a bounded linear operator on $\mathcal{H}$.
\begin{thm}[Stone's theorem]
\label{Stone}
Let $A$ be a self adjoint operator on $\mathcal{H}$. Define $U(t)=\ee^{\I tA}$. Then 
\begin{enumerate}
\item{$U(t)$ is a unitary operator: $$\langle U(t),\phi,U(t)\psi\rangle=\langle \phi,\psi\rangle$$ for all $\phi,\psi\in\mathcal{H}$. Moreover, $U(t+s)=U(t)\circ U(s)$.}
\item{For $\phi\in\mathcal{H}$ and $t\to t_0$ we have that $U(t)\phi\to U(t_0)\phi$ in $\mathcal{H}$ (strong convergence)}
\item{The limit $\lim_{t\to 0}\frac{U(t)\psi-\psi}{t}$ exists in $\mathcal{H}$ for all $\psi\in D(A)$ and $$\lim_{t\to 0}\frac{U(t)\psi-\psi}{t}=\I A\psi.$$
(formally, this means $\frac{\dd}{\dd t}U(t)=\I A$)}
\item{Let $\psi\in\mathcal{H}$ such that the limit $\lim_{t\to 0}\frac{U(t)\psi-\psi}{t}$ exists. Then $\psi\in D(A)$.}
\end{enumerate}
\end{thm}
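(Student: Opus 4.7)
\emph{Overall approach.} My plan is to prove all four claims using the functional calculus $\widehat{\phi}$ from the preceding spectral theorem. For each $t\in\R$, the function $f_t(x):=\ee^{\I tx}$ is bounded and measurable, so $U(t):=\widehat{\phi}(f_t)\in L(\mathcal{H})$ is well defined. Part (1) is then purely algebraic: $f_tf_s=f_{t+s}$ combined with multiplicativity of $\widehat{\phi}$ gives the semigroup law $U(t)U(s)=U(t+s)$, while $\overline{f_t}=f_{-t}$ together with property (2) of the calculus yields $U(t)^*=U(-t)$; then $f_tf_{-t}\equiv 1$ implies $U(t)^*U(t)=\widehat{\phi}(1)=\Id$, i.e. unitarity.

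\emph{Parts (3) and (2).} For (3), fix any null sequence $t_n\to 0$ and set $h_n(x):=(\ee^{\I t_n x}-1)/(\I t_n)$. The elementary bound $|\ee^{\I s}-1|\le|s|$ gives $|h_n(x)|\le|x|$, while $h_n(x)\to x$ pointwise, so property (4) of the functional calculus produces $\widehat{\phi}(h_n)\psi\to A\psi$ for every $\psi\in D(A)$; multiplying by $\I$ yields $(U(t_n)\psi-\psi)/t_n\to\I A\psi$. Since the sequence was arbitrary, the continuous limit exists and equals $\I A\psi$. Part (2) follows by a density argument: the semigroup law and unitarity reduce strong continuity to $t_0=0$; for $\psi\in D(A)$ this is immediate from (3) via $\|U(s)\psi-\psi\|=|s|\cdot\|(U(s)\psi-\psi)/s\|\to 0$, and general $\phi\in\mathcal{H}$ is then handled by approximating $\phi$ by elements of $D(A)$ and using $\|U(s)\|=1$ in the triangle inequality.

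\emph{Part (4), the main obstacle.} This is the deepest step, because one must leverage self-adjointness to convert a purely analytic hypothesis into membership in $D(A)$. Assuming $L:=\lim_{t\to 0}(U(t)\psi-\psi)/t$ exists, the plan is a duality argument against $D(A)$: for every $\phi\in D(A)$, unitarity $U(t)^*=U(-t)$ lets the limit be transferred,
\begin{equation*}
\langle L,\phi\rangle=\lim_{t\to 0}\Big\langle\tfrac{U(t)\psi-\psi}{t},\phi\Big\rangle=\lim_{t\to 0}\Big\langle\psi,\tfrac{U(-t)\phi-\phi}{t}\Big\rangle=-\langle\psi,\I A\phi\rangle,
\end{equation*}
where the last step substitutes $s=-t$ and applies (3). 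Therefore $|\langle\psi,A\phi\rangle|=|\langle L,\phi\rangle|\le\|L\|\,\|\phi\|$, so $\phi\mapsto\langle\psi,A\phi\rangle$ is a bounded functional on $D(A)$, which is precisely the defining condition for $\psi\in D(A^*)$. Self-adjointness $D(A^*)=D(A)$ then gives $\psi\in D(A)$, completing the argument.
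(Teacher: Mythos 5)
The paper states Stone's theorem explicitly ``without proof,'' so there is no argument of record to compare yours against; judged on its own, your proof is correct and is the standard derivation of the easy direction of Stone's theorem from the bounded functional calculus. Parts (1)--(3) are clean: the only facts you use are multiplicativity, compatibility with conjugation, the normalization $\widehat{\phi}(1)=\Id$, and property (4) of the calculus applied to $h_n(x)=(\ee^{\I t_n x}-1)/(\I t_n)$, for which both required bounds hold ($h_n$ is bounded since $|\ee^{\I t_n x}-1|\le 2$, and $|h_n(x)|\le |x|$ from $|\ee^{\I s}-1|\le |s|$). Your part (4) is the right idea and is exactly adapted to the paper's definition of $D(A^*)$ as the set of $\psi$ for which $\chi\mapsto\langle\psi,A\chi\rangle$ is bounded on $D(A)$; the duality computation plus $D(A^*)=D(A)$ closes the argument, and the limit interchange is justified because part (3) gives norm convergence of $(U(-t)\phi-\phi)/t$ and the inner product is continuous.

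Two small points worth tightening. First, in (1) the identity $U(t)^*U(t)=\Id$ alone only shows $U(t)$ is an isometry; to get unitarity you should also record $U(t)U(t)^*=U(t)U(-t)=U(0)=\Id$, which follows from the same semigroup computation and gives surjectivity. Second, $\widehat{\phi}(1)=\Id$ does not literally follow from the properties as listed in the paper (multiplicativity only forces $\widehat{\phi}(1)$ to be a projection); it is part of the standard normalization of the functional calculus, and you should either cite it as such or note that the paper's statement of the spectral theorem is being read with that convention. Neither point is a genuine gap in the mathematics.
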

Moreover, if $U(t)$, for $t\in\R$, is a family of unitary operators such that $(1)$ and $(2)$ hold, then $U(t)=\ee^{\I t A}$ for some self adjoint operator $A$.
\begin{defn}[Strongly continuous one parameter unitary group]
A familiy $U(t)$ satisfying (1) and (2) of theorem \ref{Stone} is called \textsf{strongly continuous one parameter unitary group} and $A$ is called the infinitesimal generator.
\end{defn}

\begin{defn}[Resolvent]
Let $A$ be an operator with domain $D(A)$ and let $\lambda\in\mathbb{C}$. We say that $A$ is in the \textsf{resolvent} set $\rho(A)$ of $A$ if 
\begin{enumerate}
\item{$\lambda I-A\colon D(A)\to \mathcal{H}$ is bijective,}
\item{$(\lambda I-A)^{-1}$ is a bounded operator.}
\end{enumerate}
\end{defn}

\begin{defn}[Spectrum]
The spectrum $\sigma(A)$ of $A$ is defined by $\sigma(A):=\mathbb{C}\setminus \rho(A)$.
\end{defn}

One can actually check that if $\lambda$ is an eigenvalue of $A$, then $\lambda\in\sigma(A)$. We call the set of eigenvalues of $A$ the \textsf{point spectrum} of $A$.

Let $\A [0,1]$ denote the set of absolutely continuous $L^2$-functions on $[0,1]$. 

\begin{ex}
Consider the operator $T=\I\frac{\dd}{\dd x}$ on $L^2([0,1])$ with domain $D(T)=\A[0,1]$. Then $\sigma(T)=\mathbb{C}$ (just a differential equation).
\end{ex}

\begin{ex}
Consider the operator $T=\I\frac{\dd}{\dd x}$ with domain $D(T)=\{f\in \A[0,1]\mid f(0)=0\}$. We claim that $\rho(T)=\mathbb{C}$. 
\end{ex}
\begin{proof} Let $\lambda\in\mathbb{C}$ and define 
$$S_\lambda g(x):=\I\int_0^x \ee^{-\I\lambda(x-s)}g(s)\dd s.$$
One can show that $(T-\lambda I)S_\lambda g=g$ for all $g\in L^2([0,1])$. Moreover, $S_\lambda(T-\lambda I)g=g$ for all $g\in D(T)$. We need to show that $S_\lambda$ is bounded. Indeed, we have 
\begin{align*}
\| S_\lambda g\|_2^2&=\int_0^1\vert S_\lambda g(x)\vert^2\dd x\leq \sup_{x\in [0,1]}\vert S_\lambda g(x)\vert^2=\sup_{x\in [0,1]}\left\vert\int_0^x\ee^{-\I\lambda (x-s)}g(s)\dd s\right\vert^2\\
&\leq \sup_{x\in [0,1]}\left(\int_0^x\vert \ee^{-\I\lambda(x-s)}g(s)\vert\dd s\right)^2\leq \left(\sup_{x\in[0,1]}\left\vert \int_0^x\ee^{-\I\lambda(x-s)}\dd x\right\vert^2\right)\left(\sup_{x\in[0,1]}\left\vert\int_0^x g(s)\dd s\right\vert^2\right)\\
&\leq C(\lambda)\| g\|^2_2.
\end{align*}
\end{proof}

\subsection{Quantization of a classical system}
We want to talk about quantization of a classical system by considering a ``quantization map'' between classical and quantum data. Consider a map $\mathscr{Q}$, which maps a classical system to a quantum system. The classical (path space) space of states $(T^*M,\omega)$, which is a symplectic manifold coming from a cotangent space, is mapped to a Hilbert space $\mathcal{H}$. Moreover, the space of observables $C^\infty(T^*M)$ is mapped to the space of self adjoint operators. We know that $C^\infty(T^*M)$ is endowed with a Poisson bracket $\{\enspace,\enspace\}$, but the question is what its image is under $\mathscr{Q}$.

\begin{ex}
Let $T^*M=\R^{2n}=\{(x,p)\mid x,p\in\R^n\}$. Then $x_i,p^{i}$ represent position and momentum observables and $\{x_i,p^{i}\}=\delta_{ij}$. Denote by $\widehat{x}_i$ the operator given by multiplication with $x_i$ and by $\widehat{p}^{i}:=-\I\hbar\frac{\partial}{\partial x_i}$. Then their commutator bracket is given by $[\widehat{x}_i,\widehat{p}^{j}]=\I\hbar\delta_{ij}$.
\end{ex}

The previous example can be generalized such that given $\{f,g\}$ for $f,g\in C^\infty(T^*M)$ it will be mapped by $\mathscr{Q}$ to $$\frac{1}{\I\hbar}[\mathscr{Q}(f),\mathscr{Q}(g)],$$
or, by considering $\frac{\dd f}{\dd t}=\{f,H\}$, we get the quantum image $$\I\hbar \frac{\dd}{\dd t}A(t)=[A(t),\widehat{H}],$$
which is basically the Schr\"odinger equation for the Heisenberg picture.

\begin{defn}[Quantization]
A \textsf{quantization} of a classical system $(\R^{2n},\omega)$ is an argument of a quantum Hilber space $\mathcal{H}$ together with a linear map 
$$\mathscr{Q}\colon C^\infty(\R^{2n})\to \{\text{self adjoint operators on $\mathcal{H}$}\}$$
such that the following hold:
\begin{enumerate}[(q1)]
\item{$\mathscr{Q}$ is linear,}
\item{$\mathscr{Q}(1)=\id_\calH$,}
\item{$\mathscr{Q}(x_i)=\widehat{x}_i$, $\mathscr{Q}(p_i)=\widehat{p}_i$,}
\item{$[\mathscr{Q}(f),\mathscr{Q}(g)]=\I\hbar\mathscr{Q}(\{f,g\})$,}
\item{$\mathscr{Q}(\phi\circ f)=\phi(\mathscr{Q}(f))$ for any map $\phi\colon \R\to\R$.}
\end{enumerate}
\end{defn}

\begin{rem}
The problem is that $(q1)-(q5)$ are inconsistent. Even $(q1),(q3)$, and $(q5)$ are inconsistent.
\end{rem}

\begin{ex}
Consider $n=1$. We want to know what the image of the classical observable $x^2p^2$ is under $\mathscr{Q}$, i.e. $\mathscr{Q}(x^2p^2)$. We write 
$$x^2p^2=\frac{(x^2+p^2)^2-x^4-p^4}{2}.$$
Then we use $(q3)$ and $(q5)$ to get the quantum observables 
$$\frac{(\widehat{x}^2+\widehat{p}^2)^2-\widehat{x}^4-\widehat{p}^4}{2}=\frac{\widehat{p}^2\widehat{x}^2+\widehat{x}^2\widehat{p}^2}{2}.$$
On the other hand we have 
$$xp=\frac{(x+p)^2-x^2-p^2}{2}\Longrightarrow \mathscr{Q}(x^2p^2)=\mathscr{Q}((xp)^2)=\left(\frac{(\widehat{x}^2+\widehat{p}^2)^2-\widehat{x}^4-\widehat{p}^4}{2}\right)^2,$$
which implies 
$$\mathscr{Q}(x^2p^2)=\left(\frac{\widehat{p}^2\widehat{x}^2+\widehat{x}^2\widehat{p}^2}{2}\right)^2,$$
which is in general not what we get before.
\end{ex}

The question here is: what are general approaches to a solution? Even $(q1),(q2),(q4)$ and $(q5)$ are not consistent. We can have two different solutions:
\begin{itemize}
\item{Keep $(q1),(q2),(q3),(q4)$ and choose an appropriate domain for $\mathscr{Q}$,}
\item{Keep $(q1),(q2),(q3)$ and demand $(q4)$ holds asymptotically, i.e.
$$[\mathscr{Q}(f),\mathscr{Q}(g)]=\I\hbar \mathscr{Q}(\{f,g\})+O(\hbar^2).$$
}
\end{itemize}

We have two different approaches:
\begin{enumerate}
\item{(Canonical quantization)
Here we quantize the observables $x_i,p_{i}$ as the image of $\mathscr{Q}$, i.e. $x_i\mapsto \widehat{x}_i$ and $p_i\mapsto \widehat{p}_{i}$. Moreovr, $f(x,p)\mapsto f(\widehat{x},\widehat{p})$ and the question will be what to do for $x_ip_j$? More precisely, there is an ordering problem. We need to know how to define $\mathscr{Q}(x_i^2p_j^2)$.
}
\item{(Wick ordering quantization)
Consider $z=x+i\alpha p$ and $\bar z=x-\I\alpha p$. Then write $f(x,p)$ as $f(z,\bar z)$, e.g. $$f(z,\bar z)=\sum_{ij}a_{ij}z_i^{r_i}\bar z_j^{r_j},$$ and with $\widehat{z}=\widehat{x}+\I\alpha\widehat{p}$, $\widehat{z}^*=\widehat{x}-\I\alpha\widehat{p}$ we get 
$$\mathscr{Q}_{Wick}(f)=f(\widehat{z},\overline{\widehat{z}})=\sum_{ij}(\widehat{z}_j^{r_j})^*\widehat{z}_i^{r_i}.$$

\begin{ex}
Consider $n=1$. Then, by writing $x=\frac{1}{2}(z+\bar z)$, we get
\begin{align*}
\mathscr{Q}_{Wick}(x^2)&=\mathscr{Q}_{Wick}\left(\frac{1}{4}z^2+2z\bar z+\bar z^2\right)\\
&=\frac{1}{4}\left((\widehat{x}+\I\alpha\widehat{p})^2+2(\widehat{x}+\I\alpha\widehat{p})(\widehat{x}-\I\alpha\widehat{p})+(\widehat{x}+\I\alpha\widehat{p})^2\right)\\
&=\frac{1}{4}\left(\widehat{x}^2-\alpha^2\widehat{p}^2+\I\alpha(\widehat{x}\widehat{p}+\widehat{p}\widehat{x})+2(\widehat{x}^2+\alpha^2\widehat{p}^2+\I\alpha[\widehat{x},\widehat{p}])+\widehat{x}^2-\alpha^2\widehat{p}^2-\I\alpha(\widehat{x}\widehat{p}+\widehat{p}\widehat{x})\right)\\
&=\frac{1}{4}\left( 4\widehat{x}'2+2\I\alpha[\widehat{x},\widehat{p}]\right)=\widehat{x}^2-\frac{1}{2}\hbar \alpha I,
\end{align*}
where $I$ is the identity operator. 
\end{ex}

}
\item{(Weyl Quantization)
Consider $n=1$. We define $\mathscr{Q}_{Weyl}(x,p):=\frac{\widehat{x}\widehat{p}+\widehat{p}\widehat{x}}{2}$. E.g. $\mathscr{Q}_{Weyl}(x^2p)=\mathscr{Q}_{Weyl}(xxp)=\frac{\x^2\p+\x\p\x+\p\x^2}{3!}$. More generally, 
$$\mathscr{Q}_{Weyl}(x_ip_j)=\frac{1}{(n+m)!}\sum_{\sigma\in S_{n+m}}\x_{\sigma(1)}\dotsm\x_{\sigma(1)}\p_{\sigma(1)}\dotsm \p_{\sigma(m)}.$$

\begin{exe}
\label{ex_weyl}
Let $g$ be any polynomial  in $x$ and $p$. Then 
$$\mathscr{Q}_{Weyl}(x\cdot g)=\mathscr{Q}_{Weyl}(x)\mathscr{Q}_{Weyl}(g)-\frac{\I\hbar}{2}\mathscr{Q}_{Weyl}\left(\frac{\partial g}{\partial p}\right)=\mathscr{Q}_{Weyl}(g)\mathscr{Q}_{Weyl}(x)-\frac{\I\hbar}{2}\mathscr{Q}_{Weyl}\left(\frac{\partial g}{\partial p}\right)$$
$$\mathscr{Q}_{Weyl}(p\cdot g)=\mathscr{Q}_{Weyl}(p)\mathscr{Q}_{Weyl}(g)+\frac{\I\hbar}{2}\mathscr{Q}_{Weyl}\left(\frac{\partial g}{\partial x}\right)=\mathscr{Q}_{Weyl}(g)\mathscr{Q}_{Weyl}(p)-\frac{\I\hbar}{2}\mathscr{Q}_{Weyl}\left(\frac{\partial g}{\partial x}\right)$$
\end{exe}

\begin{prop}
Let $f$ be a polynomial in $x$ and $p$ of degree at most $2$ and $g$ be any polynomial. Then 
$$[\mathscr{Q}_{Weyl}(f),\mathscr{Q}_{Weyl}(g)]=\I\hbar \mathscr{Q}_{Weyl}(\{f,g\}).$$
\end{prop}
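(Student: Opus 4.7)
The plan is to reduce by linearity in $f$ to a finite list of monomial cases and then handle each one via Exercise \ref{ex_weyl} together with the Leibniz rule $[AB,C] = A[B,C] + [A,C]B$ for the commutator. By bilinearity of $[\cdot,\cdot]$, linearity of $\mathscr{Q}_{Weyl}$, and linearity of $\{\cdot,\cdot\}$ in its first slot, it suffices to verify the identity for $f$ ranging over the six monomials $\{1,x,p,x^2,xp,p^2\}$. The case $f=1$ is immediate because $\mathscr{Q}_{Weyl}(1)=I$ commutes with everything and $\{1,g\}=0$.

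For $f = x$, combining the two expressions in Exercise \ref{ex_weyl} for $\mathscr{Q}_{Weyl}(x\cdot g)$ in two different ways is the key. Subtracting them yields the degree-one commutator
$$[\widehat x, \mathscr{Q}_{Weyl}(g)] = \I\hbar\, \mathscr{Q}_{Weyl}(\partial g/\partial p) = \I\hbar\, \mathscr{Q}_{Weyl}(\{x,g\}),$$
using $\{x,g\} = \partial g/\partial p$, and likewise $[\widehat p, \mathscr{Q}_{Weyl}(g)] = \I\hbar\, \mathscr{Q}_{Weyl}(\{p,g\})$ from the analogous formula for $\mathscr{Q}_{Weyl}(p\cdot g)$. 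Adding the two expressions instead gives the symmetrization identities
$$\widehat x\,\mathscr{Q}_{Weyl}(g) + \mathscr{Q}_{Weyl}(g)\,\widehat x = 2\,\mathscr{Q}_{Weyl}(xg), \qquad \widehat p\,\mathscr{Q}_{Weyl}(g) + \mathscr{Q}_{Weyl}(g)\,\widehat p = 2\,\mathscr{Q}_{Weyl}(pg),$$
which will be the workhorse for the degree-two cases.

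For $f = x^2$ the identity $\mathscr{Q}_{Weyl}(x^2)=\widehat x^2$ together with the Leibniz rule and the degree-one commutator give
$$[\widehat x^2, \mathscr{Q}_{Weyl}(g)] = \I\hbar\bigl(\widehat x\,\mathscr{Q}_{Weyl}(\partial g/\partial p) + \mathscr{Q}_{Weyl}(\partial g/\partial p)\,\widehat x\bigr) = 2\I\hbar\, \mathscr{Q}_{Weyl}(x\,\partial g/\partial p),$$
where the last step is the $\widehat x$-symmetrization identity applied with $g$ replaced by $\partial g/\partial p$. Since $\{x^2,g\} = 2x\,\partial g/\partial p$, this is precisely $\I\hbar\, \mathscr{Q}_{Weyl}(\{x^2, g\})$. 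The case $f = p^2$ is entirely analogous. For $f = xp$, write $\mathscr{Q}_{Weyl}(xp) = \tfrac{1}{2}(\widehat x\widehat p + \widehat p\widehat x)$, expand the commutator via the Leibniz rule for both orderings, and substitute the degree-one commutators; the four resulting terms regroup into anti-Jordan pairs on which the symmetrization identities can be applied, reducing the expression to $\I\hbar\, \mathscr{Q}_{Weyl}(p\,\partial g/\partial p - x\,\partial g/\partial x) = \I\hbar\, \mathscr{Q}_{Weyl}(\{xp, g\})$.

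The main obstacle is the bookkeeping in the $xp$ case, where one must recombine the four expanded terms into the correct anti-Jordan pairs before the symmetrization identities apply; conceptually nothing new is required beyond Exercise \ref{ex_weyl} and the Leibniz rule. The degree-$\le 2$ hypothesis on $f$ is essential, since it is precisely the absence of third and higher derivatives of $f$ that allows the iteration to terminate cleanly, without generating the $O(\hbar^2)$ corrections of the type anticipated in the asymptotic form of axiom $(q4)$ discussed earlier.
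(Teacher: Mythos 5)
Your proof is correct and follows the same basic route as the paper --- reduction to Exercise \ref{ex_weyl} --- but it actually goes considerably further: the paper's own proof treats only the linear case $f=x$ (the line ``Let $f=f$'' is evidently a typo for ``Let $f=x$'') and silently omits the quadratic monomials, whereas you carry out the full reduction to $\{1,x,p,x^2,xp,p^2\}$ and verify each case. The two ingredients you add beyond the paper's argument are exactly what is needed: the Leibniz rule $[AB,C]=A[B,C]+[A,C]B$ to peel off one factor at a time, and the symmetrization identity $\widehat x\,\mathscr{Q}_{Weyl}(h)+\mathscr{Q}_{Weyl}(h)\,\widehat x=2\,\mathscr{Q}_{Weyl}(xh)$ (and its $\widehat p$ analogue), obtained by \emph{adding} the two expressions in Exercise \ref{ex_weyl}, to reabsorb the anti-Jordan pairs into Weyl quantizations. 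Your computations for $x^2$, $p^2$ and $xp$ all check out against $\{x^2,g\}=2x\,\partial g/\partial p$, $\{p^2,g\}=-2p\,\partial g/\partial x$ and $\{xp,g\}=p\,\partial g/\partial p-x\,\partial g/\partial x$.

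One small caveat: as printed, the first line of Exercise \ref{ex_weyl} carries the same sign $-\tfrac{\I\hbar}{2}$ on both expressions for $\mathscr{Q}_{Weyl}(x\cdot g)$, which taken literally would force $[\widehat x,\mathscr{Q}_{Weyl}(g)]=0$. The intended statement (consistent with the $p$-line of the exercise and with the paper's own computation, which sums $\tfrac{\I\hbar}{2}+\tfrac{\I\hbar}{2}=\I\hbar$) has opposite signs on the two expressions. Your ``subtracting them'' and ``adding them'' steps both rely on this corrected form, as does the paper, so this is a typo in the source rather than a gap in your argument --- but it is worth flagging explicitly when you invoke the exercise.
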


\begin{proof}
Let $f=f$. Then $\{x,g\}=\frac{\partial g}{\partial p}$. Using exercise \ref{ex_weyl} we get 
$$[\mathscr{Q}_{Weyl}(x),\mathscr{Q}_{Weyl}(g)]=\frac{\I\hbar}{2}\mathscr{Q}_{Weyl}\left(\frac{\partial g}{\partial p}\right)+\frac{\I\hbar}{2}\mathscr{Q}_{Weyl}\left(\frac{\partial g}{\partial p}\right)=\I\hbar\mathscr{Q}\left(\frac{\partial g}{\partial p}\right).$$
\end{proof}
}
\end{enumerate}

\begin{rem}
This is not possible for arbitrary polynomials $f$ and $g$, because of the NO-GO theorem of Gronewald.
\end{rem}

\subsection{More on self adjoint operators}
\begin{thm}
Let $A$ be a self adjoint operator. Then $\sigma(A)\subseteq \R$.
\end{thm}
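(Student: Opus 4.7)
The plan is to show that any $\lambda=a+\I b$ with $b\neq 0$ lies in the resolvent set $\rho(A)$, which immediately forces $\sigma(A)\subseteq \R$. Concretely, one needs (i) that $\lambda I - A\colon D(A)\to \calH$ is a bijection, and (ii) that its inverse is bounded. Both will follow from the single key estimate
\[
\|(A-\lambda I)\psi\|^2 \;=\; \|(A-aI)\psi\|^2 + b^2\|\psi\|^2,\qquad \psi\in D(A).
\]
To derive this, I expand the left-hand side as $\langle (A-aI)\psi - \I b\psi,\,(A-aI)\psi - \I b\psi\rangle$ and use that $A-aI$ is symmetric (since $a\in\R$ and $A$ is self adjoint), so the cross terms $\I b\langle\psi,(A-aI)\psi\rangle - \I b\langle (A-aI)\psi,\psi\rangle$ cancel because $\langle (A-aI)\psi,\psi\rangle\in\R$.

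From this estimate I read off $\|(A-\lambda I)\psi\|\geq |b|\,\|\psi\|$. In particular $A-\lambda I$ is injective, and on its range it has a bounded inverse with $\|(A-\lambda I)^{-1}\|\leq 1/|b|$. It remains to prove surjectivity, which I split into two steps. First, the range $R(A-\lambda I)$ is closed: if $(A-\lambda I)\psi_n\to \phi$, the estimate shows $\{\psi_n\}$ is Cauchy, hence $\psi_n\to\psi$, and then $A\psi_n=(A-\lambda I)\psi_n + \lambda\psi_n$ converges, so using the fact that a self adjoint operator is closed, $\psi\in D(A)$ and $(A-\lambda I)\psi=\phi$.

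Second, $R(A-\lambda I)$ is dense. If $\phi\perp R(A-\lambda I)$, then $\langle\phi,A\psi\rangle=\bar\lambda\langle\phi,\psi\rangle$ for every $\psi\in D(A)$, so the functional $\psi\mapsto\langle\phi,A\psi\rangle$ is bounded on $D(A)$. This means $\phi\in D(A^*)=D(A)$ (by self adjointness) and $A\phi=\bar\lambda\phi$, i.e.\ $(A-\bar\lambda I)\phi=0$. Applying the same key estimate with $\lambda$ replaced by $\bar\lambda$ (whose imaginary part is still nonzero) gives $|b|\,\|\phi\|\leq 0$, hence $\phi=0$. Combining closedness and density yields $R(A-\lambda I)=\calH$.

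Putting everything together, $\lambda I - A$ is bijective with bounded inverse, so $\lambda\in\rho(A)$ for every $\lambda\in\mathbb{C}\setminus\R$, which gives $\sigma(A)\subseteq\R$. The one delicate spot in the argument is the density step: it is the only place where the full strength of self adjointness (namely $D(A)=D(A^*)$) is needed, as opposed to mere symmetry. A merely symmetric operator will satisfy the basic estimate and have closed range, but the density argument can fail, and indeed symmetric but non-self-adjoint operators generally do have complex spectrum.
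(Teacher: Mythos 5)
Your proof is correct, and it starts from exactly the same key identity as the paper, namely
\[
\|(A-\lambda I)\psi\|^2=\|(A-aI)\psi\|^2+b^2\|\psi\|^2 ,
\]
obtained by expanding the inner product and using that $\langle (A-aI)\psi,\psi\rangle$ is real. The difference is in what happens afterwards. The paper's argument explicitly assumes $A$ is bounded, deduces from the estimate that $(A-\lambda I)^*(A-\lambda I)$ is positive, and then simply asserts that $(A-\lambda I)^{-1}$ is bounded, leaving surjectivity of $\lambda I-A$ untouched. You instead carry the argument to completion for a general (possibly unbounded) self adjoint operator: the estimate gives injectivity and a bound $\|(A-\lambda I)^{-1}\|\leq 1/|b|$ on the range; closedness of the range follows from the same estimate together with the fact that a self adjoint operator is closed; and density of the range follows by showing that any $\phi$ orthogonal to it lies in $D(A^*)=D(A)$ and is an eigenvector for a non-real eigenvalue, hence zero by the estimate again. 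Your closing remark is also the right diagnosis: the density step is precisely where $D(A)=D(A^*)$ is needed, and it is the step that fails for merely symmetric operators, which is consistent with the paper's own definition of the resolvent set requiring both bijectivity of $\lambda I-A$ on $D(A)$ and boundedness of the inverse. In short, your proof is a strictly stronger and more complete version of the paper's sketch; the only caveat is a harmless convention ambiguity in whether the orthogonality relation produces the eigenvalue $\lambda$ or $\bar\lambda$, which does not affect the conclusion since both have nonzero imaginary part.
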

\begin{proof}
Assume $A$ is bounded. Let $\lambda=a+\I b$ with $b\not=0$. We calim that $\lambda\in\rho(A)$. Let $\psi\in\mathcal{H}$. Moreover, define $T:=(A-aI)$. Then 
\begin{align*}
\langle (A-\lambda I)\psi,(A-\lambda I)\psi\rangle&=\langle (A-aI)\psi-\I b\psi,(A-aI)\psi-\I b\psi\rangle\\
&=\| T\psi\|^2-\langle \I b\psi,T\psi\rangle-\langle T\psi,\I b\psi\rangle+b^2\|\psi\|^2\\
&=\| T\psi\|^2+b^2\|\psi\|^2\\
&>b^2\|\psi\|^2.
\end{align*}
Hence $\langle (A-\lambda I)^*(A-\lambda I)\psi,\psi\rangle>b^2\|\psi\|^2$ and thus $(A-\lambda I)^*(A-\lambda I)$ is a positive operator. Moreover, we can show that $(A-\lambda I)^{-1}$ is bounded.
\end{proof}
\begin{rem}
There are also plenty examples for unbounded operators.
\end{rem}

\subsection{Eigenvalues of single Harmonic Oscillator}

Let $\mathcal{H}=L^2(\R)$ and consider the Hamiltonian $H(x,p)=\frac{1}{2m}p^2+\frac{kx^2}{2}$ with $k=m\omega^2$. Then going to the corresponding operator formulation, we have $\widehat{p}=\I\hbar\frac{\dd}{\dd x}$ and $\widehat{x}$ is just multiplication by $x$. Then the Hamilton operator is given by $$\widehat{H}=\frac{1}{2m}\widehat{p}^2+\frac{k\widehat{x}^2}{2}=\frac{1}{2m}\left(\widehat{p}^2+(m\omega\widehat{x})^2\right).$$
We will only do formal computations (i.e. we forget about the domains). Define $a=\frac{m\omega \widehat{x}+\I \widehat{p}}{\sqrt{2\hbar m\omega}}$ and $a^*=\frac{m\omega\widehat{x}-\I\widehat{p}}{\sqrt{2\hbar m\omega}}$.

\begin{lem}
We have 
$$\widehat{H}=\hbar \omega\left(a^*a+\frac{1}{2}I\right).$$
\end{lem}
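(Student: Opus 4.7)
The plan is to compute $a^*a$ directly from the definitions and recognize $\widehat{H}$ inside the expression, using only the canonical commutation relation $[\widehat{x},\widehat{p}] = \I\hbar I$ (which follows from $\widehat{p} = -\I\hbar\frac{\dd}{\dd x}$ acting on smooth states by the product rule). Since the author explicitly states these are formal computations with domains suppressed, I will not worry about domain issues and treat everything algebraically on a common invariant dense subspace (e.g.\ Schwartz functions $\mathcal{S}(\R)$).

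First I would multiply out
\[
a^*a = \frac{1}{2\hbar m\omega}(m\omega\widehat{x} - \I\widehat{p})(m\omega\widehat{x} + \I\widehat{p}),
\]
being careful to keep the order of $\widehat{x}$ and $\widehat{p}$ since they do not commute. Expanding gives four terms
\[
a^*a = \frac{1}{2\hbar m\omega}\bigl(m^2\omega^2\widehat{x}^2 + \widehat{p}^2 + \I m\omega(\widehat{x}\widehat{p} - \widehat{p}\widehat{x})\bigr)
     = \frac{1}{2\hbar m\omega}\bigl(m^2\omega^2\widehat{x}^2 + \widehat{p}^2 + \I m\omega[\widehat{x},\widehat{p}]\bigr).
\]

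Next I substitute $[\widehat{x},\widehat{p}] = \I\hbar I$, so the cross term becomes $\I m\omega\cdot\I\hbar I = -m\omega\hbar I$. Thus
\[
a^*a = \frac{1}{2\hbar m\omega}\bigl(\widehat{p}^2 + (m\omega\widehat{x})^2\bigr) - \frac{1}{2}I = \frac{1}{\hbar\omega}\widehat{H} - \frac{1}{2}I,
\]
where in the last step I used the stated form $\widehat{H} = \frac{1}{2m}(\widehat{p}^2 + (m\omega\widehat{x})^2)$. Rearranging yields $\widehat{H} = \hbar\omega\bigl(a^*a + \tfrac{1}{2}I\bigr)$, which is the claim.

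The only real content is bookkeeping of the non-commutation, so I do not expect any genuine obstacle; the one place to be careful is the sign in the cross term (it is the combination $\I m\omega[\widehat{x},\widehat{p}]$ that appears, not $\I m\omega[\widehat{p},\widehat{x}]$), and the factor $-\tfrac{1}{2}I$ is exactly what compensates the $+\tfrac{1}{2}I$ in the claimed identity — this sign check is the one worth doing explicitly. Had I started from $aa^*$ instead I would have obtained $\widehat{H} = \hbar\omega(aa^* - \tfrac{1}{2}I)$, which together with the above recovers the standard commutation relation $[a, a^*] = I$ as a consistency check.
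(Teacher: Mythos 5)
Your computation is correct and is the standard direct verification that the paper leaves implicit (no proof is given for this lemma): expanding $a^*a$ with attention to operator order, inserting $[\widehat{x},\widehat{p}]=\I\hbar I$, and recognizing $\widehat{H}$ in the remaining quadratic part is exactly what is needed, and your sign check on the cross term is the right place to be careful. One minor remark: in this subsection the paper writes $\widehat{p}=\I\hbar\frac{\dd}{\dd x}$, which would flip the sign of the commutator and hence of the $\tfrac{1}{2}I$ term; your convention $[\widehat{x},\widehat{p}]=\I\hbar I$ (i.e.\ $\widehat{p}=-\I\hbar\frac{\dd}{\dd x}$, as used elsewhere in the paper) is the one under which the stated identity actually holds.
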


\begin{lem}
The following hold:
\begin{enumerate}
\item{$[a,a^*]=I,$}
\item{$[a,a^*a]=a,$}
\item{$[a^*,a^*a]=-a^*$}
\end{enumerate}
\end{lem}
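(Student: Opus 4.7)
The plan is to reduce everything to the canonical commutation relation $[\widehat{x},\widehat{p}] = \I\hbar I$, which was established in the earlier example on quantization. Everything will follow from direct algebraic manipulation, keeping in mind these are formal computations on a common dense domain (e.g.\ Schwartz functions) where all products are defined.

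\textbf{Step 1 (computing $[a,a^*]$).} I would expand the products $aa^*$ and $a^*a$ using the definitions
\[
a=\frac{m\omega \widehat{x}+\I \widehat{p}}{\sqrt{2\hbar m\omega}}, \qquad a^*=\frac{m\omega\widehat{x}-\I\widehat{p}}{\sqrt{2\hbar m\omega}},
\]
carefully retaining the order of operators. The cross terms $\widehat{x}\widehat{p}$ and $\widehat{p}\widehat{x}$ do not cancel, while the $\widehat{x}^2$ and $\widehat{p}^2$ terms do. After the dust settles one gets
\[
aa^* - a^*a = \frac{-\I}{\hbar}\,[\widehat{x},\widehat{p}].
\]
Substituting $[\widehat{x},\widehat{p}] = \I\hbar I$ then yields $[a,a^*] = I$, which is (1).

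\textbf{Step 2 (computing (2) and (3)).} These I would obtain as free consequences of (1) by using the derivation property of the commutator, namely that for any operators $A,B,C$ one has $[A,BC] = [A,B]C + B[A,C]$. Applied to $[a, a^*a]$:
\[
[a,a^*a] = [a,a^*]a + a^*[a,a] = I\cdot a + 0 = a,
\]
giving (2). Similarly,
\[
[a^*,a^*a] = [a^*,a^*]a + a^*[a^*,a] = 0 + a^*(-I) = -a^*,
\]
giving (3).

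\textbf{Anticipated obstacle.} There is essentially no obstacle beyond careful bookkeeping of the cross terms in Step~1; the only conceptually subtle point is that $a$ and $a^*$ are unbounded, so the manipulations have to be carried out on a suitable invariant dense domain (the Schwartz space is the standard choice, and the text explicitly flags that these are formal computations). Once this domain caveat is noted, the argument is two lines of algebra plus two lines of the Leibniz rule for the commutator.
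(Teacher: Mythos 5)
Your proof is correct and is exactly the intended computation; the paper leaves this lemma as an exercise, and your reduction of (1) to $[\widehat{x},\widehat{p}]=\I\hbar I$ followed by the Leibniz rule $[A,BC]=[A,B]C+B[A,C]$ for (2) and (3) is the standard solution. One minor caution: the subsection on the harmonic oscillator writes $\widehat{p}=\I\hbar\frac{\dd}{\dd x}$ (an apparent sign typo relative to the earlier convention $\widehat{p}=-\I\hbar\frac{\partial}{\partial x}$), and your argument correctly relies on the convention $[\widehat{x},\widehat{p}]=\I\hbar I$ from the quantization example, which is the one that yields $[a,a^*]=I$.
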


\begin{proof}
Exercise
\end{proof}

\begin{prop}
Assume that $\psi$ is an eigenvector of $a^*a$ with eigenvalue $\lambda$. Then 
\begin{align}
a^*a(a\psi)&=(\lambda-1)a\psi,\\
a^*a(a^*\psi)&=(\lambda+1)a^*\psi.
\end{align}
\end{prop}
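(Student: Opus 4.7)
The plan is to deduce both identities as immediate consequences of the commutation relations $[a,a^*a]=a$ and $[a^*,a^*a]=-a^*$ established in the preceding lemma; no analytic input (domain considerations, spectral theory) is needed since the computation is purely algebraic at the level of formal operators, in keeping with the ``formal computations'' convention announced just before the creation/annihilation operators were introduced.

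For the first identity, I would rewrite $[a,a^*a]=a$ as $a^*a\, a = a\,a^*a - a$. Applying both sides to $\psi$ and using the eigenvalue hypothesis $a^*a\,\psi = \lambda \psi$ yields
\[
a^*a(a\psi) = a(a^*a\,\psi) - a\psi = \lambda\, a\psi - a\psi = (\lambda-1)\,a\psi.
\]
For the second identity, I would analogously rewrite $[a^*,a^*a]=-a^*$ as $a^*a\,a^* = a^*a^*a + a^*$ and apply both sides to $\psi$:
\[
a^*a(a^*\psi) = a^*(a^*a\,\psi) + a^*\psi = \lambda\, a^*\psi + a^*\psi = (\lambda+1)\,a^*\psi.
\]

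Since both steps are one-line manipulations, I do not foresee any real obstacle. The only point of care is to cite the correct commutator in each case and to ensure the signs match — the relation $[a,a^*a]=a$ shifts the eigenvalue down by $1$ (giving the \emph{annihilation} interpretation of $a$) while $[a^*,a^*a]=-a^*$ shifts it up by $1$ (the \emph{creation} interpretation of $a^*$). No assumption on whether $a\psi$ or $a^*\psi$ is nonzero is needed for the identities themselves; the nontriviality of the resulting eigenvectors (and hence the ``ladder'' structure of the spectrum) would be a separate observation to be made afterwards.
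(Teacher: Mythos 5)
Your proof is correct and uses exactly the tools the paper sets up for this purpose: the commutators $[a,a^*a]=a$ and $[a^*,a^*a]=-a^*$ from the immediately preceding lemma, rearranged and applied to $\psi$. The paper itself omits the proof, but this is clearly the intended one-line argument, and your sign bookkeeping is right.
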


\begin{rem}
The consequence of this proposition is that either $a\psi$ is an eigenvector or $a\psi=0$. We know that $a^*a\geq 0$ so all eigenvalues are non-negative. Hence, if $\psi$ is an eigenvector with eigenvalue $\lambda$, then there is some number $N$ sucht that $a^N\psi\not=0$ but $a^{N+1}\psi=0$. 
\end{rem}

Define $\psi_0=a^N\psi$. Then $a^*a\psi_0=0$ and thus $\psi_0$ is an eigenvector of zero eigenvalue.

\begin{prop}
Let $\psi_0$ be such that $\|\psi_0\|=1$ and $a\psi_0=0$. Then, $\psi_n:=(a^*)^n\psi_0$, for $n\geq 0$, satisfies the following:
\begin{enumerate}[$(i)$]
\item{$a^*\psi_n=\psi_{n+1},$}
\item{$(a^*a)\psi_n=n\psi_n$,}
\item{$\langle\psi_n,\psi_m\rangle=n!\delta_{mn}$,}
\item{$a\psi_{n+1}=(n+1)\psi_n$}
\end{enumerate}
\end{prop}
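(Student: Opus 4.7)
The plan is to prove the four items essentially in the order $(i)$, $(ii)$, $(iv)$, $(iii)$, because each later item builds on the previous. The proofs are formal (domains are ignored as in the preceding paragraph), and they rely only on $a\psi_0=0$ together with the commutator relations from the preceding lemma.

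For $(i)$, I would simply unravel the definition: $a^{*}\psi_n = a^{*}(a^{*})^n\psi_0 = (a^{*})^{n+1}\psi_0 = \psi_{n+1}$. For $(ii)$, I proceed by induction on $n$. The base case $n=0$ reads $a^{*}a\psi_0 = a^{*}(a\psi_0)=0$. For the inductive step, I rewrite $[a^{*},a^{*}a]=-a^{*}$ as $(a^{*}a)a^{*} = a^{*}(a^{*}a) + a^{*}$, and then compute
\begin{equation*}
(a^{*}a)\psi_{n+1} \;=\; (a^{*}a)a^{*}\psi_n \;=\; a^{*}(a^{*}a)\psi_n + a^{*}\psi_n \;=\; n\,a^{*}\psi_n + \psi_{n+1} \;=\; (n+1)\psi_{n+1},
\end{equation*}
using $(i)$ and the inductive hypothesis.

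For $(iv)$, I would avoid a second induction by using $(ii)$ directly together with $[a,a^{*}]=I$, i.e.\ $aa^{*}=a^{*}a+I$:
\begin{equation*}
a\psi_{n+1} \;=\; a\,a^{*}\psi_n \;=\; (a^{*}a + I)\psi_n \;=\; n\psi_n + \psi_n \;=\; (n+1)\psi_n .
\end{equation*}
Finally, for $(iii)$ I use the adjointness of $a$ and $a^{*}$ together with $(i)$ and $(iv)$. For $n\geq 1$,
\begin{equation*}
\langle \psi_n,\psi_m\rangle = \langle a^{*}\psi_{n-1},\psi_m\rangle = \langle \psi_{n-1}, a\psi_m\rangle.
\end{equation*}
If $m=0$, this is $0$ because $a\psi_0=0$, and $\langle \psi_0,\psi_0\rangle = 1 = 0!$ by assumption. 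If $m\geq 1$, then by $(iv)$ this equals $m\langle \psi_{n-1},\psi_{m-1}\rangle$. Iterating the smaller of the two indices down to $0$ yields $\langle\psi_n,\psi_m\rangle = n!\,\langle \psi_0,\psi_{m-n}\rangle$ when $m\geq n$, which vanishes for $m>n$ (again via $a\psi_0=0$) and equals $n!$ for $m=n$; the case $m<n$ follows from Hermitian symmetry. Hence $\langle \psi_n,\psi_m\rangle = n!\,\delta_{mn}$.

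There is no real obstacle: everything reduces to the commutator algebra and the annihilation property $a\psi_0=0$. The only place one must be careful is to handle the boundary case $m=0$ (or $n=0$) separately in the recursion for $(iii)$, since the identity $a\psi_m = m\psi_{m-1}$ makes sense as stated only for $m\geq 1$.
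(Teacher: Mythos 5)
The paper states this proposition without proof (the surrounding lemmas are likewise left as exercises), so there is nothing to compare against; your argument is the standard one and is correct. Items $(i)$, $(ii)$, $(iv)$ are fine: $(i)$ is the definition, $(ii)$ is a clean induction using $[a^{*},a^{*}a]=-a^{*}$, and $(iv)$ follows from $aa^{*}=a^{*}a+I$ together with $(ii)$, with no circularity in the order $(i)\to(ii)\to(iv)\to(iii)$. The only blemish is in $(iii)$: iterating $\langle\psi_n,\psi_m\rangle=m\langle\psi_{n-1},\psi_{m-1}\rangle$ down $n$ steps produces the coefficient $m(m-1)\dotsm(m-n+1)=\tfrac{m!}{(m-n)!}$, not $n!$, so the intermediate identity should read $\langle\psi_n,\psi_m\rangle=\tfrac{m!}{(m-n)!}\,\langle\psi_0,\psi_{m-n}\rangle$ for $m\geq n$. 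This does not affect the conclusion, since for $m>n$ the factor $\langle\psi_0,\psi_{m-n}\rangle=\langle a\psi_0,\psi_{m-n-1}\rangle=0$ kills the term regardless of the coefficient, and for $m=n$ the coefficient is indeed $n!$; the case $m<n$ follows by symmetry as you say. With that coefficient corrected, the proof is complete (modulo the formal treatment of domains and adjoints, which the text explicitly allows).
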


\begin{rem}
Our goal is to find some $\psi_0\in L^2(\R)$ such that $a\psi_0=0$ and $\|\psi_0\|=1$.
\end{rem}

Define $\widetilde{x}=\frac{x}{\sqrt{\frac{\hbar}{m\omega}}}$, then $\frac{\dd}{\dd \widetilde{x}}=\sqrt{\frac{\hbar}{m\omega}}\frac{\dd}{\dd x}$. Thus 
$$a=\frac{1}{\sqrt{2}}\left(\widetilde{x}+\frac{\dd}{\dd \widetilde{x}}\right),\hspace{0.3cm}a^*=\frac{1}{\sqrt{2}}\left(\widetilde{x}-\frac{\dd}{\dd\widetilde{x}}\right).$$

We want to solve the equation $a\psi_0=0$. This is equivalent to $\frac{\dd\psi_0}{\dd\widetilde{x}}+\widetilde{x}\psi_0=0$, which implies that 
$$\psi_0(x)=\sqrt{\frac{2m\omega}{\hbar}}\ee^{-\frac{m\theta}{2\hbar}x^2}\in \calS(\R).$$
Here $\calS(\R)$ represents the space of \textsf{Schwartz functions} on $\R$ (see Subsection \ref{free particle})
\begin{prop}
For $H_n(\widetilde{x})$ satisfying $H_0(\widetilde{x})=1$ and $H_{n+1}(\widetilde{x})=\frac{1}{\sqrt{2}}\left(2\widetilde{x}H_n(\widetilde{x})-\frac{\dd H_n(\widetilde{x})}{\dd \widetilde{x}}\right)$ we have 
$$\psi_n(\widetilde{x})=H_n(\widetilde{x})\psi_0(\widetilde{x}).$$
\end{prop}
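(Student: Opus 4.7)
The plan is to argue by induction on $n$, using the recursion $\psi_{n+1}=a^*\psi_n$ from the previous proposition together with the explicit form of $a^*$ in the rescaled variable $\widetilde{x}$.

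For the base case $n=0$, the claim reduces to $\psi_0(\widetilde{x})=H_0(\widetilde{x})\psi_0(\widetilde{x})=1\cdot\psi_0(\widetilde{x})$, which is immediate.

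For the inductive step, I would assume $\psi_n(\widetilde{x})=H_n(\widetilde{x})\psi_0(\widetilde{x})$ and apply $a^*=\tfrac{1}{\sqrt{2}}\bigl(\widetilde{x}-\tfrac{\dd}{\dd\widetilde{x}}\bigr)$. The key preliminary observation is that the defining equation $a\psi_0=0$ is equivalent (in the rescaled variable) to $\tfrac{\dd\psi_0}{\dd\widetilde{x}}=-\widetilde{x}\psi_0$, so by the product rule
\[
\frac{\dd}{\dd\widetilde{x}}\bigl(H_n(\widetilde{x})\psi_0(\widetilde{x})\bigr)=H_n'(\widetilde{x})\psi_0(\widetilde{x})-\widetilde{x}H_n(\widetilde{x})\psi_0(\widetilde{x}).
\]
Substituting this into $\psi_{n+1}=a^*\psi_n$ yields
\[
\psi_{n+1}(\widetilde{x})=\tfrac{1}{\sqrt{2}}\bigl(\widetilde{x}H_n(\widetilde{x})-H_n'(\widetilde{x})+\widetilde{x}H_n(\widetilde{x})\bigr)\psi_0(\widetilde{x})=\tfrac{1}{\sqrt{2}}\bigl(2\widetilde{x}H_n(\widetilde{x})-H_n'(\widetilde{x})\bigr)\psi_0(\widetilde{x}),
\]
which by the recursion defining $H_{n+1}$ equals $H_{n+1}(\widetilde{x})\psi_0(\widetilde{x})$, closing the induction.

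There is no real obstacle here: the only subtlety is to make sure one is genuinely working in the rescaled variable $\widetilde{x}$ so that $a^*$ takes the clean form $\tfrac{1}{\sqrt{2}}(\widetilde{x}-\partial_{\widetilde{x}})$, and to notice that the $-\widetilde{x}H_n\psi_0$ contribution from differentiating $\psi_0$ combines with $\widetilde{x}H_n\psi_0$ to produce precisely the $2\widetilde{x}H_n$ term appearing in the recursion for $H_{n+1}$. Everything else is a direct computation together with the $a^*\psi_n=\psi_{n+1}$ identity proved earlier.
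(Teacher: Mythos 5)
Your induction is correct and is the natural argument: the base case is trivial, and the inductive step follows from $\psi_{n+1}=a^*\psi_n$ together with $a^*=\tfrac{1}{\sqrt{2}}(\widetilde{x}-\tfrac{\dd}{\dd\widetilde{x}})$ and the relation $\tfrac{\dd\psi_0}{\dd\widetilde{x}}=-\widetilde{x}\psi_0$, which makes the product rule reproduce exactly the recursion defining $H_{n+1}$. The paper states this proposition without proof, and your computation supplies precisely the intended verification.
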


\begin{rem}
One can check that the family $\{\psi_n\}$ forms an orthogonal basis of $L^2(\R)$. 
\end{rem}

We want to ask the following question: Is $\{\hbar\omega(n+\frac{1}{2})$ for $n=0,1,2,...\}$ the full spectrum of $\widehat{H}$? The answer is yes, but the proof is not straight forward.

\subsection{Weyl Quantization on $\R^{2n}$}
Let $f$ be a sufficiently nice function, e.g. $f\in\calS(\R^{2n})$. We define $\mathscr{Q}_{Weyl}(f)$ as an operator on $L^2(\R^n)$ by 
\[
\mathscr{Q}_{Weyl}(f):=\frac{1}{(2\pi)^n}\int_{\R^{2n}}\widehat{f}(a,b)\underbrace{\ee^{\I(a\widehat{x}+b\widehat{p})}}_{U(a,b)}\dd a\dd b,
\]
where $\widehat{f}$ denotes the \textsf{Fourier transform} of $f$. We can compute $U(a,b)$ by using the BCH formula: $\ee^{A+B}=\ee^{[A,B]/2}\ee^{A}\ee^B$ if $[[A,B],B]=[A,[A,B]]$. Formally, we get
\[
U(a,b)=\ee^{-\frac{1}{2}[\I a\widehat{x},\I b\widehat{p}]}\ee^{\I a\widehat{x}}\ee^{\I b\widehat{p}}=\ee^{\frac{\I\hbar}{2}ab}\ee^{\I a\widehat{x}}\ee^{\I b\widehat{p}}.
\]
\begin{exe}
Show $\left(\ee^{\I b\widehat{p}}\psi\right)(x)=\psi(x+\hbar b)$.
\end{exe}
Using the exercise, we get $U(a,b)\psi(x)=\ee^{\I\hbar ab}\ee^{\I a\widehat{x}}\psi(x+\hbar b)$. There are some nice properties for the Weyl quantization:
\begin{itemize}
\item{If $f\in\calS(\R^{2n})$, then $\mathscr{Q}_{Weyl}(f)$ is a bounded operator on $L^2(\R^n)$. In fact, it is a Hilbert-Schmidt operator.
}
\item{The map $\mathscr{Q}_{Weyl}\colon \calS(\R^{2n})\to L^2(\R^n)$ is one-to-one.
}
\item{Let $f,g\in\calS(\R^{2n})$. Then $[\mathscr{Q}_{Weyl}(f),\mathscr{Q}_{Weyl}(g)]=\I\hbar\mathscr{Q}_{Weyl}(\{f,g\})+O(\hbar^2)$.
}
\end{itemize}

\section{Solving Schr\"odinger equations, Fourier Transform and Propagator}

Recall that in the Hamiltonian formalism of classical mechanics the dynamics (time evolution) was generated by Hamilton's equations associated to a Hamiltonian function $H\in C^\infty(T^*M)$. In quantum mechanics, it is postulated that time evolution is described by the Schr\"odinger equation associated to the quantum Hamiltonian $\widehat{H}$: Given $\psi\in \mathcal{H}$ we consider 
\begin{equation}
\label{Schroedinger2}
\begin{cases}
\I\hbar\frac{\dd}{\dd t}\psi(t)&=\widehat{H}\psi(t)\\
\psi(0)&=\psi
\end{cases}
\end{equation}
Before we discuss how to solve the Schr\"odinger equation (SE), let us briefly mention some features of the equation.
\begin{enumerate}
\item{The SE is a \textsf{linear} equation: If $\psi_1(t)$ and $\psi_2(t)$ solve the SE with $\psi_1(0)=\psi_1$ and $\psi_2(0)=\psi_2$, then $\alpha\psi_1(t)+\beta\psi_2(t)$ solve the SE with 
\[
\alpha\psi_1(0)+\beta\psi_2(0)=\alpha\psi_1+\beta\psi_2.
\]
\begin{rem}
The linear SE can easily be generalized to a nonlinear equation but we do not discuss that here.
\end{rem}
}
\item{The SE is \textsf{deterministic} in the sense that given $\psi\in\mathcal{H}$, there is a canonical way to produce $\psi(t)$ (we will make this precise later).}
\item{\textsf{Unitarity}: $\|\psi(t)\|^2=\|\psi\|^2$ for all $t$ (compare this with conservation of energy in classical mechanics).}
\end{enumerate}

\subsection{Solving the Schr\"odinger equation}

We start with a simple situation, namely we assume that $\{\lambda_j\}_{j\in I}$ are eigenvalues of $\widehat{H}$ and $\{\phi_{\lambda_j}\}$ form an orthonormal basis of $\mathcal{H}$, where $\phi_{\lambda_j}$ is an eigenvector associated to the eigenvalue $\lambda_j$, i.e. the equation $\widehat{H}\phi_{\lambda_j}=\lambda_j\phi_{\lambda_j}$ holds. We want to solve 
\begin{equation}
\label{eig_schroedinger}
\begin{cases}
\I\hbar \phi_{\lambda_j}(t)&=\widehat{H}\phi_{\lambda_j}(t)\\
\phi_{\lambda_j}(0)&=\phi_{\lambda_j}
\end{cases}
\end{equation}
We want to formulate the idea for solving this equation. Look for solutions of the form 
\[
\phi_{\lambda_j}(t)=f(t)\phi_{\lambda_j}.
\]
From \eqref{eig_schroedinger} it follows that 
\begin{equation}
\label{}
\begin{cases}
\I\hbar f'(t)\phi_{\lambda_j}&=\lambda_jf(t)\phi_{\lambda_j}\\
f(0)&=1
\end{cases}
\end{equation}
Clearly we can take $f(t)=\ee^{-\frac{\I}{\hbar}\lambda_j t}$, and we see that $\phi_{\lambda_j}(t)=\ee^{-\frac{\I}{\hbar}t\lambda_j}\phi_{\lambda_j}$ solves \eqref{eig_schroedinger}. Note that we can write 
\begin{equation}
\label{eig_eq}
\phi_{\lambda_j}(t)=\ee^{-\frac{\I}{\hbar}t\widehat{H}}\phi_{\lambda_j}. 
\end{equation}
Now equation \eqref{eig_eq} together with the linearity of the SE suggests that ``formally'' for all $\psi\in \mathcal{H}$, 
\begin{equation}
\label{sol}
\psi(t)=\ee^{-\frac{\I}{\hbar}t\widehat{H}}\psi
\end{equation}
solves the SE \eqref{Schroedinger2}. In fact, if $\psi\in D(\widehat{H})$, then using Stone's theorem it can be deduced that $\psi(t)\in D(\widehat{H})$ for all $t$, and in this case $\psi(t)$ defined as in \eqref{sol} indeed solves the SE \eqref{Schroedinger2}. Hence \eqref{sol} can be interpreted as a canonical time evolution of $\psi\in\mathcal{H}$. This is what is usually referred as the deterministic feature of the SE. 
\begin{rem}
To define $\psi(t)=\ee^{-\frac{\I}{\hbar}t\widehat{H}}\psi$ we do not need the assumption that it has an eigenbasis. We only need $\widehat{H}$ to be self adjoint. 
\end{rem}
\begin{defn}[Propagator]
The operator $U(t)=\ee^{-\frac{\I}{\hbar}t\widehat{H}}$ is called the (quantum mechanical) \textsf{propagator}.
\end{defn}
\begin{lem}
If $\{\phi_{\lambda_j}\}$ is an eigenbasis with $\phi_{\lambda_j}$ being eigenvectors associated to the eigenvalues $\lambda_j$ then 
\begin{equation}
U(t)=\sum_{j=1}^n\ee^{-\frac{\I}{\hbar}t\lambda_j}\phi_{\lambda_j}^*\otimes\phi_{\lambda_j},
\end{equation}
where $\phi_{\lambda_j}^*\in\mathcal{H}^*$ is the dual of $\phi_{\lambda_j}$.
\end{lem}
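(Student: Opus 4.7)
The plan is to check that both sides of the claimed identity agree when applied to an arbitrary vector $\psi\in\mathcal{H}$, using the eigenbasis expansion. Expand
$$\psi=\sum_j \langle \phi_{\lambda_j},\psi\rangle\,\phi_{\lambda_j},$$
which is valid because $\{\phi_{\lambda_j}\}$ is an orthonormal basis. Since $U(t)=\ee^{-\frac{\I}{\hbar}t\widehat{H}}$ is bounded (in fact unitary by Stone's theorem), I can interchange $U(t)$ with the (possibly infinite) sum and obtain $U(t)\psi=\sum_j \langle \phi_{\lambda_j},\psi\rangle\,U(t)\phi_{\lambda_j}$.

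Next, I would invoke the computation carried out just above the lemma: the solution of the Schrödinger equation with initial datum $\phi_{\lambda_j}$ is $\phi_{\lambda_j}(t)=\ee^{-\frac{\I}{\hbar}t\lambda_j}\phi_{\lambda_j}$, and by the definition of the propagator this is exactly $U(t)\phi_{\lambda_j}$. Substituting yields
$$U(t)\psi=\sum_j \ee^{-\frac{\I}{\hbar}t\lambda_j}\langle \phi_{\lambda_j},\psi\rangle\,\phi_{\lambda_j}.$$

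On the other hand, the rank-one operator $\phi_{\lambda_j}^*\otimes \phi_{\lambda_j}$ acts on $\psi$ by $(\phi_{\lambda_j}^*\otimes \phi_{\lambda_j})(\psi)=\phi_{\lambda_j}^*(\psi)\,\phi_{\lambda_j}=\langle \phi_{\lambda_j},\psi\rangle\,\phi_{\lambda_j}$, so the right-hand side of the claimed formula applied to $\psi$ produces precisely the same expression. Since $\psi$ was arbitrary, the two operators coincide.

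The only nonroutine point, and the one I would spell out carefully, is the justification of moving $U(t)$ inside the sum when the eigenbasis is infinite: this follows from the continuity of the bounded operator $U(t)$ together with convergence of the partial sums of the orthonormal expansion in $\mathcal{H}$. In the finite-dimensional case (as suggested by the notation $\sum_{j=1}^n$ in the statement) this concern disappears entirely, and the proof reduces to the two line check above.
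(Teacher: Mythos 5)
Your proof is correct and follows essentially the same route as the paper's: expand $\psi$ in the eigenbasis, apply $U(t)$ termwise using $U(t)\phi_{\lambda_j}=\ee^{-\frac{\I}{\hbar}t\lambda_j}\phi_{\lambda_j}$, and match this against the action of the rank-one operators $\phi_{\lambda_j}^*\otimes\phi_{\lambda_j}$ via $\phi_{\lambda_j}^*(\phi_{\lambda_k})=\delta_{jk}$. Your extra remark justifying the interchange of $U(t)$ with an infinite sum is a welcome refinement that the paper omits.
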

\begin{proof}
Let $\psi\in\mathcal{H}$. Then we can write it as a linear combination $\psi=\sum_{k=1}^nc_k\phi_{\lambda_{k}}$. We know that 
\begin{equation}
U(t)\psi=\sum_{k=1}^nc_kU(t)\phi_{\lambda_k}=\sum_{k=1}^nc_k\ee^{-\frac{\I}{\hbar}t\lambda_k}\phi_{\lambda_k}.
\end{equation}
On the other hand
\begin{equation}
\left(\sum_{j=1}^n\ee^{-\frac{\I}{\hbar}t\lambda_j}\phi_{\lambda_j}^*\otimes\phi_{\lambda_j}\right)\psi=\sum_{k,j=1}^nc_k\ee^{-\frac{\I}{\hbar}t\lambda_k}\phi_{\lambda_j}\underbrace{\phi_{\lambda_j}^*(\phi_{\lambda_k})}_{=\delta_{jk}}\phi_{\lambda_j}
=\sum_{k=1}^nc_k\ee^{-\frac{\I}{\hbar}t\lambda_k}\phi_{\lambda_k}.
\end{equation}
Thus for all $\psi$ we get 
\[
U(t)\psi=\left(\sum_{j=1}^n\ee^{-\frac{\I}{\hbar}t\lambda_j}\phi_{\lambda_j}^*\otimes\phi_{\lambda_j}\right)\psi.
\]
\end{proof}
Let us give a short summary of the discussion so far. 
\begin{itemize}
\item{The operator $U(t)=\ee^{-\frac{\I}{\hbar}t\widehat{H}}$ can be used to describe time evolution of states in a canonical way.
}
\item{If $\widehat{H}$ has a eigenbasis $\{\phi_{\lambda_j}\}$, correpsondig to the eigenvalues $\lambda_j$, then $U(t)$ can be described explicitely as $$U(t)=\sum_{j=1}^n\ee^{-\frac{\I}{\hbar}t\widehat{H}}\phi_{\lambda_j}^*\otimes\phi_{\lambda_j}.$$
}
\end{itemize}
\subsection{The Schr\"odinger equation for the free particle moving on $\R$}
\label{free particle}
Recall that we have $\mathcal{H}=L^2(\R)$ and $\widehat{H}=\frac{1}{2m}\p^2=-\frac{\hbar^2}{2m}\frac{\dd^2}{\dd x^2}$. Hence the SE \eqref{Schroedinger2} becomes
\begin{equation}
\label{time_dep}
\begin{cases}
\I\hbar\frac{\partial}{\partial t}\psi(x,t)&=-\frac{\hbar^2}{2m}\frac{\partial^2}{\partial x^2}\psi(x,t)\\
\psi(x,0)&=\psi(x)
\end{cases}
\end{equation}
Here we will discuss how to solve \eqref{time_dep} with Fourier transform. We will also try to find an explicit representation of $U(t)$.

\subsubsection{Digression on Fourier Transform} 
We will briefly recall the definition and properties of the Fourier transform. Let $\mathcal{S}(\R^n)$ be the space of \textsf{Schwartz functions} on $\R^n$. Recall that $f\in \mathcal{S}(\R^n)$ roughly means that $f\in C^\infty(\R^n)$ and $f$ and all its derivatives approach to zero as $\vert x\vert\to\infty$ faster than any polynomial function approaches to infinity. Now let $f\in\mathcal{S}(\R^n)$. The \textsf{Fourier transform} $\mathcal{F}(f)$, or simply $\widehat{f}$, of $f$ is defined by 
\begin{equation}
\label{Fourier}
\widehat{f}(k)=\frac{1}{(2\pi)^{\frac{n}{2}}}\int_{\R^n}\ee^{-\I \langle k,x\rangle}f(x)\dd x,
\end{equation}
where $\langle\enspace,\enspace\rangle:\R^n\times\R^n\to\R$ again denotes the standard inner product on $\R^n$. 
We want to list some properties of the Fourier transform without proofs:
\begin{enumerate}[$(i)$]
\item{If $f\in \mathcal{S}(\R^n)$, then $\widehat{f}\in \mathcal{S}(\R^n)$.
}
\item{Let $f\in\mathcal{S}(\R^n)$. Then 
\begin{align}
\widehat{\frac{\partial f}{\partial x_j}}&=\I k_j\widehat{f},\\
\widehat{x_jf}&=\I\frac{\partial \widehat{f}}{\partial k_j}
\end{align}
}
\item{Let $f\in\mathcal{S}(\R^n)$, then 
\begin{equation}
\label{Fourier_inv}
\mathcal{F}^{-1}(\widehat{f})(x)=f(x)=\frac{1}{(2\pi)^{\frac{n}{2}}}\int_{\R^n}\ee^{\I\langle k,x\rangle}\widehat{f}(k)\dd k.
\end{equation}
This is called the \textsf{inverse Fourier transform}.
}
\item{Let $f\in\mathcal{S}(\R^n)$, then 
\begin{equation}
\label{Plancherel}
\int_{\R^n}\vert f(x)\vert^2\dd x=\int_{\R^n}\vert \widehat{f}(k)\vert^2\dd x.
\end{equation}
This is called \textsf{Plancherel's formula}.
}
\item{
\begin{thm}[Combined inversion and Plancherel formula]
The Fourier transform $\mathcal{F}:\mathcal{S}(\R^n)\to\mathcal{S}(\R^n)$ can be extended to a unique bounded map $\mathcal{F}:L^2(\R^n)\to L^2(\R^n)$. This map can be computed as 
\begin{equation}
\mathcal{F}(f)(k)=\frac{1}{(2\pi)^{\frac{n}{2}}}\lim_{A\to\infty}\int_{\vert x\vert\leq A}\ee^{-\I \langle k,x\rangle}f(x)\dd x.
\end{equation}
Moreover, the inverse Fourier transform $\mathcal{F}^{-1}:L^2(\R^n)\to L^2(\R^n)$ is unitary and 
\begin{equation}
\mathcal{F}^{-1}(f)(k)=\frac{1}{(2\pi)^{\frac{n}{2}}}\lim_{A\to\infty}\int_{\vert x\vert\leq A}\ee^{\I \langle k,x\rangle}\widehat{f}(k)\dd k.
\end{equation}
\end{thm}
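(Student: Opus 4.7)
The plan is to combine Plancherel's formula on Schwartz space with a density argument, and then justify the concrete truncation formula via approximation by nicer functions. First, property \eqref{Plancherel} asserts that $\calF\colon\calS(\R^n)\to\calS(\R^n)$ is an isometry for the $L^2$-norm. Since $C_c^\infty(\R^n)\subset\calS(\R^n)$ and the compactly supported smooth functions are dense in $L^2(\R^n)$, the subspace $\calS(\R^n)$ is itself dense. The bounded linear transformation (BLT) extension theorem then produces a unique bounded extension $\calF\colon L^2(\R^n)\to L^2(\R^n)$, which by continuity of the norm remains an isometry.

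To justify the truncation formula, given $f\in L^2(\R^n)$ I would set $f_A:=f\cdot\chi_{\{|x|\le A\}}$. Dominated convergence (with majorant $|f|^2$) yields $f_A\to f$ in $L^2$, hence $\calF(f_A)\to\calF(f)$ in $L^2$ by continuity of the extension. On the other hand, since $f_A$ is compactly supported, Cauchy--Schwarz gives $f_A\in L^1(\R^n)\cap L^2(\R^n)$, so the absolutely convergent Lebesgue integral
\[
g_A(k):=\frac{1}{(2\pi)^{n/2}}\int_{|x|\le A}\ee^{-\I\langle k,x\rangle}f(x)\dd x
\]
is well defined. The identification $g_A=\calF(f_A)$ would then be obtained by approximating $f_A$ simultaneously in the $L^1$ and $L^2$ norms by a sequence of Schwartz functions $\varphi_j$ (via convolution with a mollifier followed by a smooth cutoff), for which the same integral formula gives $\calF(\varphi_j)\in\calS(\R^n)$ directly. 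Uniform convergence in $k$ (from $L^1$-convergence of $\varphi_j$ to $f_A$) forces $g_A=\calF(f_A)$ almost everywhere, and combining with the $L^2$-limit above yields the claimed formula.

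Finally, the inversion formula \eqref{Fourier_inv} on $\calS(\R^n)$ defines an analogous operator $\calF^{-1}$ which is also an isometry on $\calS(\R^n)$ and satisfies $\calF\circ\calF^{-1}=\id=\calF^{-1}\circ\calF$ on the dense subspace. Extending both sides by continuity, these relations persist on $L^2(\R^n)$, so $\calF$ is a bijective isometry, i.e.\ unitary, and the truncation formula for $\calF^{-1}$ follows by the analogous approximation argument applied to $\widehat f$. The hard step in this plan will be the identification of the abstract $L^2$-extension with the concrete Lebesgue integral on $L^1\cap L^2$: the extension is defined abstractly as a limit in the $L^2$-norm, so for a general $L^2$-function no pointwise integral formula is available, and it is precisely this mollifier/cutoff bridge that has to be made with care.
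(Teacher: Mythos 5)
Your proposal is correct. Note that the paper lists this theorem among properties of the Fourier transform stated explicitly \emph{without proofs}, so there is no in-text argument to compare against; what you give is the standard proof (as in Reed--Simon \cite{RS}): Plancherel's formula \eqref{Plancherel} makes $\calF$ an $L^2$-isometry on the dense subspace $\calS(\R^n)$, the BLT theorem supplies the unique bounded extension, and unitarity follows because $\calF\circ\calF^{-1}=\id=\calF^{-1}\circ\calF$ persists under continuous extension of both isometries. The step you correctly flag as delicate --- identifying the abstract extension with the concrete integral --- is handled soundly: for $f_A=f\chi_{\{|x|\le A\}}\in L^1\cap L^2$, approximating simultaneously in $L^1$ and $L^2$ by mollified cutoffs makes the uniform limit of $\calF(\varphi_j)$ and the $L^2$-limit agree a.e., so $g_A=\calF(f_A)$. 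One point worth stating explicitly in a written version: the limit $\lim_{A\to\infty}$ in the displayed formula is to be read in the $L^2$-norm (which is exactly what your argument $\calF(f_A)\to\calF(f)$ in $L^2$ delivers), not pointwise; the same remark applies to the formula for $\calF^{-1}$.
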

\begin{rem}
If $f\in L^1(\R^n)\cap L^2(\R^n)$, then 
\[
\mathcal{F}(f)(k)=\frac{1}{(2\pi)^{\frac{n}{2}}}\int_{\R^n}\ee^{-\I \langle k,x\rangle}f(x)\dd x,
\]
because in this case 
\[
\lim_{A\to\infty}\int_{\vert x\vert \leq A}\ee^{-\I\langle k,x\rangle}f(x)\dd x=\int_{\R^n}\ee^{-\I\langle k,x\rangle}f(x)\dd x
\]
by dominated convergence.
\end{rem}
}
\item{Let $f$ and $g$ be two measurable functions. Then the \textsf{convolution} $f*g$ of $f$ and $g$ is defined as 
\[
(f*g)(x)=\int_{\R^n}f(x-y)g(y)\dd y,
\]
where we assume that the right hand side exists. Suppose $f,g\in L^1(\R^n)\cap L^2(\R^n)$. Then 
\[
\frac{1}{(2\pi)^{\frac{n}{2}}}\mathcal{F}(f* g)=\mathcal{F}(f)\mathcal{F}(g).
\]
}
\end{enumerate}

\subsection{Solving the Schr\"odinger equation with Fourier Transform}

First, we look for solutions of the form $\psi(x,t)=\ee^{\I(kx-\omega(k)t)}$. From \eqref{time_dep}, it is clear that $\psi(x,t)$ is a solution iff $\omega(k)=\frac{\hbar k^2}{2m}$. Hence, 
\begin{equation}
\label{sol2}
\psi(x,t)=\ee^{\I kx-\I\frac{\hbar k^2}{2m}t} 
\end{equation}
is a solution. However, note that, such $\psi(x,t)\not\in L^2(\R^n)$. Therefore, $\psi(x,t)$ is not the solution we are looking for. Here, the idea is to use $\psi(x,t)$ to produce a senseble solution of \eqref{time_dep}

\begin{prop}
\label{prop2}
Let $\psi_0\in \mathcal{S}(\R)$ and let $\widehat{\psi}_0$ be its Fourier transform. Define 
\begin{equation}
\label{sol3}
\psi(x,t)=\frac{1}{(2\pi)^{\frac{1}{2}}}\int_\R\widehat{\psi}_0(k)\ee^{\I(kx-\omega(k)t)}\dd k.
\end{equation}
Then $\psi(x,t)$ is a solution of \eqref{time_dep} with $\psi(x,0)=\psi_0(x)$.
\end{prop}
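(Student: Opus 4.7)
My plan is to verify the two conditions separately: first the initial condition $\psi(x,0) = \psi_0(x)$, and then the PDE. Both amount to short calculations once we justify differentiation under the integral sign, so the real content is just checking that the Schwartz-class decay of $\widehat{\psi}_0$ makes every manipulation legitimate.

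\textbf{Initial condition.} Setting $t=0$ in the definition gives
\[
\psi(x,0) = \frac{1}{(2\pi)^{1/2}}\int_\R \widehat{\psi}_0(k)\, \ee^{\I k x}\, \dd k,
\]
which is exactly the inverse Fourier transform $\mathcal{F}^{-1}(\widehat{\psi}_0)(x)$. By the Fourier inversion formula \eqref{Fourier_inv} applied to $\psi_0 \in \mathcal{S}(\R)$, this equals $\psi_0(x)$.

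\textbf{The Schr\"odinger equation.} I want to differentiate under the integral sign. Since $\psi_0 \in \mathcal{S}(\R)$, its Fourier transform $\widehat{\psi}_0 \in \mathcal{S}(\R)$ as well (property $(i)$), so $k^n \widehat{\psi}_0(k)$ is integrable for every $n \geq 0$. This provides a dominating function uniform in $(x,t)$ on compact sets, and hence justifies differentiating $\psi(x,t)$ arbitrarily many times in $x$ and $t$ by moving the derivative inside the integral. Computing directly with $\omega(k) = \hbar k^2/(2m)$,
\[
\I\hbar \frac{\partial}{\partial t}\psi(x,t) = \frac{1}{(2\pi)^{1/2}}\int_\R \widehat{\psi}_0(k)\, \I\hbar \cdot (-\I\omega(k))\, \ee^{\I(kx-\omega(k)t)}\, \dd k = \frac{\hbar^2}{2m}\frac{1}{(2\pi)^{1/2}}\int_\R k^2 \widehat{\psi}_0(k)\, \ee^{\I(kx-\omega(k)t)}\, \dd k,
\]
while
\[
-\frac{\hbar^2}{2m}\frac{\partial^2}{\partial x^2}\psi(x,t) = -\frac{\hbar^2}{2m}\frac{1}{(2\pi)^{1/2}}\int_\R \widehat{\psi}_0(k)\, (\I k)^2\, \ee^{\I(kx-\omega(k)t)}\, \dd k = \frac{\hbar^2}{2m}\frac{1}{(2\pi)^{1/2}}\int_\R k^2 \widehat{\psi}_0(k)\, \ee^{\I(kx-\omega(k)t)}\, \dd k.
\]
The two expressions agree, so $\psi(x,t)$ satisfies \eqref{time_dep}.

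\textbf{Remarks on difficulty.} There is no serious obstacle here: the only delicate point is justifying that $\partial_t$ and $\partial_x^2$ commute with the integral, and the Schwartz decay of $\widehat{\psi}_0$ makes this immediate via dominated convergence. Morally the statement just says that \eqref{sol2}, being a plane-wave solution for each fixed $k$, can be superposed against the Fourier data of $\psi_0$ to produce the unique solution; this is essentially the general construction of propagators from spectral data, specialized to the momentum basis in which $\widehat H$ is diagonal with eigenvalue $\hbar k^2/(2m) = \hbar\omega(k)$.
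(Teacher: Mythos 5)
Your proof is correct and follows the same strategy as the paper's: justify differentiation under the integral via the Schwartz decay of $\widehat{\psi}_0$, observe that each plane wave $\ee^{\I(kx-\omega(k)t)}$ solves the equation, and recover the initial condition from the Fourier inversion formula. You simply carry out the computation more explicitly than the paper does, which is fine.
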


\begin{proof}
Since $\widehat{\psi}_0(k)\in\mathcal{S}(\R)$, we can check that the derivatives with respect to $x$ and $t$ can be interchanged with the integral sign in the definition of $\psi(x,t)$. Since $\ee^{\I(kx-\omega(k)t)}$ solves the SE, we can easily check that $\psi(x,t)$ solves \eqref{time_dep}. Moreover, 
\[
\psi(x,0)=\frac{1}{(2\pi)^{\frac{1}{2}}}\int_\R \ee^{\I kx}\widehat{\psi}_0(k)\dd t=\psi_0(x),
\]
where the last equatlity holds because of the inverse Fourier transform.
\end{proof}

\begin{cor}
Let $\psi_0$ be as in proposition \ref{prop2}. Let $\widehat{\psi}(k,t)$ be the Fourier transform of $\psi(x,t)$ with respect to $t$. Then 
\[
\widehat{\psi}(x,t)=\widehat{\psi}_0(k)\ee^{-\I\omega(k)t}.
\]
\end{cor}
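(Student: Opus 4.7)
The statement (modulo the apparent typos ``with respect to $t$'' and ``$\widehat{\psi}(x,t)$'' on the left-hand side, which should read ``with respect to $x$'' and ``$\widehat{\psi}(k,t)$'') is essentially a direct read-off from the formula \eqref{sol3} in Proposition \ref{prop2}. The plan is simply to recognize \eqref{sol3} as an inverse Fourier transform in the $x$-variable and invoke Fourier inversion.

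First I would rewrite \eqref{sol3} as
\[
\psi(x,t) \;=\; \frac{1}{(2\pi)^{1/2}}\int_\R \bigl(\widehat{\psi}_0(k)\ee^{-\I\omega(k)t}\bigr)\ee^{\I kx}\dd k,
\]
where for each fixed $t$ the bracketed factor depends only on $k$. Comparing with the inverse Fourier transform formula \eqref{Fourier_inv} in one variable, the right-hand side is exactly $\mathcal{F}^{-1}\bigl(k\mapsto \widehat{\psi}_0(k)\ee^{-\I\omega(k)t}\bigr)(x)$. Applying $\mathcal{F}$ (in $x$) to both sides then yields
\[
\widehat{\psi}(k,t)\;=\;\widehat{\psi}_0(k)\,\ee^{-\I\omega(k)t},
\]
as claimed.

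The only thing that needs minor justification is that Fourier inversion applies here, i.e.\ that the integrand $\widehat{\psi}_0(k)\ee^{-\I\omega(k)t}$ lies in a class (say $\mathcal{S}(\R)$, or at least $L^1\cap L^2$) for which the inversion theorem of the preceding subsection is available. Since $\psi_0\in\mathcal{S}(\R)$ implies $\widehat{\psi}_0\in\mathcal{S}(\R)$ by property $(i)$ of the Fourier transform, and multiplication by the unimodular factor $\ee^{-\I\omega(k)t}$ preserves membership in $\mathcal{S}(\R)$ (the phase is smooth in $k$ with polynomially bounded derivatives, so products with Schwartz functions remain Schwartz), the inversion theorem applies and the computation is rigorous.

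There is no real obstacle here; the only conceptual point is recognizing that the dispersion integral \eqref{sol3} \emph{is} an inverse Fourier transform in $x$ with parameter $t$, which immediately pins down the Fourier transform $\widehat{\psi}(\,\cdot\,,t)$ by uniqueness.
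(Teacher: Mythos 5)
Your proposal is correct and is essentially the paper's own argument: the paper likewise just rewrites \eqref{sol3} as the inverse Fourier transform of $k\mapsto\widehat{\psi}_0(k)\ee^{-\I\omega(k)t}$ and declares that the claim follows. You additionally note (correctly) that the integrand stays in $\mathcal{S}(\R)$ so that inversion is justified, and you rightly flag the typos in the statement; the paper's proof omits both points but is otherwise identical.
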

\begin{proof}
From proposition \ref{prop2} we know 
\[
\psi(x,t)=\frac{1}{(2\pi)^{\frac{1}{2}}}\int_\R\ee^{\I kx}\left(\ee^{\I\omega(k)t}\widehat{\psi}_0(k)\right)\dd k.
\]
Thus, the claim follows.
\end{proof}
Form property $(vi)$ of Fourier transforms, formally we get 
\[
\ee^{-\I\omega(k)t}\widehat{\psi}_0(k)=\frac{1}{(2\pi)^{\frac{1}{2}}}\mathcal{F}(K_t*\psi_0),
\]
where $\mathcal{F}(K_t)=\ee^{-\I\omega(k)t}$, i.e. $K_t=\mathcal{F}^{-1}\left(\ee^{-\I\omega(k)t}\right)=\frac{1}{(2\pi)^{\frac{1}{2}}}\int_\R\ee^{\I kx}\ee^{-\I\omega(k)t}\dd k$. Again, a ``formal computation'' shows that 
\[
K_t(x)=\sqrt{\frac{m}{\I2\pi\hbar t}}\ee^{\frac{\I mx^2}{2t\hbar}}.
\]
The computation of $K_t(x)$ is ``formal'' because $\ee^{-\I\omega(k)t}\not\in L^1(\R)\cap L^2(\R)$ and thus we do not know how to take the inverse Fourier transform of it. Hence, we need a way to make sense of an integral of the form 
\begin{equation}
\label{Fourier_sense}
\int_\R\ee^{\I kx}\ee^{-\I\omega(k)t}\dd k.
\end{equation}
Integrals of the form \eqref{Fourier_sense} are called \textsf{Fresnel Integrals}. 
\subsubsection{Digression on Fresnel Integrals} Let $Q$ be a real, symmetric $n\times n$-matrix with $\det(Q)\not=0$. An integral of the form 
\[
\int_{\R^n}\ee^{\frac{\I}{2}\langle Qx,x\rangle}\dd x
\]
is called a \textsf{Fresnel integral}, and is defined as 
\[
\int_{\R^n}\ee^{\frac{\I}{2}\langle Qx,x\rangle}\dd x:=\lim_{\varepsilon\to 0}\int_{\R^n}\ee^{-\frac{1}{2}\varepsilon\langle x,x\rangle}\ee^{\frac{\I}{2}\langle Qx,x\rangle}\dd x.
\]
As a matter of fact we have 
\[
\int_{\R^n}\ee^{\frac{\I}{2}\langle Qx,x\rangle}\dd x=\ee^{\frac{\pi \I}{4}sign(Q)}\frac{1}{\left\vert\det\left(\frac{Q}{2\pi}\right)\right\vert^{\frac{1}{2}}},
\]
where $sign(Q)=\#\text{positive eigenvalues}-\#\text{negative eigenvalues}$. More generally, for $\omega\in \R^n$, we have 
\begin{align}
\begin{split}
\label{eleven}
\int_{\R^n}\ee^{\frac{\I}{2}\langle Qx,x\rangle}\ee^{\langle\omega,x\rangle}\dd x&=\lim_{\varepsilon\to 0}\int_{\R^n}\ee^{\frac{\I}{2}\langle Qx,x\rangle-\frac{1}{2}\varepsilon\langle x,x\rangle}\ee^{\langle\omega,x\rangle}\dd x\\
&=\frac{\ee^{\frac{\pi \I}{4}sign(Q)}}{\left\vert\det\left(\frac{Q}{2\pi}\right)\right\vert^{\frac{1}{2}}}\ee^{\frac{\I}{2}\langle Q^{-1}\omega,\omega\rangle}.
\end{split}
\end{align}
We use this general result, to compute
\begin{equation}
\label{Fresnel}
\frac{1}{(2\pi)^{\frac{1}{2}}}\int_\R\ee^{-\I\omega(k)t}\ee^{\I kx}\dd k.
\end{equation}
Now, using \eqref{eleven}, it can be easily checked that 
\[
K_t(x)=\sqrt{\frac{m}{2\pi\I kt}}\ee^{\frac{\I mx^2}{2t\hbar}}.
\]
\begin{rem}
There is also another way to define \eqref{Fresnel} (see \cite{BH}). 
\end{rem}
Now we make our previous formal discussion mathematically.
\begin{prop}
Suppose $\psi_0\in L^1(\R)\cap L^2(\R)$ and define 
\[
\psi(x,t)=\mathcal{F}^{-1}\left(\widehat{\psi}_0(k)\ee^{-\frac{\hbar k^2 t}{2m}}\right).
\]
Then $\psi(x,t)=K_t*\psi_0$, where $K_t(x)=\sqrt{\frac{m}{2\pi\I kt}}\ee^{\frac{\I mx^2}{2t\hbar}}$.
\end{prop}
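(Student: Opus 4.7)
The plan is to use the convolution theorem (property $(vi)$ of the Fourier transform digression) to reduce the identity to computing $\mathcal{F}^{-1}(\ee^{-\I\omega(k)t})$, and then to realise this ill-defined inverse transform via a regularised Fresnel integral. The first observation is that one cannot simply take $K_t$ to be the inverse Fourier transform of $\ee^{-\I\omega(k)t}$ pointwise, since this function lies in neither $L^1(\R)$ nor $L^2(\R)$. Following the Fresnel prescription given just before the proposition, I would therefore introduce a regularised kernel
\[
K_t^\varepsilon(x) := \frac{1}{(2\pi)^{1/2}}\int_\R \ee^{\I kx}\,\ee^{-\I\omega(k)t}\,\ee^{-\varepsilon k^2/2}\dd k,\qquad \varepsilon>0,
\]
which is a genuine Gaussian integral whose quadratic coefficient $\varepsilon+\I\hbar t/m$ has positive real part. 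A direct Gaussian computation (or formula \eqref{eleven} with $Q=-\hbar t/m+\I\varepsilon$ and $\omega=\I x$) yields
\[
K_t^\varepsilon(x)=\sqrt{\frac{m}{m\varepsilon+\I\hbar t}}\,\ee^{-\frac{m x^2}{2(m\varepsilon+\I\hbar t)}},
\]
which converges pointwise to $K_t(x)$ as $\varepsilon\to 0$.

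For each $\varepsilon>0$ both $K_t^\varepsilon$ and $\psi_0$ lie in $L^1(\R)\cap L^2(\R)$, so property $(vi)$ applies and gives
\[
(K_t^\varepsilon*\psi_0)(x)=\mathcal{F}^{-1}\!\bigl(\widehat{\psi}_0(k)\,\ee^{-\I\omega(k)t}\,\ee^{-\varepsilon k^2/2}\bigr)(x).
\]
The plan is now to pass to the limit $\varepsilon\to 0$ on both sides. On the Fourier side, the pointwise bound $\lvert\widehat{\psi}_0(k)\ee^{-\I\omega(k)t}\ee^{-\varepsilon k^2/2}\rvert\leq\lvert\widehat{\psi}_0(k)\rvert\in L^2(\R)$ (using Plancherel) combined with the unitarity of $\mathcal{F}^{-1}$ on $L^2$ and dominated convergence shows that the right-hand side converges in $L^2$ to $\mathcal{F}^{-1}(\widehat{\psi}_0\,\ee^{-\I\omega(k)t})=\psi(\cdot,t)$.

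On the convolution side, the crucial observation is that $\mathrm{Re}\bigl(1/(m\varepsilon+\I\hbar t)\bigr)=m\varepsilon/((m\varepsilon)^2+(\hbar t)^2)\geq 0$, so the Gaussian exponential in $K_t^\varepsilon$ has modulus at most $1$, and consequently
\[
\lvert K_t^\varepsilon(x)\rvert\leq\sqrt{\frac{m}{\lvert m\varepsilon+\I\hbar t\rvert}}\leq\sqrt{\frac{m}{\hbar t}}
\]
uniformly in $\varepsilon\geq 0$ and $x\in\R$. Since $\psi_0\in L^1(\R)$, dominated convergence gives $(K_t^\varepsilon*\psi_0)(x)\to(K_t*\psi_0)(x)$ pointwise. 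Identifying the pointwise limit with the $L^2$ limit (a.e.) yields the desired equality $\psi(x,t)=(K_t*\psi_0)(x)$.

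The main obstacle is precisely the justification of this interchange of limits: because the un-regularised kernel $\ee^{-\I\omega(k)t}$ fails to be integrable, one has to choose a regularisation, verify that it preserves the convolution identity at finite $\varepsilon$, and then control both sides uniformly. The uniform bound on $\lvert K_t^\varepsilon\rvert$ exhibited above is the key analytic input; once it is in hand, everything else reduces to standard dominated-convergence arguments and the Gaussian integration underlying the Fresnel formula \eqref{eleven}.
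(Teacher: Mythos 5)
Your argument is correct, and it takes a genuinely different route from the one sketched in the paper. The paper regularizes in \emph{position} space: it truncates the kernel to $K_t\chi_{[-n,n]}$, applies the convolution theorem to the truncated (hence $L^1\cap L^2$) kernel, and then asserts without proof that $K_t\chi_{[-n,n]}*\psi_0\to K_t*\psi_0$ and $\mathcal{F}(K_t\chi_{[-n,n]})\mathcal{F}(\psi_0)\to \tfrac{1}{\sqrt{2\pi}}\ee^{-\I\hbar k^2 t/2m}\widehat{\psi}_0$ in $L^2$. You regularize in \emph{momentum} space with the Gaussian damping $\ee^{-\varepsilon k^2/2}$, which is exactly the Fresnel prescription the paper introduces in the preceding digression. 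Your version buys two concrete advantages: the regularized kernel $K_t^\varepsilon$ has an explicit closed form, and its modulus is bounded by $\sqrt{m/(\hbar t)}$ uniformly in $\varepsilon$ and $x$, so both limits (the $L^2$ limit on the Fourier side via dominated convergence and unitarity, and the pointwise limit of the convolutions via dominated convergence against $|\psi_0|\in L^1$) are fully justified rather than left as claims; the final identification of the two limits a.e.\ via a subsequence is also correct. The one thing to tighten is the $2\pi$ bookkeeping: with the paper's normalization of property $(vi)$, namely $\tfrac{1}{\sqrt{2\pi}}\mathcal{F}(f*g)=\mathcal{F}(f)\mathcal{F}(g)$, your displayed identity $(K_t^\varepsilon*\psi_0)=\mathcal{F}^{-1}\bigl(\widehat{\psi}_0\,\ee^{-\I\omega(k)t}\ee^{-\varepsilon k^2/2}\bigr)$ carries a missing factor of $\sqrt{2\pi}$, and your $\varepsilon\to 0$ limit $\sqrt{m/(\I\hbar t)}\,\ee^{\I mx^2/2\hbar t}$ is $\sqrt{2\pi}$ times the $K_t$ of the statement; these two discrepancies cancel in the end (and the paper itself is inconsistent on this point), but a careful write-up should either define $K_t^\varepsilon$ with a $\tfrac{1}{2\pi}$ prefactor or insert the $\sqrt{2\pi}$ in the convolution identity so that the limiting kernel is literally $\sqrt{m/(2\pi\I\hbar t)}\,\ee^{\I mx^2/2\hbar t}$.
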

\begin{proof}
We will only briefly sketch the proof. The idea here is to show that 
\begin{equation}
\label{conv}
\mathcal{F}(K_t*\psi_0)=\widehat{\psi}_0(k)\ee^{-\frac{\I \hbar k^2 t}{2m}}.
\end{equation}
We can not talk about $\mathcal{F}(K_t)$ as $K_t\not\in L^2(\R)$. However, we can consider $K_t\chi_{[-n,n]}$ and its Fourier transform. Observe that 
\[
\frac{1}{(2\pi)^{\frac{1}{2}}}\mathcal{F}(K_t\chi_{[-n,n]}*\psi_0)=\mathcal{F}(K_t\chi_{[-n,n]})\mathcal{F}(\psi_0). 
\]
It can be shown that $K_t\chi_{[-n,n]}*\psi_0\xrightarrow{n\to\infty}K_t*\psi$ in $L^2(\R)$ and 
\[
\mathcal{F}(K_t\chi_{[-n,n]})\mathcal{F}(\psi_0)\xrightarrow{n\to\infty}\frac{1}{(2\pi)^{\frac{1}{2}}}\ee^{-\frac{\I\hbar k^2t}{2m}}\widehat{\psi}_0
\]
in $L^2(\R)$. These two observations imply that \eqref{conv} holds and hence 
\[
K_t*\psi_0=\mathcal{F}^{-1}\left(\widehat{\psi}_0(k)\ee^{-\frac{\hbar k^2 t}{2m}}\right).
\]
\end{proof}

\subsubsection{Summary of the discussion} We have shown that if $\psi_0\in L^1(\R)\cap L^2(\R)$, then 
\[
\ee^{-\frac{\I}{\hbar}t\widehat{H}}\psi_0=\left(\mathcal{F}^{-1}\circ \mathsf{m}\circ \mathcal{F}\right)\psi_0,
\]
where $(\mathsf{m}f)(k)=\ee^{-\frac{\I \hbar k^2 t}{2m}}f(k)$. This means we have shown that the following diagram is commutative.

\diagram
\centering
L^1(\R)\cap L^2(\R)&\rTo^{\mathcal{F}}&L^1(\R)\cap L^2(\R)\\
\dTo^{\ee^{-\frac{\I}{\hbar}t\widehat{H}}} & & \dTo_{\mathsf{m}}\\
L^2(\R)&\lTo_{\mathcal{F}^{-1}}&L^2(\R)
\enddiagram

Moreover, we have shown that $(\mathcal{F}^{-1}\circ \mathsf{m}\circ \mathcal{F})\psi_0=K_t*\psi_0$. Finally, combining these results, we conclude that 
\[
\left(\ee^{-\frac{\I}{\hbar}t\widehat{H}}\psi_0\right)(x)=\sqrt{\frac{m}{2\pi\I\hbar t}}\int_\R\ee^{\frac{\I m(x-y)^2}{2t\hbar}}\psi_0(y)\dd y,
\]
i.e. the integral kernel of $\ee^{-\frac{\I}{\hbar}t\widehat{H}}$ is $K_t(x-y)=\sqrt{\frac{m}{2\pi\I\hbar t}}\ee^{\frac{ \I m(x-y)^2}{2t\hbar }}$.
\begin{rem}
One can check that $K_t(x)$ satisfies the SE and $\lim_{t\to 0}K_t(x)=\delta(x)$ in distributional sense.
\end{rem}
\begin{defn}[Fundamental solution]
$K_t(x)$ is called the \textsf{fundamental solution} of the SE.
\end{defn}
\begin{rem}
One can easily extend the discussion above for the free particle in $\R^n$.
\end{rem}

\part{The Path Integral Approach to Quantum Mechanics}

We saw that Hamilton's approach to classical mechanics inspired an axiomatic approach to quantum mechanics. Hence, it is natural to ask whether there is a ``Lagrangian formulation" of quantum mechanics. Dirac, who viewed Lagrangian mechanics more fundamental, took first steps towards a Lagrangian formulation of quantum mechanics. Feynman advanced it further, which gave rise to the path integral formulation of quantum theory. Dirac suggested that the quantum mechanical propagator $K(t,x,y)$ may be represented by 
\begin{equation}
\label{pathint}
\int_{\gamma \in P(t,x,y)}\ee^{\frac{\I}{\hbar}S(\gamma)}\mathscr{D}\gamma,
\end{equation}
where $P(t,x,y)$ is the space of paths $\gamma:[0,t]\to\R$ joining $x$ to $y$. Since $P(t,x,y)$ is an infinite dimensional manifold, it is not clear what the integral \eqref{pathint} means. 

\section{Feynman's Formulation of the Path Integral}
Feynman's idea was to define \eqref{pathint} as a limit of integrals over finite dimensional manifolds, which roughly goes as follows. Let $P_n(t,x,y)$ be the space of piecewise linear paths joining $x$ to $y$, which consists of $n$ line segments $\ell_{x,x_1},\ell_{x_1,x_2},...,\ell_{x_{n-1},y}$. Clearly, to define $\gamma\in P_n(t,x,y)$, we need to specify $(x_1,...,x_{n-1})$. This means that we can identify $P_n(t,x,y)$ with $\R^{n-1}$. Hence, we can define 
\begin{equation}
\label{feyn_int}
\int_{\gamma\in P(t,x,y)}\ee^{\frac{\I}{\hbar}S(\gamma)}\mathscr{D}\gamma:=
\lim_{n\to\infty}A(n,t)
\int_{\gamma \in P_n(t,x,y)}\ee^{\frac{\I}{\hbar}S(\gamma)}\dd x_1\dotsm \dd x_{n-1},
\end{equation}
where $A(n,t)$ is some constant depending on $n$ and $t$.
\subsection{Free Propagator for the free particle on $\R$}
We have already shown that 
\begin{equation}
\label{Prop}
K(t,x,y)=\sqrt{\frac{m}{2\pi \I\hbar t}}\ee^{\frac{\I}{\hbar}\frac{m}{2t}(x-y)^2}.
\end{equation}
Let us now give a path integral derivation of $K(t,x,y)$. Let $0=t_0<\dotsm<t_n=t$ with $t_i-t_{i-1}=\frac{t}{n}=:\Delta t$. Moreover, let $(x_1,...,x_{n-1})\in\R^{n-1}$ and let $\gamma$ be the piecewise linear path joining $x$ to $y$ such that $\gamma(t_i)=x_i$ and the line segment joining $x_{i-1}$ to $x_i$ is given by 
\[
\gamma(s)=\frac{1}{\Delta t}\left((t_i-s)x_{i-1}+(s-t_{i-1})x_i\right),\hspace{0.5cm}s\in[t_{i-1},t_i],\hspace{0.5cm}i=1,2,...,n
\]
Then 
\begin{equation}
S(\gamma)=\frac{1}{2}m\sum_{i=1}^n\int_{t_{i-1}}^{t_i}
\frac{(x_i-x_{i-1})^2}{(\Delta t)^2}\dd s=\frac{1}{2}m\sum_{i=1}^n\frac{(x_i-x_{i-1})^2}{\Delta t}
\end{equation}
and thus 
\begin{equation}
A(n,t)\int_{\R^n}\ee^{\frac{\I}{\hbar}S(\gamma)}\dd x_1\dotsm \dd x_{n-1}=A(n,t)\int_{\R^n}\ee^{\frac{\I}{\hbar}\frac{m}{2}\sum_{i=1}^n\frac{(x_i-x_{i-1})^2}{\Delta t}}\dd x_1\dotsm \dd x_{n-1}
\end{equation}
Define $f_i=\sqrt{\frac{m}{2\hbar\Delta t}}x_i$. Then by change of variables, this integral will be 
\begin{align}
\begin{split}
A(n,t)\left(\frac{2\hbar \Delta t}{m}\right)^{\frac{n-1}{2}}\int_{\R^{n-1}}
\ee^{\I\sum_{i=1}^n(f_i-f_{i-1})^2}\dd f_1\dotsm \dd f_{n-1}
&=A(n,t)\left(\frac{2\hbar \Delta t}{m}\right)^{\frac{n-1}{2}}\frac{(\pi\I)^{\frac{n-1}{2}}}{\sqrt{n}}\ee^{\frac{\I}{n}(f_n-f_1)^2}\\
&=A(n,t)\left(\frac{2\hbar \Delta t}{m}\right)^{\frac{n-1}{2}}\frac{(\pi\I)^{\frac{n-1}{2}}}{\sqrt{n}}\ee^{\frac{\I}{\hbar}\frac{m}{2n\Delta t}(x_n-x_1)^2}\\
&=A(n,t)\left(\frac{2\pi\I\hbar \Delta t}{m}\right)^{\frac{n-1}{2}}\left(\frac{m}{2n\pi\I\hbar\Delta t}\right)^{\frac{1}{2}}\ee^{\frac{\I}{\hbar}\frac{m}{2n\Delta t}(x_n-x_1)^2}\\
&=A(n,t)\left(\frac{2\pi\I\hbar \Delta t}{m}\right)^{\frac{n-1}{2}}\left(\frac{m}{2n\pi\I\hbar\Delta t}\right)^{\frac{1}{2}}\ee^{\frac{\I}{\hbar}\frac{m}{2t}(y-x)^2}
\end{split}
\end{align}
Define $A(n,t):=\left(\frac{m}{2\pi\I\hbar t}\right)^{\frac{n}{2}}$, then 
\begin{align}
\begin{split}
\int_{\gamma\in P(t,x,y)}\ee^{\frac{\I}{\hbar}S(\gamma)}\mathscr{D}\gamma&=
\lim_{n\to\infty} A(n,t)\int_{\R^{n-1}}\ee^{\frac{\I}{\hbar}S(\widetilde{\gamma})}\dd x_1\dotsm \dd x_{n-1}\\
&=\left(\frac{m}{2\pi\I\hbar t}\right)^{\frac{1}{2}}\ee^{\frac{\I}{\hbar}\frac{m}{2t}(x-y)^2}\\
&=K(t,x,y).
\end{split}
\end{align}
Next we show how to derive the path integral representation of the propagator associated with a Hamiltonian of the form $\widehat{H}_0+V(\x)$, where $\widehat{H}_0=\frac{1}{2m}\p^2$ is the \textsf{free Hamiltonian}. Let us recall the \textsf{Kato-Lie-Trotter product formula}. Let $A$ and $B$ be self adjoint operators on a Hilbert space $\mathcal{H}$ with domains $D(A)$ and $D(B)$ respectively. Assume that $A+B$ is densely defined and essentially self adjoint on $D(A)\cap D(B)$. Then 
\begin{equation}
\lim_{n\to\infty}\left(\ee^{\frac{\I}{n}tA}\ee^{\frac{\I}{n}tB}\right)^n=\ee^{\I t(A+B)}
\end{equation}
in the strong operator topology (i.e. $A_n\to A$ iff $\|A_n\psi-A\psi\|\xrightarrow{n\to\infty}0$ for all $\psi\in\mathcal{H}$). We assume that $V(\x)$ is sufficently nice so that the assumption of the Kato-Lie-Trotter product formula is satisfied. Then for all $\psi\in L^2(\R)$, we have 
\begin{equation}
\label{KLT}
\ee^{-\frac{\I}{\hbar}t(\widehat{H}_0+V(\x))}\psi=
\lim_{n\to\infty}\left(\ee^{-\frac{\I}{\hbar}\frac{t}{n}\widehat{H}_0}\ee^{-\frac{\I}{\hbar}\frac{t}{n}V(\x)}\right)^n\psi
\end{equation}
Let us compute the right hand side of \eqref{KLT}. Recall that 
\begin{equation}
\left(\ee^{-\frac{\I}{\hbar}\frac{t}{n}\widehat{H}_0}\psi\right)(x_1)=\sqrt{\frac{m}{2\pi\I\hbar \frac{t}{n}}}\int_\R\ee^{\frac{\I}{\hbar}\frac{m}{2\frac{t}{n}}(x_1-x_0)^2}\dd x_0
\end{equation}
and 
\begin{equation}
\left(\ee^{-\frac{\I}{\hbar}\frac{t}{n}V(\x)}\psi\right)(x)=\ee^{-\frac{\I}{\hbar}\frac{t}{n}V(x)}\psi(x).
\end{equation}
Using these two relations, we compute
\begin{equation}
\left(\left(\ee^{-\frac{\I}{\hbar}\frac{t}{n}\widehat{H}_0}\ee^{-\frac{\I}{\hbar}\frac{t}{n}V(\x)}\right)\psi\right)(x_1)
=\sqrt{\frac{m}{2\pi\I\hbar\frac{t}{n}}}\int_\R\ee^{\frac{\I}{\hbar}\frac{m}{2\frac{t}{n}}(x_1-x_0)^2}\ee^{-\frac{\I}{\hbar}\frac{t}{n}V(x_0)}\psi(x_0)\dd x_0.
\end{equation}
Repeatedly applying the process we get 
\begin{align}
\begin{split}
\left(\left(\ee^{-\frac{\I}{\hbar}\frac{t}{n}\widehat{H}_0}\ee^{-\frac{\I}{\hbar}\frac{t}{n}V(\x)}\right)^n\psi\right)(x_n)
&=\left(\frac{m}{2\pi\I\hbar\frac{t}{n}}\right)^{\frac{n}{2}}\int_{\R^n}\ee^{\frac{\I}{\hbar}\frac{m}{2\frac{t}{n}}\sum_{k=1}^n(x_k-x_{k-1})^2-\frac{\I}{\hbar}\frac{t}{n}\sum_{k+1}^nV(x_{k-1})}\psi(x_0)\dd x_0\dd x_1\dotsm \dd x_{n-1}\\
&=\left(\frac{m}{2\pi\I\hbar\frac{t}{n}}\right)^{\frac{n}{2}}\int_{\R^n}\ee^{\frac{\I}{\hbar}\sum_{k=1}^n\frac{t}{n}\left\{\frac{m}{2}\left(\frac{x_k-x_{k-1}}{\frac{t}{n}}\right)^2-V(x_{k-1})\right\}}\psi(x)\dd x\dd x_1\dotsm \dd x_{n-1}\\
\end{split}
\end{align}
\begin{equation}
=\int_{\R}\left\{\left(\frac{m}{2\pi\I\hbar\frac{t}{n}}\right)^{\frac{n}{2}}\int_{\R^{n-1}}\ee^{\frac{\I}{\hbar}\sum_{k=1}^n\frac{t}{n}\left\{\frac{m}{2}\left(\frac{x_k-x_{k-1}}{\frac{t}{n}}\right)^2-V(x_{k-1})\right\}}\dd x_1\dotsm \dd x_{n-1}\right\}\psi(x_0)\dd x_0
\end{equation}
Then we get 
\begin{multline}
\lim_{n\to\infty}\left(\left(\ee^{-\frac{\I}{\hbar}\frac{t}{n}\widehat{H}_0}\ee^{-\frac{\I}{\hbar}\frac{t}{n}V(\x)}\right)\psi\right)(x)=\\
=\int_{\R}\left\{\lim_{n\to\infty}\left(\frac{m}{2\pi\I\hbar\frac{t}{n}}\right)^{\frac{n}{2}}\int_{\R^{n-1}}\ee^{\frac{\I}{\hbar}\sum_{k=1}^n\frac{t}{n}\left\{\frac{m}{2}\left(\frac{x_k-x_{k-1}}{\frac{t}{n}}\right)^2-V(x_{k-1})\right\}}\dd x_1\dotsm \dd x_{n-1}\right\}\psi(x_0)\dd x_0=\\
=\int_\R K(t,x,x_0)\psi(x_0)\dd x_0,
\end{multline}
where 
\[
K(t,x,x_0)=\lim_{n\to\infty}\left(\frac{m}{2\pi\I\hbar\frac{t}{n}}\right)^{\frac{n}{2}}\int_{\R^{n-1}}\ee^{\frac{\I}{\hbar}\sum_{k=1}^n\frac{t}{n}\left\{\frac{m}{2}\left(\frac{x_k-x_{k-1}}{\frac{t}{n}}\right)^2-V(x_{k-1})\right\}}\dd x_1\dotsm \dd x_{n-1}.
\]
Moreover, observe that 
\[
\lim_{n\to\infty}\sum_{k=1}^n\frac{t}{n}\left\{\frac{m}{2}\left(\frac{x_k-x_{k-1}}{\frac{t}{n}}\right)^2-V(x_{k-1})\right\}
\]
can be interpreted as 
\[
\int_0^t\left(\frac{m}{2}\|\dot{\gamma}(s)\|^2-V(\gamma(s))\right)\dd s.
\]
This means that 
\begin{equation}
\int_{\gamma\in P(t,x,y)}\ee^{\frac{\I}{\hbar}S(\gamma)}\mathscr{D}\gamma:=\lim_{n\to\infty}\left(\frac{m}{2\pi\I\hbar\frac{t}{n}}\right)^{\frac{1}{2}}\int_{\R^{n-1}}\ee^{\frac{\I}{\hbar}S(\gamma)}\dd x_1\dotsm \dd x_{n-1}=K(t,x,y).
\end{equation}

\section{Construction of the Wiener measure}

We saw that Feynman defined the path integral $\int_{\gamma\in P(t,x,y)}\ee^{\frac{\I}{\hbar}S(\gamma)}\mathscr{D}\gamma$ as a limit of integrals over finite dimensional manifolds. Now we plan to investigate whether or not it is possible to define a probability measure on $P(t,x,y)$, which is of the form 
\[
\frac{\ee^{\frac{\I}{\hbar}S(\gamma)}\mathscr{D}\gamma}{Z},
\]
where $Z$ is some quantity for normalization of the measure. A short answer to this question is \textsf{no}. However, if we replace $\I$ by $-1$ (i.e. \textsf{Wick rotate}) then it is possible to construct a measure of the desired form on a suitable $P(t,x,y)$. This was done by \textsf{Wiener} in 1923 for the case $V(x)=0$ and it is known as \textsf{Wiener measure}. From now on we assume $V(x)=0$ and $S(\gamma)=\frac{1}{2}\int_0^t\|\dot{\gamma}(s)\|^2\dd s$. The basic ideas are the following:
\begin{itemize}
\item{
Interpret 
\begin{equation}
\label{measure}
A(n,t)\int_{E\subseteq \R^{n-1} \atop\text{measurable}}\ee^{-S(\gamma)}\dd x_1\dotsm \dd x_{n-1}
\end{equation}
as a measure of a certain ``measurable" subset of $P(t,x,y)$. 
}
\item{
Instead of taking the limit $n\to\infty$, try to extend this ``measure" defined by \eqref{measure} to a measure on $P(t,x,y)$.
}
\end{itemize}
Essentially, the idea comes from Molecular-kinetic theory. Einstein showed that, if $\rho(x,t)$ is the probability density for finding the Brownian particle at location $x$ and at time $t$, then it satisfies the diffusion equation 
\begin{equation}
\label{diffusion}
\frac{\partial}{\partial t}\rho(x,t)=D\frac{\partial^2}{\partial x^2}\rho(x,t),
\end{equation}
where $D$ is the diffusion constant. This immediatly implies 
\[
\rho(x,t)=\frac{1}{\sqrt{4\pi Dt}}\ee^{\frac{x^2}{4Dt}},
\]
if we insist that $\lim_{t\to 0}\rho(x,t)=\delta_0(x)$, where $\delta_0$ is the \textsf{Dirac delta function}. This implies that for any measurable set $E\subseteq\R$, the probability of finding the Brownian particle in $E$ at time $t$ is given by 
\begin{equation}
\label{prob}
\frac{1}{\sqrt{4\pi Dt}}\int_E\ee^{-\frac{x^2}{4Dt}}\dd x.
\end{equation}
From now on we take $2D=1$. more generally, $\frac{1}{\sqrt{2\pi(t_2-t_1)}}\ee^{\frac{(x-y)^2}{2(t_2-t_1)}}$ is the probability density of finding the particle at $y$ at time $t=t_2$ if it was at $x$ at time $t=t_1$. This means, given $0=t_0<t_1<\dotsm <t_n\leq t$ and $E=\prod_{i=1}^n(\alpha_i,\beta_i]$, we can observe that 
\begin{equation}
\label{Gaussian}
A(n,t)\int_E\ee^{-\frac{1}{2}\sum_{i=1}^n\frac{(x_i-x_{i-1})^2}{t_i-t_{i-1}}}\dd x_1\dotsm \dd x_{n}
\end{equation}
can be interpreted as the probability of finding the Brownian particle in $(\alpha_i,\beta_i]$ at time $t=t_i$. Hence it should not be suprising to interpret \eqref{measure} as a ``measure" of a suitable subset of $P(t,x,y)$. Let us try to make this precise and construct the \textsf{Wiener measure}. First, we need some notations and definitions.
\begin{itemize}
\item{We write 
\[
C_0([0,1])=\{x:[0,1]\to\R\mid \text{$x$ is continuous at $x(0)$}\},
\]
which are paths starting at $0$. Recall that $C_0([0,1])$ is a Banach space with the norm 
\[
\|x\|=\sup_{t\in [0,1]}\vert x(t)\vert.
\]
Hence, it is a topological space. Let $\calB(C_0([0,1]))$ denote the Borel $\sigma$-algebra of $C_0([0,1])$ with respect to the topology induced by the norm $\|\cdot\|$.
}
\item{
Fix $t\in [0,1]$, define $\ev_t:C_0([0,1])\to\R$, $\ev_t(x)=x(t)$. It is easy to check that $\ev_t$ is continuous and hence it is Borel measurable. More generally, given $t_1,...,t_n\in [0,1]$, define 
\begin{align*}
P(t_1,...,t_n):C_0([0,1])&\longrightarrow \R^n\\
x&\longmapsto P(t_1,...,t_n)(x)=(x(t_1),...,x(t_n)),
\end{align*}
i.e. $P(t_1,...,t_n)=(\ev_{t_1},...,\ev_{t_n})$, thus $P(t_1,...,t_n)$ is continuous and hence Borel measurable.
}
\item{
Given $t_1,...,t_n\in [0,1]$ and $(\alpha_1,\beta_1]\times\dotsm \times (\alpha_n,\beta_n]=\prod_{i=1}^n(\alpha_i,\beta_i]\subseteq\R^n$, define 
\[
I\left(t_1,...,t_n,\prod_{i=1}^n(\alpha_i,\beta_i]\right)=
P(t_1,...,t_n)^{-1}
\left(\prod_{i=1}^n(\alpha_i,\beta_i]\right)=\left\{x\in C_0([0,1])\Big| (x(t_1),...,x(t_n))\in \prod_{i=1}^n(\alpha_i,\beta_i]\right\}.
\]
Observe that $I\left( t_1,...,t_n,\prod_{i=1}^n(\alpha_i,\beta_i]\right)$ is Borel measurable. Also note that
\begin{equation}
\label{rel}
I\left( t_1,...,t_n,\prod_{i=1}^n(\alpha_i,\beta_i]\right)=\bigcap_{i=1}^n\ev_{t_i}^{-1}\left((\alpha_i,\beta_i]\right).
\end{equation} 
From \eqref{rel} it is clear that we can always assume $t_1\leq t_2<\dotsm <t_{n-1}\leq t_n$.
}
\end{itemize}

\begin{exe}
Let $t_1,...,t_n\in [0,1]$ and $t_1<t_2,\dotsm <t_n$. Moreover, let $t_{k-1}<s<t_k$. Check that 
\[
I\left(t_1,...,t_n,\prod_{i=1}^n(\alpha_i,\beta_i]\right)=I\left( t_1,...,t_{k-1},s,t_k,...,t_n,\prod_{i=1}^{k-1}(\alpha_i.\beta_i]\times\R\times
\prod_{i=k}^n(\alpha_i,\beta_i]\right).
\]
Hint: use that $I=\bigcap_{i=1}^n\ev_{t_i}^{-1}((\alpha_i,\beta_i])$.
\end{exe}
Let $\mathcal{I}$ be the collection of all $I\left(t_1,...,t_n,\prod_{i=1}^n(\alpha_i,\beta_i]\right)$, where $n\in\N$ (note that we always include zero in $\N$) and $\alpha_i\leq\beta_i$ with $\alpha_i,\beta_i\in\R\cup\{\infty\}$ or all $i$.
\begin{exe}
Check that $\mathcal{I}$ is a semialgebra, i.e.
\begin{enumerate}
\item{$\varnothing, C_0([0,1])\in \mathcal{I}$}
\item{If $I,J\in\mathcal{I}$, then $I\cap J\in\mathcal{I}$.
}
\item{If $I\in \mathcal{I}$, then $C_0([0,1])\setminus I$ is a finite disjoint union of elements in $\mathcal{I}$.
}
\end{enumerate}
\end{exe}
\begin{proof}[Solution]
We have:
\begin{enumerate}
\item{$\varnothing=\ev_1^{-1}((1,1])$, and thus $\varnothing\in\mathcal{I}$.}
\item{Let $I=\bigcap_{i=1}^n\ev_{t_i}^{-1}((\alpha_i,\beta_i])$ and $J=\bigcap_{j=1}^m\ev_{s_j}^{-1}((\gamma_j,\delta_j])$. Then 
\[
I\cap J=\bigcap_{1\leq i\leq n\atop 1\leq j\leq m}\left(\ev_{t_i}^{-1}((\alpha_i,\beta_i])\cap\ev_{s_j}^{-1}((\gamma_j,\delta_j])\right).
\]
Note that $\ev_{t_i}^{-1}((\alpha_i,\beta_i])\cap \ev_{s_j}^{-1}((\gamma_j,\delta_j])$ is of the form $\ev_{t}^{-1}((a,b])$.
}
\item{Let $I\in\mathcal{I}$ with $I=\ev_t^{-1}((\alpha,\beta])$. Then 
$$C_0([0,1])\setminus I=\ev_{t}^{-1}((-\alpha,\alpha])\cup\ev_t^{-1}((\beta,\alpha])\in\mathcal{I}.$$
We leave the general case as an exercise.
}
\end{enumerate}
\end{proof}
\begin{thm}[Wiener]
\label{Wiener_thm}
There is unique probability measure $\mu$ on $\calB(C_0([0,1]))$, such that 
\begin{equation}
\label{Wiener}
\mu\left(I\left(t_1,...,t_n,\prod_{i=1}^n(\alpha_i,\beta_i]\right)\right)=\frac{\int_{\prod_{i=1}^n(\alpha_i,\beta_i]}\ee^{-\frac{1}{2}\sum_{i=1}^n\frac{(x_i-x_{i-1})^2}{t_i-t_{i-1}}}\dd x_1\dotsm \dd x_n
}{\sqrt{(2\pi)^nt_1(t_2-t_1)\dotsm (t_n-t_{n-1})}}\end{equation}
\end{thm}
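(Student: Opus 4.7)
The natural strategy is Carath\'eodory extension from the semialgebra $\mathcal{I}$, followed by a regularity argument verifying that the resulting measure lives on all of $\calB(C_0([0,1]))$. I would first define $\mu$ on $\mathcal{I}$ by the right-hand side of \eqref{Wiener} (with the convention $x_0=0$, $t_0=0$), then check well-definedness, finite additivity, countable additivity, and finally extend and establish uniqueness.

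Well-definedness and finite additivity are the algebraic part. A single cylinder admits many representations: by the exercise preceding the theorem, one can insert an extra time $t_{k-1}<s<t_k$ together with the factor $\R$ without changing the set. That the formula is invariant under such an insertion is exactly the Chapman--Kolmogorov identity for the Gaussian density
\[
p_\tau(x,y) := \frac{1}{\sqrt{2\pi\tau}}\ee^{-\frac{(x-y)^2}{2\tau}},
\qquad \int_\R p_{s-t_{k-1}}(x_{k-1},u)\,p_{t_k-s}(u,x_k)\,\dd u = p_{t_k-t_{k-1}}(x_{k-1},x_k),
\]
which is a direct Gaussian convolution. Since any two representations share a common refinement, $\mu$ is well defined on $\mathcal{I}$; finite additivity on $\mathcal{I}$ then follows by passing disjoint cylinders to a common time mesh and using additivity of the Lebesgue integral on $\R^n$.

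The main obstacle is $\sigma$-additivity on the algebra $\mathcal{A}$ generated by $\mathcal{I}$. By standard measure theory it suffices to show $\mu(A_n)\to 0$ whenever $A_n\downarrow\varnothing$ in $\mathcal{A}$, and the route I would take is tightness. By Arzel\`a--Ascoli the compact subsets of $C_0([0,1])$ are precisely the bounded equicontinuous families, so I would show that for every $\varepsilon>0$ there is a compact $K\subset C_0([0,1])$ with $\mu(K)\ge 1-\varepsilon$ by estimating
\[
\mu\bigl\{\,\sup_{t\in[0,1]}|x(t)|>R\,\bigr\}\quad\text{and}\quad \mu\bigl\{\,\sup_{|t-s|\le\delta}|x(t)-x(s)|>\eta\,\bigr\}.
\]
Under the candidate $\mu$ the increment $x(t)-x(s)$ is Gaussian with variance $|t-s|$, so the fourth-moment identity $\bbE|x(t)-x(s)|^4=3(t-s)^2$ together with a dyadic chaining argument (Kolmogorov's continuity criterion) controls the modulus of continuity. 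Once tightness is in hand, if $\mu(A_n)\ge\varepsilon$ for all $n$, one picks compact $K_n\subseteq A_n$ with $\mu(A_n\setminus K_n)<\varepsilon\cdot 2^{-n-1}$; the nested compact sets $K_1\cap\cdots\cap K_n$ then have positive $\mu$-measure, hence are nonempty, and the finite intersection property supplies a point in $\bigcap_n A_n$, contradicting $A_n\downarrow\varnothing$.

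With $\sigma$-additivity established, Carath\'eodory's extension theorem produces a unique extension of $\mu$ from $\mathcal{A}$ to $\sigma(\mathcal{A})$. Since each $\ev_t$ is continuous and $C_0([0,1])$ is separable, the cylinder $\sigma$-algebra $\sigma(\mathcal{I})$ coincides with $\calB(C_0([0,1]))$; uniqueness on the Borel $\sigma$-algebra then follows from Dynkin's $\pi$-$\lambda$ theorem, since $\mathcal{I}$ is a $\pi$-system generating it. The real analytic content lives entirely in the tightness step: everything else is formal manipulation of the Gaussian consistency identity, but the equicontinuity estimate is what ensures the constructed set function actually descends to a measure on $C_0([0,1])$ rather than merely on $\R^{[0,1]}$.
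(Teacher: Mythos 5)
Your proposal follows the same skeleton as the paper's proof: define $\mu$ on the semialgebra $\mathcal{I}$ by \eqref{Wiener}, verify well-definedness via the Chapman--Kolmogorov convolution identity (the paper's Lemma on the Kolmogorov--Chapman equation), obtain finite additivity by passing to a common time mesh, extend by Carath\'eodory, identify $\sigma(\mathcal{I})$ with $\calB(C_0([0,1]))$, and check total mass one. The substantive difference is that the paper simply asserts countable additivity on $\mathcal{I}$ as ``a fact of the construction'' and never proves it, whereas you correctly identify this as the analytic core and outline the standard route: tightness via Arzel\`a--Ascoli, the fourth-moment bound $\bbE|x(t)-x(s)|^4=3(t-s)^2$, and a dyadic chaining (Kolmogorov continuity) estimate, followed by inner approximation by compacts to rule out $\mu(A_n)\ge\varepsilon$ with $A_n\downarrow\varnothing$. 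This buys an actually complete argument where the paper has a gap; the price is that two of your steps need a little more care than you give them. First, events such as $\{\sup_{|t-s|\le\delta}|x(t)-x(s)|>\eta\}$ are not cylinder events, so the modulus-of-continuity estimate must be run over a countable dense set of times (dyadics) and then transferred to all of $[0,1]$ by continuity of the paths; second, at the stage where you pick compact $K_n\subseteq A_n$ with $\mu(A_n\setminus K_n)$ small, $\mu$ is only defined on the cylinder algebra, so the inner approximation has to be done on the finite-dimensional base sets (inner regularity of Gaussian measures on $\R^n$) and intersected with the fixed tight compact set, rather than invoked directly for arbitrary Borel subsets of $C_0([0,1])$. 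Also note that your one-line justification of $\sigma(\mathcal{I})=\calB(C_0([0,1]))$ only gives the easy inclusion; the reverse inclusion is exactly what the paper proves with the closed-ball/dense-times argument, and it deserves that argument rather than an appeal to separability alone.
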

Let us give a small overview of the proof strategy:
\begin{itemize}
\item{First, we will define $\mu(I)$ for $I\in\mathcal{I}$ by \eqref{Wiener}.}
\item{Then we will use the \textsf{Caratheodory extension} construction.}
\end{itemize}
Given $I\left( t_i,,,.,t_n,\prod_{i=1}^n(\alpha_i,\beta_i]\right)$, define $\mu(I)$ as in \eqref{Wiener}. First we show that $\mu$ is well defined i.e. if $t_{k-1}<s<t_k$, then 
\begin{equation}
\label{measure_prop}
\mu\left( I\left(t_1,...,t_n,\prod_{i=1}^n(\alpha_i,\beta_i]\right)\right)=\mu\left( I\left(t_1,...,t_{k-1},s,t_k,...,t_n,\prod_{i=1}^{k-1}(\alpha_i,\beta_i]\times\R\times\prod_{i=k}^{n}(\alpha_i,\beta_i]\right)\right)
\end{equation}
To verify \eqref{measure_prop}, we need the following lemma:
\begin{lem}[Kolmogorov-Chapman equation]
\label{KC_eq}
Define $K(t,x,y)=\frac{1}{\sqrt{2\pi t}}\ee^{-\frac{(x-y)^2}{2t}}$. Then 
\begin{equation}
\label{KC1}
\int_\R K(t_1,x,y)K(t_2,y,z)\dd y=K(t_1+t_2,x,z).
\end{equation}
In other words
\begin{equation}
\label{KC2}
\frac{1}{\sqrt{(2\pi)^2t_1t_2}}\int_\R\ee^{-\frac{(x-y)^2}{2t_1}}\ee^{-\frac{(y-z)^2}{2t_2}}\dd y=\frac{1}{\sqrt{2\pi(t_1+t_2)}}\ee^{-\frac{(x-z)^2}{2(t_1+t_2)}}.
\end{equation}
\end{lem}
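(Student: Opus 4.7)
The plan is to verify the identity by direct computation: plug in the explicit Gaussian formula for $K$ on the left-hand side, reorganize the quadratic polynomial in $y$ appearing in the exponent, complete the square, and perform a single one-dimensional Gaussian integration. Concretely, after substitution the integrand carries
\[
-\frac{(x-y)^2}{2t_1}-\frac{(y-z)^2}{2t_2}
=-\frac{t_2(x-y)^2+t_1(y-z)^2}{2t_1t_2},
\]
which is a quadratic polynomial in $y$ with leading coefficient $-(t_1+t_2)/(2t_1t_2)$.

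First I would expand the numerator to the form
\[
(t_1+t_2)y^2-2y(t_2x+t_1z)+t_2x^2+t_1z^2,
\]
and then complete the square in $y$ around the point $y_*=(t_2x+t_1z)/(t_1+t_2)$. The $y$-independent remainder simplifies to $t_1t_2(x-z)^2/(t_1+t_2)$, so the total exponent becomes
\[
-\frac{t_1+t_2}{2t_1t_2}\bigl(y-y_*\bigr)^2-\frac{(x-z)^2}{2(t_1+t_2)}.
\]
The second term is independent of $y$ and pulls out of the integral.

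Next I would apply the standard one-dimensional Gaussian integral
\[
\int_\R \ee^{-\frac{t_1+t_2}{2t_1t_2}(y-y_*)^2}\dd y
=\sqrt{\frac{2\pi t_1 t_2}{t_1+t_2}},
\]
which is justified by a translation in $y$ and the classical formula $\int_\R \ee^{-a y^2}\dd y=\sqrt{\pi/a}$ for $a>0$. Collecting factors, the prefactor $1/\sqrt{(2\pi)^2 t_1 t_2}$ times $\sqrt{2\pi t_1 t_2/(t_1+t_2)}$ combines to $1/\sqrt{2\pi(t_1+t_2)}$, which is exactly the normalization of $K(t_1+t_2,x,z)$, completing the identification. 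The first form \eqref{KC1} follows immediately from \eqref{KC2} by unpacking the definition of $K$.

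There is no real conceptual obstacle here; the only thing to be careful about is the algebraic manipulation in completing the square, in particular the cancellation that produces precisely $t_1 t_2 (x-z)^2/(t_1+t_2)$ in the $y$-independent remainder. A brief check that $t_1,t_2>0$ is used to guarantee integrability is also worth noting, but otherwise the proof is a one-line Gaussian convolution computation.
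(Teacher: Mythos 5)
Your computation is correct: the completion of the square is right (the $y$-independent remainder is indeed $t_1t_2(x-z)^2/(t_1+t_2)$, as one checks by expanding $(t_1+t_2)(t_2x^2+t_1z^2)-(t_2x+t_1z)^2=t_1t_2(x-z)^2$), and the normalization constants combine as you say. The paper states this lemma without proof, so there is nothing to compare against; your direct Gaussian-convolution argument is the standard way to fill that gap, and the only alternative worth mentioning is to pass through Fourier transforms, where the identity becomes the trivial statement $\ee^{-t_1\xi^2/2}\ee^{-t_2\xi^2/2}=\ee^{-(t_1+t_2)\xi^2/2}$.
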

\begin{proof}[Proof of Theorem \ref{Wiener_thm}]
Note that 
\begin{multline}
 \mu\left(I\left(t_1,...,t_{k-1},s,t_k,...,t_n,\prod_{i=1}^{k-1}(\alpha_i,\beta_i]\times\R\times\prod_{i=k}^{n}(\alpha_i,\beta_i]\right)\right)=\\
 =\frac{\int_{\prod_{i=1}^{k-1}(\alpha_i,\beta_i]\times\R\times\prod_{i=k}^{n}(\alpha_i,\beta_i]}\ee^{-\frac{1}{2}\left\{\sum_{i=1}^{k-2}\frac{(x_i-x_{i-1})^2}{t_i-t_{i-1}}+\sum_{i=k+1}^n\frac{(x_i-x_{i-1})^2}{t_i-t_{i-1}}+\frac{(y-x_{k-1})^2}{s-t_{k-1}}+\frac{(x_k-y)^2}{t_k-s} \right\}}\dd x_1\dotsm \dd x_{k-1}\dd y\dd x_{k}\dotsm\dd x_n}{\sqrt{(2\pi)^{n+1}t_1(t_2-t_1)\dotsm (t_{k-1}-t_{k-2})(s-t_{k-1})(t_k-s)\dotsm (t_n-t_{n-1})}}.
\end{multline}
Using Lemma \ref{KC_eq}, we see that 
\begin{equation}
\frac{\int_{\prod_{i=1}^{n}(\alpha_i,\beta_i]}\ee^{-\frac{1}{2}\sum_{i=1}^{n}\frac{(x_i-x_{i-1})^2}{t_i-t_{i-1}}}\dd x_1\dotsm\dd x_n}{\sqrt{(2\pi)^{n+1}t_1(t_2-t_1)\dotsm (t_{k}-t_{k-1})\dotsm (t_n-t_{n-1})}}= \mu \left(I\left(t_1,...,,t_n,\prod_{i=1}^{n}(\alpha_i,\beta_i]\right)\right).
\end{equation}
\end{proof}
\begin{exe}
Check that if $I,J\in\mathcal{I}$ and $I\cap J=\varnothing$, $I\cup J\in \mathcal{I}$, then 
$$\mu(I\cup J)=\mu(I)+\mu(J),$$ i.e. $\mu$ is finitely additive. Hint: Use 
\begin{align*}
I&=I\left(t_1,...,,t_n,\prod_{i=1}^{n}(\alpha_i,\beta_i]\right)\\
J&=J\left(s_1,...,,s_n,\prod_{j=1}^{m}(\gamma_j,\delta_j]\right).
\end{align*}
\end{exe}
A fact of the construction is that $\mu$ is countably additive on $\mathcal{I}$. Now, by the Caratheodory extension construction, $\mu$ induces a unique measure on $\sigma(\mathcal{I})$, the $\sigma$-algebra generated by $\mathcal{I}$. We will denote this ``measure'' again by $\mu$. To prove theorem \ref{Wiener_thm}, we will show that $\sigma(\mathcal{I})=\calB(C_0([0,1]))$, which is the content of the following proposition.
\begin{prop}
$$\sigma(\mathcal{I})=\calB(C_0([0,1])).$$
\end{prop}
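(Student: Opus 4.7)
The plan is to prove the two inclusions separately. The inclusion $\sigma(\mathcal{I})\subseteq \calB(C_0([0,1]))$ is essentially free: each evaluation map $\ev_t$ is continuous hence Borel measurable, so each cylinder set $I(t_1,\ldots,t_n,\prod_i(\alpha_i,\beta_i])=\bigcap_{i=1}^n \ev_{t_i}^{-1}((\alpha_i,\beta_i])$ is a finite intersection of Borel sets, hence Borel. Consequently $\mathcal{I}\subseteq \calB(C_0([0,1]))$, and since $\calB(C_0([0,1]))$ is a $\sigma$-algebra, the inclusion follows.

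The substantive direction is $\calB(C_0([0,1]))\subseteq \sigma(\mathcal{I})$. First I would observe that $C_0([0,1])$ equipped with the sup norm is a \emph{separable} metric space (one can take, e.g., the countable set of piecewise linear paths with rational breakpoints and rational values, vanishing at $0$, which is dense by uniform approximation). A standard fact for separable metric spaces is that the Borel $\sigma$-algebra is generated by the open balls, and moreover every open set is a countable union of open balls. Hence it suffices to prove that every open ball
\[
B(x_0,r)=\{x\in C_0([0,1])\mid \|x-x_0\|_\infty<r\}
\]
lies in $\sigma(\mathcal{I})$.

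Next, using continuity of the functions in $C_0([0,1])$ together with the density of $\Q\cap[0,1]$ in $[0,1]$, the sup over $[0,1]$ can be replaced by the sup over $\Q\cap[0,1]$:
\[
\|x-x_0\|_\infty=\sup_{t\in\Q\cap[0,1]}|x(t)-x_0(t)|.
\]
Therefore
\[
B(x_0,r)=\bigcup_{m=1}^{\infty}\bigcap_{t\in\Q\cap[0,1]}\bigl\{x\in C_0([0,1])\,\bigm|\,|x(t)-x_0(t)|\le r-\tfrac{1}{m}\bigr\}.
\]
Each building block $\{x\mid |x(t)-x_0(t)|\le c\}=\ev_t^{-1}([x_0(t)-c,x_0(t)+c])$ is the preimage under one evaluation map of a closed interval, and closed intervals are obtained from the half-open intervals $(\alpha,\beta]$ (which appear in the definition of $\mathcal{I}$) by countable operations; hence every such set belongs to $\sigma(\mathcal{I})$. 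Since $\Q\cap[0,1]$ is countable, the intersection inside the union is countable, so each member of the union lies in $\sigma(\mathcal{I})$, and then $B(x_0,r)\in\sigma(\mathcal{I})$.

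The main technical point — and the one that could trip one up — is the passage from sup over $[0,1]$ to sup over $\Q\cap[0,1]$ (which uses continuity in an essential way) and the reduction of the Borel $\sigma$-algebra to open balls (which uses separability). Once these are in place, the inclusion of open balls in $\sigma(\mathcal{I})$ is a routine countable combination of the generating cylinders, and the proposition follows.
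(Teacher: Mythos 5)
Your proof is correct and follows essentially the same route as the paper: both reduce to showing that balls lie in $\sigma(\mathcal{I})$ and then exploit continuity to replace the supremum over $[0,1]$ by a supremum over a countable dense set of times, expressing the ball as a countable combination of sets of the form $\ev_t^{-1}(\text{interval})$. The only cosmetic differences are that you work with open balls (hence the extra union over $m$) and make the appeal to separability and the conversion of closed intervals into half-open ones explicit, whereas the paper uses closed balls $\overline{B_\delta(x_0)}=\bigcap_N K_N$ and leaves those points implicit.
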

\begin{proof}
We already know that $\mathcal{I}\subset \calB(C_0([0,1]))$. Hence $\sigma(\mathcal{I})\subset \calB(C_0([0,1]))$. To show the converse, it suffices to show that for any $\delta>0$, 
\[
\overline{B_\delta(x_0)}:=\{x\in C_0([0,1])\mid \| x-x_0\|\leq \delta\}\subset\sigma(\mathcal{I}).
\]
Fix $\delta>0$ and $x_0\in C_0([0,1])$. Our goal will be to show that 
$$\overline{B_\delta(x_0)}=\bigcap_{N=1}^\infty K_N,$$
where $K_N\in\sigma(\mathcal{I})$. Note that for fixed $t\in[0,1]$ we have 
\begin{equation}
\label{ball}
\overline{B_{\delta}(x_0)}\subset\{x\in C_0([0,1])\mid \vert x(t)-x_0(t)\vert\leq\delta\}.
\end{equation}
Let $\{t_k\}_{k=1}^\infty$ be a dense subset of $[0,1]$ and define 
\[
K_N=\{x\in C_0([0,1])\mid \vert x(t_j)-x_0(t_j)\vert\leq\delta\text{ for $j=1,2,...,N$}\}.
\]
Then by \eqref{ball}, $\overline{B_\delta(x_0)}\subset \bigcap_{N=1}^\infty K_N$. To show the reverse inclusion, we will show that 
$$x\not\in \overline{B_\delta(x_0)}\Longrightarrow x\not\in \bigcap_{N=1}^\infty K_N.$$
Assume that $x\not\in\overline{B_\delta(x_0)}$. Then there is an $s\in[0,1]$ such that 
\[
\vert x(s)-x_0(s)\vert\geq\delta+\delta_1
\]
for some $\delta_1>0$. Now, choose a subsequence $\{t_{k_j}\}$ of $\{t_k\}$ such that $t_{k_j}\to s$ (this can be done since $\{t_k\}$ is dense). Since $x$ and $x_0$ are both continuous, we get 
\begin{align}
x(t_{k_j})&\longrightarrow x(s),\\
x_0(t_{k_j})&\longrightarrow x_0(s).
\end{align}
Thus for large $j$ we get 
\[
\vert x(t_{k_j})-x_0(t_{k_j})\vert\geq\delta+\frac{\delta_1}{2},
\]
and thus $x\not\in \bigcap_{N=1}^\infty K_N$.
Hence, we were able to construct a measure on $\mathcal{B}(C_0([0,1]))$. To complete the proof of theorem \ref{Wiener_thm}, we check that $\mu$ is a probability measure. Indeed, we have 
\[
\mu(C_0([0,1]))=\mu(\ev_1^{-1}(\R))=\frac{1}{\sqrt{2\pi}}\int_\R\ee^{-\frac{x^2}{2}}\dd x=1.
\]
This completes the proof of theorem \ref{Wiener_thm}.
\end{proof}
Next we will compute the Wiener measure of the set
\[
A_{s,t}^{a,b}=\{x\in C_0([0,1])\mid a\leq x(t)-x(s)\leq b\},
\]
where $a,b\in\R$ with $a\leq b$, and $s,t\in[0,1]$ with $0\leq s<t$. Note that $A_{s,t}^{a,b}=P(s,t)^{-1}(E)$, where $E=\{(x,y)\in\R^2\mid a\leq x-y\leq b\}$. Hence 
\begin{align}
\begin{split}
\mu(A_{s,t}^{a,b})&=\frac{1}{\sqrt{(2\pi)^2s(t-s)}}\iint_E\ee^{-\frac{-y^2}{2s}-\frac{(x-y)^2}{2(t-s)}}\dd x\dd y\\
&=\frac{1}{\sqrt{(2\pi)^2s(t-s)}}\int_\R\left(\int_a^b\ee^{-\frac{u^2}{2(t-s)}}\dd u\right)\ee^{-\frac{y^2}{2s}}\dd y\\
\label{integral}
&=\frac{1}{\sqrt{2\pi(t-s)}}\int_a^b\ee^{-\frac{u^2}{2(t-s)}}\dd u
\end{split}
\end{align}
Let us make a short input on \textsf{pushforward of a measure}. Let $(X,\sigma(X),\mu)$ be a measure space, $(Y,\sigma(Y))$ a measurable space and $f:X\to Y$ a measurable map. Then we can define a measure $f_*\mu$ on $(Y,\sigma(Y))$, which is defined as 
\[
f_*\mu(P)=\mu(f^{-1}(P)),\hspace{1cm}P\in\sigma(Y).
\]
This measure $f_*\mu$ is called the pushfoward measure of $\mu$ along $f$. It is easy to check that for any integrable function $\alpha:Y\to\R$ we have 
\[
\int_Y\alpha(y)\dd(f_*\mu(y))=\int_X(f^*\alpha)(x)\dd \mu(x),
\]
where $(f^*\alpha)(x)=\alpha(f(x))$. Define the map $\alpha_{s,t}:C_0([0,1])\to\R$ by $\alpha_{s,t}(x)=x(t)-x(s)$. Then \eqref{integral} implies that $(\alpha_{s,t})_*\mu$, where $\mu$ is the Wiener measure on $C_0([0,1])$, is given by 
\[
(\alpha_{s,t})_*\mu([a,b])=\frac{1}{\sqrt{2\pi (t-s)}}\int_{a}^b\ee^{-\frac{x^2}{2(t-s)}}\dd x.
\]
Thus $(\alpha_{s,t})_*\mu$ is the Gaussian measure on $\R$, which is centered and it has variance $(t-s)$. As a corollary of this discussion we get
\begin{cor}
The following hold.
\begin{enumerate}
\item{$$\int_{C_0([0,1])}(x(t)-x(s))\dd\mu(x)=0,$$
}
\item{$$\int_{C_0([0,1])}(x(t)-x(s))^2\dd\mu(x)=t-s.$$}
\end{enumerate}
\end{cor}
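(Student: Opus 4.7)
The plan is to deduce both identities directly from the pushforward computation established just above the corollary, namely that $(\alpha_{s,t})_*\mu$ is the centered Gaussian measure on $\R$ with variance $t-s$, where $\alpha_{s,t}\colon C_0([0,1])\to\R$ is defined by $\alpha_{s,t}(x)=x(t)-x(s)$. Concretely,
$$(\alpha_{s,t})_*\mu([a,b])=\frac{1}{\sqrt{2\pi(t-s)}}\int_a^b\ee^{-\frac{y^2}{2(t-s)}}\dd y.$$

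First I would note that the integrand in each identity depends on $x\in C_0([0,1])$ only through $\alpha_{s,t}(x)$: indeed $x(t)-x(s)=\alpha_{s,t}(x)$ for part (1) and $(x(t)-x(s))^2=(\alpha_{s,t}(x))^2$ for part (2). The change-of-variables formula for pushforward measures, namely $\int_Y g\,\dd(f_*\mu)=\int_X (g\circ f)\,\dd\mu$ as recalled just above the statement, then yields
$$\int_{C_0([0,1])}(x(t)-x(s))^k\dd\mu(x)=\int_\R y^k\,\dd((\alpha_{s,t})_*\mu)(y),\qquad k=1,2,$$
with both sides finite since polynomials are integrable against a Gaussian density.

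The second step is to evaluate the two one-dimensional Gaussian moments on the right. For $k=1$ the integrand $y\,\ee^{-y^2/(2(t-s))}$ is odd and decays rapidly at infinity, so the integral vanishes and part (1) follows. For $k=2$ the quantity is precisely the second moment of a centered Gaussian of variance $t-s$, which equals $t-s$; this is a textbook Gaussian integral, provable for instance by integration by parts writing $y\cdot y\,\ee^{-y^2/(2(t-s))}\dd y=-(t-s)\,y\,\dd\bigl(\ee^{-y^2/(2(t-s))}\bigr)$, or equivalently by differentiating the normalization identity $\int_\R\ee^{-y^2/(2\sigma)}\dd y=\sqrt{2\pi\sigma}$ with respect to the variance parameter $\sigma$.

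There is no substantial obstacle here: the heavy lifting, which is the construction of the Wiener measure on $\calB(C_0([0,1]))$ and the identification of the pushforward $(\alpha_{s,t})_*\mu$, has already been carried out. The remaining bookkeeping is routine, since measurability of $\alpha_{s,t}$ follows from the continuity of $\ev_t$ and $\ev_s$ established earlier, and integrability of $y$ and $y^2$ against the Gaussian density is immediate from its rapid decay.
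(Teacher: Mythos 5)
Your proposal is correct and is exactly the argument the paper intends: the corollary is stated as an immediate consequence of the identification of $(\alpha_{s,t})_*\mu$ as the centered Gaussian of variance $t-s$, combined with the change-of-variables formula for pushforward measures recalled just before it. The paper gives no further proof, and your evaluation of the first and second Gaussian moments supplies precisely the omitted routine computation.
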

\begin{exe}
Show that 
$$\int_{C_0([0,1])}x(s)x(t)\dd\mu(x)=\min_{s,t\in[0,1]}\{s,t\}.$$
Hint: Assume that $s<t$ and show that 
$$\frac{1}{\sqrt{2\pi s(t-s)}}\iint_{\R^2}xy\ee^{-\frac{x^2}{2s}}\ee^{-\frac{(x-y)^2}{2(t-s)}}\dd x\dd y=s.$$
\end{exe}
\begin{exe}
Compute 
\begin{enumerate}
\item{$$\int_{C_0([0,1])}\left(\int_0^1x(t)\dd t\right)\dd\mu(x),$$
}
\item{$$\int_{C_0([0,1])}\left(\int_0^1x(t)^2\dd t\right)\dd\mu(x).$$
}
\end{enumerate}
Hint: Use Fubini.
\end{exe}

\subsection{Towards nowhere differentiability of Brownian Paths}
Let $h>0$ and $0<\alpha\leq 1$. Define
\begin{itemize}
\item{$$C_h^\alpha(s,t)=\{x\in C_0([0,1])\mid \vert x(t)-x(s)\vert\leq h\vert t-s\vert^\alpha\},$$
}
\item{$$C_h^\alpha(t)=\bigcap_{s\in[0,1]}C_h^\alpha(s,t),$$
}
\item{$$C_h^\alpha=\bigcap_{t\in[0,1]}C_h^\alpha(t).$$
}
\end{itemize}
One can check that $C_h^\alpha(s,t)$ is closed in $C_0([0,1])$ and thus $C_h^\alpha(s,t),C_h^\alpha(t)$ and $C_h^\alpha$ are Borel measurable.
\begin{lem}
\begin{equation}
\label{measure_est}
\mu(C_h^\alpha(s,t))\leq\sqrt{\frac{2}{\pi}}h\vert t-s\vert.
\end{equation}
\end{lem}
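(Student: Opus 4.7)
The plan is to reduce the measure computation to a one-dimensional Gaussian integral using the pushforward measure already computed in the excerpt, and then to apply the trivial bound $e^{-u^2/(2(t-s))}\leq 1$ on the integrand.

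First I would observe that, with $\alpha_{s,t}\colon C_0([0,1])\to\R$, $\alpha_{s,t}(x)=x(t)-x(s)$, we have
\[
C_h^\alpha(s,t)=\alpha_{s,t}^{-1}\bigl([-h|t-s|^\alpha,\,h|t-s|^\alpha]\bigr),
\]
so that $\mu(C_h^\alpha(s,t))=(\alpha_{s,t})_*\mu\bigl([-h|t-s|^\alpha,\,h|t-s|^\alpha]\bigr)$. From the explicit formula for the pushforward established just before the lemma, namely
\[
(\alpha_{s,t})_*\mu([a,b])=\frac{1}{\sqrt{2\pi(t-s)}}\int_a^b\ee^{-\frac{u^2}{2(t-s)}}\dd u,
\]
I can write
\[
\mu(C_h^\alpha(s,t))=\frac{1}{\sqrt{2\pi(t-s)}}\int_{-h|t-s|^\alpha}^{h|t-s|^\alpha}\ee^{-\frac{u^2}{2(t-s)}}\dd u.
\]

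Next, I would bound the integrand uniformly by $1$, so that the integral is at most the length of the interval of integration, $2h|t-s|^\alpha$. This produces
\[
\mu(C_h^\alpha(s,t))\leq\frac{2h|t-s|^\alpha}{\sqrt{2\pi(t-s)}}=\sqrt{\frac{2}{\pi}}\,h\,|t-s|^{\alpha-\frac12},
\]
which is the natural quantitative form of the claim (the inequality \eqref{measure_est} as typeset in the lemma appears to be missing the exponent $\alpha-\tfrac12$ on $|t-s|$; the proof above yields that sharper version, which is what one needs later when applying the Borel--Cantelli type argument for H\"older regularity of Brownian paths).

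There is no genuine obstacle: the only ingredients are the explicit Gaussian pushforward $(\alpha_{s,t})_*\mu$ and the crude pointwise bound $\exp(-u^2/(2(t-s)))\leq 1$. The mild subtlety worth flagging is dimensional consistency in the exponent, since it is precisely the factor $|t-s|^{\alpha-1/2}$ (rather than $|t-s|$) that governs whether the $s\to t$ regularity is quantitatively useful --- this is what will force $\alpha<\tfrac12$ when one later tries to conclude nowhere-differentiability.
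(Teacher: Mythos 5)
Your proof is correct and is essentially the paper's own argument: both identify $C_h^\alpha(s,t)$ with the set $A_{s,t}^{-h|t-s|^\alpha,\,h|t-s|^\alpha}$, invoke the Gaussian pushforward formula for $(\alpha_{s,t})_*\mu$, and bound the integrand by $1$ (the paper first rescales $u\mapsto u\sqrt{t-s}$ before applying $\ee^{-u^2/2}\leq 1$, which is immaterial), arriving at $\sqrt{2/\pi}\,h\,|t-s|^{\alpha-\frac12}$ --- and you are right that the displayed statement of the lemma has dropped the exponent $\alpha-\frac12$. One small correction to your closing aside: the subsequent corollary needs $\alpha-\frac12>0$, i.e.\ $\alpha>\frac12$, so that $|t-t_k|^{\alpha-\frac12}\to 0$; it is the restriction $\alpha>\frac12$ (not $\alpha<\frac12$) that makes the H\"older sets null and hence yields nowhere-differentiability.
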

\begin{proof}
We can write 
\[
C_h^\alpha(s,t)=\{x\in C_0([0,1])\mid -h\vert t-s\vert^\alpha\leq x(t)-x(s)\leq h\vert t-s\vert^\alpha\}=:A^{-h\vert t-s\vert^\alpha,h\vert t-s\vert^\alpha}_{s,t}.
\]
Assume that $s<t$. Then by \eqref{integral} we have 
\begin{align}
\begin{split}
\mu(C_h^\alpha(s,t))&=\frac{1}{\sqrt{2\pi}}\int_{-h\vert t-s\vert^\alpha}^{h\vert t-s\vert^\alpha}\ee^{-\frac{u^2}{2(t-s)}}\dd u\\
&=\frac{1}{\sqrt{2\pi}}\int_{-h\vert t-s\vert^{\alpha-\frac{1}{2}}}^{h\vert t-s\vert^{\alpha-\frac{1}{2}}}\ee^{-\frac{u^2}{2}}\dd u\\
&\leq \sqrt{\frac{2}{\pi}}h\vert t-s\vert^{\alpha-\frac{1}{2}},
\end{split}
\end{align}
where we used that $\ee^{-\frac{u^2}{2}}\leq 1$.
\end{proof}
\begin{cor}
If $\frac{1}{2}<\alpha\leq 1$, then $\mu(C_h^\alpha(t))=0$ and hence $\mu(C_h^\alpha)=0$.
\end{cor}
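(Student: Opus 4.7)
The plan is to leverage the inclusion $C_h^\alpha(t) \subseteq C_h^\alpha(s,t)$, which holds for every $s\in[0,1]$ by the definition $C_h^\alpha(t) = \bigcap_{s\in[0,1]} C_h^\alpha(s,t)$. By monotonicity of the Wiener measure and the bound in the preceding lemma,
\[
\mu(C_h^\alpha(t)) \;\leq\; \mu(C_h^\alpha(s,t)) \;\leq\; \sqrt{\tfrac{2}{\pi}}\, h\, \lvert t-s\rvert^{\alpha-\frac{1}{2}}
\]
for every $s\neq t$. The crucial observation is that the hypothesis $\alpha>\tfrac{1}{2}$ makes the exponent $\alpha-\tfrac{1}{2}$ strictly positive, so I can choose a sequence $s_n\to t$ with $s_n\neq t$ and obtain $\lvert t-s_n\rvert^{\alpha-\frac{1}{2}}\to 0$. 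Since the left-hand side is independent of $s$, letting $n\to\infty$ forces $\mu(C_h^\alpha(t))=0$.

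For the second assertion, I simply note the inclusion $C_h^\alpha \subseteq C_h^\alpha(t_0)$ for any fixed $t_0\in[0,1]$, which is immediate from $C_h^\alpha = \bigcap_{t\in[0,1]} C_h^\alpha(t)$. By monotonicity once more, $\mu(C_h^\alpha) \leq \mu(C_h^\alpha(t_0)) = 0$, which yields the claim.

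There is no real obstacle here: the only subtlety is recognizing that the lemma's bound, combined with $\alpha>\tfrac{1}{2}$, gives a bound on $\mu(C_h^\alpha(t))$ that can be driven to $0$ by varying $s$, even though $C_h^\alpha(t)$ is itself an uncountable intersection and a priori might seem delicate to handle. The point is that one need not evaluate the measure of the intersection directly; a single well-chosen $s$ close to $t$ already suffices to majorize it. (One should also check implicitly that $C_h^\alpha(t)$ lies in the completed Borel $\sigma$-algebra, which follows from the closedness of each $C_h^\alpha(s,t)$ noted just before the lemma, so the monotonicity step is legitimate.)
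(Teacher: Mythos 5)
Your proof is correct and follows essentially the same route as the paper: pick a sequence $s_n\to t$, use the inclusion $C_h^\alpha(t)\subseteq C_h^\alpha(s_n,t)$ together with the lemma's bound $\sqrt{2/\pi}\,h|t-s_n|^{\alpha-\frac12}\to 0$ (valid since $\alpha>\frac12$), and then conclude for $C_h^\alpha$ by monotonicity. Nothing further is needed.
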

\begin{proof}
Let $\{t_k\}\subseteq[0,1]$ such that $t_k\to t$ (for $k\to\infty$). Now $C_h^\alpha(t)\subseteq C_h^\alpha(t,t_k)$. Thus 
$$\mu(C_h^\alpha(t))\leq \mu(C_h^\alpha(t,t_k))\leq \sqrt{\frac{2}{\pi}}h\vert t-t_k\vert^{\alpha-\frac{1}{2}}\xrightarrow{k\to\infty}0.$$
\end{proof}
\begin{prop}
Let $\frac{1}{2}<\alpha\leq 1$. Then 
$$\mu\left(\{x\in C_0([0,1])\mid \text{ $x$ is H\"older continuous of exponent $\alpha$}\}\right)=0.$$
\end{prop}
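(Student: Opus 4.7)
The plan is to reduce the statement to the previous corollary by recognizing that the set of $\alpha$-Hölder continuous functions is a countable union of sets of the form $C_h^\alpha$.

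First, I would observe that a function $x\in C_0([0,1])$ is Hölder continuous of exponent $\alpha$ precisely when there exists some constant $h>0$ such that $|x(t)-x(s)|\leq h|t-s|^\alpha$ for all $s,t\in[0,1]$, i.e., $x\in C_h^\alpha$. Since the Hölder condition only becomes weaker when $h$ is enlarged, one may restrict to integer values of $h$: the set of $\alpha$-Hölder continuous functions equals
\[
\bigcup_{n=1}^\infty C_n^\alpha,
\]
because any $x$ satisfying $|x(t)-x(s)|\leq h|t-s|^\alpha$ also satisfies the same inequality with $h$ replaced by $\lceil h\rceil\in\N$. In particular, this set is a countable union of Borel measurable sets (Borel measurability of each $C_n^\alpha$ was noted before the lemma), hence Borel measurable.

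Next, I would invoke the previous corollary, which asserts that $\mu(C_h^\alpha)=0$ for every $h>0$ under the assumption $\tfrac12<\alpha\leq 1$. Combining this with countable subadditivity of the Wiener measure gives
\[
\mu\left(\bigcup_{n=1}^\infty C_n^\alpha\right)\leq\sum_{n=1}^\infty \mu(C_n^\alpha)=0,
\]
which is exactly the conclusion of the proposition.

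There is essentially no obstacle here beyond the set-theoretic identification: all the analytic work, namely the Gaussian tail estimate \eqref{measure_est} and the passage from $C_h^\alpha(s,t)$ to $C_h^\alpha(t)$ via a sequence $t_k\to t$, has already been carried out in the lemma and its corollary. The only mild care needed is to ensure that an arbitrary Hölder constant $h$ can be absorbed into an integer constant, which is immediate from monotonicity of the inclusion $C_h^\alpha\subseteq C_{h'}^\alpha$ for $h\leq h'$.
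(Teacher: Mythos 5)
Your proof is correct and follows exactly the paper's argument: the set of $\alpha$-H\"older continuous paths is contained in $\bigcup_{h=1}^\infty C_h^\alpha$, each of which has measure zero by the preceding corollary, so countable subadditivity finishes the job. (The paper only uses the inclusion $\subseteq$ rather than the equality you state, but that makes no difference.)
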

\begin{proof}
It is clear since $\{x\in C_0([0,1])\mid \text{ $x$ is H\"older continuous of exponent $\alpha$}\}\subseteq \bigcup_{h=1}^\infty C_h^\alpha$.
\end{proof}
\begin{cor} 
$$\mu\left(\{x\in C_0([0,1])\mid \text{ $x$ is differentiable}\}\right)=0.$$
\end{cor}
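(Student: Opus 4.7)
The plan is to reduce everywhere-differentiability to differentiability at a single fixed point, and then show that for any fixed $t_0\in[0,1]$ the set of paths differentiable at $t_0$ is contained in the countable union $\bigcup_{h\in\N}C_h^1(t_0)$, which has measure zero by the preceding Corollary applied with $\alpha=1>\frac{1}{2}$.

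First I would fix $t_0\in[0,1]$, for instance $t_0=\frac{1}{2}$, and set
$$D_{t_0}:=\{x\in C_0([0,1])\mid x'(t_0)\text{ exists}\}.$$
Since any $x\in C_0([0,1])$ that is differentiable everywhere on $[0,1]$ is in particular differentiable at $t_0$, it suffices to prove $\mu(D_{t_0})=0$.

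Next I would show $D_{t_0}\subseteq\bigcup_{h=1}^\infty C_h^1(t_0)$. Given $x\in D_{t_0}$, the definition of the derivative yields some $\delta>0$ with $|x(t)-x(t_0)|\leq(|x'(t_0)|+1)|t-t_0|$ whenever $|t-t_0|<\delta$. For $|t-t_0|\geq\delta$ I would invoke continuity of $x$ on the compact $[0,1]$, writing $|x(t)-x(t_0)|\leq 2\|x\|_\infty\leq(2\|x\|_\infty/\delta)\,|t-t_0|$. Choosing $h$ to be any integer exceeding $\max(|x'(t_0)|+1,\,2\|x\|_\infty/\delta)$ then places $x$ in $C_h^1(t_0)$.

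With this inclusion established, the preceding Corollary gives $\mu(C_h^1(t_0))=0$ for every $h\in\N$, and countable subadditivity closes the argument. The main subtlety — and the only non-routine step — is the local-to-global upgrade of the Lipschitz-type bound: differentiability at $t_0$ only supplies a neighborhood estimate, and without exploiting the compactness of $[0,1]$ together with the boundedness of the continuous $x$ to dominate the inequality on the far-away region, one cannot place the path into any \emph{global} $C_h^1(t_0)$.
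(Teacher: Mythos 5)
Your proof is correct, but it takes a different route from the one the paper uses for this corollary. The paper argues via the global inclusion $\{x \text{ differentiable}\}\subseteq\{x \text{ H\"older continuous of exponent }1\}$ and then invokes the preceding proposition on H\"older classes; you instead fix a single point $t_0$, observe that everywhere-differentiability implies membership in $D_{t_0}=\{x\mid x'(t_0)\text{ exists}\}$, and show $D_{t_0}\subseteq\bigcup_{h=1}^\infty C_h^1(t_0)$, each piece of which is null by the corollary $\mu\bigl(C_h^\alpha(t)\bigr)=0$ for $\alpha>\tfrac12$. Your route is essentially the content of the paper's own Lemma \ref{measure1} (stated just after this corollary and proved there with the terse remark ``we can easily check that $D_t\subseteq\bigcup_h C_h^\bullet(t)$''), and the local-to-global upgrade you spell out --- patching the derivative estimate on $|t-t_0|<\delta$ with the crude bound $2\|x\|_\infty/\delta$ on the complementary region --- is exactly the detail that makes that inclusion work. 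Your version is arguably the more robust of the two: differentiability of $x$ on all of $[0,1]$ does \emph{not} in general give a global Lipschitz bound (the derivative need not be bounded, e.g.\ $t^2\sin(1/t^2)$), so the paper's inclusion into the exponent-$1$ H\"older class is, as literally stated, a gap that your pointwise argument sidesteps entirely. What you give up is only economy: the paper's one-line reduction, when it applies, needs no new estimate. One cosmetic remark: strictly speaking $D_{t_0}$ need not be Borel, but since you only sandwich it inside a Borel null set, the conclusion holds for the completed measure, which is the same standard of rigor the paper itself adopts.
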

\begin{proof}
This follows since
$$\{x\in C_0([0,1])\mid \text{ $x$ is differentiable}\}\subseteq\{x\in C_0([0,1])\mid \text{ $x$ is H\"older continuous of exponent $1$}\}.$$
\end{proof}
The following lemma will play an important role when we discuss nowhere differentiability of Brownian paths.
\begin{lem}
\label{measure1}
Let $t\in [0,1]$. Then $\mu(D_t)=0$, where $D_t=\{x\in C_0([0,1])\mid \dot{x}(t)\text{ exists}\}$.
\end{lem}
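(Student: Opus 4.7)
The strategy is to express $D_t$ as a countable union of measurable sets on which difference quotients are uniformly controlled near $t$, and then to bound each such set using Lemma~\ref{measure_est} specialized to the borderline H\"older exponent $\alpha=1$, where the estimate $\sqrt{2/\pi}\,h\,|t-s|^{\alpha-1/2}$ (as actually produced in that lemma's proof) decays like $|t-s|^{1/2}$.

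If $\dot x(t)$ exists then the difference quotient $(x(s)-x(t))/(s-t)$ is bounded as $s\to t$, so there exist positive integers $h,n$ with $|x(s)-x(t)|\le h|s-t|$ for all $s\in[0,1]$ with $|s-t|\le 1/n$. Thus
$$
D_t\;\subseteq\;\bigcup_{h,n\in\mathbb{N}} E_{h,n},\qquad
E_{h,n}\;:=\;\bigcap_{\substack{s\in[0,1]\\ |s-t|\le 1/n}} \bigl\{x\in C_0([0,1])\,:\,|x(s)-x(t)|\le h|s-t|\bigr\}.
$$
Each set in the intersection is the preimage of a closed interval under the continuous map $\ev_s-\ev_t$ and hence is closed; therefore $E_{h,n}$ is closed, and in particular Borel. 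Equivalently, continuity of paths in $C_0([0,1])$ lets one restrict the intersection to rational $s\in[t-1/n,t+1/n]$, displaying $E_{h,n}$ as a countable intersection of Borel sets.

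For any fixed $s_0\in[0,1]$ with $0<|s_0-t|\le 1/n$, the definition gives $E_{h,n}\subseteq C_h^1(s_0,t)$. Applying Lemma~\ref{measure_est} with $\alpha=1$ yields
$$
\mu(E_{h,n})\;\le\;\mu\bigl(C_h^1(s_0,t)\bigr)\;\le\;\sqrt{\tfrac{2}{\pi}}\,h\,|t-s_0|^{1/2}.
$$
Letting $s_0\to t$ forces $\mu(E_{h,n})=0$, and countable subadditivity then gives $\mu(D_t)\le\sum_{h,n}\mu(E_{h,n})=0$.

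The only subtlety is the measurability of $E_{h,n}$, since the natural description involves an uncountable intersection; this is handled by the closedness remark above (or by reducing to rational $s$). After that, the result is an immediate consequence of the $\alpha=1$ case of the H\"older estimate already in hand.
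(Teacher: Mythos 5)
Your proof is correct and follows essentially the same route as the paper: there, $D_t$ is covered by the sets $C_h^1(t)\subseteq C_h^1(s,t)$ and one lets $s\to t$ in the estimate of Lemma \ref{measure_est} with $\alpha=1$, exactly as you do with your sets $E_{h,n}$. Your extra index $n$ (localizing to $|s-t|\le 1/n$, which avoids having to globalize the Lipschitz bound at $t$ via boundedness of the path) and the explicit measurability remark tidy up two points the paper leaves implicit, but the underlying argument is the same.
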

\begin{proof}
We can easily check that $D_t\subseteq \bigcup_{h=1}^\infty C^\bullet_h(t)$. Moreover, we already know that $\mu(C_h^\bullet(t))=0$ and thus $\mu(D_t)=0$.
\end{proof}
\begin{lem}
\label{Brownian_path}
Define $F$ on $C_0([0,1])\times [0,1]$ by 
\[
F(x,t)=\begin{cases}1,&\text{if $\dot{x}(t)$ exists}\\ 0,& \text{otherwise}\end{cases}
\]
Then $F$ is measurable on $C_0([0,1])\times[0,1]$.
\end{lem}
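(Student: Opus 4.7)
The plan is to show that the set
\[
G := \{(x,t)\in C_0([0,1])\times[0,1] : \dot{x}(t) \text{ exists}\}
\]
is Borel in the product $\sigma$-algebra, which is equivalent to measurability of $F=\mathbf{1}_G$.

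The key technical ingredient is continuity of the joint evaluation map $\ev\colon C_0([0,1])\times[0,1]\to\R$, $(x,t)\mapsto x(t)$: if $(x_n,t_n)\to(x,t)$, then
\[
|x_n(t_n)-x(t)|\leq\|x_n-x\|_{\infty}+|x(t_n)-x(t)|\xrightarrow{n\to\infty}0,
\]
using uniform convergence $\|x_n-x\|_\infty\to 0$ together with continuity of the limit $x$ at $t$. Consequently, for every rational $h\neq 0$, the difference quotient
\[
D_h(x,t):=\frac{x(t+h)-x(t)}{h}
\]
is a continuous function on the Borel set $U_h:=\{(x,t):t+h\in[0,1]\}$.

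The next step is to express differentiability through a Cauchy-style criterion. For a fixed continuous path $x$, the map $h\mapsto D_h x(t)$ is continuous on its domain $\{h\neq 0:t+h\in[0,1]\}$, so the limit $\lim_{h\to 0} D_h x(t)$ exists (as a two-sided limit in the interior of $[0,1]$, or as the appropriate one-sided limit at $t\in\{0,1\}$) if and only if for every $n\geq 1$ there is $m\geq 1$ such that
\[
|D_{h_1}(x,t)-D_{h_2}(x,t)|\leq 1/n\qquad\text{for all rationals }h_1,h_2\text{ with }0<|h_i|\leq 1/m,\ t+h_i\in[0,1].
\]
This rewrites $G$ explicitly as the countable expression
\[
G=\bigcap_{n=1}^\infty\bigcup_{m=1}^\infty\bigcap_{\substack{h_1,h_2\in\Q\\ 0<|h_i|\leq 1/m}}F_{h_1,h_2,n},
\]
where
\[
F_{h_1,h_2,n}=\big(U_{h_1}^c\cup U_{h_2}^c\big)\cup\big\{(x,t)\in U_{h_1}\cap U_{h_2}:|D_{h_1}(x,t)-D_{h_2}(x,t)|\leq 1/n\big\}.
\]
Each $F_{h_1,h_2,n}$ is Borel: its first summand depends only on the $t$-coordinate and is Borel in $[0,1]$, while its second summand is a closed subset of the open set $U_{h_1}\cap U_{h_2}$ thanks to continuity of the difference quotients. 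Countable set operations preserve Borel measurability, so $G$, and hence $F$, is measurable.

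The main obstacle is to argue that restricting to rational increments $h$ faithfully captures real differentiability. This is precisely where the continuity of $x$ enters: for every fixed continuous $x$, the map $h\mapsto D_h x(t)$ is continuous in $h\neq 0$, so a Cauchy condition along rationals propagates automatically to all real sequences converging to $0$. The one-sided derivative at boundary points $t\in\{0,1\}$ is handled automatically by the implicit constraint $t+h_i\in[0,1]$ hard-wired into $F_{h_1,h_2,n}$, which renders all out-of-range increments vacuously admissible.
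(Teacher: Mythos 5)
Your proof is correct, but it takes a genuinely different route from the paper. You show that the set $G=\{(x,t):\dot x(t)\text{ exists}\}$ is honestly Borel in the product $\sigma$-algebra, by combining joint continuity of $(x,t)\mapsto x(t)$ with a countable Cauchy criterion over rational increments; the passage from rational to real increments via continuity of $h\mapsto D_h(x,t)$ is exactly the right justification, and the countable intersection/union bookkeeping is sound. The paper instead proves a weaker (but sufficient) statement by a measure-theoretic detour: it embeds $G$ in the set $G^*$ where the difference quotients along the single sequence $h=-1/n$ converge, observes that $G^*$ is measurable as the convergence set of a sequence of measurable functions, shows $(\mu\times m)(G^*)=0$ using the $C_h^\bullet$ estimates already established, and concludes that $G$ is measurable because it is a subset of a null set --- i.e.\ measurability holds only with respect to the completion of $\mu\times m$. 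Your argument is stronger (it is purely topological, independent of the measure, and yields Borel measurability in the uncompleted product $\sigma$-algebra), while the paper's argument is shorter given the machinery at hand and simultaneously delivers the null-set statement that the subsequent theorem actually uses. One small slip: $U_{h_1}\cap U_{h_2}$ is closed in $C_0([0,1])\times[0,1]$ (the constraint $t+h\in[0,1]$ defines a closed interval in $t$), not open; this is harmless, since the sublevel set of the continuous function $|D_{h_1}-D_{h_2}|$ on that closed set is still closed, hence Borel.
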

\begin{proof}
We will show that the set $G=\{(x,t)\mid F(x,t)=1\}$ has measure $0$ with respect to $\mu\times m$, where $m$ is the Lebesgue measure on $[0,1]$. First we observe that $G\subseteq G^*$, where 
\[
G^*=\{(x,t)\in C_0([0,1])\times[0,1]\mid \lim_{n\to\infty}f_n(x,t)\text{ exists}\},
\]
with $f_n(x,t)=\lim_{n\to\infty}\frac{x\left(t-\frac{1}{n}\right)-x(t)}{\frac{1}{n}}$. Moreover, $G^*$ is measurable (since it is the set where a sequence of measurable functions have a imit). Note that 
\[
(\mu^*\times m)(G)\leq (\mu^*\times m)(G^*)=(\mu\times m)(G^*),
\]
where $\mu^*$ is the outer measure associated with the premeasure $\mu$ in the construction of the Wiener measure. Now 
\[
(\mu\times m)(G^*)=\int_0^1\mu(G_t^*)\dd t,
\]
where $G^*_t=\{x\in C_0([0,1])\mid \lim_{n\to\infty}f_n(x,t)\text{ exists}\}$. If we can show that $\mu(G_t^*)=0$, then we see that $(\mu\times m)(G^*)=0\Rightarrow (\mu^*\times m)(G)=0\Rightarrow G$ is measurable. To see that $\mu(G_t^*)=0$, one can show that
\[
G_t^*\subseteq \bigcup_{h=1}^\infty\bigcap_{n=1}^\infty C_h^\bullet\left(t,t+\frac{1}{n}\right),
\]
and use the fact that $\lim_{n\to\infty}\mu(C_h^\bullet\left(t,t+\frac{1}{n}\right))=0$. 
\end{proof}
\begin{thm}[Nowehere differentiable Brownian paths]
With probability $1$, paths $x\in C_0([0,1])$ are differentiable at most on a subset of Lebesgue measure $0$ of $[0,1]$. (In other words: with probability $1$, paths $x\in C_0([0,1]))$ are ``nowhere'' differentiable.)
\end{thm}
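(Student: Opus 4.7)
The plan is to apply Fubini's theorem to the measurable function $F$ from Lemma \ref{Brownian_path} on the product space $(C_0([0,1]) \times [0,1], \mu \times m)$. The measurability of $F$ is exactly what Lemma \ref{Brownian_path} secures, so Fubini applies to $F$.

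First, I would compute the integral of $F$ over the product space by slicing in the $t$ direction. For each fixed $t \in [0,1]$, the $t$-slice of $\{F = 1\}$ is precisely the set $D_t = \{x \in C_0([0,1]) \mid \dot{x}(t) \text{ exists}\}$, which by Lemma \ref{measure1} satisfies $\mu(D_t) = 0$. Hence
\[
\int_{C_0([0,1]) \times [0,1]} F(x,t)\, \dd(\mu \times m)(x,t) = \int_0^1 \mu(D_t)\, \dd t = 0.
\]

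Next I would slice the other way around. For each fixed $x \in C_0([0,1])$, let $E(x) = \{t \in [0,1] \mid \dot{x}(t) \text{ exists}\}$, which is the $x$-slice of $\{F=1\}$. The same Fubini computation gives
\[
\int_{C_0([0,1])} m(E(x))\, \dd\mu(x) = 0.
\]
Since $m(E(x)) \geq 0$ everywhere, the integrand must vanish $\mu$-almost everywhere: there exists a Borel set $N \subset C_0([0,1])$ with $\mu(N) = 0$ such that $m(E(x)) = 0$ for every $x \notin N$. This is exactly the statement that with probability one, the set of differentiability points of a Brownian path has Lebesgue measure zero.

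The only step that requires care is the passage from Lemma \ref{measure1} to Fubini, since Fubini requires joint measurability of $F$, not just measurability in each variable separately; this is the entire content of Lemma \ref{Brownian_path}, and once that is invoked the argument is a one-line Fubini application. The stronger statement that Brownian paths are nowhere differentiable (rather than merely differentiable on a null set) is not what is being asserted here and would require a different argument based on modulus of continuity estimates; the present theorem is the weaker, almost trivial consequence of the pointwise Lemma \ref{measure1}.
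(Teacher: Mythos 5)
Your proof is correct and follows essentially the same route as the paper's: joint measurability of $F$ secured by Lemma \ref{Brownian_path}, the pointwise fact $\mu(D_t)=0$ from Lemma \ref{measure1}, and Fubini to exchange the order of integration and conclude that $m(\{t\mid \dot{x}(t)\text{ exists}\})=0$ for $\mu$-almost every $x$. Your closing caveat, that this yields only differentiability on a Lebesgue-null set rather than genuine nowhere-differentiability, matches the paper's own parenthetical qualification of the word ``nowhere''.
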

\begin{proof}
Let $F$ be defined as in lemma \ref{Brownian_path}. Note that 
\begin{equation}
\int_{C_0([0,1])\times[0,1]}F(x,t)\dd\mu(x)\dd t=\int_0^1\left(\int_{C_0([0,1])}F(x,t)\dd\mu(x)\right)\dd t
=\int_0^1\mu(D_t)\dd t=0,
\end{equation}
by lemma \ref{measure1}, since for fixed $t$, $F(x,t)=\chi_{D_t}$. Thus we get 
\[
\int_{C_0([0,1])}\left(\int_0^1F(x,t)\dd t\right)\dd\mu(x)=0,
\]
whence $\int_0^1F(x,t)\dd t=0$ for almost all $x\in C_0([0,1])$. For such $x$, $F(x,t)=0$ for almost all $t\in[0,1]$ and thus $\dot{x}(t)$ does not exists for almost all $t\in[0,1]$.
\end{proof}
We will state now the following facts without proof.
\begin{itemize}
\item{(Fact 1) $\mu(\{x\in C_0([0,1])\mid x \text{ is H\"older continuous of exponent $\alpha$}\})=1$ for $0\leq \alpha\leq \frac{1}{2}$ (see \cite{HK}).
}
\item{(Fact 2) $\mu(\{x\in C_0([0,1])\mid x \text{ is H\"older continuous of exponent $\frac{1}{2}$}\})=0$ (see \cite{S,MP}).
}
\end{itemize}
\begin{rem}
More generally, we can talk about the Wiener measure $\mu_x$ on $C_x([a,b])=\{\omega:[a,b]\to\R\mid \text{$\omega$ is continuous and $\omega(a)=x$}\}$.
\end{rem}
\begin{rem}
Moreover, there is the Wiener measure $\mu_x^y$ on $$C_x^y([a,b])=\{\omega:[a,b]\to\R\mid \text{$\omega$ is continuous and $\omega(a)=x$, $\omega(b)=y$}\}.$$ This measure is the unique measure on $\calB(C_x^y([0,1]))$ such that for all $t_1,...,t_n\in (a,b)$
\[
\mu_x^y(I(t_1,...,t_n),E)=\int_EK_{b-t_n}(y,x_n)K_{t_n-t_{n-1}}(x_n,x_{n-1})\dotsm K_{t_1-a}(x_1,x)\dd x_1\dotsm \dd x_n,
\]
where $E\subseteq\R^n$ is a measurable set and $K_t(u,v)=\frac{1}{\sqrt{2\pi t}}\ee^{-\frac{1}{2}\frac{(u-v)^2}{t}}$.
\end{rem}
\begin{defn}[Conditional Wiener measure]
The Wiener measure $\mu_x^y$ is called a \textsf{conditional Wiener measure}.
\end{defn}
\begin{rem}
$\mu_x^y$ is not a probability measure. In fact, $$\mu_x^y(C_x^y([a,b]))=\frac{1}{\sqrt{2\pi (b-a)}}\ee^{-\frac{(x-y)^2}{t(b-a)}}.$$
\end{rem}
\begin{rem}
$\mu_x^y$ is called \textsf{conditional Wiener measure} because $\mu_x$ and $\mu_x^y$ fit in the general framework (see \cite{HK,MP}) of a conditional measure 
\[
\mu_x=\int_\R\mu_x^y\dd y.
\]
\end{rem}
\subsection{The Feynman-Kac Formula}
The goal of this subsection is to prove the following theorem.
\begin{thm}[Feynman-Kac]
\label{FK}
Let $V$ be a continuous function on $\R$, which is bounded from below. Let $\widehat{H}_0=-\frac{1}{2}\frac{\dd^2}{\dd x^2}=\frac{1}{2}\Delta$ and $\widehat{H}=\widehat{H}_0+V$. Moreover, assume that $\widehat{H}$ is essentially self adjoint. Then for all $\psi\in L^2(\R)$
\begin{equation}
\label{FK_eq}
\left(\ee^{-t\widehat{H}}\psi\right)(x_0)=\int_{C_{x_0}([0,t])}\psi(x(t))\ee^{-\int_{0}^tV(x(s))\dd s}\dd\mu_{x_0}(x).
\end{equation}
\end{thm}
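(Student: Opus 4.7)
The plan is to combine the Trotter product formula with the explicit heat-kernel representation of $\ee^{-s\widehat{H}_0}$ and then to recognize the resulting finite-dimensional integrals as cylinder integrals against the Wiener measure $\mu_{x_0}$. Under the essential self-adjointness assumption, the (real) Trotter product formula gives
\begin{equation*}
\ee^{-t\widehat{H}}\psi \;=\; \lim_{n\to\infty}\bigl(\ee^{-(t/n)\widehat{H}_0}\,\ee^{-(t/n)V}\bigr)^{n}\psi \qquad \text{in } L^{2}(\R).
\end{equation*}
The first factor has integral kernel $K_{t/n}(x,y) = (2\pi t/n)^{-1/2}\ee^{-(x-y)^{2}/(2t/n)}$, which is precisely the Wiener transition kernel entering the construction of $\mu_{x_0}$, while the second factor is multiplication by $\ee^{-(t/n)V(y)}$. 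Iterating these two operations $n$ times yields, for every $x_{0}\in\R$,
\begin{equation*}
\bigl(\ee^{-(t/n)\widehat{H}_0}\ee^{-(t/n)V}\bigr)^{n}\psi(x_{0}) \;=\; \int_{\R^{n}}\prod_{j=1}^{n}K_{t/n}(x_{j-1},x_{j})\,\exp\!\Bigl(-\tfrac{t}{n}\sum_{j=1}^{n}V(x_{j})\Bigr)\psi(x_{n})\,\dd x_{1}\cdots\dd x_{n}.
\end{equation*}

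Using the cylinder formula characterizing $\mu_{x_0}$ at the partition $t_{j}=jt/n$ of $[0,t]$, the displayed integral rewrites as the Wiener integral
\begin{equation*}
\int_{C_{x_{0}}([0,t])}\psi(x(t))\,\ee^{-R_{n}(x)}\,\dd\mu_{x_{0}}(x),\qquad R_{n}(x) \;=\; \frac{t}{n}\sum_{j=1}^{n}V\bigl(x(jt/n)\bigr).
\end{equation*}
Now let $n\to\infty$. Because $\mu_{x_{0}}$ is supported on continuous paths and $V$ is continuous, the Riemann sums $R_{n}(x)$ converge to $\int_{0}^{t}V(x(s))\,\dd s$ for $\mu_{x_{0}}$-almost every $x$. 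The hypothesis $V\geq -M$ produces the uniform bound $\ee^{-R_{n}(x)}\leq \ee^{Mt}$; the dominating function $\ee^{Mt}|\psi(x(t))|$ is $\mu_{x_{0}}$-integrable since $\int_{C_{x_{0}}}|\psi(x(t))|\,\dd\mu_{x_{0}}(x) = \int_{\R}K_{t}(x_{0},y)|\psi(y)|\,\dd y < \infty$ for $\psi\in L^{2}(\R)$ by Cauchy--Schwarz. Dominated convergence therefore yields pointwise (in $x_{0}$) convergence of the right-hand side to $\int_{C_{x_{0}}}\psi(x(t))\ee^{-\int_{0}^{t}V(x(s))\,\dd s}\,\dd\mu_{x_{0}}(x)$.

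To reconcile this pointwise limit with the $L^{2}$ limit provided by Trotter, extract from the $L^{2}$-convergent sequence an almost-everywhere-convergent subsequence; uniqueness of limits then yields \eqref{FK_eq} for almost every $x_{0}$. The main obstacle is precisely this reconciliation between $L^{2}$-convergence (Trotter) and pointwise convergence (dominated convergence on path space), together with the uniform-in-$n$ domination of $\ee^{-R_{n}}$ from only a one-sided bound on $V$. The latter is handled by the elementary observation that it is $\ee^{-R_{n}}$, not $R_{n}$ itself, that one needs to dominate, and a lower bound on $V$ gives an upper bound on $\ee^{-R_{n}}$ directly; refining the conclusion from a.e.\ $x_{0}$ to every $x_{0}$ would require additional continuity/smoothing arguments (using the regularizing effect of $\ee^{-t\widehat{H}}$) but is not demanded by the statement.
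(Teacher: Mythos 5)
Your proposal is correct and follows essentially the same route as the paper's proof: Trotter product formula, identification of the iterated heat-kernel integrals with cylinder integrals against $\mu_{x_0}$, convergence of the Riemann sums along continuous paths, and dominated convergence using the lower bound on $V$. You supply two details the paper leaves to the reader — the explicit dominating function $\ee^{Mt}|\psi(x(t))|$ with its integrability via Cauchy--Schwarz, and the reconciliation of the $L^{2}$ Trotter limit with the pointwise limit by passing to an a.e.-convergent subsequence — both of which are handled correctly.
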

The main techincal tool, which we are going to use here, is the \textsf{Trotter product formula}, given in the following way. Let $A$ and $B$ be self adjoint operators bounded from below on $\mathcal{H}$. Assume that $\widehat{H}=A+B$ is essentially self adjoint in $D(A)\cap D(B)$. Denote the unique self adjoint extension of $\widehat{H}$ by $\widehat{H}$ again. Then for all $\phi\in\mathcal{H}$ and for all $t\geq 0$
\[
\ee^{-t\widehat{H}}\phi=\lim_{n\to\infty}\left(\left(\ee^{-\frac{t}{n}A}\ee^{-\frac{t}{n}B}\right)^n\phi\right).
\]
\begin{proof}[Proof of Theorem \ref{FK}]
We have 
\begin{equation}
\left(\left(\ee^{-\frac{t}{n}\widehat{H}_0}\ee^{-\frac{t}{n}V}\right)\psi\right)(x_0)=\int_\R K_{\frac{t}{n}}(x_n,x_0)\ee^{-\frac{t}{n}V(x_1)}\psi(x_1)\dd x_1.
\end{equation}
Taking the square of the operator we get 
\begin{equation}
\left(\left(\ee^{-\frac{t}{n}\widehat{H}_0}\ee^{-\frac{t}{n}V}\right)^2\psi\right)(x_0)=\iint_{\R^2} K_{\frac{t}{n}}(x_2,x_1)K_{\frac{t}{n}}(x_1,x_n)\ee^{-\frac{t}{n}(V(x_2)+V(x_1))}\psi(x_1)\dd x_1\dd x_2.
\end{equation}
Taking the $n$-th power of the operator, we get 
\begin{equation}
\label{FK2}
\left(\left(\ee^{-\frac{t}{n}\widehat{H}_0}\ee^{-\frac{t}{n}V}\right)^n\psi\right)(x_0)=\int_{\R^n} K_{\frac{t}{n}}(x_n,x_{n-1})\dotsm K_{\frac{t}{n}}(x_1,x_0)\ee^{-\frac{t}{n}\sum_{j=1}^nV(x_j)}\psi(x_n)\dd x_1\dotsm \dd x_n,
\end{equation}
where $x_j=x\left(\frac{jt}{n}\right)$ and thus $x_n=x(t)$. Then \eqref{FK2} is equal to 
\[
\int_{C_{x_0}([0,t])}\psi(x(t))\ee^{-\frac{t}{n}\sum_{j=1}^nV\left(x\left(\frac{jt}{n}\right)\right)}\dd \mu_{x_0},
\]
and thus 
\begin{equation}
\label{FK3}
\left(\ee^{-t\widehat{H}}\psi\right)(x_0)=\lim_{n\to\infty}\int_{C_{x_0}([0,t])}\psi(x(t))\ee^{-\frac{t}{n}\sum_{j=1}^nV\left(x\left(\frac{jt}{n}\right)\right)}\dd \mu_{x_0}.
\end{equation}
Since 
\[
\lim_{n\to\infty}\ee^{-\frac{t}{n}\sum_{j=1}^nV\left(x\left(\frac{jt}{n}\right)\right)}=\ee^{-\int_0^1V(x(s))\dd s},
\]
it is enough to justify that limit and integral are interchangable in \eqref{FK3}. This can be justified by using the assumption that $V$ is bounded from below and by Lebesgue's dominated convergence theorem. Details are left to the reader.
\end{proof}

\begin{rem}
The Feynman-Kac formula holds (see \cite{S}) for some general $V$.
\end{rem}
\begin{rem}
There is a Feynman-Kac formula with respect to $\mu_x^y$ on $C_x^y([0,t])$ as well (see \cite{GJ}). It simply sais that the integral kernel of $\ee^{-t\widehat{H}}$ is given by 
\[
K_{t}(x,y,\widehat{H})=\int_{C_x^y([0,t])}\ee^{-\int_0^1V(x(s))\dd s}\dd\mu_x^y.
\]
\end{rem}

\section{Gaussian Measures}

\subsection{Gaussian measures on $\R$}
\begin{defn}[Gaussian measure I]
A Borel probability measure $\mu$ on $\R$ is called \textsf{Gaussian} if it is either the Dirac measure $\delta_a$ at $a\in \R$, or it is of the form 
\begin{equation}
\label{Gaussian}
\dd\mu(x)=\frac{1}{\sqrt{2\pi \sigma}}\ee^{-\frac{(x-a)^2}{2\sigma}},
\end{equation}
 where $a\in\R$, and $\sigma>0$. The parameters $a$ and $\sigma$ are called \textsf{mean} and \textsf{variance} of $\mu$ respectively.
\end{defn}
\begin{rem}
\label{centeredGaussian}
If $\mu$ is given by \eqref{Gaussian}, we say $\mu$ is \textsf{nondegenerate Gaussian}. Moreover, if $a=0$, then $\mu$ is called a \textsf{centered Gaussian}.
\end{rem}
\begin{exe}
\label{exGauss}
Check that 
\begin{align}
a&=\int_\R x\dd\mu(x),\\
\sigma&=\int_\R (x-a)^2\dd\mu(x).
\end{align}
\end{exe}
Exercise \ref{exGauss} justifies the names ``mean'' and ``variance'' of the Gaussian measure $\mu$ given by \eqref{Gaussian}.
\begin{exe}
\label{exTrans}
Given a Broel measure $\mu$, define $\widetilde{\mu}:\R\to\mathbb{C}$ by 
\[
\widetilde{\mu}(y)=\int_\R\ee^{\I yx}\dd\mu(x).
\]
Check that $\widetilde{\mu}(y)=\ee^{\I ay-\frac{1}{2}\sigma y^2}$, if $\mu $ is given by \eqref{Gaussian}.
\end{exe}

\begin{defn}[Characteristic Functional I]
The map $\widetilde{\mu}$ defined as in exercise \ref{exTrans} is called the \textsf{charactersitic functional} (or \textsf{Fourier transform}) of $\mu$.
\end{defn}

\begin{exe}
Let $\mu$ be a Borel measure on $\R$. Show that $\mu$ is Gaussian iff 
\begin{equation}
\label{Ftrans}
\widetilde{\mu}(y)=\ee^{\I ay-\frac{1}{2}\sigma y^2}
\end{equation}
for some $a\in\R$ and $\sigma>0$.
\end{exe}

\subsection{Gaussian measures on finite dimensional vector spaces}

\begin{defn}[Gaussian measure II]
A Borel probability measure $\mu$ on $\R^n$ is called \textsf{Gaussian}, if for all linear maps $\alpha:\R^n\to\R$, the pushforward measure $\alpha_*\mu$ is Gaussian on $\R$. 
\end{defn}
This definition is abstract and we will later give a more ``working'' definition of a Gaussian measure.
\begin{rem}
From now on we will identify $(\R^n)^*$ with $\R^n$, using the standard metric on $\R^n$, i.e. a linear map $\alpha:\R^n\to\R$ will be considered a vector $\alpha\in\R^n$.
\end{rem}

\begin{defn}[Characteristic Functional II]
Given  a finite Borel measure $\mu$ on $\R^n$, define $\widehat{\mu}:\R\to\mathbb{C}$ by 
\[
\widehat{\mu}(y)=\int_\R\ee^{\I\langle y,x\rangle}\dd\mu(x).
\]
$\widehat{\mu}$ is called the \textsf{Characteristic functional} (or \textsf{Fourier transform}) of $\mu$.
\end{defn}
\begin{prop}
A Borel measure $\mu$ on $\R^n$ is Gaussian iff 
\begin{equation}
\label{measureG}
\widehat{\mu}(y)=\ee^{-\I\langle y,a\rangle-\frac{1}{2}\langle Ky,y\rangle},
\end{equation}
where $a\in\R^n$ and $K$ is a positive definite symmetric $n\times n$-matrix. In this case, when $\mu$ is nondegenerate, then 
\[
\dd\mu(x)=\frac{1}{\det\left(\frac{K}{2\pi}\right)^\frac{1}{2}}\ee^{-\frac{1}{2}\langle K^{-1}(x-a),K^{-1}(x-a)\rangle}\dd x.
\]
\end{prop}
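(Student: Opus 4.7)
The plan is to reduce multi-dimensional Gaussianity to the one-dimensional case via pushforwards along linear functionals, exploiting the identity $\widetilde{\alpha_*\mu}(t) = \widehat{\mu}(t\alpha)$ which follows from the change-of-variables formula for pushforward measures.

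For the $(\Leftarrow)$ direction, given the stated form of $\widehat{\mu}$, pick any linear $\alpha\colon\R^n\to\R$, identify it with a vector $\alpha\in\R^n$, and compute
$$\widetilde{\alpha_*\mu}(t) = \int_{\R^n} \ee^{\I t\langle \alpha,x\rangle}\,\dd\mu(x) = \widehat{\mu}(t\alpha) = \ee^{-\I t\langle \alpha,a\rangle - \tfrac{1}{2} t^2\langle K\alpha,\alpha\rangle}.$$
This is precisely the form identified in the earlier exercise as the characteristic functional of a Gaussian on $\R$, with mean $-\langle\alpha,a\rangle$ and variance $\langle K\alpha,\alpha\rangle \geq 0$ (degenerating to a Dirac mass when the variance vanishes). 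Hence $\alpha_*\mu$ is Gaussian for every linear $\alpha$, so $\mu$ is Gaussian by definition.

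For the $(\Rightarrow)$ direction, write $\alpha_y(x) := \langle y,x\rangle$ for each $y\in\R^n$. Since $(\alpha_y)_*\mu$ is Gaussian on $\R$ by hypothesis, its one-dimensional characteristic functional takes the form $\widetilde{(\alpha_y)_*\mu}(t) = \ee^{-\I t b_y - \tfrac{1}{2} t^2 s_y}$ for some $b_y\in\R$ and $s_y\geq 0$; evaluating at $t=1$ gives $\widehat{\mu}(y) = \ee^{-\I b_y - \tfrac{1}{2} s_y}$. Using the mean and variance formulas from the earlier exercise together with the pushforward integration identity yields
$$b_y = \int_{\R^n}\langle y,x\rangle\,\dd\mu(x), \qquad s_y = \int_{\R^n}\bigl(\langle y,x\rangle - b_y\bigr)^2\,\dd\mu(x).$$
The first expression is manifestly linear in $y$, so $b_y = \langle y,a\rangle$ for the vector $a\in\R^n$ with components $a_i := b_{e_i}$; the second is a nonnegative quadratic form whose polarization produces a unique symmetric positive semidefinite matrix $K$ with $s_y = \langle Ky,y\rangle$. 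Strict positive definiteness corresponds exactly to nondegeneracy of every one-dimensional marginal, i.e.\ to nondegeneracy of $\mu$ itself.

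For the density formula in the nondegenerate case, I would invoke Fourier inversion: since $\widehat{\mu}$ lies in the Schwartz class $\calS(\R^n)$, the measure $\mu$ is absolutely continuous with density $\mathcal{F}^{-1}(\widehat{\mu})$. Translating by $a$ reduces to the centered case; orthogonally diagonalizing $K$ to $\mathrm{diag}(\lambda_1,\dots,\lambda_n)$ with all $\lambda_j>0$ decouples the inverse Fourier transform into a product of one-dimensional inverse transforms of $\ee^{-\lambda_j y_j^2/2}$, each contributing $(2\pi\lambda_j)^{-1/2}\ee^{-x_j^2/(2\lambda_j)}$; reassembling and undoing the orthogonal change of variables produces the stated formula, with the prefactor $\det(K/2\pi)^{-1/2}$ arising as $\prod_j(2\pi\lambda_j)^{-1/2}$. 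The main obstacle is the careful invocation of uniqueness (the characteristic functional determines a Borel measure, needed both to conclude $\mu$ from $\widehat{\mu}$ and to identify the Fourier inverse as the density), together with the finiteness of moments of Gaussian measures required to legitimize writing $b_y$ and $s_y$ as absolutely convergent integrals linear and quadratic in $y$.
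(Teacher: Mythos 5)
Your proposal is correct and follows essentially the same route as the paper: reduce to the one-dimensional case via the identity $\widehat{\alpha_*\mu}(t)=\widehat{\mu}(t\alpha)$, read off mean and variance of each pushforward to produce the linear functional $a$ and the quadratic form $K$ by polarization, and obtain the density in the nondegenerate case by diagonalizing $K$ and reducing to the one-dimensional formula (which is exactly the step the paper relegates to a footnote/exercise). Your added remarks on moment finiteness and on uniqueness of a measure with a given characteristic functional only make explicit what the paper's terser argument implicitly uses.
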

\begin{proof}
Given a Borel measure $\mu$ on $\R^n$ and a linear map $\alpha:\R^n\to\R$, we get 
\begin{align}
\begin{split}
\widehat{\alpha_*\mu}(t)=\int_\R\ee^{\I ts}\dd(\alpha_*\mu)(s)&=\int_\R\ee^{\I t\alpha (x)}\dd\mu(x)\\
&=\int_{\R^n}\ee^{\I\langle t\alpha,x\rangle}\dd\mu(x)\\
&=\widehat{\mu}(t\alpha),
\end{split}
\end{align}
where we have used in the second equality that 
\[
\int_X (F^*\alpha)(x)\dd x=\int_Yf(y)\dd(\alpha_*\mu)(y).
\]
Assume that $\widehat{\mu}$ has the form \eqref{Ftrans}. Then 
\begin{equation}
\widehat{\alpha_*\mu}(t)=\widehat{\mu}(t\alpha)=\ee^{\I\langle t\alpha,a\rangle-\frac{1}{2}\langle K(t\alpha),t\alpha\rangle}=\ee^{\I t\langle\alpha,a\rangle-\frac{1}{2}t^2\langle K\alpha,\alpha\rangle}.
\end{equation}
By exercise \ref{exTrans}, $\widehat{\alpha_*\mu}$ is Gaussian on $\R$. Conversely, assume that $\alpha_*\mu$ is Gaussian on $\R$ for all linear maps $\alpha:\R^n\to\R$. By exercise \ref{exTrans} we get 
\[
\widehat{\alpha_*\mu}(t)=\ee^{\I ta(\alpha)-\frac{1}{2}\sigma(\alpha)t^2}.
\]
Moreover, by exercise \ref{exGauss}, we get 
\begin{align}
a(\alpha)&=\int_\R t\dd(\alpha_*\mu)(t),\\
\sigma(\alpha)&=\int_\R (t-a(\alpha)^2\dd(\alpha_*\mu)(t).
\end{align}
We can check that the application $\alpha\mapsto a(\alpha)$ defines a linear map $\R^n\to\R$. and hence it can be identified with $a\in\R^n$ as $a(\alpha)=\langle a,\alpha\rangle$. Moreover, the application $\alpha\mapsto\sigma(\alpha)$ defines a quadratic form on $\R^n$. Hence, there is a symmetric $n\times n$-matrix $K$ such that $\sigma(\alpha)=\langle K\alpha,\alpha\rangle$. Thus, $\sigma(\alpha)>0$ for all $\alpha\in\R^n$ implies that $K$ is a positive matrix. The last part of the proof is left as an exercise\footnote{It essentially follows from the one dimensional case and diagonalization of $K$.}. 
\end{proof}
Hence, we saw that this abstract definition of a Gaussian measure on $\R^n$ is equivalent to the usual notion of Gaussian measure.

\begin{exe}
Let $\mu$ be a Gaussian measure on $\R^n$ of the form 
\[
\dd\mu(x)=\det\left(\frac{K}{2\pi}\right)^\frac{1}{2}\ee^{-\frac{1}{2}\langle K(x-a),(x-a)\rangle}\dd x.
\]
Check that 
\[
a=\int_\R x\dd\mu(x)=\left(\int_\R x_1\dd\mu(x_1),..., \int_\R x_n\dd\mu(x_n)\right),
\]
and 
$$K^{-1}_{ij}=\int_{\R^n}(x_i-a_i)(x_j-a_j)\dd\mu(x).$$
\end{exe}
\begin{defn}[Covariance operator]
The vector $a\in\R^n$ is called the \textsf{mean} of the Gaussian measure and the matrix $K^{-1}$ is called the \textsf{covariance operator} of $\mu$. When the mean of a Gaussian measure is $0$, then it is called a \textsf{centered Gaussian} (see remark \ref{centeredGaussian}).
\end{defn}
\begin{prop}
Let $\mu$ be a centered Gaussian measure on $\R^n$ of the form $$\dd\mu(x)=\det\left(\frac{K}{2\pi}\right)^\frac{1}{2}\ee^{-\frac{1}{2}\langle Kx,x\rangle}\dd x.$$ Then 
\begin{enumerate}
\item{For all $\lambda\in\mathbb{C}^n$,
\[
\int_{\R^n}\ee^{\langle\lambda,x\rangle}\dd\mu(x)=\ee^{\frac{1}{2}\langle K^{-1}\lambda,\lambda\rangle}.
\]
}
\item{
\[
\int_{\R^n}f(x-\sqrt{t}y)\dd\mu(y)=\left(\ee^{\frac{t}{n}L^\mu}f\right)(x),
\]
where $L^\mu=\sum_{i=1}^nK^{-1}_{ij}\frac{\partial}{\partial x_i}\frac{\partial}{\partial y_j}$. Moreover, 
\[
\int_{\R^n}f(y)\dd\mu(y)=\left(\ee^{\frac{1}{2}L^\mu}f\right)(0).
\]
}
\item{
\[
\int_{\R^n}p(x)\dd\mu(x)=p(D_\lambda)\ee^{-\frac{1}{2}\langle A^{-1}\lambda,\lambda\rangle}\Big|_{\lambda=0},
\]
where $p(D_\lambda)$ is a polynomial in derivatives in $\lambda_i$-directions $\frac{\partial}{\partial \lambda_i}$ corresponding to the polynomial map $p(x)$, i.e. if $p(x)=x_1x_2$, then $p(D_\lambda)=\frac{\partial}{\partial \lambda_1}\frac{\partial}{\partial\lambda_2}$.
}
\end{enumerate}
\end{prop}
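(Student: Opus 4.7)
The three parts build on each other, so I would handle them in the stated order and let each feed into the next.

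For part (1), the plan is to complete the square in the exponent. Writing
$-\tfrac{1}{2}\langle Kx,x\rangle + \langle\lambda,x\rangle = -\tfrac{1}{2}\langle K(x-K^{-1}\lambda),\,x-K^{-1}\lambda\rangle + \tfrac{1}{2}\langle K^{-1}\lambda,\lambda\rangle$
(initially for real $\lambda$), translation invariance of Lebesgue measure and the normalization constant $\det(K/2\pi)^{-1/2}$ cancel the prefactor and leave $\ee^{\frac{1}{2}\langle K^{-1}\lambda,\lambda\rangle}$. Both sides are entire functions of $\lambda\in\mathbb{C}^n$ (the integral converges absolutely for all complex $\lambda$ since $K>0$), so the identity extends from $\lambda\in\R^n$ to $\lambda\in\mathbb{C}^n$ by analytic continuation. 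This is essentially a repackaging of \eqref{eleven} with $\I Q$ replaced by $-K$.

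For part (2), I would use Fourier inversion. Assume $f\in\calS(\R^n)$ and write $f(x-\sqrt{t}y) = (2\pi)^{-n/2}\int_{\R^n}\widehat{f}(k)\,\ee^{\I\langle k,x\rangle}\ee^{-\I\sqrt{t}\langle k,y\rangle}\,\dd k$. Swapping integrals (Fubini is fine because $\widehat{f}\in\calS$ and $\mu$ is a probability measure) and applying part (1) with $\lambda=-\I\sqrt{t}k$ yields
$\int_{\R^n} f(x-\sqrt{t}y)\,\dd\mu(y) = (2\pi)^{-n/2}\int_{\R^n}\widehat{f}(k)\,\ee^{\I\langle k,x\rangle}\,\ee^{-\frac{t}{2}\langle K^{-1}k,k\rangle}\,\dd k$.
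The operator $L^\mu = \sum_{i,j}K^{-1}_{ij}\partial_{x_i}\partial_{x_j}$ has Fourier symbol $-\langle K^{-1}k,k\rangle$, so the right side is exactly $(\ee^{\frac{t}{2}L^\mu}f)(x)$. (The exponent $\frac{t}{n}$ in the statement is a misprint for $\frac{t}{2}$, and both derivatives in $L^\mu$ should act on the same variable.) For the ``Moreover'' clause I would specialize $t=1$ and $x=0$; since $\mu$ is centered (hence symmetric under $y\mapsto -y$), $\int f(-y)\,\dd\mu = \int f(y)\,\dd\mu$, giving $(\ee^{\frac{1}{2}L^\mu}f)(0)$.

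Part (3) is then a direct consequence of part (1). The elementary identity $p(\partial_\lambda)\ee^{\langle\lambda,x\rangle}\vert_{\lambda=0} = p(x)$ combined with the interchange of $p(\partial_\lambda)$ and the $\mu$-integration (justified because $p$ has polynomial growth and $\mu$ has moments of every order, so differentiation under the integral applies) gives
$\int_{\R^n}p(x)\,\dd\mu(x) = p(\partial_\lambda)\Bigl(\int_{\R^n}\ee^{\langle\lambda,x\rangle}\,\dd\mu(x)\Bigr)\Big\vert_{\lambda=0} = p(\partial_\lambda)\,\ee^{\frac{1}{2}\langle K^{-1}\lambda,\lambda\rangle}\Big\vert_{\lambda=0}$.
(Here $A=K$, and the sign in the exponent must be positive: taking $p(x)=x_ix_j$ and differentiating twice at $\lambda=0$ recovers the covariance $K^{-1}_{ij}$, whereas the negative sign printed in the statement would produce $-K^{-1}_{ij}$.)

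The only genuine analytic obstacle is the Fubini/interchange step in part (2), which forces a regularity assumption on $f$ (Schwartz is the natural class); everything else reduces to Gaussian bookkeeping. For polynomial $f$ one can bypass Fourier entirely by Taylor-expanding $f(x-\sqrt{t}y)$ in $y$ and applying Wick's theorem to the moments $\int y^\alpha\,\dd\mu(y)$, which recovers the same series $\sum_k \frac{1}{k!}\bigl(\frac{t}{2}L^\mu\bigr)^k f$ term by term and also provides the rigorous polynomial content of part (3).
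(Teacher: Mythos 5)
Your proposal is correct, and it diverges from the paper in an instructive way on parts (2) and (3). For part (1) you and the paper do the same thing: the paper simply quotes the Gaussian integral $\int\ee^{\langle\lambda,x\rangle}\ee^{-\frac{1}{2}\langle Kx,x\rangle}\dd x=\det\left(\tfrac{K}{2\pi}\right)^{-\frac{1}{2}}\ee^{\frac{1}{2}\langle K^{-1}\lambda,\lambda\rangle}$, which is your completion of the square; your explicit analytic-continuation remark for complex $\lambda$ is a welcome extra. For part (2) the paper tests the identity only on the single exponentials $f(x)=\ee^{\langle\lambda,x\rangle}$ and waves at the claim that ``these functions are dense,'' without specifying a topology in which both sides of the identity are continuous in $f$. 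Your Fourier-inversion argument is the honest version of the same idea: you write a Schwartz $f$ as a continuous superposition of the exponentials $\ee^{\I\langle k,x\rangle}$, apply the paper's one-exponential computation under the integral sign, and justify the interchange by Fubini. What the paper's route buys is brevity; what yours buys is an actual proof for a concrete function class, plus (via your closing Taylor/Wick remark) the polynomial case needed for part (3). For part (3) the paper gives nothing (``left as an exercise''), so your argument via $p(\partial_\lambda)\ee^{\langle\lambda,x\rangle}\big\vert_{\lambda=0}=p(x)$ and differentiation under the integral is the content. You are also right about the three misprints: the exponent $\tfrac{t}{n}$ should be $\tfrac{t}{2}$ (the paper's own proof uses $\tfrac{t}{2}$), the operator should read $L^\mu=\sum_{i,j}K^{-1}_{ij}\partial_{x_i}\partial_{x_j}$ with both derivatives in the same variable and a double sum, and in part (3) the matrix is $K$ and the exponent must be $+\tfrac{1}{2}\langle K^{-1}\lambda,\lambda\rangle$ given that the paper defines $p(D_\lambda)$ with plain derivatives $\partial_{\lambda_i}$ rather than $-\I\partial_{\lambda_i}$; your covariance check $p(x)=x_ix_j\mapsto K^{-1}_{ij}$ is the right sanity test.
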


\begin{proof}
We prove each point separately:
\begin{enumerate}
\item{We have 
\[
\int_{\R^n}\ee^{\langle \lambda,x\rangle}\dd\mu(x)=\det\left(\frac{K}{2\pi}\right)^{\frac{1}{2}}\int_{\R^n}\ee^{\langle \lambda,x\rangle}\ee^{-\frac{1}{2}\langle Kx,x\rangle}\dd x=\frac{\det\left(\frac{K}{2\pi}\right)^{\frac{1}{2}}}{\det\left(\frac{K}{2\pi}\right)^{\frac{1}{2}}}\ee^{\frac{1}{2}\langle K^{-1}\lambda,\lambda\rangle}=\ee^{\frac{1}{2}\langle K^{-1}\lambda,\lambda\rangle}.
\]
}
\item{It is sufficient to check that $f(x)$ is of the form $\ee^{\langle\lambda,x\rangle}$ with $\lambda\in\mathbb{C}^n$ as these function are dense. For $f(x)=\ee^{\langle\lambda,x\rangle}$ we get 
\[
f(x-\sqrt{t}y)=\ee^{\langle\lambda,x\rangle}\ee^{\langle -\sqrt{t}\lambda,y\rangle},
\]
and thus 
\[
\int_{\R^n}f(x-\sqrt{t}y)\dd\mu(y)=\ee^{\langle\lambda,x\rangle}\int_{\R^n}\ee^{\langle-\sqrt{t}\lambda,y\rangle}\dd\mu(y)=\ee^{\langle\lambda,x\rangle}\ee^{\frac{t}{2}\langle K^{-1}\lambda,\lambda\rangle}.
\]
On the other hand 
\[
L^\mu\left(\ee^{\langle\lambda,x\rangle}\right)=\langle K^{-1}\lambda,\lambda\rangle\ee^{\langle\lambda,x\rangle}
\]
which means 
\[
\ee^{\frac{t}{2}L^\mu}\left(\ee^{\langle\lambda,x\rangle}\right)=\ee^{\frac{t}{2}\langle K^{-1}\lambda,\lambda\rangle}\ee^{\langle\lambda,x\rangle}.
\]
Thus we have 
\[
\int_\R f(x-\sqrt{t}y)\dd\mu(y)=\left(\ee^{\frac{t}{2}L^\mu}f\right)(x),
\]
when $f(x)=\ee^{\langle \lambda,x\rangle}$. The second part can be verified in a similar way.
}
\item{Left as an exercise.}
\end{enumerate}
\end{proof}
\begin{ex}
Consider 
\[
\int_{\R^n}x_ix_j\dd\mu(x)=K^{-1}_{ij}.
\]
More generally, 
\[
\int_{\R^n}\langle u,x\rangle\langle v,x\rangle\dd\mu(x)=\langle K^{-1} u,v\rangle.
\]
Graphically, it can be represented as 
\begin{figure}[!ht]
\centering
\tikzset{
particle/.style={thick,draw=black},
particle2/.style={thick,draw=black, postaction={decorate},
    decoration={markings,mark=at position .9 with {\arrow[black]{triangle 45}}}},
gluon/.style={decorate, draw=black,
    decoration={coil,aspect=0}}
 }
\begin{tikzpicture}[x=0.04\textwidth, y=0.04\textwidth]
\node[](1) at (-0.5,0){$u$};
\node[](2) at (5.5,0){$v$};
\draw[fill=black] (0,0) circle (.07cm);
\draw[fill=black] (5,0) circle (.07cm);
\draw[particle] (1)--(2){};
\node[](3) at (2.5,0.5){$K^{-1}$};
\end{tikzpicture}
\end{figure}
\end{ex}
\begin{ex}
Consider 
\[
\int_{\R^n}\left(\prod_{i=1}^4\langle u_i,x\rangle\right)\dd\mu(x)=\langle K^{-1}u_1,u_2\rangle\langle K^{-1}u_3,u_4\rangle+\langle K^{-1}u_1,u_3\rangle\langle K^{-1}u_2,u_4\rangle+\langle K^{-1}u_1,u_4\rangle\langle K^{-1}u_2,u_3\rangle.
\]
Graphically, it can be represented as the sum of 

\begin{figure}[!ht]
\centering
\tikzset{
particle/.style={thick,draw=black},
gluon/.style={decorate, draw=black,
    decoration={coil,aspect=0}}
 }
\begin{tikzpicture}[node distance=1.5cm and 2cm]
\coordinate[vertex, label=above: $u_1$](v1);
\coordinate[vertex, below=of v1, label=below: $u_2$](v2);
\coordinate[vertex, right=of v2, label=below: $u_4$](v3);
\coordinate[vertex, right=of v1, label=above: $u_3$](v4);
\coordinate[vertex, right=of v4, label=above: $u_1$](v5);
\coordinate[vertex, right=of v3, label=below: $u_2$](v6);
\coordinate[vertex, right=of v5, label=above: $u_3$](v7);
\coordinate[vertex, right=of v6, label=below: $u_4$](v8);
\coordinate[vertex, right=of v8, label=below: $u_2$](v9);
\coordinate[vertex, right=of v7, label=above: $u_1$](v10);
\coordinate[vertex, right=of v10, label=above: $u_3$](v11);
\coordinate[vertex, right=of v9, label=below: $u_4$](v12);
\draw[fill=black](v1) circle (.07cm);
\draw[fill=black](v2) circle (.07cm);
\draw[fill=black](v3) circle (.07cm);
\draw[fill=black](v4) circle (.07cm);
\draw[fill=black](v5) circle (.07cm);
\draw[fill=black](v6) circle (.07cm);
\draw[fill=black](v7) circle (.07cm);
\draw[fill=black](v8) circle (.07cm);
\draw[fill=black](v9) circle (.07cm);
\draw[fill=black](v10) circle (.07cm);
\draw[fill=black](v11) circle (.07cm);
\draw[fill=black](v12) circle (.07cm);
\draw[particle] (v1) -- (v2); 
\draw[particle] (v4) -- (v3);
\draw[particle] (v5) -- (v7);
\draw[particle] (v6) -- (v8);
\draw[particle] (v10) -- (v12);
\draw[particle] (v9) -- (v11);
\end{tikzpicture}
\end{figure}
where each edge represents $K^{-1}$.
\end{ex}
In general, there is the following theorem.
\begin{thm}[Wick]
\label{WIck}
\[
\int_{\R^n}\prod_{i=1}^k\langle u_i,x\rangle\dd\mu(x)=\begin{cases}\sum\langle K^{-1}u_{j_1},u_{j_2}\rangle\dotsm\langle K^{-1} u_{j_{m-1}},u_{j_m}\rangle,&\text{if $k$ even}\\0,&\text{otherwise}\end{cases}
\]
\end{thm}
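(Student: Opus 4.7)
The plan is to reduce Wick's theorem to the generating function identity already established in part~(1) of the preceding proposition, namely
$$\int_{\R^n}\ee^{\langle\lambda,x\rangle}\dd\mu(x)=\ee^{\frac{1}{2}\langle K^{-1}\lambda,\lambda\rangle},\qquad \lambda\in\mathbb{C}^n.$$
Introducing auxiliary parameters $t_1,\ldots,t_k\in\R$ and substituting $\lambda=\sum_{i=1}^k t_i u_i$, I obtain
$$\int_{\R^n}\ee^{\sum_{i=1}^k t_i\langle u_i,x\rangle}\dd\mu(x)=\exp\left(\tfrac{1}{2}\sum_{i,j=1}^k t_i t_j\,\langle K^{-1}u_i,u_j\rangle\right).$$
Since differentiating the left-hand side in $t_1,\ldots,t_k$ and setting $t=0$ pulls down the factors $\langle u_i,x\rangle$, the desired moment is
$$\int_{\R^n}\prod_{i=1}^k\langle u_i,x\rangle\,\dd\mu(x)=\frac{\partial^k}{\partial t_1\cdots\partial t_k}\exp\left(\tfrac{1}{2}\sum_{i,j=1}^k t_i t_j\,a_{ij}\right)\bigg|_{t=0},$$
where I write $a_{ij}:=\langle K^{-1}u_i,u_j\rangle$ and note that $a_{ij}=a_{ji}$.

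Next, I expand the exponential as a Taylor series, so that
$$\exp\left(\tfrac{1}{2}\sum_{i,j}t_it_j a_{ij}\right)=\sum_{m=0}^{\infty}\frac{1}{2^m m!}\left(\sum_{i,j=1}^k t_it_j a_{ij}\right)^{\!m}.$$
The $m$-th summand is a homogeneous polynomial of degree $2m$ in $(t_1,\ldots,t_k)$, so after applying $\partial^k/\partial t_1\cdots\partial t_k$ and evaluating at $t=0$ only the term with $2m=k$ survives. This immediately gives the vanishing statement when $k$ is odd, so assume from now on that $k=2m$.

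For the even case, expanding the power yields
$$\left(\sum_{i,j}t_it_j a_{ij}\right)^{\!m}=\sum_{(i_1,j_1),\dots,(i_m,j_m)} a_{i_1 j_1}\cdots a_{i_m j_m}\,t_{i_1}t_{j_1}\cdots t_{i_m}t_{j_m},$$
and $\partial^k/\partial t_1\cdots\partial t_k|_{t=0}$ extracts $k!$ times the coefficient of $t_1 t_2\cdots t_k$. This coefficient is a sum over bijections $\sigma\colon\{1,\ldots,k\}\to\{(i_1,j_1),\ldots,(i_m,j_m)\}$ assigning each index to a slot, and each such bijection encodes a pairing of $\{1,\ldots,k\}$ together with two extra pieces of data: the ordering of the $m$ pairs (contributing a factor $m!$) and the ordering within each pair (contributing $2^m$, using the symmetry $a_{ij}=a_{ji}$). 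The combined factor $2^m m!$ exactly cancels the prefactor $1/(2^m m!)$ from the Taylor expansion, so after reorganizing I obtain
$$\int_{\R^n}\prod_{i=1}^{k}\langle u_i,x\rangle\,\dd\mu(x)=\sum_{\text{pairings }P}\prod_{\{a,b\}\in P}\langle K^{-1}u_a,u_b\rangle,$$
which is the claim.

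\textbf{The main obstacle} is the combinatorial bookkeeping of the last step: one must check carefully that the $2^m m!$ orderings of pairs and within-pair orderings account for all the ways to land on a fixed pairing, so that the factor cancels the Taylor prefactor cleanly; the symmetry $a_{ij}=a_{ji}$ is essential here. Apart from this, the argument is a short computation with generating functions.
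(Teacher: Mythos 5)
Your approach is correct and is essentially the intended one: the paper leaves this theorem as an exercise immediately after stating it, but the preceding proposition (parts (1) and (3)) supplies exactly the generating-function identity $\int_{\R^n}\ee^{\langle\lambda,x\rangle}\dd\mu(x)=\ee^{\frac{1}{2}\langle K^{-1}\lambda,\lambda\rangle}$ that you differentiate, so your argument is the natural completion of the text. The parity argument and the final pairing formula are right.

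One bookkeeping slip to fix: the mixed partial $\partial^k/\partial t_1\cdots\partial t_k\big|_{t=0}$ applied to a polynomial extracts exactly the coefficient of the monomial $t_1t_2\cdots t_k$ (each variable to the first power differentiates to $1$, not to a factorial), \emph{not} $k!$ times that coefficient. Your subsequent computation is actually consistent with the correct normalization — the coefficient of $t_1\cdots t_k$ in $(\sum_{i,j}t_it_ja_{ij})^m$ is $2^m m!\sum_P\prod_{\{a,b\}\in P}a_{ab}$, and dividing by the Taylor prefactor $2^m m!$ leaves precisely $\sum_P\prod a_{ab}$ — so the stray $k!$ never enters and the final identity is correct; just delete the phrase ``$k!$ times.'' You should also add one sentence justifying the interchange of $\partial/\partial t_i$ with $\int_{\R^n}\cdot\,\dd\mu$, which holds because polynomials times $\ee^{\sum t_i\langle u_i,x\rangle}$ are dominated by an integrable Gaussian-type function locally uniformly in $t$.
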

\begin{exe}
Prove theorem \ref{WIck}.
\end{exe}
\subsection{Gaussian measures on real seperable Hilbert spaces} 

Let $\mathcal{H}$ be a real seperable Hilbert space.
\begin{defn}[Borel measure on $\mathcal{H}$]
A \textsf{Borel measure} $\mu$ on $\mathcal{H}$ is a measure defined on $\calB(\mathcal{H})$, which is the Borel $\sigma$-algebra of $\mathcal{H}$.
\end{defn}
In the previous subsection, we saw that a Gaussian measure on a finite dimensional vector space $V$ is determined by $a\in V$ and a positive symmetric matrix $K^{-1}$, called the covariance of the Gaussian mean. In this section we will see whether this is the case in the infinite dimensional case as well. Let $\mu$ be a Borel measure on $\mathcal{H}$. Define an operator $S_\mu$ on $\mathcal{H}$ by 
\begin{equation}
\label{inner1}
\langle S_\mu(x),y\rangle=\int_\mathcal{H}\langle x,z\rangle\langle y,z\rangle\dd\mu(z)
\end{equation}
\begin{rem}
It may happen that $S_\mu$ does not exists.
\end{rem}
Let us recall some background material.
\begin{enumerate}
\item{(Trace class operators) Let $A:\mathcal{H}\to\mathcal{H}$ be a bounded operator. We define the \textsf{squareroot} of $A$ by 
\[
\vert A\vert:=\sqrt{A^*A},
\]
which exists by the spectral theorem. Note that $\vert A\vert \geq 0$. Let $A$ be a nonegative operator on $\mathcal{H}$. Then 
\[
\sum_{n=1}^\infty\langle Ae_n,e_n\rangle
\]
is independent of the choice of an orthonormal basis $\{e_n\}$. In this case, one defines the \textsf{trace} of $A$ as 
\[
Tr(A)=\sum_{n=1}^\infty\langle Ae_n,e_n\rangle.
\]
\begin{defn}[Trace class]
The operator $A$ is called \textsf{trace class} if $Tr(\vert A\vert)<\infty$.
\end{defn}
If $A$ is a trace class operator, then $\sum_{n=1}^\infty\langle Ae_n,e_n\rangle$ does not depend on the choice of an orthonormal basis $\{e_n\}$. In this case, we define $Tr(A)=\sum_{n=1}^\infty\langle Ae_n,e_n\rangle$.
}
\item{(Bilinear forms/quadratic forms) A \textsf{bilinear form} $B$ with domain $D(B)$ is a bilinear map 
\begin{align*}
B\colon D(B)\times D(B)&\longrightarrow \R\\
(x,y)&\longmapsto B(x,y),
\end{align*}
where $D(B)$ is a dense subspace of $\mathcal{H}$. Given a bilinear form $B$ on $\mathcal{H}$, we can define a \textsf{quadratic form} $q(x)=B(x,x)$. A bilinear form $B$ is \textsf{bounded} if there is some $\varepsilon>0$ such that for all $x,y\in D(B)$
\[
\vert B(x,y)\vert\leq \varepsilon \| x\|\|y\|.
\]
We call $B$ \textsf{symmetric} if $B(x,y)=B(y,x)$ for all $x,y$. Moreover, $B$ is called \textsf{positive (definite)} if $q(x)\geq 0$ (and $q(x)=0$ iff $x=0$) for all $x$. If $B$ is a bounded, positive and symmetric bilinear form, then there is a bounded linear operator $S_B:\mathcal{H}\longrightarrow\mathcal{H}$ such that $B(x,y)=\langle S_B(x),y\rangle$. 
}
\end{enumerate}
This is the end of the background materials. Next, we want to investigate when $S_\mu$ exists. We first need some notation. We define
\[
\mathcal{T}:=\{\text{Trace class, positive, self adjoint operators on $\mathcal{H}$}\}.
\]
\begin{prop}
$$S_\mu\in\mathcal{T}\Longleftrightarrow\int_\mathcal{H}\|x\|^2\dd\mu(x)<\infty.$$
\end{prop}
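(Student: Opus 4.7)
The plan is to reduce both implications of the equivalence to a single Fubini--Tonelli identity, namely $\tr(S_\mu) = \int_\mathcal{H} \|z\|^2 \dd\mu(z)$, understood as an equality in $[0,+\infty]$.

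First, I would establish that finiteness of the second moment makes $S_\mu$ exist as a bounded, positive, self-adjoint operator. The bilinear form $B(x,y) := \int_\mathcal{H} \langle x, z\rangle \langle y, z\rangle \dd\mu(z)$ is dominated pointwise via Cauchy--Schwarz by $\|x\|\|y\|\|z\|^2$, so if $\int_\mathcal{H} \|z\|^2 \dd\mu(z) < \infty$ then $B$ is well-defined and satisfies $|B(x,y)| \leq \|x\|\|y\| \int_\mathcal{H} \|z\|^2 \dd\mu(z)$. Since $B$ is clearly symmetric and positive, the representation theorem for bounded symmetric positive bilinear forms recalled in the preamble produces a bounded, positive, self-adjoint $S_\mu$ with $\langle S_\mu x, y\rangle = B(x,y)$.

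Next, I would fix any orthonormal basis $\{e_n\}$ of $\mathcal{H}$ and compute
\[
\langle S_\mu e_n, e_n\rangle = \int_\mathcal{H} \langle e_n, z\rangle^2 \dd\mu(z) \geq 0.
\]
Since the integrands are nonnegative, Tonelli's theorem allows exchanging the sum and the integral, and Parseval's identity $\sum_n \langle e_n, z\rangle^2 = \|z\|^2$ then yields
\[
\tr(S_\mu) = \sum_n \langle S_\mu e_n, e_n\rangle = \int_\mathcal{H} \sum_n \langle e_n, z\rangle^2 \dd\mu(z) = \int_\mathcal{H} \|z\|^2 \dd\mu(z).
\]
Both directions of the proposition fall out of this identity: if $\int_\mathcal{H} \|z\|^2 \dd\mu(z) < \infty$ the previous step produces $S_\mu$, and the identity gives $\tr(S_\mu) < \infty$, so $S_\mu \in \mathcal{T}$; conversely, if $S_\mu \in \mathcal{T}$ then $\tr(S_\mu) < \infty$ and the identity forces $\int_\mathcal{H}\|z\|^2 \dd\mu(z) < \infty$.

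The only real subtlety is a semantic one in the $(\Rightarrow)$ direction: the statement $S_\mu \in \mathcal{T}$ implicitly presupposes that $S_\mu$ exists as an operator, i.e., that $B(x,x) < \infty$ for every $x$. Once this reading is fixed, the Tonelli identity above is valid in $[0,+\infty]$ without any a priori finiteness assumption and handles everything at once; I do not anticipate any further technical difficulty beyond verifying that the basis-independence of the trace applies here (which is immediate since $S_\mu$ is nonnegative, so $\sum_n \langle S_\mu e_n, e_n\rangle$ is independent of the choice of orthonormal basis by the background fact already recalled).
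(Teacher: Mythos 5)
Your proposal is correct and follows essentially the same route as the paper's proof: both directions rest on the identity $\sum_n\langle S_\mu e_n,e_n\rangle=\int_\mathcal{H}\|x\|^2\dd\mu(x)$ (obtained by exchanging sum and integral of nonnegative terms, via Tonelli/monotone convergence) together with the representation of the bounded positive symmetric form $B$ by a bounded positive self-adjoint operator. Your remark that $S_\mu\in\mathcal{T}$ presupposes the existence of $S_\mu$ is a reasonable clarification that the paper leaves implicit.
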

\begin{proof}
Assume $S_\mu\in\mathcal{T}$. Let $\{e_n\}$ be an orthonormal basis of $\mathcal{H}$. Then 
\[
Tr(S_\mu)=\sum_{n=1}^\infty\langle S_\mu(e_n),e_n\rangle=\sum_{n=1}^\infty\int_\mathcal{H}\langle x,e_n\rangle^2\dd\mu(x)=\int_\mathcal{H}\sum_{n=1}^\infty\langle x,e_n\rangle^2\dd\mu(x)=\int_\mathcal{H}\|x\|^2\dd\mu(x),
\]
by the monotone convergence theorem. Conversely, assume $\int_{\mathcal{H}}\|x\|^2\dd\mu(x)<\infty$. Then, define 
\[
B(x,y)=\int_\mathcal{H}\langle x,z\rangle\langle y,z\rangle\dd\mu(z).
\]
Then 
\[
\vert B(x,y)\vert=\left\vert\int_\mathcal{H}\langle x,z\rangle\langle y,z\rangle\dd\mu(z)\right\vert\leq\|x\|\|y\|\int_\mathcal{H}\|z\|^2\dd\mu(z),
\]
and thus $B$ is a bounded bilinear form. Moreover, $B$ is symmetric and positive. Hence, there is a positive self adjoint operator $S_\mu$ such that $B(x,y)=\langle S_\mu(x),y\rangle$. Now, we can check 
\[
\sum_{n=1}^\infty\langle S_\mu(e_n),e_n\rangle=\int_\mathcal{H}\|x\|^2\dd\mu(x)<\infty
\]
for any orthonormal basis $\{e_n\}$. Thus $S_\mu\in\mathcal{T}$.
\end{proof}

\subsubsection{Characteristic Functionals}

\begin{defn}[Positive definite function]
A function $\phi:\mathcal{H}\to\mathbb{C}$ is called a \textsf{positive definite} if for all $c_1,...,c_n\in\mathbb{C}$ and $h_1,...,h_n\in\mathcal{H}$ with $n=1,2,...$ we have 
\begin{equation}
\label{char_fun}
\sum_{j,k=1}^nc_k\phi(h_k-h_j)\bar c_j\geq 0.
\end{equation}
\end{defn}

\begin{defn}[Characteristic functional III]
Let $\mu$ be a Borel measure on $\mathcal{H}$. The \textsf{characteristic functional} (or \textsf{Fourier transform}) $\widehat{\mu}$ of $\mu$ is a function $\widehat{\mu}:\mathcal{H}\to\mathbb{C}$ defined by 
\begin{equation}
\label{char_functional}
\widehat{\mu}(y)=\int_\mathcal{H}\ee^{\I\langle y,x\rangle}\dd\mu(x).
\end{equation}
\end{defn}
\begin{rem}
It is easy to check that:
\begin{enumerate}
\item{$\vert\widehat{\mu}(x)\vert\leq \mu(\mathcal{H})$ for all $x\in \mathcal{H}$.
}
\item{If $\mu$ is a probability measure, then $\widehat{\mu}(0)=1$.}
\item{If $\mu$ is a finite measure, then $\widehat{\mu}$ is uniformly continuous on $\mathcal{H}$.
}
\end{enumerate}
\end{rem}
\begin{lem}
Let $\mu$ be a Borel measure on $\mathcal{H}$. Then $\widehat{\mu}$ is a positive definite functional on $\mathcal{H}$.
\end{lem}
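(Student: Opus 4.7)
The plan is to unwind the definition of $\widehat{\mu}$ inside the defining inequality for positive definiteness, and reorganize the resulting finite sum so that it becomes the integral of a manifestly non-negative function against the positive measure $\mu$.

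First, I would fix arbitrary $c_1,\dots,c_n\in\mathbb{C}$ and $h_1,\dots,h_n\in\mathcal{H}$, and substitute the integral expression
\[
\widehat{\mu}(h_k-h_j) \;=\; \int_{\mathcal{H}} \ee^{\I\langle h_k-h_j,x\rangle}\,\dd\mu(x)
\]
into the quadratic sum $\sum_{j,k}c_k\widehat{\mu}(h_k-h_j)\bar c_j$. Since only finitely many terms are involved, I can interchange the finite sum with the integral without any measurability concerns. Using $\ee^{\I\langle h_k-h_j,x\rangle}=\ee^{\I\langle h_k,x\rangle}\,\overline{\ee^{\I\langle h_j,x\rangle}}$, the integrand factors as
\[
\sum_{j,k=1}^n c_k\bar c_j\,\ee^{\I\langle h_k,x\rangle}\,\overline{\ee^{\I\langle h_j,x\rangle}}
\;=\;\left(\sum_{k=1}^n c_k \ee^{\I\langle h_k,x\rangle}\right)\overline{\left(\sum_{j=1}^n c_j \ee^{\I\langle h_j,x\rangle}\right)}
\;=\;\left|\sum_{k=1}^n c_k\ee^{\I\langle h_k,x\rangle}\right|^{2}.
\]

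Therefore the whole expression equals $\int_{\mathcal{H}}\bigl|\sum_{k}c_k\ee^{\I\langle h_k,x\rangle}\bigr|^{2}\,\dd\mu(x)$, which is non-negative because it is the integral of a non-negative real function against the positive Borel measure $\mu$. Since $c_j$ and $h_j$ were arbitrary, this establishes the positive definiteness condition \eqref{char_fun}. There is essentially no obstacle here: the only subtlety worth flagging is that the finite sum and integral commute trivially (no dominated convergence needed) and that $\mu$ being a positive measure is what ultimately makes the integral of a modulus squared non-negative.
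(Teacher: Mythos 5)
Your proposal is correct and follows exactly the paper's own argument: substitute the integral definition of $\widehat{\mu}$ into the quadratic form, exchange the finite sum with the integral, and recognize the integrand as $\bigl|\sum_k c_k\ee^{\I\langle h_k,x\rangle}\bigr|^2$, which is non-negative. No further comment is needed.
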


\begin{proof}
Let $h_1,...,h_n\in\mathcal{H}$ and $c_1,...,c_n\in\mathbb{C}$. Then we get 
\begin{align*}
\sum_{j,k=1}^nc_j\widehat{\mu}(h_j-h_k)\bar c_k&=\int_\mathcal{H}\sum_{j,k=1}^n c_j\ee^{\I\langle h_j,x\rangle}\ee^{-\I\langle h_k,x\rangle}\bar c_k\dd\mu(x)\\
&=\int_\mathcal{H}\sum_{j,k=1}^nc_j\ee^{\I\langle h_j,x\rangle}\overline{\ee^{\I\langle h_k,x\rangle}c_k}\dd\mu(x)\\
&=\int_\mathcal{H}\left\vert\sum_{j=1}^nc_j\ee^{\I\langle h_j,x\rangle}\right\vert^2\dd\mu(x)\geq 0.
\end{align*}
\end{proof}

\begin{defn}[Gaussian measure III]
A Borel measure $\mu$ on $\mathcal{H}$ is called a \textsf{Gaussian measure on} $\mathcal{H}$ if for all $h\in\mathcal{H}$ we get that $(\alpha_h)_*\mu$ is a Gaussian measure on $\R$, wehere $\alpha_h:\mathcal{H}\to\R$ is given by $\alpha_h(x)=\langle h,x\rangle$.
\end{defn}

\begin{lem}
Let $\mu$ be a Gaussian measure on $\mathcal{H}$. Then there are functions $m$ and $\sigma$ on $\mathcal{H}$ such that $\widehat{\mu}(y)=\ee^{\I m(y)-\frac{1}{2}\sigma(y)}$.
\end{lem}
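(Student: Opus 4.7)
The plan is to exploit the defining property of a Gaussian measure on $\mathcal{H}$ (that every one-dimensional projection $\alpha_h$ pushes $\mu$ forward to a Gaussian on $\R$) together with the known explicit form of the Fourier transform of a Gaussian on $\R$ recorded earlier in the excerpt (Definition of Gaussian measure~I together with Exercise~\ref{exTrans}).

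First I would fix $y\in\mathcal{H}$ and consider the linear functional $\alpha_y\colon\mathcal{H}\to\R$, $\alpha_y(x)=\langle y,x\rangle$. By hypothesis $(\alpha_y)_*\mu$ is a Gaussian measure on $\R$, so by the characterization of one-dimensional Gaussians there exist real numbers $a(y)\in\R$ and $\sigma(y)\geq 0$ (allowing $\sigma(y)=0$ to cover the Dirac case) such that
\begin{equation*}
\widetilde{(\alpha_y)_*\mu}(t)\;=\;\int_\R \ee^{\I ts}\,\dd\bigl((\alpha_y)_*\mu\bigr)(s)\;=\;\ee^{\I a(y)t-\tfrac{1}{2}\sigma(y)t^2},\qquad t\in\R.
\end{equation*}

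Next I would relate this one-dimensional Fourier transform to $\widehat\mu$ through the pushforward identity already used in the proof of the finite-dimensional analogue: for any measurable $f\colon\R\to\mathbb{C}$ for which the integrals make sense,
\begin{equation*}
\int_\R f(s)\,\dd\bigl((\alpha_y)_*\mu\bigr)(s)\;=\;\int_\mathcal{H} f(\langle y,x\rangle)\,\dd\mu(x).
\end{equation*}
Applying this with $f(s)=\ee^{\I s}$ (equivalently, evaluating the previous display at $t=1$) gives
\begin{equation*}
\widehat\mu(y)\;=\;\int_\mathcal{H}\ee^{\I\langle y,x\rangle}\,\dd\mu(x)\;=\;\widetilde{(\alpha_y)_*\mu}(1)\;=\;\ee^{\I a(y)-\tfrac{1}{2}\sigma(y)}.
\end{equation*}
Defining $m(y):=a(y)$ and keeping the notation $\sigma(y)$ for the variance function then produces the desired representation $\widehat\mu(y)=\ee^{\I m(y)-\tfrac{1}{2}\sigma(y)}$.

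There is essentially no serious obstacle: the argument is a direct reduction to the one-dimensional case already established. The only points that require a word of care are the inclusion of the degenerate Dirac case (handled by allowing $\sigma(y)=0$) and the use of the pushforward/change-of-variables identity for complex exponentials, which is justified since $\mu$ is a finite (probability) measure and $\ee^{\I\langle y,\cdot\rangle}$ is bounded and Borel measurable. The subsequent natural question---whether $m$ and $\sigma$ arise from a vector $a\in\mathcal{H}$ and a trace-class, positive, self-adjoint covariance operator $S_\mu$---is a separate issue and would presumably be addressed in the following lemma.
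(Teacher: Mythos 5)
Your proposal is correct and follows essentially the same route as the paper: reduce to the one-dimensional case via the pushforward $(\alpha_y)_*\mu$, invoke the known form of the Fourier transform of a Gaussian on $\R$, and evaluate the identity $\widehat{(\alpha_y)_*\mu}(t)=\widehat{\mu}(ty)$ at $t=1$. Your explicit handling of the degenerate Dirac case by allowing $\sigma(y)=0$ is a small point of extra care the paper leaves implicit.
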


\begin{proof}
Recall that $\widehat{(\alpha_h)_*\mu}(t)=\widehat{\mu}(th)$. Since $(\alpha_h)_*\mu$ is a Gaussian measure, we have 
\[
\widehat{(\alpha_h)_*\mu}(t)=\ee^{\I m(h)t-\frac{1}{2}t^2\sigma(h)},
\]
and thus 
\[
\widehat{\mu}(h)=\widehat{(\alpha_h)_*\mu}(1)=\ee^{\I m(h)-\frac{1}{2}\sigma(h)}.
\]
\end{proof}
\begin{exe}
Check that 
\begin{align}
m(y)&=\int_\mathcal{H}\langle x,y\rangle \dd\mu(x)\\
\sigma(y)&=\int_\mathcal{H}\langle y,x\rangle^2\dd\mu(x)
\end{align}
\end{exe}

\begin{thm}[Bochner-Kolmogorov-Milnor-Prokhorov]
Let $\phi$ be a positive definite functional on $\mathcal{H}$. Then $\phi$ is a characteristic functional of a Borel probability measure $\mu$ on $\mathcal{H}$ if and only if 
\begin{enumerate}
\item{$\phi(0)=0$}
\item{for all $\varepsilon>0$ there is an $S_\varepsilon\in \mathcal{T}$ such that
$$1-Re(\phi(x))\leq\langle S_\varepsilon x,x\rangle+\varepsilon$$
for all $x\in\mathcal{H}$.
}
\end{enumerate}
\end{thm}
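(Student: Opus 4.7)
The plan is to argue the two directions separately. The necessity ($\Rightarrow$) is classical and short, while the sufficiency ($\Leftarrow$) is Minlos's theorem and contains the essential difficulty.

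For the necessity, assume $\phi = \widehat\mu$ for a Borel probability measure. Then $\phi(0) = \mu(\mathcal{H}) = 1$, and positive definiteness was already checked in the lemma above. For the trace-class estimate, I would use the elementary bound $1 - \cos t \leq \tfrac{1}{2}t^{2}$ to write
\[
1 - \operatorname{Re}\phi(x) \;=\; \int_{\mathcal{H}} \bigl(1 - \cos\langle x,z\rangle\bigr)\,d\mu(z) \;\leq\; \tfrac12 \int_{\mathcal{H}} \langle x,z\rangle^{2}\,d\mu(z).
\]
Given $\varepsilon>0$, use tightness of Borel probability measures on the separable Hilbert space $\mathcal{H}$ to pick a compact (in particular bounded) set $K_\varepsilon$ with $\mu(\mathcal{H}\setminus K_\varepsilon)<\varepsilon/2$, and decompose the above integral over $K_\varepsilon$ and its complement. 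The integral over $K_\varepsilon$ is bounded by $\langle S_\varepsilon x, x\rangle$ where $S_\varepsilon$ is the operator defined as in \eqref{inner1} but with $\mu$ replaced by $\mu\!\restriction_{K_\varepsilon}$; since $\int_{K_\varepsilon}\|z\|^{2}d\mu<\infty$, the earlier proposition guarantees $S_\varepsilon\in\mathcal{T}$. The tail contributes at most $\varepsilon$ by choice of $K_\varepsilon$ together with the trivial bound $1-\cos\langle x,z\rangle\leq 2$.

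For the sufficiency I would follow the classical four-step Minlos-type strategy. (i) For each finite-dimensional subspace $F\subseteq\mathcal{H}$, the restriction $\phi|_{F}$ is positive definite, normalized (under the natural reading $\phi(0)=1$), and continuous at $0$ thanks to condition (2); the classical Bochner theorem on $\mathbb{R}^{\dim F}$ then supplies a unique Borel probability measure $\mu_{F}$ on $F$ with $\widehat{\mu_{F}} = \phi|_{F}$. (ii) If $F\subseteq F'$ and $P\colon F'\to F$ is the orthogonal projection, then $P_{*}\mu_{F'}$ and $\mu_{F}$ have the same characteristic functional and hence coincide, so the $\mu_{F}$ form a consistent projective system. (iii) This system defines a finitely additive cylinder set measure $\tilde\mu$ on $\mathcal{H}$, with $\tilde\mu\bigl(\pi_{F}^{-1}(B)\bigr) = \mu_{F}(B)$.

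The main obstacle is step (iv): upgrading $\tilde\mu$ to a countably additive Borel measure on $\calB(\mathcal{H})$. This is exactly where condition (2) is decisive, because the infinite-dimensional Bochner theorem without the trace-class hypothesis is false. The idea is to exploit that positive trace-class operators have compact sublevel sets in $\mathcal{H}$. Given $\varepsilon>0$, pick $S_\varepsilon\in\mathcal{T}$ as in the statement, and consider the compact ellipsoids $E_{R} = \{x\in\mathcal{H}\mid\langle S_\varepsilon x,x\rangle\leq R\}$. Integrating the real part of $1 - \phi$ against a centered Gaussian on a finite-dimensional subspace, and using condition (2) together with the identity for Gaussian integrals of quadratic forms, produces a quantitative bound of the form
\[
1 - \tilde\mu(E_{R}) \;\leq\; C\!\left(\tfrac{\operatorname{Tr}(S_\varepsilon)}{R} + \varepsilon\right)
\]
uniform over finite-dimensional projections, which gives tightness of the cylinder family. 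Prokhorov's theorem (applied to the consistent finite-dimensional marginals) then furnishes a genuine Borel probability measure $\mu$ on $\mathcal{H}$ extending $\tilde\mu$, and $\widehat\mu = \phi$ on every finite-dimensional subspace, hence on all of $\mathcal{H}$ by continuity. The trickiest technical point to get right will be the quantitative tightness estimate; it is essentially the only place where the trace-class hypothesis is used in an irreducible way, and it is what fails in general Banach spaces or under mere continuity of $\phi$.
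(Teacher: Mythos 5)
Your necessity argument is sound and is the standard one. (The paper itself gives no proof of this theorem; it simply defers to \cite{HK}.) In particular, combining $1-\cos t\le \tfrac12 t^2$ with Ulam tightness to truncate $\mu$ to a bounded set $K_\varepsilon$, so that the covariance of $\mu\!\restriction_{K_\varepsilon}$ lies in $\mathcal{T}$ by the proposition $S_\mu\in\mathcal{T}\Leftrightarrow\int_\mathcal{H}\|x\|^2\dd\mu<\infty$, is exactly the right way to handle the fact that $\mu$ need not have a second moment; and you correctly read the misprinted condition $\phi(0)=0$ as $\phi(0)=1$.

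The sufficiency direction, however, has a genuine gap at precisely the step you identify as the crux. The sets $E_R=\{x\in\mathcal{H}\mid\langle S_\varepsilon x,x\rangle\le R\}$ are \emph{not} compact: writing $S_\varepsilon=\sum_n\lambda_n\langle\cdot,\phi_n\rangle\phi_n$ with $\lambda_n\to0$, the vectors $x_n=\sqrt{R/\lambda_n}\,\phi_n$ all lie in $E_R$ while $\|x_n\|\to\infty$, so $E_R$ is unbounded. Sublevel sets of a trace-class quadratic form are never compact in infinite dimensions; what is compact is the image $S_\varepsilon^{1/2}(\overline{B(0,R)})$, equivalently a sublevel set of $\langle Ax,x\rangle$ where the eigenvalues of $A$ tend to $+\infty$. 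So your quantitative estimate, even once proved (it does follow from the Gaussian trick, with $\mathrm{Tr}(S_\varepsilon^2)$ in place of $\mathrm{Tr}(S_\varepsilon)$), lands on non-compact sets and cannot be fed into Prokhorov's theorem. The standard repair: since $\sum_n\lambda_n<\infty$ one can choose $a_n\uparrow\infty$ with $\sum_n a_n\lambda_n$ finite and as small as desired; set $A=\sum_n a_n\langle\cdot,\phi_n\rangle\phi_n$ (first replacing $S_\varepsilon$ by $S_\varepsilon+\delta T$ for a fixed injective $T\in\mathcal{T}$ if $S_\varepsilon$ has a kernel, which only strengthens condition (2)), and take $K=\{x\mid\langle Ax,x\rangle\le1\}$, which \emph{is} compact. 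Then for the centered Gaussian $\gamma$ on a finite-dimensional subspace $F$ with covariance $2A|_F$, Fubini gives
\[
\bigl(1-\ee^{-1}\bigr)\,\mu_F\bigl(\{y\in F\mid\langle Ay,y\rangle>1\}\bigr)\;\le\;\int_F\bigl(1-\mathrm{Re}\,\phi(x)\bigr)\dd\gamma(x)\;\le\;2\,\mathrm{Tr}(AS_\varepsilon)+\varepsilon ,
\]
uniformly in $F$, which is the tightness you need. With that substitution the remainder of your outline (finite-dimensional Bochner, consistency of the marginals, Prokhorov along an increasing sequence of subspaces with dense union, and identification of the characteristic functional by continuity) goes through.
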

\begin{proof}
See \cite{HK}.
\end{proof}

\begin{thm}[Prokhorov]
The following hold:
\begin{enumerate}
\item{Let $\mu$ be a Gaussian measure on $\mathcal{H}$. Then $S_\mu\in\mathcal{T}$.
}
\item{Let $m\in\mathcal{H}$ and $S\in\mathcal{T}$. Then $\phi(x)=\ee^{\I\langle m,x\rangle-\frac{1}{2}\langle Sx,x\rangle}$ is the characteristic functional of a Gaussian measure.
}
\end{enumerate}
\end{thm}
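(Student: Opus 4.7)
The plan is to apply the Bochner--Kolmogorov--Milnor--Prokhorov (BKMP) criterion in both directions.

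For (1), suppose $\mu$ is Gaussian with characteristic functional $\widehat\mu(y)=\ee^{\I m(y)-\frac12\sigma(y)}$. Continuity of $\widehat\mu$ (every finite Borel measure has uniformly continuous Fourier transform) yields continuity of $|\widehat\mu|=\ee^{-\sigma/2}$, hence of the quadratic form $\sigma$. Linearity of $m$ follows by comparing the two-dimensional marginal $(\alpha_{y_1},\alpha_{y_2})_*\mu$ (Gaussian on $\R^2$) with its pushforward by addition, which must equal $(\alpha_{y_1+y_2})_*\mu$; means add, giving $m(y_1+y_2)=m(y_1)+m(y_2)$. Continuity of $\widehat\mu$ near the origin lifts the argument to a continuous $m$ locally, and linearity propagates this globally, so $m(y)=\langle m_0,y\rangle$ for some $m_0\in\calH$.

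I then apply BKMP condition (2) to $\widehat\mu$ with $\varepsilon=1/2$, producing $S_{1/2}\in\calT$ with $1-\mathrm{Re}(\widehat\mu(x))\le\langle S_{1/2}x,x\rangle+\tfrac12$. On the set $\{\langle S_{1/2}x,x\rangle\le 1/4\}$ this yields $\mathrm{Re}(\widehat\mu(x))\ge 1/4$; since $\mathrm{Re}(\widehat\mu(x))=\ee^{-\sigma(x)/2}\cos(m(x))\le\ee^{-\sigma(x)/2}$, I conclude $\sigma(x)\le 2\ln 4$ on that set. Rescaling $x\mapsto\lambda x$ with $\lambda^2=(4\langle S_{1/2}x,x\rangle)^{-1}$ (and separately checking that $\sigma$ vanishes on the kernel of $S_{1/2}$ via unbounded scaling) upgrades this to the operator inequality $S_\sigma\le (8\ln 4)\,S_{1/2}$, where $S_\sigma$ is the bounded positive operator representing $\sigma$. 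Thus $S_\sigma$ is dominated by a trace class operator and is itself trace class. Finally,
\[
\tr(S_\mu)=\sum_n\int_\calH\langle e_n,z\rangle^2\,\dd\mu(z)=\sum_n\bigl(\sigma(e_n)+m(e_n)^2\bigr)=\tr(S_\sigma)+\|m_0\|^2<\infty,
\]
so $S_\mu\in\calT$ by the earlier characterisation.

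For (2), given $m\in\calH$ and $S\in\calT$, set $\phi(x)=\ee^{\I\langle m,x\rangle-\frac12\langle Sx,x\rangle}$ and verify the BKMP hypotheses. Positive definiteness reduces to a finite-dimensional fact: for any $x_1,\dots,x_n$, let $V=\mathrm{span}(x_1,\dots,x_n)$ and form the (possibly degenerate) finite-dimensional Gaussian on $V$ with mean $P_Vm$ and covariance $P_VSP_V$; its characteristic functional agrees with $\phi|_V$, which is therefore positive definite. Clearly $\phi(0)=1$ (the ``$\phi(0)=0$'' in the cited BKMP statement is a typo for the normalization $\phi(0)=1$). For the trace-class majorant, use $1-\ee^{-a}\le a$ and $1-\cos b\le b^2/2$ to obtain
\[
1-\mathrm{Re}(\phi(x))\le\tfrac12\langle Sx,x\rangle+\tfrac12\langle m,x\rangle^2=\langle Tx,x\rangle,
\]
where $T=\tfrac12 S+\tfrac12\,m\otimes m$ is the sum of a trace class operator and a rank-one trace class operator, with $\tr(T)=\tfrac12\tr(S)+\tfrac12\|m\|^2<\infty$. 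Taking $S_\varepsilon=T$ works for every $\varepsilon>0$, so BKMP furnishes a Borel probability measure $\mu$ with $\widehat\mu=\phi$. Finally, for any $h\in\calH$, $\widehat{(\alpha_h)_*\mu}(t)=\phi(th)=\ee^{\I t\langle m,h\rangle-\frac{t^2}{2}\langle Sh,h\rangle}$, which is the characteristic function of the Gaussian on $\R$ with mean $\langle m,h\rangle$ and variance $\langle Sh,h\rangle$. Hence every linear marginal is Gaussian, so $\mu$ is Gaussian.

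The main obstacle is the rescaling/domination step in (1): one must convert the BKMP inequality $1-\mathrm{Re}(\widehat\mu)\le\langle S_{1/2}\cdot,\cdot\rangle+\tfrac12$ (which is nontrivial only where the right-hand side is small) into the homogeneous bound $\sigma\lesssim\langle S_{1/2}\cdot,\cdot\rangle$ \emph{everywhere}, handling the degenerate directions of $S_{1/2}$ with care. A secondary subtlety is establishing that $m$ is a bounded linear functional \emph{before} integrability $\int\|x\|^2\,\dd\mu<\infty$ is known, since the naive Bochner-integral definition $m_0=\int x\,\dd\mu$ is not yet justified at that stage.
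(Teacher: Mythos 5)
Your proof is correct, and for part (1) it follows essentially the same route as the paper: apply the Bochner--Kolmogorov--Milnor--Prokhorov criterion to get a trace class $S_\varepsilon$, use the inequality $1-\mathrm{Re}(\widehat\mu(x))\le\langle S_\varepsilon x,x\rangle+\varepsilon$ on the region where $\langle S_\varepsilon x,x\rangle$ is small to bound $\sigma$ there, and then exploit the quadratic homogeneity of $\sigma$ under rescaling to obtain the global domination $\sigma(x)\lesssim\langle S_\varepsilon x,x\rangle$, from which $\int\|x\|^2\,\dd\mu<\infty$ and hence $S_\mu\in\mathcal{T}$ follow. Where you differ is in scope and in two technical choices, both to your advantage. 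First, the paper only treats the centered case; you handle a general mean by showing $m$ is a bounded linear functional (via additivity of means of one-dimensional marginals and continuity of $\widehat\mu$) and then splitting $\tr(S_\mu)=\tr(S_\sigma)+\|m_0\|^2$. Second, on the degenerate directions of $S_\varepsilon$ the paper enlarges $S_\varepsilon$ to an injective trace class operator $S\ge S_\varepsilon$, whereas you show directly by unbounded rescaling that $\sigma$ vanishes on $\ker(S_\varepsilon)$; your version is cleaner and avoids the auxiliary construction. Third, the paper gives no argument at all for part (2), while you supply one: positive definiteness by restriction to finite-dimensional Gaussians, the correct normalization $\phi(0)=1$ (you rightly flag the paper's ``$\phi(0)=0$'' as a typo), and the majorant $1-\mathrm{Re}(\phi(x))\le\langle Tx,x\rangle$ with $T=\tfrac12 S+\tfrac12\,m\otimes m$ trace class, so that BKMP produces the measure and the one-dimensional marginals are visibly Gaussian. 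In short: same key mechanism as the paper where the paper has one, but your write-up is the more complete proof of the stated theorem.
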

\begin{proof}
We will only consider the centered Gaussian measure, i.e. $m(y)=0$, i.e. $\widehat{\mu}(x)=\ee^{-\frac{1}{2}\sigma(x)}$. We want to show that 
\begin{equation}
\label{norm_like}
\int_\mathcal{H}\| x\|^2\dd\mu(x)<\infty.
\end{equation}
The idea now is to try to find some $S\in\mathcal{T}$ and $C_S>0$ such that 
\begin{equation} 
\label{eq1}
\int_\mathcal{H}\langle x,y\rangle^2\dd\mu(y)\leq C_S\langle Sx,x\rangle,
\end{equation}
for all $x\in\mathcal{H}$. Before we discuss a construction of $S$, let us observe why \eqref{eq1} implies \eqref{norm_like}. Let $\{e_n\}$ be an orthonormal basis of $\mathcal{H}$. Then by \eqref{eq1} 
\[
 \sum_{n=1}^\infty\int_\mathcal{H}\langle e_n,y\rangle^2\dd\mu(y)\leq C_S\sum_{n=1}^\infty\langle Se_n,e_n\rangle=C_S Tr(S).
\]
This implies that
\[
\int_\mathcal{H}\| y\|^2\dd\mu(y)=\int_\mathcal{H}\sum_{n=1}^\infty\langle e_n,y\rangle^2\dd\mu(y)=\sum_{n=1}^\infty\int_\mathcal{H}\langle e_n,y\rangle^2\dd\mu(y)\leq C_S Tr(S)<\infty. 
\]
Hence, our goal will be to construct $S$ such that \eqref{eq1} holds. Since $\mu$ is a probability measure, by the previous theorem we get that for all $\varepsilon>0$ there is some $S_\varepsilon\in\mathcal{T}$ such that 
\begin{equation} 
\label{eq2}
1-\widehat{\mu}(x)\leq \langle S_\varepsilon x,x\rangle+\varepsilon, 
\end{equation} 
for all $x\in\mathcal{H}$. Assume now that $\ker(S_\varepsilon)=\{0\}$. In this case we claim that for all $x\in\mathcal{H}\setminus\{0\}$, we have 
\begin{equation}
\label{eq3}
\int_\mathcal{H}\langle x,y\rangle^2\dd\mu(x)\leq \frac{4}{\varepsilon}\log\left(\frac{1}{1-2\varepsilon}\right)\langle S_\varepsilon x,x\rangle.
\end{equation} 
Obviously \eqref{eq3} implies \eqref{eq1}. To verify \eqref{eq3}, we proceed as follows: If $y\in\mathcal{H}$ such that $\langle S_\varepsilon y,y\rangle<\varepsilon$, then from \eqref{eq2} we get $\sigma(y)\leq2\log\left(\frac{1}{1-2\varepsilon}\right)$. Given $x\in\mathcal{H}\setminus\{0\}$, take $y=\left(\frac{\varepsilon}{2\langle S_\varepsilon x,x\rangle}\right)^{\frac{1}{2}}x$. Then we can check that $\langle S_\varepsilon y,y\rangle<\varepsilon$ and hence we have 
\begin{equation}
\label{var1}
\sigma(y)\leq 2\log\left(\frac{1}{1-2\varepsilon}\right).
\end{equation}
Note that we use $\ker(S_\varepsilon)=\{0\}$ to define $y$. Also, we can check that 
\[
\sigma(y)=\frac{\varepsilon}{2\langle S\varepsilon x,x\rangle}\sigma(x).
\]
Thus for $x\in\mathcal{H}\setminus\{0\}$, we get from \eqref{var1} that 
\[
\sigma(x)\leq \frac{4}{\varepsilon}\log\left(\frac{1}{1-2\varepsilon}\right)\langle S_\varepsilon x,x\rangle.
\]
Now using $\sigma(x)=\int_\mathcal{H}\langle x,y\rangle^2\dd\mu(y)$, we verify \eqref{eq3} when $\ker(S_\varepsilon)=\{0\}$. If $\ker(S_\varepsilon)=\{0\}$, then we can construct $S\in\mathcal{T}$ with $S_\varepsilon\leq S$ and $\ker(S)=\{0\}$ as follows: Let $\{\lambda_n\}$ be positive eigenvalues of $S_\varepsilon$ and $\phi_n$ eigenvectors corresponding to $\lambda_n$ such that $\|\phi_n\|=1$ and $\phi_n\perp \phi_m$ for $m\not=n$. Moreover, let $\{\psi_j\}$ be an orthonormal basis of $\ker(S_\varepsilon)$. Then $\{\phi_n,\psi_j\}$ form an orthonormal basis of $\mathcal{H}$. Define the map 
\begin{align*}
S:\mathcal{H}&\longrightarrow\mathcal{H}\\
x&\longmapsto S(x)=\sum_n\lambda_n\langle \phi_n,x\rangle\phi_n+\sum_j\frac{1}{j^2}\langle \psi_j,x\rangle\psi_j.
\end{align*}
Then we can check that $S\in\mathcal{T}$, $\ker(S)=\{0\}$ and thus \eqref{eq2} holds if we replace $S_\varepsilon$ by $S$. Hence repeating the argument above, \eqref{eq3} holds for $S$. This completes the proof.
\end{proof}

Now let $\mathcal{H}$ be a seperable Hilbert space. Let $\F$ be the set of finite rank projections of $\mathcal{H}$, i.e. $p\in\F$ iff $p:\mathcal{H}\to\mathcal{H}$ is a projection and $\dim p(\mathcal{H})<\infty$. We define the set 
\[
\calR=\{p^{-1}(B)\mid p\in\F,B\subseteq p(\mathcal{H}),\text{$B$ is Borel measurable}\}.
\]
Then it is easy to check that $R$ is an algebra. However, $\calR$ is not a sigma algebra, which can be seen as follows. Let $\overline{B(0,1)}$ be the closed unit ball in $\mathcal{H}$. When $\mathcal{H}$ is infinite dimensional $\overline{B(0,1)}$ is not a cylinder set, i.e. $\overline{B(0,1)}\not\in \calR$, as $C\in \calR$ implies that $C$ is unbounded. We claim that $\overline{B(0,1)}$ can be written as countable intersections of elements of $\calR$. Let $\{h_n\}$ be a countable dense subset of $\mathcal{H}$ with $h_n\not=0$ for all $n$. Moreover, for $N\in\N$, we define the set
\[
K_N=\{h\in\mathcal{H}\mid \vert\langle h,h_n\rangle\vert\leq\|h_n\|,\forall n=1,2,...,N\}.
\]
\begin{exe}
Show that $K_N\in \calR$ for all $N\in\N$.
\end{exe}
It is easy to see that $\overline{B(0,1)}\subseteq \bigcap_{N=1}^\infty K_N$. Assume that $h\not\in \overline{B(0,1)}$. Then there is some $h'\in\mathcal{H}\setminus\{0\}$ such that $\frac{\vert\langle h,h'\rangle\vert}{\| h'\|}\geq \delta+1$ for some $\delta> 0$. Choose then a subsequence $\{h_{n_k}\}$ such that $h_{n_k}\to h'$. Then $\frac{\vert\langle h,h_{n_k}\rangle\vert}{\|h_{n_k}\|}\to \frac{\vert\langle h,h'\rangle\vert}{\|h'\|}$ and thus 
\[
\frac{\vert\langle h,h_{n_k}\rangle\vert}{\|h_{n_k}\|}\geq \delta+1
\]
as $k\to \infty$. This shows that $h\not\in\bigcap_{N=1}^\infty K_N$ and hence we have showed that $\bigcap_{N=1}^\infty K_N\subseteq \overline{B(0,1)}$. This means $\overline{B(0,1)}=\bigcap_{N=1}^\infty K_N$. Next we define a finitely additive measure $\mu$ on $\calR$ as follows. Let $p\in \F$ and $B$ be a Borel subset of $p(\mathcal{H})$ and $\dim p(\mathcal{H})=n$. Define 
\[
\mu(p^{-1}(B))=\frac{1}{(2\pi)^{\frac{n}{2}}}\int_B \ee^{-\frac{1}{2}\| x\|^2}\dd x.
\]
\begin{exe}
Show that $\mu$ is a finitely additive measure on $R$.
\end{exe}
\begin{exe}
Show directly that $\mu$ can not be countably additive.
\end{exe}
Hence, there is no hope to try to construct the standard Gaussian measure on $\mathcal{H}$ (in the infinite dimensionl case the identity operator is not a trace class operator). We ask ourself whether there is a wayto make sense of the standard Gaussian measure on $\mathcal{H}$. The answer is \textsf{yes}. There is a way to understand the \textsf{standard Gaussian} measure on $\mathcal{H}$. The idea is to \textsf{expand} $\mathcal{H}$ so that it supports a countably additive Gaussian measure.

\subsection{Standard Gaussian measure on $\mathcal{H}$}
How do we \textsf{expand} $\mathcal{H}$? The technical tool we use here is Kolmogorov's theorem. Let us briefly recall this without a proof. For this, let $\{X_i\}_{i\in \mathcal{I}}$ be a family of topological spaces. Assume that for each $I\subseteq \mathcal{I}$ finite, we have a Borel probability measure $\mu_I$ on $X_I:=\prod_{i\in I}X_i$. Given $J\subseteq I\subseteq \mathcal{I}$, with $I$ finite, let $\pi_{IJ}:X_I\to X_J$ denote the projection onto the first $J$ coordinates. 
\begin{defn}[Compatible family]
The family $\{X_I,\mu_J\}_{I\subseteq \mathcal{I},\atop\text{$I$ finite}}$ is said to form a \textsf{compatible family} if for all $J\subseteq I$ we have $(\pi_{IJ})_*\mu_I=\mu_J$.
\end{defn}
\begin{thm}[Kolmogorov]
\label{Kolmogorov}
Let $\{X_I,\mu_J\}$ be a compatible family. Then there is a unique probability measure $\mu_\mathcal{I}$ on $X_\mathcal{I}=\prod_{i\in\mathcal{I}}X_i$ and measurable maps $\pi_I:X_\mathcal{I}\to X_I$ for $I\subseteq\mathcal{I}$ finite such that $(\pi_I)_*\mu_\mathcal{I}=\mu_I$.
\end{thm}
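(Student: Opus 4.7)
The plan is to follow the standard Carathéodory strategy adapted to the product space $X_{\mathcal{I}}$. First I would define the algebra $\mathcal{C}$ of \emph{cylinder sets} in $X_{\mathcal{I}}$, namely subsets of the form $\pi_I^{-1}(B)$ with $I \subseteq \mathcal{I}$ finite and $B \in \mathcal{B}(X_I)$, where $\pi_I \colon X_{\mathcal{I}} \to X_I$ is the canonical projection. A routine check shows that $\mathcal{C}$ is an algebra: the intersection and complement of cylinders based on $I$ and $J$ can both be re-expressed as cylinders based on $I \cup J$ (which is still finite), using the maps $\pi_{I\cup J,I}$ and $\pi_{I\cup J,J}$ to pull things back.

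Next I would define the set function $\mu_{\mathcal{I}} \colon \mathcal{C} \to [0,1]$ by
\[
\mu_{\mathcal{I}}(\pi_I^{-1}(B)) := \mu_I(B).
\]
The compatibility condition $(\pi_{IJ})_* \mu_I = \mu_J$ is exactly what is needed to show this is well-defined: if $\pi_I^{-1}(B) = \pi_J^{-1}(B')$ for different finite index sets, pass to $K = I \cup J$ and compare the pushforwards from $\mu_K$. Finite additivity is then immediate: two disjoint cylinders can be realized over a common finite $K$, where finite additivity follows from that of $\mu_K$. At this stage uniqueness is already guaranteed by the $\pi$--$\lambda$ theorem, since the cylinder algebra generates the product $\sigma$-algebra on $X_{\mathcal{I}}$.

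The hard part will be \emph{countable additivity} of $\mu_{\mathcal{I}}$ on $\mathcal{C}$, equivalently continuity at $\emptyset$: if $C_n \in \mathcal{C}$ is a decreasing sequence with $\bigcap_n C_n = \emptyset$, one must show $\mu_{\mathcal{I}}(C_n) \to 0$. This is the classical obstruction and requires a regularity hypothesis on the factor spaces (the theorem as stated needs the $X_i$ to be, say, Polish or at least Radon so that Borel probability measures on finite products $X_I$ are inner regular by compact sets). The argument I would run is by contradiction: assume $\mu_{\mathcal{I}}(C_n) \geq \varepsilon > 0$ for all $n$. Writing $C_n = \pi_{I_n}^{-1}(B_n)$ with $I_n$ increasing, use inner regularity to choose compact $K_n \subseteq B_n$ with $\mu_{I_n}(B_n \setminus K_n) < \varepsilon/2^{n+1}$, and form $D_n = \pi_{I_n}^{-1}(K_n)$. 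Then $\mu_{\mathcal{I}}(\bigcap_{k\leq n} D_k) \geq \varepsilon/2$, so these sets are nonempty; picking $x^{(n)}$ in each and applying a diagonal/Tychonoff compactness extraction in the product of the relevant finite factors produces a point in $\bigcap_n C_n$, a contradiction.

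Finally I would apply the Carathéodory extension theorem to extend the countably additive $\mu_{\mathcal{I}}$ from $\mathcal{C}$ to $\sigma(\mathcal{C})$, which is the product $\sigma$-algebra on $X_{\mathcal{I}}$. The projections $\pi_I$ are measurable by construction, and by the definition of $\mu_{\mathcal{I}}$ on cylinders we have $(\pi_I)_* \mu_{\mathcal{I}}(B) = \mu_{\mathcal{I}}(\pi_I^{-1}(B)) = \mu_I(B)$ for every Borel $B \subseteq X_I$, which is the desired consistency. Uniqueness of $\mu_{\mathcal{I}}$ then follows from the fact that any two probability measures agreeing on the generating algebra $\mathcal{C}$ must coincide on $\sigma(\mathcal{C})$ by Dynkin's $\pi$--$\lambda$ lemma.
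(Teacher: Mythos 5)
The paper states this theorem explicitly without proof (``Let us briefly recall this without a proof''), so there is no in-text argument to compare against; your proposal has to be judged on its own. Your outline is the standard and correct route: the cylinder sets form an algebra, compatibility makes the set function $\mu_{\mathcal{I}}(\pi_I^{-1}(B)) := \mu_I(B)$ well defined and finitely additive, countable additivity reduces to continuity at $\varnothing$ and is obtained from inner regularity by compact sets plus a diagonal extraction, and Carath\'eodory then gives existence while the $\pi$--$\lambda$ theorem gives uniqueness on $\sigma(\mathcal{C})$. Two points deserve emphasis. First, your observation that the theorem \emph{as stated in the paper} is not true for arbitrary topological spaces is correct and important: without some regularity hypothesis (e.g.\ the $X_i$ Polish, or the $\mu_I$ Radon) the continuity-at-$\varnothing$ step genuinely fails, and there are standard counterexamples; in the paper's application the factors are $\R^n$ with Borel--Gaussian marginals, so inner regularity is automatic and your argument goes through. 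Second, a small detail worth spelling out in a full write-up: after the diagonal extraction you obtain, for each $k$, a limit point in $K_k$, and you must check these limits are \emph{coherent} (the $I_k$-projection of the $I_{k+1}$-limit equals the $I_k$-limit, which follows from continuity of $\pi_{I_{k+1}I_k}$) before you can assemble them, together with arbitrary values on the coordinates outside $\bigcup_n I_n$, into a single point of $X_{\mathcal{I}}$ lying in every $D_k\subseteq C_k$. With those caveats the proposal is sound.
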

To apply theorem \ref{Kolmogorov} in our sitation, we proceed as follows. Let $\{e_n\}$ be an orthonormal basis of $\mathcal{H}$. Define a measure $\mu_n$ on $\R^n$ by 
\[
\mu_n(B)=\mu(p_n^{-1}(B)),
\]
where $p_n:\mathcal{H}\to span \{e_1,...,e_n\}\cong\R^n$ is the projection and $\mu$ the cylindrical measure defined before. Then it is easy to check that $\{\R^n,\mu_n\}$ form a compatible family of probability measures. Hence, by theorem \ref{Kolmogorov} there is a probability space $(\Omega,\widetilde{\mu})$ and random variables $\xi_1,...,\xi_n$ on $\Omega$ such that 
\begin{multline}
\label{}
\widetilde{\mu}\left(\{\omega\in\Omega\mid(\xi_1(\omega),...,\xi_n(\omega))\in B,B\subseteq\R^n\text{ Borel measurable}\}\right)\\
=\mu(\{ h\in\mathcal{H}\mid(\langle h,e_1\rangle,...,\langle h,e_n\rangle)\in B\})(\text{or simply $\mu_n(B)$})
\end{multline}
\begin{lem}
The $\{\xi_i\}$ are independent and identitcally distributed random variables with mean $0$ and  variance $1$.
\end{lem}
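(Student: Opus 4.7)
The plan is to unpack the definitions so that $\mu_n$ becomes manifestly the standard Gaussian measure on $\R^n$, after which independence, identical distribution, and the mean/variance values are all routine consequences of the factorization of the density.

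First, I would compute $\mu_n$ explicitly. By construction $p_n\in\F$ with $\dim p_n(\mathcal{H})=n$, so from the definition of the cylindrical measure $\mu$ on $\calR$ we get
\[
\mu_n(B)=\mu(p_n^{-1}(B))=\frac{1}{(2\pi)^{n/2}}\int_B \ee^{-\frac{1}{2}\|x\|^2}\dd x,
\]
for every Borel $B\subseteq\R^n$. In particular $\mu_n$ has density
\[
\frac{1}{(2\pi)^{n/2}}\ee^{-\frac{1}{2}\sum_{i=1}^n x_i^2}=\prod_{i=1}^n\frac{1}{\sqrt{2\pi}}\ee^{-\frac{1}{2}x_i^2},
\]
so $\mu_n$ equals the $n$-fold product $\gamma\otimes\dotsm\otimes\gamma$, where $\gamma$ denotes the standard Gaussian measure on $\R$.

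Second, I would translate this back via the Kolmogorov identification. By the defining property of $\widetilde{\mu}$, the push-forward of $\widetilde{\mu}$ under $(\xi_1,\dots,\xi_n):\Omega\to\R^n$ coincides with $\mu_n$. Thus, for any bounded measurable $f_1,\dots,f_n:\R\to\R$, Fubini applied to the product measure gives
\[
\int_\Omega\prod_{i=1}^nf_i(\xi_i(\omega))\dd\widetilde{\mu}(\omega)=\int_{\R^n}\prod_{i=1}^nf_i(x_i)\dd\mu_n(x)=\prod_{i=1}^n\int_\R f_i(x_i)\dd\gamma(x_i).
\]
Specialising the $f_i$ to indicator functions yields independence, while choosing $f_j\equiv 1$ for $j\neq i$ shows that the marginal law of each $\xi_i$ is exactly $\gamma$, which gives identical distribution.

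Finally, the mean and variance are immediate one-dimensional Gaussian computations:
\[
\int_\Omega\xi_i\dd\widetilde{\mu}=\int_\R x\dd\gamma(x)=0,\qquad\int_\Omega\xi_i^2\dd\widetilde{\mu}=\int_\R x^2\dd\gamma(x)=1.
\]
There is no genuine obstacle here; the lemma is essentially a bookkeeping step, whose only content is recognising that the cylindrical measure on finite-rank projections, restricted to the span of $\{e_1,\dots,e_n\}$, is precisely the standard product Gaussian.
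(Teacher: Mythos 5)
Your proof is correct: the paper states this lemma without proof, and your argument is precisely the intended routine verification — the cylindrical measure restricted to $\mathrm{span}\{e_1,\dots,e_n\}$ is the standard product Gaussian, the Kolmogorov construction identifies the joint law of $(\xi_1,\dots,\xi_n)$ with $\mu_n$, and factorization of the density yields independence, identical distribution, and the mean and variance. No gaps.
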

Note that using the $\xi_i$s we can define $\mathcal{H}$-valued random variables $X_n$ by 
\begin{align*}
X_n:\Omega&\longrightarrow \mathcal{H}\\
\omega&\longmapsto X_n(\omega)=\sum_{i=1}^n\xi_i(\omega)e_n.
\end{align*}
Moreover, 
\begin{equation}
\label{proj1}
(p_n\circ X_n)_*\widetilde{\mu}=\mu_n.
\end{equation}
If $\{X_n\}$ \textsf{converges in probability} (convergence in measure), then it would induce a random variable $X:\Omega\to\mathcal{H}$ and hence we would get a measure $X_*\widetilde{\mu}$ on $\mathcal{H}$ and by construction it would be the standard Gaussian measure on $\mathcal{H}$. Unfortunately, the bad thing is that the sequence $\{X_n\}$ does not converge in probability. We already know that this is not possible because we have seen that there can not exist a Gaussian measure $\mu$ on $\mathcal{H}$ (assuming $\mathcal{H}$ is infinite dimensional) whose characteristic functional is $\widehat{\mu}(x)=\ee^{-\frac{1}{2}\| x\|^2}$. Let us see directly how $\{X_n\}$ fails to converge in probability. For this it is sufficient to show that $\{X_n\}$ is not Cauchy in probability.
\begin{lem}
$\{X_n\}$ is not Cauchy in probability.
\end{lem}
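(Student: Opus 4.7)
The plan is to exploit the fact that $X_n - X_m$ is itself a sum of independent standard normal contributions along orthonormal directions, whose squared norm is a chi-squared random variable whose mean grows with $n-m$. The intuition is clear: if $\{X_n\}$ were Cauchy in probability, the increments would have to be small with high probability for large $m, n$, but instead they concentrate at scale $\sqrt{n-m}$.

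First I would compute the key identity for the increments: for $n > m$,
\[
X_n(\omega) - X_m(\omega) = \sum_{i=m+1}^n \xi_i(\omega)\, e_i,
\]
and since $\{e_i\}$ is orthonormal,
\[
\|X_n - X_m\|^2 = \sum_{i=m+1}^n \xi_i(\omega)^2.
\]
Because the $\xi_i$ are i.i.d.\ with mean $0$ and variance $1$, the random variable $\xi_i^2$ has mean $1$ and finite variance (equal to $2$, as the fourth moment of a standard normal is $3$). Hence $\bbE[\|X_n - X_m\|^2] = n-m$ and $\mathrm{Var}(\|X_n - X_m\|^2) = 2(n-m)$.

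Next I would take a diagonal subsequence, e.g.\ $n = 2m$, so that $n - m = m$, and apply Chebyshev's inequality to $\|X_{2m} - X_m\|^2$:
\[
\widetilde{\mu}\!\left(\bigl|\|X_{2m} - X_m\|^2 - m\bigr| \geq m/2\right) \leq \frac{2m}{(m/2)^2} = \frac{8}{m}.
\]
This gives $\widetilde{\mu}(\|X_{2m} - X_m\|^2 \geq m/2) \geq 1 - 8/m$. Therefore, for any fixed $\varepsilon > 0$ and for $m$ large enough so that $\sqrt{m/2} > \varepsilon$,
\[
\widetilde{\mu}(\|X_{2m} - X_m\| > \varepsilon) \geq 1 - \frac{8}{m} \xrightarrow{m\to\infty} 1,
\]
which certainly does not tend to $0$. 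This contradicts the definition of Cauchy in probability and completes the argument.

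The only mildly subtle point is justifying the moments of $\xi_i^2$ and confirming that the random variables $\xi_i$ are indeed jointly standard Gaussian (so that the independence and distributional statements about the $\xi_i^2$ are valid); but this is already provided by the lemma stated just before, which says that the $\{\xi_i\}$ are i.i.d.\ with mean $0$ and variance $1$ (in fact coming from the finite-dimensional Gaussian measures $\mu_n$ by construction via Kolmogorov's theorem). No further technical obstacle arises.
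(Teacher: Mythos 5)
Your proof is correct, but it follows a genuinely different route from the one in the notes. You identify $\|X_n-X_m\|^2=\sum_{i=m+1}^n\xi_i(\omega)^2$ as a sum of i.i.d.\ squares, compute its mean $n-m$ and variance $2(n-m)$ (using that the $\xi_i$ are in fact standard Gaussian, so $\bbE[\xi_i^4]=3$ --- a point you rightly flag, since the preceding lemma only records mean $0$ and variance $1$, though the construction via the cylindrical measure does give joint Gaussianity), and then apply Chebyshev along the diagonal $n=2m$ to conclude that $\|X_{2m}-X_m\|$ concentrates near $\sqrt{m}$ and hence exceeds any fixed $\varepsilon$ with probability tending to $1$. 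The notes instead avoid all moment computations: they observe that $\widetilde\mu(\|X_n-X_m\|>\varepsilon)=\mu_{n-m}(\R^{n-m}\setminus\overline{B(0,\varepsilon)})\geq 1-\mu_{n-m}([-\varepsilon,\varepsilon]^{n-m})=1-(\mu_1([-\varepsilon,\varepsilon]))^{n-m}$, using only that the ball sits inside the cube and that the measure factorizes, so the single fact $\mu_1([-\varepsilon,\varepsilon])<1$ already forces the probability to $1$ as $n-m\to\infty$. Your argument buys more --- a quantitative rate and the sharper statement that the increments live at scale $\sqrt{n-m}$, which is the real obstruction to convergence --- while the paper's containment argument is more elementary and would apply verbatim to any product measure whose one-dimensional marginal is not a point mass, with no integrability assumptions at all. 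Both are complete proofs.
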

\begin{proof}
Let $\varepsilon>0$ and $n>m$. Then 
\begin{align}
\widetilde{\mu}\left(\left\{\omega\in\Omega\Big|\left\|\sum_{i=m+1}^n\xi_i(\omega)e_i\right\|>\varepsilon\right\}\right)&=\mu_{n-m}\left(\R^{n-m}\setminus \overline{B(0,\varepsilon)}\right)\\
&=1-\mu_{n-m}(\overline{B(0,\varepsilon)})\\
&\geq 1-\mu_{n-m}([-\varepsilon,\varepsilon]^{n-m})\\
&=1-(\mu_1([-\varepsilon,\varepsilon]))^{n-m}.
\end{align}
Note that $\mu_1([-\varepsilon,\varepsilon])<1$ implies that $(\mu_1([-\varepsilon,\varepsilon]))^{n-m}\xrightarrow{n,m\to\infty}0$. Here $\mu_1([-\varepsilon,\varepsilon])=\frac{1}{\sqrt{2\pi}}\int_{-\varepsilon}^\varepsilon \ee^{-\frac{1}{2}x^2}\dd x$. This implies that $\{X_n\}$ is not Cauchy in probability.
\end{proof}
Hence, our strategy to construct a Banach space containing $\mathcal{H}$ would be the following. First consider a new norm $\|\enspace\|_W$ for which the sequence $\{X_n\}$ is Cauchy in probability. In this case $\{X_n\}$ converges in probability if we consider the Banach space obtained by completing $\mathcal{H}$ with respect to this new norm. This motivates the following definition. 
\begin{defn}[Measurable norm]
A norm $\|\enspace\|_W$ on $\mathcal{H}$ is said to be \textsf{measurable} if for all $\varepsilon>0$ there is some $p_0\in\F$ such that 
\[
\mu(\{h\in \mathcal{H}\mid \|p_h\|_W>\varepsilon\})<\varepsilon
\]
for all $p\in \F$ such that $p\perp p_0$.
\end{defn}
Geometrically it means that $\|\enspace\|_W$ is such that $\mu$ is concentrated in a tubular neighborhood of some $p_0\in\F$. A non-example would be the norm $\|\enspace\|_\mathcal{H}$ on $\mathcal{H}$, which is not measurable.
\begin{thm}[Gross]
\label{gross1}
Let $\|\enspace\|_W$ be a measurable norm on $\mathcal{H}$. Let $W$ be the Banach space obtained by the completion of $\mathcal{H}$ with respect to $\|\enspace\|_W$. Then the sequence $\{X_n\}$ converges in probability in $W$.
\end{thm}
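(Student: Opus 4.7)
The plan is to prove that $\{X_n\}$ is Cauchy in probability with respect to $\|\cdot\|_W$; combined with completeness of $W$, this yields convergence in probability. Fix $\varepsilon>0$. For $n>m$ the increment is $X_n-X_m=\sum_{i=m+1}^n\xi_i e_i$, supported in the finite-dimensional subspace $F_{m,n}:=\mathrm{span}(e_{m+1},\dots,e_n)$, and its distribution coincides with the pushforward of the cylindrical Gaussian measure $\mu$ under the orthogonal projection $p_{m,n}$ onto $F_{m,n}$. In particular $\tilde\mu(\|X_n-X_m\|_W>\alpha)=\mu(\{h:\|p_{m,n}h\|_W>\alpha\})$ for every $\alpha>0$. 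The measurable-norm hypothesis supplies a finite-rank projection $p_0\in\mathcal{F}$ with $\mu(\{h:\|ph\|_W>\varepsilon\})<\varepsilon$ for every $p\in\mathcal{F}$ satisfying $p\perp p_0$; one would like to apply this directly with $p=p_{m,n}$, but the range of $p_0$ typically does not lie inside $\mathrm{span}(e_1,\dots,e_N)$ for any $N$, so $p_{m,n}\not\perp p_0$ in general. This non-orthogonality is the main obstacle.

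To get around it, I will split $X_n-X_m=p_0(X_n-X_m)+(I-p_0)(X_n-X_m)$ and estimate each piece. The Gaussian vector $p_0(X_n-X_m)$ has covariance $p_0 p_{m,n} p_0$, whose trace equals $\sum_{i=m+1}^n\|p_0 e_i\|^2$. Since $p_0$ has finite rank, the full sum $\sum_i\|p_0 e_i\|^2$ is finite (equal to $\mathrm{Tr}(p_0)$), so its tail vanishes as $m\to\infty$ uniformly in $n>m$. Using the continuous inclusion $\|\cdot\|_W\leq C\|\cdot\|_\mathcal{H}$ on $\mathcal{H}$ (needed for $W$ to contain $\mathcal{H}$ as an embedded Banach space) together with Chebyshev, this forces $\|p_0(X_n-X_m)\|_W\to 0$ in probability.

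For the orthogonal piece $(I-p_0)(X_n-X_m)$, let $q_{m,n}$ denote the orthogonal projection onto $(I-p_0)(F_{m,n})$. Because $q_{m,n}$ has finite rank and its range lies in $p_0^\perp(\mathcal{H})$, we have $q_{m,n}\perp p_0$, and the measurable-norm hypothesis gives $\mu(\|q_{m,n}h\|_W>\varepsilon)<\varepsilon$. Both $(I-p_0)(X_n-X_m)$ and the pushforward $(q_{m,n})_*\mu$ are centered Gaussians supported in $q_{m,n}(\mathcal{H})$; their covariances are $A:=(I-p_0)p_{m,n}(I-p_0)$ and $B:=q_{m,n}$ respectively. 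For $v\in q_{m,n}(\mathcal{H})$ one has $p_0 v=0$, whence $\langle Av,v\rangle=\|p_{m,n}v\|^2\leq\|v\|^2=\langle Bv,v\rangle$, so $A\leq B$. Anderson's inequality, which asserts that a centered Gaussian with dominated covariance assigns larger mass to every centered convex symmetric body, then yields $\tilde\mu(\|(I-p_0)(X_n-X_m)\|_W>\varepsilon)\leq\mu(\|q_{m,n}h\|_W>\varepsilon)<\varepsilon$. Combining both estimates via the triangle inequality gives $\tilde\mu(\|X_n-X_m\|_W>2\varepsilon)<2\varepsilon$ for all sufficiently large $m$ and all $n>m$. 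The genuinely nontrivial step is the covariance comparison paired with Anderson's inequality, which is precisely what converts the orthogonality-to-$p_0$ condition built into the measurable norm into a usable bound on the non-orthogonal projection $p_{m,n}$.
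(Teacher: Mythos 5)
The paper states this theorem of Gross without giving any proof, so there is no in-text argument to compare against; your proposal is, as far as I can check, a correct and complete proof, and it follows the classical Gross--Kuo strategy. The two points that carry all the weight are exactly the ones you isolate. First, the $p_0$-component: $p_0(X_n-X_m)$ has covariance $p_0p_{m,n}p_0$ with $\mathrm{Tr}(p_0p_{m,n}p_0)=\sum_{i=m+1}^n\|p_0e_i\|^2$, whose tail vanishes uniformly in $n$ because $\sum_i\|p_0e_i\|^2=\mathrm{Tr}(p_0)<\infty$; combining this with the domination $\|h\|_W\leq c\|h\|_{\mathcal{H}}$ (which the paper records as a separate fact, with a proof in \cite{HK} that does not rely on this theorem, so there is no circularity) and Chebyshev is correct. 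Second, the orthogonal component: the identity $p_0(I-p_0)=0$ does give $(I-p_0)(F_{m,n})\subseteq\ker p_0$, so $q_{m,n}\perp p_0$ and the measurable-norm hypothesis genuinely applies to $q_{m,n}$; the covariance comparison $(I-p_0)p_{m,n}(I-p_0)\leq q_{m,n}$ holds (both operators are supported on $q_{m,n}(\mathcal{H})$, where the left side acts as $v\mapsto$ a contraction of $v$), and the monotonicity of symmetric-convex-set probabilities under covariance domination is a standard consequence of Anderson's inequality via the independent decomposition $N(0,B)=N(0,A)+N(0,B-A)$. This is a clean packaging of the step that Gross and Kuo handle by a more hands-on perturbation estimate comparing $\mu(\|Ph\|_W>\varepsilon)$ for nearby finite-rank projections; Anderson's inequality buys you a one-line comparison at the cost of invoking a nontrivial (but classical) Gaussian inequality. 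The final assembly (triangle inequality, Cauchy in probability plus completeness of $W$) is routine.
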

\begin{thm}[Gross]
\label{gross2}
Given a seperable real Hilbert space $\mathcal{H}$, there is a seperable Banach space $W$ with a linear continuous dense embedding $\iota:\mathcal{H}\hookrightarrow W$ and a Gaussian measure $\mu_W$ on $W$ such that 
\begin{multline}
\mu_W(\{ w\in W\mid (f_1(w),...,f_n(w))\in B,B\subseteq\R^n\text{ Borel measurable}\})\\
=\mu(\{h\in\mathcal{H}\mid (\langle h,f_1\rangle,...,\langle h,f_n\rangle)\in B\})
\end{multline}
for all $f_1,...,f_n\in W^*\hookrightarrow \mathcal{H}^*\cong\mathcal{H}$. In particular, for all $f\in W^*\subseteq \mathcal{H}$
\[
\widehat{\mu}(f)=\ee^{-\frac{1}{2}\| f\|_\mathcal{H}^2}.
\]
\end{thm}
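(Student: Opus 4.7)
The plan is to produce a measurable norm on $\mathcal{H}$, invoke Theorem \ref{gross1}, and then identify the resulting pushforward as the claimed Gaussian measure.

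\textbf{Step 1 (the main obstacle): constructing a measurable norm.} Fix an orthonormal basis $\{e_n\}$ of $\mathcal{H}$ and let $T\colon \mathcal{H}\to\mathcal{H}$ be the positive, injective, Hilbert--Schmidt operator $Th=\sum_{n=1}^\infty\tfrac{1}{n}\langle h,e_n\rangle e_n$. I would define $\|h\|_W:=\|Th\|_\mathcal{H}$, which is a norm because $T$ is injective and satisfies $\|\cdot\|_W\le\|T\|_{\mathrm{op}}\|\cdot\|_\mathcal{H}$. The hard part is showing that $\|\cdot\|_W$ is measurable in the sense of the preceding definition: for each $\varepsilon>0$ one must exhibit a finite-rank projection $p_0\in\F$ so that every $p\in\F$ with $p\perp p_0$ satisfies $\mu(\{h\mid \|ph\|_W>\varepsilon\})<\varepsilon$. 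I would control this by Chebyshev: if $p_0$ is the projection onto $\mathrm{span}\{e_1,\dots,e_N\}$ and $p\perp p_0$ is supported on $\mathrm{span}\{e_n\mid n>N\}$, then a direct computation with the finitely additive cylindrical measure $\mu$ gives $\int\|ph\|_W^2\,d\mu\le \sum_{n>N}\tfrac{1}{n^2}$, which can be made smaller than $\varepsilon^3$ by choosing $N$ large.

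\textbf{Step 2: the ambient Banach space.} Let $W$ be the completion of $(\mathcal{H},\|\cdot\|_W)$. Then the inclusion $\iota\colon\mathcal{H}\hookrightarrow W$ is linear, continuous, and has dense image by construction, and $W$ inherits separability from $\mathcal{H}$. Because $\iota$ has dense image, the dual map $\iota^*\colon W^*\to\mathcal{H}^*\cong\mathcal{H}$ is injective, which justifies the identification $f\in W^*\subseteq\mathcal{H}$ appearing in the statement.

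\textbf{Step 3: pushing forward the Kolmogorov measure.} With the random variables $\xi_i$ on $(\Omega,\widetilde{\mu})$ supplied earlier and $X_n(\omega)=\sum_{i=1}^n\xi_i(\omega)e_i$, Theorem \ref{gross1} applied to the measurable norm of Step 1 gives that $\{X_n\}$ is Cauchy in probability in $W$, hence converges in probability to some measurable $X\colon\Omega\to W$. I would then set $\mu_W:=X_*\widetilde{\mu}$, which is automatically a Borel probability measure on $W$.

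\textbf{Step 4: identifying the finite-dimensional marginals and the characteristic functional.} For $f\in W^*\subseteq\mathcal{H}$ one has $f(X_n(\omega))=\sum_{i=1}^n\xi_i(\omega)\langle e_i,f\rangle$, a centered Gaussian of variance $\sum_{i\le n}\langle e_i,f\rangle^2\to\|f\|_\mathcal{H}^2$. Since $f(X_n)\to f(X)$ in probability and $|\ee^{\I f(X_n)}|\le 1$, dominated convergence gives $\widehat{\mu_W}(f)=\lim_n \int_\Omega\ee^{\I f(X_n)}\,d\widetilde{\mu}=\ee^{-\frac{1}{2}\|f\|_\mathcal{H}^2}$. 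Applying the same argument with $\sum_k t_k f_k$ in place of $f$ computes the joint characteristic functional of $(f_1(X),\dots,f_n(X))$ as that of the centered Gaussian on $\R^n$ with covariance $(\langle f_j,f_k\rangle_\mathcal{H})_{j,k}$; this coincides exactly with the finite-dimensional cylinder marginal of $\mu$ on the right-hand side of the stated identity, and by uniqueness of finite-dimensional Gaussians the marginal equality follows.
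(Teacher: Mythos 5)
Your proposal is correct and follows exactly the construction the paper sets up around Theorems \ref{gross1}--\ref{gross2}: the paper states the result without proof (deferring the key fact that Hilbert--Schmidt norms are measurable to \cite{HK} in Remark \ref{norm1}), and your Chebyshev estimate $\int\|ph\|_W^2\,\dd\mu=\sum_{n>N}n^{-2}\|pe_n\|^2\le\sum_{n>N}n^{-2}$ is the standard way to fill that gap, using that $pe_n=0$ for $n\le N$ when $p\perp p_0$. The remaining steps --- completion, invoking Theorem \ref{gross1} for convergence in probability, pushing forward $\widetilde{\mu}$, and identifying the finite-dimensional marginals via characteristic functions --- match the argument the surrounding text of the paper intends.
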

Here Gaussian measure means that for all $f\in W^*$ we have that $f_*\mu$ is Gaussian on $\R$.
\begin{rem}
If $\|\enspace\|_W$ is a measurable norm on $\mathcal{H}$, then there is some $c>0$ such that $\|h\|_W\leq c\|h\|$ for all $h\in \mathcal{H}$ (see \cite{HK}). It was expected that $\|\enspace\|_W$ is dominated by $\|\enspace\|$ because we needed a bigger topology on $\mathcal{H}$ to allow convergence of $\{X_n\}$.
\end{rem}
\begin{rem}
\label{norm1}
Let $A$ be a positive Hilbert-Schmidt operator on $\mathcal{H}$. Define a new norm by 
\[
\|h\|_{W_A}=\|Ah\|.
\]
Then $\|\enspace\|_{W_A}$ is a measurable norm on $\mathcal{H}$ (see \cite{HK}).
\end{rem}
\begin{rem}
\label{norm2}
In the view of remark \ref{norm1} we see that there can be many Banach spaces. In other words, we have no uniqueness. However, we do not care.
\end{rem}
To elaborate on remark \ref{norm2}, a slogan here is that $\mathcal{H}$ contains all information about the measure. Our next goal is to make this slogan a little more precise, and this requires some effort. Given a seperabe Hilbert space $\mathcal{H}$, we saw that there is a Banach space $W$, a linear continuous dense embedding $\iota:\mathcal{H}\hookrightarrow W$ and a Gaussian measure $\mu$ on $W$ such that $\widehat{\mu}(f)=\ee^{-\frac{1}{2}\|f\|_\mathcal{H}^2}$, where $f\in W^*\subseteq\mathcal{H}^*\cong\mathcal{H}$. Next, we would like to understand whether it is possible to identify $\mathcal{H}$ from a seperable Banach space $W$ and a centered Gaussian measure $\mu$. More precisely, given a seperable Banach space $W$ and a centered Gaussian measure $\mu$ on $W$, is it possible to find a Hilbert space $\mathcal{H}(\mu)$ together with a linear continuous dense embedding $\iota:\mathcal{H}(\mu)\hookrightarrow W$ such that $\widehat{\mu}(f)=\ee^{-\frac{1}{2}\|f\|^2_{\mathcal{H}(\mu)}}$. We will start with a seperable real Hilbert space $\mathcal{H}$ and a Banach space $W$ and a Gaussian measure $\mu$ given by theorem \ref{gross2}. Then we will try to understand how to recover $\mathcal{H}$ from $W$ and $\mu$. This will give a hint on the construction of $\mathcal{H}(\mu)$ out of $W$ and $\mu$. Let $\mathcal{H},W$ and $\mu$ be as in theorem \ref{gross1}. Given $f\in W^*$, we have $q_\mu(f)=\int_W f(w)^2\dd\mu(w)$. More generally,
\begin{align*}
q_\mu:W^*\times W^*&\longrightarrow \R\\
(f,g)&\longmapsto q_\mu(f,g)=\int_Wfg\dd\mu.
\end{align*}
\begin{defn}[Covariance of a measure]
The map $q_\mu$ is called the \textsf{covariance} of $\mu$.
\end{defn}
\begin{exe}
Show that $q_\mu(f,g)=\langle f,g\rangle_\mathcal{H}$, where $f,g\in W^*\subseteq\mathcal{H}^*\cong\mathcal{H}$.
\end{exe}
First, we would like to show that $q_\mu$ is a continuous positive definite symmetric bilinear form on $W^*$. To see this we need a technical tool: \textsf{Fernique}'s theorem, which we state without proof.
\begin{thm}[Fernique]
Let $W$ be a seperable Banach space and $\mu$ be a Gaussian measure on $W$. Then there is some $\varepsilon=\varepsilon(\mu)>0$ such that
\[
\int_W\ee^{\varepsilon\| w\|_W^2}\dd\mu(w)<\infty.
\]
\end{thm}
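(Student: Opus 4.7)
The plan is to use the classical symmetrization argument due to Fernique. Take two independent $\mu$-distributed random elements $X,Y$ of $W$, and set $U:=(X-Y)/\sqrt{2}$, $V:=(X+Y)/\sqrt{2}$. I would first verify that $(U,V)$ has the same joint distribution on $W\times W$ as $(X,Y)$. Since a finite Borel measure on a separable Banach space is determined by its characteristic functional, this reduces to the identity
\[
\widehat{\mu}\!\left(\tfrac{f+g}{\sqrt{2}}\right)\widehat{\mu}\!\left(\tfrac{g-f}{\sqrt{2}}\right) = \widehat{\mu}(f)\,\widehat{\mu}(g),\qquad f,g\in W^*,
\]
which is immediate from the centered Gaussian form $\widehat{\mu}(f)=\exp(-\tfrac{1}{2}q_\mu(f,f))$ and bilinearity of the covariance $q_\mu$.

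The core step is the inequality
\[
\mu(\|w\|_W\leq s)\cdot\mu(\|w\|_W>t)\ \leq\ \mu\!\left(\|w\|_W>\tfrac{t-s}{\sqrt{2}}\right)^{\!2},\qquad 0<s<t.
\]
By independence the left-hand side equals $(\mu\times\mu)(\|X\|_W\leq s,\,\|Y\|_W>t)$. Substituting $X=(V+U)/\sqrt{2}$ and $Y=(V-U)/\sqrt{2}$, the triangle inequalities $\|V-U\|_W\leq 2\|U\|_W+\|V+U\|_W$ and $\|V-U\|_W\leq 2\|V\|_W+\|V+U\|_W$ force $\min(\|U\|_W,\|V\|_W)>(t-s)/\sqrt{2}$ on that event, and independence of $U$ and $V$ then produces the squared bound.

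The conclusion follows by iteration. Choose $s$ large enough that $\alpha:=\mu(\|w\|_W>s)<1/2$; this is possible since $\mu$ is a probability measure and $W=\bigcup_n\{\|w\|_W\leq n\}$. Define $t_0=s$, $t_{n+1}=s+\sqrt{2}\,t_n$, and $a_n:=\mu(\|w\|_W>t_n)$. The inequality yields $a_{n+1}\leq a_n^2/(1-\alpha)$, and induction gives $a_n\leq (1-\alpha)\bigl(\alpha/(1-\alpha)\bigr)^{2^n}$. Since the explicit solution $t_n=s\bigl((\sqrt{2})^{n+1}-1\bigr)/(\sqrt{2}-1)$ satisfies $t_n^2\geq c\cdot 2^n$ for some $c>0$ and $n$ large, the doubly-exponential decay of $a_n$ translates into a Gaussian tail bound $\mu(\|w\|_W>t)\leq C\ee^{-\gamma t^2}$ for some $\gamma>0$ and all sufficiently large $t$. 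Integrating this tail then yields $\int_W \ee^{\varepsilon\|w\|_W^2}\dd\mu<\infty$ for any $\varepsilon<\gamma$. The main obstacle is the absence of a density for $\mu$ in the Banach-space setting, which is why I would route the rotational invariance through characteristic functionals rather than through a change-of-variables computation; once that is in place, the symmetrization and iteration are elementary.
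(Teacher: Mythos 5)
The paper does not prove this statement at all --- it is introduced with ``which we state without proof'' and used as a black box --- so there is nothing internal to compare your argument against. What you give is the standard Fernique symmetrization proof, and it is essentially correct and complete in outline: the rotation identity for $(U,V)=((X-Y)/\sqrt2,(X+Y)/\sqrt2)$ checked on characteristic functionals, the key two-point inequality $\mu(\|w\|\le s)\,\mu(\|w\|>t)\le\mu(\|w\|>(t-s)/\sqrt2)^2$, and the iteration $a_{n+1}\le a_n^2/(1-\alpha)$ along $t_{n+1}=s+\sqrt2\,t_n$ giving a doubly exponential bound and hence a Gaussian tail. Two small points. First, your verification of rotation invariance uses $\widehat\mu(f)=\exp(-\tfrac12 q_\mu(f,f))$, i.e.\ it assumes $\mu$ is centered; the theorem as stated allows a general Gaussian measure, so you should either restrict to the centered case (which is all the paper ever uses) or add the standard reduction (apply the argument to the symmetrization $X-X'$ and recover the tail for $X$ via a median/translation argument). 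Second, in converting the decay of $a_n$ into $\mu(\|w\|>t)\le C\ee^{-\gamma t^2}$ you invoke $t_n^2\ge c\,2^n$, but the inequality actually needed is the opposite one, $t_n^2\le C\,2^n$ (so that $2^n\gtrsim t^2$ for $t<t_{n+1}$); since $t_n^2\asymp 2^n$ exactly, both hold and this is only a slip of direction, not a gap. With those two adjustments the proof is sound, and it supplies exactly the ingredient the paper leaves unproved.
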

\begin{cor}
$$\int_W\|w\|_W^p\dd\mu(w)<\infty,\hspace{0.5cm}\forall p\geq 1.$$
\end{cor}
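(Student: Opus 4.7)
The plan is to deduce this directly from Fernique's theorem via a pointwise domination of $t^p$ by an exponential $e^{\varepsilon t^2}$. The key elementary observation is that for any fixed $p\geq 1$ and any $\varepsilon>0$, the function $t\mapsto t^p \ee^{-\varepsilon t^2}$ is continuous on $[0,\infty)$ and tends to $0$ as $t\to\infty$, hence is bounded by some constant $C=C(p,\varepsilon)>0$. In particular, $t^p\leq C\,\ee^{\varepsilon t^2}$ for all $t\geq 0$.

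Given the Gaussian measure $\mu$ on $W$, Fernique's theorem supplies some $\varepsilon=\varepsilon(\mu)>0$ such that
\[
I:=\int_W \ee^{\varepsilon\|w\|_W^2}\,\dd\mu(w)<\infty.
\]
Applying the pointwise bound with this particular $\varepsilon$ and an arbitrary $p\geq 1$, we obtain $\|w\|_W^p\leq C\,\ee^{\varepsilon \|w\|_W^2}$ for $\mu$-a.e.\ $w\in W$, and therefore
\[
\int_W \|w\|_W^p\,\dd\mu(w)\leq C\int_W \ee^{\varepsilon\|w\|_W^2}\,\dd\mu(w)=C\cdot I<\infty.
\]

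There is really no obstacle here: the only subtlety is that $\varepsilon$ is not arbitrary but fixed by Fernique, so one must make sure the polynomial/exponential comparison is applied with that specific $\varepsilon$ (which is fine since the comparison works for any $\varepsilon>0$). The statement $p\geq 1$ is not essential either; the same argument gives $\|w\|_W^p\in L^1(\mu)$ for every $p>0$, but we only record the case needed in the corollary.
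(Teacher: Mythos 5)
Your proof is correct and is exactly the argument the paper intends: the corollary is stated as an immediate consequence of Fernique's theorem, and the pointwise domination $t^p\leq C(p,\varepsilon)\,\ee^{\varepsilon t^2}$ with the specific $\varepsilon=\varepsilon(\mu)$ from Fernique is the standard (and essentially only) way to deduce it. Your remark that the bound in fact holds for all $p>0$ is also accurate.
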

\begin{prop}
\label{prop_bil}
$q_\mu$ is a continuous bilinear form on $W^*$.
\end{prop}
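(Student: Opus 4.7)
The plan is to establish bilinearity essentially by inspection, then reduce continuity to a norm estimate controlled by Fernique's corollary via Cauchy--Schwarz. Linearity of $q_\mu$ in each argument follows immediately from linearity of the integral together with the fact that $f, g \in W^*$ act linearly on $w \in W$, so no work is needed there. The real content of the proposition is the continuity with respect to the operator norm $\|\cdot\|_{W^*}$ on $W^*$, i.e.\ producing a constant $C$ with
$$|q_\mu(f,g)| \leq C\, \|f\|_{W^*}\, \|g\|_{W^*}.$$

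First I would bound the integrand pointwise: for any $w \in W$ and $f \in W^*$ we have $|f(w)| \leq \|f\|_{W^*}\|w\|_W$, so $f(w)^2 \leq \|f\|_{W^*}^2 \|w\|_W^2$. Then by Cauchy--Schwarz in $L^2(W,\mu)$,
\begin{equation*}
|q_\mu(f,g)| \leq \left(\int_W f(w)^2 \, d\mu(w)\right)^{1/2}\left(\int_W g(w)^2 \, d\mu(w)\right)^{1/2} \leq \|f\|_{W^*}\|g\|_{W^*} \int_W \|w\|_W^2 \, d\mu(w).
\end{equation*}

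The key ingredient is that the finite second moment $M := \int_W \|w\|_W^2\, d\mu(w)$ is finite; this is precisely the content of the corollary to Fernique's theorem stated just above (applied with $p = 2$). With this, $|q_\mu(f,g)| \leq M \|f\|_{W^*}\|g\|_{W^*}$, which gives continuity of the bilinear form. One small point to check along the way is measurability and integrability of the products $fg$ so that $q_\mu$ is genuinely well-defined: elements of $W^*$ are continuous linear functionals, hence Borel measurable, and the estimate above also shows $f, g \in L^2(W,\mu)$, so $fg \in L^1(W,\mu)$.

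The main (and only) obstacle is the finiteness of the second moment $\int_W \|w\|_W^2 \, d\mu(w)$, but since this is supplied by Fernique's theorem the proof reduces to an elementary Cauchy--Schwarz argument. No additional machinery or deeper Gaussian structure is required at this stage; symmetry and positive definiteness of $q_\mu$, which are the remaining features needed to interpret it as an inner product later, are not part of this statement but also follow immediately (symmetry from commutativity of multiplication, positive definiteness modulo the usual $\mu$-a.e.\ quotient).
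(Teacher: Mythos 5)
Your proof is correct and follows essentially the same route as the paper: bound the integrand via the operator norms and $\|w\|_W^2$, then invoke the corollary of Fernique's theorem to get finiteness of the second moment $\int_W\|w\|_W^2\,\dd\mu(w)$. The detour through Cauchy--Schwarz in $L^2(W,\mu)$ is harmless but superfluous, since the pointwise bound $\vert f(w)g(w)\vert\leq\|f\|_{W^*}\|g\|_{W^*}\|w\|_W^2$ already yields the same estimate directly, as in the paper.
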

\begin{proof}
We have 
\[
\vert q_\mu(f,g)\vert\leq \int_W\vert f(w)g(w)\vert\dd\mu(w)\leq \|f\|_{W^*}\|g\|_{W^*}\underbrace{\int_W\| w\|_W^2\dd\mu(w)}_{c}.
\]
\end{proof}
Note that $f\in W^*$ implies that $q_\mu(f)<\infty$ and thus $f\in L^2(W,\mu)$. Hence, we have a canonical linear map
\begin{align*}
T:W^*&\longrightarrow L^2(W,\mu)\\
f&\longmapsto f.
\end{align*}
\begin{lem}
The map $T$ is continuous.
\end{lem}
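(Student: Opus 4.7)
The plan is quite short, since this lemma is essentially a direct corollary of Proposition \ref{prop_bil} together with Fernique's theorem. The map $T$ is manifestly linear, so continuity reduces to boundedness, i.e.\ finding a constant $C > 0$ such that $\|T(f)\|_{L^2(W,\mu)} \leq C \|f\|_{W^*}$ for all $f \in W^*$.

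First I would compute
\[
\|T(f)\|_{L^2(W,\mu)}^2 \;=\; \int_W |f(w)|^2 \dd\mu(w) \;=\; q_\mu(f,f),
\]
so the question is reduced to bounding the diagonal of the covariance form $q_\mu$ by $\|f\|_{W^*}^2$ up to a constant. For this, the estimate used in the proof of Proposition \ref{prop_bil} already suffices: for every $w \in W$, $|f(w)| \leq \|f\|_{W^*}\|w\|_W$, hence
\[
q_\mu(f,f) \;\leq\; \|f\|_{W^*}^2 \int_W \|w\|_W^2 \dd\mu(w).
\]
The finiteness of the last integral is guaranteed by the corollary to Fernique's theorem, which yields a finite constant $c := \int_W \|w\|_W^2 \dd\mu(w) < \infty$ depending only on $\mu$ and $W$.

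Combining these, one obtains $\|T(f)\|_{L^2(W,\mu)} \leq \sqrt{c}\,\|f\|_{W^*}$, proving that $T$ is a bounded linear operator and hence continuous. There is really no main obstacle here; the only nontrivial input is Fernique's theorem, which guarantees the finiteness of the second moment of $\|\cdot\|_W$ under $\mu$. Without that ingredient the bilinearity estimate would have no constant to absorb into, so conceptually the whole proof hinges on having $\int_W \|w\|_W^2 \dd\mu(w) < \infty$, which is precisely the content of the corollary cited just before the statement.
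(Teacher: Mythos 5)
Your proof is correct and is essentially identical to the paper's: the paper likewise bounds $\|T(f)\|_{L^2(W,\mu)}^2=\int_W f(w)^2\,\dd\mu(w)\leq \|f\|_{W^*}^2\int_W\|w\|_W^2\,\dd\mu(w)$ and invokes the corollary of Fernique's theorem for the finiteness of the second moment. Your write-up is in fact slightly more careful, since the paper's displayed inequality omits the square on the left-hand norm.
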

\begin{proof}
We have 
$$\|T(f)\|_{L^2(W,\mu)}=\int_W f(w)^2\dd\mu(w)\leq \| f\|_{W^*}^2\int_W\|w\|_W^2\dd\mu(w).$$
\end{proof}
\begin{cor}
The norm on $W^*$ induced by $q_\mu$ is weaker then $\|\enspace\|_{W^*}$.
\end{cor}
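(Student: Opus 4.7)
The corollary is essentially a restatement of the preceding lemma once one unpacks what the $q_\mu$-induced seminorm actually is, so the plan is short and direct.

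First I would set notation: on $W^*$ the covariance $q_\mu$ is a symmetric positive semidefinite bilinear form (by Proposition \ref{prop_bil}), so it induces a seminorm
\[
\|f\|_{q_\mu} := \sqrt{q_\mu(f,f)} = \left(\int_W f(w)^2\,\dd\mu(w)\right)^{1/2} = \|T(f)\|_{L^2(W,\mu)}.
\]
The statement ``weaker than $\|\cdot\|_{W^*}$'' I would interpret in the standard sense that there exists a constant $C>0$ such that $\|f\|_{q_\mu}\leq C\|f\|_{W^*}$ for all $f\in W^*$; equivalently, the topology on $W^*$ generated by $\|\cdot\|_{q_\mu}$ is coarser than the one generated by $\|\cdot\|_{W^*}$.

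The core of the proof is then just the lemma immediately preceding the corollary. Taking the square root of the bound
\[
\|T(f)\|_{L^2(W,\mu)}^2 \leq \|f\|_{W^*}^2\int_W\|w\|_W^2\,\dd\mu(w)
\]
and rewriting the left-hand side as $\|f\|_{q_\mu}^2$ gives
\[
\|f\|_{q_\mu} \leq C\,\|f\|_{W^*}, \qquad C := \left(\int_W\|w\|_W^2\,\dd\mu(w)\right)^{1/2}.
\]
One still has to know that $C$ is finite, and this is where the only non-trivial input enters: by Fernique's theorem (more precisely by the corollary stated just before Proposition \ref{prop_bil}), $\int_W\|w\|_W^p\,\dd\mu(w)<\infty$ for every $p\geq 1$, so in particular for $p=2$. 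Hence $C<\infty$.

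The only genuine obstacle is conceptual rather than technical, namely checking that the finiteness of $C$ is not circular: it uses Fernique's theorem, which was invoked as a black box. Once that is accepted, the corollary follows immediately from the estimate in the lemma, and no further work is needed — the claim is a one-line consequence of continuity of $T\colon W^*\to L^2(W,\mu)$ together with the identity $\|T(f)\|_{L^2(W,\mu)} = \|f\|_{q_\mu}$.
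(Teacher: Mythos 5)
Your argument is correct and is exactly the route the paper takes: the corollary is an immediate consequence of the preceding lemma on the continuity of $T$, since $\|f\|_{q_\mu}=\|T(f)\|_{L^2(W,\mu)}$ and the constant $\bigl(\int_W\|w\|_W^2\,\dd\mu(w)\bigr)^{1/2}$ is finite by the corollary to Fernique's theorem. You even quietly fix the paper's typo (the lemma's displayed bound should have $\|T(f)\|^2_{L^2(W,\mu)}$ on the left) by taking square roots; nothing further is needed.
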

\begin{lem}
Let $J$ be the map $J:W^*\to \mathcal{H}$ given as the composition $W^*\hookrightarrow\mathcal{H}^*\xrightarrow{\sim}\mathcal{H}$. Then $J:(W^*,q_\mu)\to\mathcal{H}$ is a linear continuous dense isometric embedding.
\end{lem}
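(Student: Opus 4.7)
I would prove the four properties in the order \emph{linearity, isometry, continuity, density}, since isometry immediately gives continuity and the density step is the only one needing a separation-type argument.

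\emph{Linearity} is immediate, since $J$ is the composition of the restriction map $W^* \hookrightarrow \mathcal{H}^*$, $f\mapsto f\circ\iota$, with the Riesz isomorphism $\mathcal{H}^*\xrightarrow{\sim}\mathcal{H}$, both of which are linear. By construction, $J(f)$ is the unique vector in $\mathcal{H}$ satisfying $\langle J(f),h\rangle_\mathcal{H}=f(\iota(h))$ for every $h\in\mathcal{H}$.

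\emph{Isometry.} For $f\in W^*$ and $t\in\R$ the characteristic functional of the pushforward $f_*\mu$ is
\[
\widehat{f_*\mu}(t)=\int_W \ee^{\I t f(w)}\dd\mu(w)=\widehat\mu(tf)=\ee^{-\tfrac{1}{2}t^2\|f\|_\mathcal{H}^2},
\]
by Gross's theorem \ref{gross2}, where we use that $f$, viewed via the embedding, sits in $\mathcal{H}$ and corresponds to $J(f)$ under the Riesz isomorphism. Hence $f_*\mu$ is centered Gaussian on $\R$ with variance $\|f\|_\mathcal{H}^2=\|J(f)\|_\mathcal{H}^2$, and
\[
q_\mu(f,f)=\int_W f(w)^2\dd\mu(w)=\int_\R x^2\dd (f_*\mu)(x)=\|J(f)\|_\mathcal{H}^2.
\]
This is the statement of the preceding exercise. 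Polarization of $q_\mu$ then gives $q_\mu(f,g)=\langle J(f),J(g)\rangle_\mathcal{H}$, so $J$ is isometric and in particular continuous (and automatically injective).

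\emph{Density.} This is the part that I expect to be the main conceptual step. Suppose $h\in\mathcal{H}$ is orthogonal to $J(W^*)$. By the defining property of $J$,
\[
0=\langle J(f),h\rangle_\mathcal{H}=f(\iota(h))\quad\text{for every } f\in W^*.
\]
Since $W^*$ separates points of $W$ by the Hahn-Banach theorem, this forces $\iota(h)=0$, and injectivity of the continuous embedding $\iota\colon\mathcal{H}\hookrightarrow W$ then gives $h=0$. Therefore $J(W^*)^\perp=\{0\}$ in $\mathcal{H}$, which is equivalent to $J(W^*)$ being dense. Combining the four steps yields the claim.
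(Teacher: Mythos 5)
Your proof is correct and follows the same route the paper intends: the isometry is exactly the identity $q_\mu(f,g)=\langle J(f),J(g)\rangle_{\mathcal H}$ from the preceding exercise (which you rederive cleanly from Gross's theorem via the pushforward $f_*\mu$), and continuity then follows, which is all the paper's one-line proof invokes. The only added value is that you spell out the density of $J(W^*)$ via the orthogonal-complement/Hahn--Banach argument using injectivity of $\iota$, a step the paper leaves implicit as part of the Gross construction.
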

\begin{proof}
It is a direct consequence of the previous corollary.
\end{proof}
\begin{exe}
Given $h\in\mathcal{H}$, define $\alpha_h:W^*\to\R$ by $\alpha_h(f)=f(h)$. Show that $\alpha_h$ is continuous on $W^*$ with respect to the topology given by $q_\mu$.
\end{exe}
\begin{cor}
If $h\in \mathcal{H}$, then $\|h\|_\calH=\sup_{f\in W^*\setminus\{0\}}\frac{\vert f(h)\vert}{\sqrt{q_\mu(f,f)}}$.
\end{cor}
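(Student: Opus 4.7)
The plan is to transport the problem to $\mathcal{H}$ via the embedding $J\colon W^*\to\mathcal{H}$ and then conclude by Cauchy--Schwarz together with density of $J(W^*)$ in $\mathcal{H}$. First I would record two facts that are already available from the previous lemmas. Since $J$ is isometric from $(W^*,q_\mu)$ into $\mathcal{H}$, we have $\|J(f)\|_\mathcal{H}=\sqrt{q_\mu(f,f)}$ for every $f\in W^*$. Moreover, the inclusion $W^*\hookrightarrow \mathcal{H}^*$ is given by restriction along the continuous embedding $\iota\colon\mathcal{H}\hookrightarrow W$, so that under the Riesz identification $\mathcal{H}^*\cong\mathcal{H}$ one has
\[
f(h)=(f\circ\iota)(h)=\langle J(f),h\rangle_\mathcal{H},\qquad\forall\,h\in\mathcal{H},\;f\in W^*.
\]

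With this identification in hand, the quotient to be estimated becomes
\[
\frac{\vert f(h)\vert}{\sqrt{q_\mu(f,f)}}=\frac{\vert\langle J(f),h\rangle_\mathcal{H}\vert}{\|J(f)\|_\mathcal{H}},\qquad f\in W^*\setminus\{0\}.
\]
The inequality $\leq \|h\|_\mathcal{H}$ is then immediate from the Cauchy--Schwarz inequality in $\mathcal{H}$, which gives the upper bound on the supremum.

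For the reverse direction, if $h=0$ there is nothing to prove. If $h\neq 0$, I would use the density of $J(W^*)$ in $\mathcal{H}$ (part of the previous lemma): choose a sequence $f_n\in W^*\setminus\{0\}$ such that $J(f_n)\to h$ in $\mathcal{H}$. Then $\|J(f_n)\|_\mathcal{H}\to\|h\|_\mathcal{H}\neq 0$ and $\langle J(f_n),h\rangle_\mathcal{H}\to \|h\|_\mathcal{H}^2$, so
\[
\frac{\vert f_n(h)\vert}{\sqrt{q_\mu(f_n,f_n)}}=\frac{\vert\langle J(f_n),h\rangle_\mathcal{H}\vert}{\|J(f_n)\|_\mathcal{H}}\xrightarrow[n\to\infty]{}\|h\|_\mathcal{H},
\]
showing that the supremum is at least $\|h\|_\mathcal{H}$. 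Combining the two estimates yields the claimed equality.

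The only delicate point, and the part I would write out most carefully, is the identification $f(h)=\langle J(f),h\rangle_\mathcal{H}$, since it depends on unraveling the chain $W^*\hookrightarrow\mathcal{H}^*\xrightarrow{\sim}\mathcal{H}$ that was used to define $J$. Once this is made explicit, the rest is a one-line Cauchy--Schwarz plus a one-line density argument, with no additional analytic obstacle.
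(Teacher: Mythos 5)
Your argument is correct and is essentially the paper's proof written out in full: the paper compresses the identification $f(h)=\langle J(f),h\rangle_{\mathcal H}$, the Cauchy--Schwarz upper bound, and the density-based lower bound into the single line ``since $J(W^*)$ is dense in $\mathcal H$, we know that $\|h\|_{\mathcal H}=\sup_f |f(h)|/\|J(f)\|$.'' Your expanded version supplies exactly the steps the paper leaves implicit, so there is nothing to change.
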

\begin{proof}
Since $J(W^*)$ is dense in $\mathcal{H}$, we know that 
\[
\|h\|_\calH=\sup_{f\in W^*\setminus\{0\}}\frac{\vert f(h)\vert}{\| J(f)\|}=\sup_{f\in W^*\setminus\{0\}}\frac{\vert f(h)\vert}{\| f\|_{L^2(W,\mu)}}=\sup_{f\in W^*\setminus\{0\}}\frac{\vert f(h)\vert}{\sqrt{q_\mu(f,f)}}.
\]
\end{proof}
Let $K$ be the completion of $T(W^*)$ in $L^2(W,\mu)$. Then we see that $J$ extends to an isometry $J:K\to \mathcal{H}$.
\begin{exe}
Show that $J:K\to\mathcal{H}$ is an isomorphism of Hilbert spaces. 
\end{exe}
Let us summerize what we have seen so far:
\begin{enumerate}
\item{We have seen that
$$h\in \mathcal{H}\Longrightarrow \|h\|_\calH=\sup_{f\in W^*\setminus\{0\}}\frac{\vert f(h)\vert}{\sqrt{q_\mu(f,f)}}.$$ 
This relation can be thought of as constructing the norm on $\mathcal{H}$ out of $W$ and $\mu$. It will be the key in order to construct $\mathcal{H}$ out of $W$ and $\mu$.
}
\item{The map $J:K\to\mathcal{H}$ is an isomorphism of Hilbert spaces. In particular, it is an isomorphism of Banach spaces.
}
\end{enumerate}
Given a seperable Banach space on $W$ and a contered Gaussian measure $\mu$ on $W$, (1) will be used to define a normed space $\mathcal{H}(\mu)$ and (2) will be used to give an inner product on $\mathcal{H}(\mu)$. This way we will be able to construct $\mathcal{H}(\mu)$ out of $W$ and $\mu$.
\begin{defn}[$\mathcal{H}(\mu)$-norm]
Let $W$ be a real seperable Banach space and $\mu$ a centered Gaussian measure on $W$. Define a \textsf{norm} $\|\enspace\|_{\mathcal{H}(\mu)}$ by 
\[
\|w\|_{\mathcal{H}(\mu)}=\sup_{f\in W^*\setminus\{0\}}\frac{\vert f(w)\vert}{\sqrt{q_\mu(f,f)}},
\]
and $\mathcal{H}(\mu)=\{w\in W\mid \|w\|_{\mathcal{H}(\mu)}<\infty\}$. The space $\mathcal{H}(\mu)$ is called the \textsf{Cameron-Martin} space.
\end{defn}
\begin{exe}
Show that $\mathcal{H}(\mu)$ is a normed space.
\end{exe}
\begin{exe}
Show that $w\in \mathcal{H}(\mu)$ iff $f\mapsto f(w)$ is continuous on $W^*$ if $W^*$ has the topology induced by $q_\mu$.
\end{exe}
\begin{prop}
$\mathcal{H}(\mu)$ is a Banach space, i.e. $\|\enspace\|_{\mathcal{H}(\mu)}$ is complete.
\end{prop}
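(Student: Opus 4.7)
The plan is to prove completeness via the standard three-step scheme: show the continuous embedding $\mathcal{H}(\mu)\hookrightarrow W$, use it to land a Cauchy sequence in $W$, then verify the limit lies in $\mathcal{H}(\mu)$ and is approached in the stronger norm.

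First I would establish that there is a constant $C>0$ with $\|w\|_W\leq C\|w\|_{\mathcal{H}(\mu)}$ for every $w\in\mathcal{H}(\mu)$. By Hahn--Banach, $\|w\|_W=\sup_{\|f\|_{W^*}\leq 1}|f(w)|$, and by the very definition of the Cameron--Martin norm, $|f(w)|\leq \|w\|_{\mathcal{H}(\mu)}\sqrt{q_\mu(f,f)}$. The bound $q_\mu(f,f)\leq \|f\|_{W^*}^2\int_W\|z\|_W^2\,\dd\mu(z)$ computed exactly as in Proposition \ref{prop_bil} then gives $|f(w)|\leq C\|f\|_{W^*}\|w\|_{\mathcal{H}(\mu)}$ with $C^2=\int_W\|z\|_W^2\,\dd\mu(z)$, which is finite by the corollary to Fernique's theorem. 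Taking the supremum over $\|f\|_{W^*}\leq 1$ yields the embedding inequality.

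Next, let $\{w_n\}\subset\mathcal{H}(\mu)$ be Cauchy in $\|\cdot\|_{\mathcal{H}(\mu)}$. The embedding inequality implies $\{w_n\}$ is Cauchy in $W$, so it converges to some $w\in W$. The Cauchy hypothesis also gives a uniform bound $M:=\sup_n\|w_n\|_{\mathcal{H}(\mu)}<\infty$. For each $f\in W^*\setminus\{0\}$, continuity of $f$ on $W$ gives $f(w)=\lim_n f(w_n)$, so
\[
|f(w)|=\lim_n|f(w_n)|\leq M\sqrt{q_\mu(f,f)}.
\]
Dividing by $\sqrt{q_\mu(f,f)}$ and taking the supremum over $f$ shows $\|w\|_{\mathcal{H}(\mu)}\leq M$, hence $w\in\mathcal{H}(\mu)$.

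Finally, I would check $w_n\to w$ in $\mathcal{H}(\mu)$. Fix $\varepsilon>0$ and choose $N$ with $\|w_n-w_m\|_{\mathcal{H}(\mu)}<\varepsilon$ for all $n,m\geq N$. For any $f\in W^*\setminus\{0\}$ and $n\geq N$, the same continuity argument gives
\[
|f(w_n-w)|=\lim_{m\to\infty}|f(w_n-w_m)|\leq \liminf_{m\to\infty}\|w_n-w_m\|_{\mathcal{H}(\mu)}\sqrt{q_\mu(f,f)}\leq \varepsilon\sqrt{q_\mu(f,f)},
\]
so $\|w_n-w\|_{\mathcal{H}(\mu)}\leq\varepsilon$ for $n\geq N$. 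The main subtlety, if any, is the first step: one needs Fernique's theorem to know the second moment of $\|\cdot\|_W$ under $\mu$ is finite, which is exactly what powers the continuous embedding into $W$; everything else is a clean lower-semicontinuity argument using $|f(\cdot)|\leq\|\cdot\|_{\mathcal{H}(\mu)}\sqrt{q_\mu(f,f)}$.
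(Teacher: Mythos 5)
Your proof is correct and follows essentially the same route as the paper: the continuous embedding $\|w\|_W\leq C\|w\|_{\mathcal{H}(\mu)}$ via Hahn--Banach and the bound $q_\mu(f,f)\leq\|f\|_{W^*}^2\int_W\|z\|_W^2\,\dd\mu$, then transporting the Cauchy sequence to $W$ and verifying membership and convergence of the limit through the defining supremum. Your lower-semicontinuity phrasing of the last step is a slightly cleaner packaging of the paper's triangle-inequality splitting $f(h_n-h)=f(h_n-h_m)+f(h_m-h)$, but it is the same argument.
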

\begin{proof}
We first show that there is some $c>0$ such that for all $w\in \mathcal{H}(\mu)$
\[
\|w\|_W\leq c\|w\|_{\mathcal{H}(\mu)}.
\]
In other words $\iota:(\mathcal{H}(\mu),\|\enspace\|_{\mathcal{H}(\mu)})\hookrightarrow W$ is continuous. Let $w\in W\setminus\{0\}$. By the \textsf{Hahn-Banach} theorem we can choose $f\in W^*$ such that $\|f\|_{W^*}=1$ and $f(w)=\|w\|_W$. Moreover, by proposition \ref{prop_bil} we have that $\|f\|_{q_\mu}\leq \tilde c\|f\|_{W^*}$ and thus $\|f\|_{q_\mu}\leq \tilde c$. Now 
\[
\|w\|_W=f(w)=\vert f(w)\vert=\frac{\vert f(w)\vert}{\|f\|_{W^*}}\leq c\frac{\vert f(w)\vert}{\|f\|_{q_\mu}}\leq c\|w\|_{\mathcal{H}(\mu)},
\]
with $c=\frac{1}{\tilde c}$. To show that $\mathcal{H}(\mu)$ is complete, let $\{h_n\}$ be a Cauchy sequence in $\mathcal{H}(\mu)$ with respect to $\|\enspace\|_{\mathcal{H}(\mu)}$. By the previuous section it is Cauchy in $(W,\|\enspace\|_W)$. Since $W$ is complete, there is some $h\in W$ such that $h_n\to h$ in $(W,\|\enspace\|_W)$. We claim that $h\in \mathcal{H}(\mu)$ and $h_n\to h$ in $\mathcal{H}(\mu)$. Let $\varepsilon>0$. Choose $m\geq n$ large enough such that 
\[
f(h_n-h)<\varepsilon\sqrt{q_\mu(f,f)},
\]
(use $h_n\to h$ in $W$ and $f\in W^*$). Therefore 
\[
\frac{\vert f(h_n-h)\vert}{\sqrt{q_\mu(f,f)}}\leq \frac{\vert f(h_n-h_m)\vert}{\sqrt{q_\mu(f,f)}}+\frac{\vert f(h_m-h)\vert}{\sqrt{q_\mu(f,f)}}\leq \|h_n-h_m\|_{\mathcal{H}(\mu)}+\varepsilon.
\]
This shows that $\|h_n-h\|_{\mathcal{H}(\mu)}<\infty$ implies that $h\in\mathcal{H}(\mu)$ and $\|h_n-h\|_{\mathcal{H}(\mu)}\xrightarrow{n\to\infty}0$. 
\end{proof}
Even though we haven been able to show that $\mathcal{H}(\mu)$ is a Banach space, so far, we haven't done anything to show that $\mathcal{H}(\mu)\not=\{0\}$. In order to see this, and that $\mathcal{H}(\mu)$ is a Hilbert space, we will use \textsf{Bochner integrals}. 

\subsubsection{Digression on Bochner integrals}
Let $(\Omega,\sigma(\Omega),\mu)$ be a measure space and $W$ a Banach space. 
\begin{defn}[Bochner integrable]
Let $f:\Omega\to W$ be a measurable map, where we consider the Borel $\sigma$-algebra on $W$. We say $f$ is \textsf{Bochner integrable} if 
\[
\int_\Omega\| f(\omega)\|_W\dd\mu(\omega)<\infty.
\]
\end{defn}
If $f$ is Bochner integrable, it is possible to define $\int_\Omega f(\omega)\dd\mu(\omega)\in W$, i.e. a $W$-valued integral on $\Omega$, which is called the \textsf{Bochner integral} of $f$. For us $(\Omega,\sigma(\Omega),\mu)$ will be $(W,\calB(W),\mu)$. Let $f\in W^*$, then 
\[
\int_W\|wf(w)\|_W\dd\mu(w)\leq \|f\|_{W^*}\int_W\|w\|_W^2\dd\mu(w)<\infty,
\]
by Fernique's theorem. This implies that, given $f\in W^*$, the map $w\mapsto wf(w)$ is Bochner integrable. Hence $\int_W wf(w)\dd\mu(w)\in W$. For $f\in W^*$ we define 
\[
J(f):=\int_Wwf(w)\dd\mu(w)\in W.
\]
\begin{exe}
Show that $J(f)\in \mathcal{H}(\mu)$ and $\|J(f)\|_{\mathcal{H}(\mu)}\leq \|f\|_{q_\mu}$.
\end{exe}
As a consequence of the exercise we see that $J:W^*\to \mathcal{H}(\mu)$ is a contraction if $W^*$ is endowed with the norm induced by $q_\mu$, i.e.
\[
\|f\|_{q_\mu}=\|f\|_{L^2(W,\mu)}.
\]
This shows that $\mathcal{H}(\mu)\not=\{0\}$. Since $W^*$ is dense in $K$, we have an isometry $J:K\to\mathcal{H}(\mu)$. Next, we show that $J$ is surjective. Given $h\in\mathcal{H}(\mu)$ such that the map $f\mapsto f(h)$ is continuous on $(W^*,q_\mu)$. We call this map $\hat h$. Note that $\hat h$ extends to a continuous linear functional on $K$. Hence, by Riesz's theorem, $\hat h$ can be identified with a element of $\hat h\in K$. It is easy to check that $J(\hat h)=h$. Thus $J$ is an isomorphism of Banach spaces. Now we give $\mathcal{H}(\mu)$ the Hilbert space structure induced by $J$. Since $K$ is a seperable Hilbert space, so is $\mathcal{H}(\mu)$. We have the following theorem
\begin{thm}[Gross]
\label{gross3}
Given a real seperable Banach space and a Gaussian measure $\mu$, there exist a Hilbet space $\mathcal{H}(\mu)\subseteq W$ such that $\mathcal{H}(\mu)\hookrightarrow W$ is a linear continuous dense embedding and 
\[
\widehat{\mu}(f)=\ee^{-\frac{1}{2}\|f\|^2_{\mathcal{H}(\mu)}},\hspace{0.5cm}\forall f\in W^*.
\]
\end{thm}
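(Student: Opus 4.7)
The plan is to assemble the pieces already developed in the preceding subsection. The Cameron--Martin space $\mathcal{H}(\mu)=\{w\in W\mid \|w\|_{\mathcal{H}(\mu)}<\infty\}$ has been shown to be a Banach space continuously embedded in $W$, so the remaining tasks are: equip it with a Hilbert inner product, establish density in $W$, and verify the formula for $\widehat{\mu}$.

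First I would exploit Bochner integration. Define $J\colon W^*\to W$ by $J(f):=\int_W w\,f(w)\,\dd\mu(w)$; Fernique's theorem gives $\int_W \|w\|_W^2\,\dd\mu(w)<\infty$, so the integrand is Bochner integrable. A direct check shows $g(J(f))=q_\mu(f,g)$ for every $g\in W^*$, which together with the supremum definition of $\|\cdot\|_{\mathcal{H}(\mu)}$ yields $J(f)\in\mathcal{H}(\mu)$ with $\|J(f)\|_{\mathcal{H}(\mu)}\leq \|f\|_{q_\mu}$. Since $W^*$ is dense in the Hilbert space $K\subseteq L^2(W,\mu)$ (the closure of $T(W^*)$), the map extends to a contraction $J\colon K\to\mathcal{H}(\mu)$. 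For surjectivity (as sketched in the excerpt), any $h\in\mathcal{H}(\mu)$ induces the continuous functional $f\mapsto f(h)$ on $(W^*,q_\mu)$ which extends to $K$, and Riesz's theorem produces $\hat h\in K$ with $J(\hat h)=h$. Thus $J$ is an isometric isomorphism of Banach spaces, and pulling back the inner product of $K$ turns $\mathcal{H}(\mu)$ into a separable Hilbert space whose norm coincides with the supremum norm already defined.

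For density, I would argue by contradiction via Hahn--Banach: if $\overline{\mathcal{H}(\mu)}\neq W$, there is $f\in W^*\setminus\{0\}$ vanishing on $\mathcal{H}(\mu)$. Then $f(J(g))=q_\mu(f,g)=0$ for every $g\in W^*$, forcing $q_\mu(f,f)=0$; the pushforward $f_*\mu$ is therefore $\delta_0$, so $f=0$ $\mu$-a.e. Provided $W$ coincides with the closed support of $\mu$ (the standard running convention in this setting; otherwise replace $W$ by $\overline{\mathrm{supp}\,\mu}$), continuity of $f$ forces $f\equiv 0$, a contradiction.

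Finally, for the characteristic functional, fix $f\in W^*$. Since $\mu$ is Gaussian, $f_*\mu$ is a centered Gaussian on $\mathbb{R}$ with variance $\int_W f(w)^2\,\dd\mu(w)=q_\mu(f,f)$, so
$$\widehat{\mu}(f)=\int_W \ee^{\I f(w)}\,\dd\mu(w)=\int_\mathbb{R}\ee^{\I t}\,\dd(f_*\mu)(t)=\ee^{-\frac{1}{2}q_\mu(f,f)}.$$
Under the canonical chain $W^*\subseteq K \cong \mathcal{H}(\mu)$ given by $f\mapsto J(f)$, one has $\|J(f)\|_{\mathcal{H}(\mu)}^2=\|f\|_K^2=q_\mu(f,f)$, so reading $\|f\|_{\mathcal{H}(\mu)}^2$ through this identification completes the proof. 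The hard part will be the density step, where the implicit assumption that $\mu$ has full topological support is crucial; without it one only obtains density in $\overline{\mathrm{supp}\,\mu}$, which must either be adopted as the ambient space from the outset or handled by an explicit support hypothesis. The Bochner constructions and the Fourier--transform computation, by contrast, are routine once Fernique's theorem and the exercises preceding the statement are in hand.
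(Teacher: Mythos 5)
Your proposal is correct and follows essentially the same route as the paper, which establishes this theorem through the preceding construction of $J(f)=\int_W wf(w)\,\dd\mu(w)$, the isometric isomorphism $J\colon K\to\mathcal{H}(\mu)$, and the transported Hilbert structure. You in fact supply two steps the paper leaves implicit — the Hahn--Banach density argument (correctly flagging that it needs $\mu$ to have full topological support, or else one must work in $\overline{\mathrm{supp}\,\mu}$) and the explicit verification of $\widehat{\mu}(f)=\ee^{-\frac{1}{2}q_\mu(f,f)}$ from the pushforward $f_*\mu$ — so your write-up is, if anything, more complete than the paper's.
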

\begin{exe}
\label{Ex10}
Let $\mathcal{H}(\mu)$ be the Cameron-Martin space of $(W,\mu)$. Let $\{e_n\}\subseteq W^*$ be an orthonormal basis of $\mathcal{H}(\mu)$. Show that if $w\in W$, then $w\in \mathcal{H}(\mu)$ iff 
\[
\sum_{n=1}^\infty e_n(w)<\infty,
\]
and in this case 
\[
\|w\|^2_{\mathcal{H}(\mu)}=\sum_{n=1}^\infty e_n(w)^2.
\]
\end{exe}
\begin{cor}
$\mu(\mathcal{H}(\mu))=0$ if $\mathcal{H}(\mu)$ is infinite dimensional.
\end{cor}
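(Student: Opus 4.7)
The plan is to use the characterization in Exercise \ref{Ex10}: a point $w \in W$ lies in $\mathcal{H}(\mu)$ if and only if $\sum_{n=1}^\infty e_n(w)^2 < \infty$, where $\{e_n\} \subseteq W^*$ is an orthonormal basis of $\mathcal{H}(\mu)$. So the goal reduces to showing that this series diverges for $\mu$-almost every $w \in W$, which should follow from the strong law of large numbers once we know the $e_n$'s are i.i.d.\ standard Gaussians.

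First, I would establish that, viewed as real-valued random variables on $(W,\mu)$, the sequence $\{e_n\}$ is i.i.d.\ $N(0,1)$. Each $e_n \in W^*$, so by Theorem \ref{gross3} (and the definition of a Gaussian measure on $W$), the pushforward $(e_n)_*\mu$ is Gaussian on $\R$, centered, with variance $\|e_n\|_{\mathcal{H}(\mu)}^2 = 1$. Moreover, for any finite collection $\alpha_1,\dots,\alpha_k \in \R$, the linear combination $\sum_i \alpha_i e_i$ lies in $W^*$ and is therefore itself Gaussian; hence the vector $(e_1,\dots,e_k)$ is jointly Gaussian on $\R^k$. Under the isometry $J\colon K \to \mathcal{H}(\mu)$ of the previous discussion, the inner product on $\mathcal{H}(\mu)$ corresponds to $q_\mu$, so orthonormality of $\{e_n\}$ in $\mathcal{H}(\mu)$ reads $\int_W e_n e_m \, d\mu = \delta_{nm}$. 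For jointly Gaussian centered variables, uncorrelated implies independent, so the $e_n$'s are i.i.d.\ $N(0,1)$.

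With independence in hand, Kolmogorov's strong law of large numbers applied to the i.i.d.\ sequence $\{e_n(w)^2\}$ (which has finite mean $1$) gives
\[
\frac{1}{N}\sum_{n=1}^N e_n(w)^2 \xrightarrow{N\to\infty} 1 \qquad \mu\text{-a.e.}
\]
In particular, $\sum_{n=1}^\infty e_n(w)^2 = \infty$ for $\mu$-almost every $w \in W$. Combining with Exercise \ref{Ex10}, the set $\mathcal{H}(\mu) = \{w \in W \mid \sum_n e_n(w)^2 < \infty\}$ is $\mu$-null, which is the desired conclusion.

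The main conceptual step is the identification of orthonormality in $\mathcal{H}(\mu)$ with joint independence as random variables on $(W,\mu)$; everything else is a direct application of the strong law. Note that infinite-dimensionality of $\mathcal{H}(\mu)$ is used implicitly to guarantee an actual infinite orthonormal sequence, so that the a.s.\ divergence of the series is meaningful.
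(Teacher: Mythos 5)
Your proof is correct and follows essentially the same route as the paper: identify the orthonormal basis $\{e_n\}\subseteq W^*$ as i.i.d.\ centered Gaussians of variance $1$ on $(W,\mu)$, apply the strong law of large numbers to get $\sum_n e_n(w)^2=\infty$ for $\mu$-a.e.\ $w$, and conclude via the characterization of $\mathcal{H}(\mu)$ in Exercise \ref{Ex10}. Your justification of independence (joint Gaussianity plus uncorrelatedness) is a welcome detail that the paper merely asserts.
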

\begin{proof}
Let $\{e_n\}\subseteq W^*$ be as in the previuous exercise. Then $\{e_n\}$ is a sequence of independent, identically distributed random variables with mean $0$ and variance $1$. Hence, by the law of large numbers, we get
\[
\sum_{n=1}^\infty e_n(w)^2=\mu\hspace{0.5cm}a.e.
\]
Whereas by exercise \ref{Ex10} we have 
\[
\mathcal{H}(\mu)=\left\{w\in W\Big| \sum_{n=1}^\infty e_n(w)^2<\infty\right\}.
\]
This implies that $\mu(\mathcal{H}(\mu))=0$.
\end{proof}
The is a different way to understand the Cameron-Martin space.
\begin{thm}[Cameron-Martin]
Let $\mathcal{H}(\mu)$ be the Cameron-Martin space of $(W,\mu)$, $h\in W$ and $T_h:W\to W$ given by $T_h(w)=w-h$. If $h\in\mathcal{H}(\mu)$, then $\mu_h=(T_h)_*\mu$ is absolutely continuous with respect to $\mu$ and 
\begin{equation}
\label{cam_mart}
\frac{\dd\mu_h}{\dd\mu}=\ee^{-\frac{1}{2}\|h\|^2_{\mathcal{H}(\mu)}}\ee^{\langle h,\enspace\rangle}.
\end{equation}
\end{thm}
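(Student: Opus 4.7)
The plan is to identify $\mu_h$ with the candidate measure
\[
\nu:=\ee^{-\tfrac12\|h\|^2_{\mathcal{H}(\mu)}}\ee^{\langle h,\cdot\rangle}\mu
\]
by matching characteristic functionals on $W^*$ and invoking injectivity of the Fourier transform on Borel probability measures on the separable Banach space $W$. First I would make the pairing $\langle h,\cdot\rangle$ precise for arbitrary $h\in\mathcal{H}(\mu)$: via the Hilbert space isomorphism $J\colon K\to\mathcal{H}(\mu)$ constructed above, let $\hat h\in K\subseteq L^2(W,\mu)$ be the unique element with $J(\hat h)=h$, and set $\langle h,w\rangle:=\hat h(w)$ for $\mu$-almost every $w\in W$ (with an overall sign chosen to match $T_h(w)=w-h$).

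The central technical step is to show that for every $f\in W^*$, the pair $(f,\hat h)$ is jointly centered Gaussian on $(W,\mu)$, with covariance
\[
\bbE_\mu[f^2]=\|f\|^2_{\mathcal{H}(\mu)},\qquad \bbE_\mu[\hat h^2]=\|h\|^2_{\mathcal{H}(\mu)},\qquad \bbE_\mu[f\cdot\hat h]=q_\mu(f,\hat h)=f(h).
\]
The first equality is Theorem~\ref{gross3}; the second is $\|\hat h\|_K^2=\|h\|_{\mathcal{H}(\mu)}^2$ from the isometry $J$; and the third uses the Bochner representation $J(f)=\int_W wf(w)\dd\mu(w)$ together with continuity of $f$, passing to an $L^2$-approximation of $\hat h$ by elements of $W^*$. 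Joint Gaussianity itself follows because any finite linear combination of elements of $W^*$ lies in $W^*$ and hence is Gaussian under $\mu$, so finite collections from $W^*$ are jointly Gaussian; joint centered Gaussianity is preserved under $L^2(\mu)$-limits, which is precisely how $\hat h\in K$ is produced from $W^*$.

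Once joint Gaussianity is secured, the standard bilinear moment generating function for a $2$-dimensional centered Gaussian yields
\[
\int_W \ee^{\I f(w)+\hat h(w)}\dd\mu(w)=\exp\!\left(\tfrac12\|h\|^2_{\mathcal{H}(\mu)}+\I f(h)-\tfrac12\|f\|^2_{\mathcal{H}(\mu)}\right).
\]
Evaluating at $f=0$ shows that $\nu$ has total mass $1$, and for general $f\in W^*$ one obtains $\widehat\nu(f)=\ee^{\I f(h)-\frac12\|f\|^2_{\mathcal{H}(\mu)}}$. On the other hand, a direct change of variables combined with Theorem~\ref{gross3} gives
\[
\widehat{\mu_h}(f)=\int_W \ee^{\I f(w-h)}\dd\mu(w)=\ee^{-\I f(h)}\widehat\mu(f)=\ee^{-\I f(h)-\tfrac12\|f\|^2_{\mathcal{H}(\mu)}}.
\]
With the sign convention $\langle h,w\rangle=-\hat h(w)$ (dictated by the direction of $T_h$), the two characteristic functionals agree on $W^*$, and injectivity of the Fourier transform on probability measures on $W$ forces $\mu_h=\nu$. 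The main obstacle is the second step: one must promote the Gaussianity that is explicit only on $W^*$ through the $L^2(\mu)$-closure $K$ to the abstract element $\hat h$, both as a univariate property and jointly with arbitrary $f\in W^*$, in order to legitimately apply the finite-dimensional Gaussian identity.
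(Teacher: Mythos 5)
Your proposal is correct and follows essentially the same route as the paper: compute the characteristic functional of $\mu_h$ by a change of variables and the characteristic functional of the candidate measure via the Gaussian moment formula for $f-\I\tilde h$, then match them on $W^*$. In fact you make explicit two points the paper's proof glosses over --- the justification that $(f,\hat h)$ is jointly centered Gaussian (by passing the Gaussianity on $W^*$ through the $L^2(W,\mu)$-closure $K$), and the sign mismatch between $\widehat{\mu_h}(f)=\ee^{-\I f(h)}\ee^{-\frac{1}{2}q_\mu(f,f)}$ and the candidate's $\ee^{+\I f(h)}\ee^{-\frac{1}{2}q_\mu(f,f)}$, which you correctly resolve by fixing the sign convention in the pairing $\langle h,\cdot\rangle$ to match $T_h(w)=w-h$.
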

\begin{proof}
We will compute the Fourier transform of $\mu_h$ and $\ee^{-\frac{1}{2}\|h\|^2_{\mathcal{H}(\mu)}}\ee^{\langle h,\enspace\rangle}\mu$. Let thus $f\in W^*$. Then we have 
\[
\widehat{\mu}_h(f)=\int_W\ee^{\I f(w)}\dd\mu_h(w)=\int_W\ee^{\I f(w-h)}\dd\mu(w)=\ee^{-\I f(h)}\ee^{-\frac{1}{2}q_\mu(f,f)}.
\]
Ont he other hand, setting $\tilde h=J^{-1}(h)$, we get
\begin{align}
\begin{split}
\int_W\ee^{-\frac{1}{2}\| h\|^2}\ee^{\tilde h(w)}\ee^{\I f(w)}\dd\mu(w)&=\ee^{-\frac{1}{2}\|h\|^2}\int_W\ee^{\I(f-\I\tilde h)(w)}\dd\mu(w)\\
&=\ee^{-\frac{1}{2}\|h\|^2}\ee^{-\frac{1}{2}q_\mu(f-\I h,f-\I h)}\\
&=\ee^{\I f(h)}\ee^{-\frac{1}{2}q_\mu(f,f)}.
\end{split}
\end{align}
Thus $(T_h)_*\mu$ is absolutely continuous with respect to $\mu$ and \eqref{cam_mart} holds.
\end{proof}
\begin{rem}
$h\in W\setminus \mathcal{H}(\mu)$ implies that $\mu$ and $\mu_h$ are mutually singular. Hence $h\in\mathcal{H}(\mu)$ iff $(T_h)_*\mu\ll\mu$.
\end{rem}
\begin{ex}[Cameron-Martin space of the classical Wiener space]
Recall the classical Wiener space: we had $W=C_0([0,1])$ endowed with the Wiener measure $\mu$ together with the covariance 
\[
q_\mu(\ev_t,\ev_s)=\min\{s,t\},
\]
where $\ev$ denotes the evaluation map. One can check that $E=span\{\ev_t\mid t\in[0,1]\}$ is dense in $W^*$. First, let us compute $J:E\to W$. By definition we have 
\[
(J(\ev_t))(s)=\int_{C_0([0,1])}x(s)\ev_t(x)\dd\mu(x)=\int_{C_0([0,1])}x(s)x(t)\dd\mu(x)=\min\{s,t\},
\]
and thus $\frac{\dd}{\dd s}(J(\ev_t))=\chi_{[0,t]}$. Note htat $q_\mu(\ev_t,\ev_s)=\min\{s,t\}$. On the other hand
\[
\int_0^1\dot{J}(\ev_t)(u)\dot{J}(\ev_s)(u)\dd u=\int_0^1\chi_{[0,1]}(u)\chi_{[0,s]}(u)\dd u=\min\{s,t\}. 
\]
This shows that if $x,y\in \mathcal{H}(\mu)$, then 
\begin{equation}
\label{innerprod}
\langle x,y\rangle_{\mathcal{H}(\mu)}=\int_0^1\dot{x}(u)\dot{y}(u)\dd u.
\end{equation}
Thus we can write down the space $\mathcal{H}(\mu)=\{x\in C_0([0,1])\mid \dot{x}\in L^2([0,1])\}$ and the inner product on $\mathcal{H}(\mu)$ is given by \eqref{innerprod}. This means that the Winer measure is the standard Gaussian measure on $H^1([0,1])$, which is the space of $1$-Sobolev paths. This observation is due to Cameron-Martin, which was later generalized by Gross.
\end{ex}

\section{Wick ordering}
\subsection{Motivating example and construction}
Let $H_n(x)$ be the degree $n$ \textsf{Hermite polynomial} on $\R$. It can be defined recursively as follows 
\begin{align*}
H_0(x)&=1\\
\frac{\dd}{\dd x}H_n(x)&=n\frac{\dd}{\dd x}H_{n-1}(x)\\
\int_\R H_n(x)\dd\mu(x)&=0,\hspace{0.5cm}\text{where $\mu$ is the standard Gaussian measure on $\R$.}
\end{align*}
Moreover, Hermite polynomials are given by the generating function 
\begin{equation}
\label{generating}
\ee^{tx-\frac{1}{2}t^2}=\sum_{n=0}^\infty t^n H_n(x),
\end{equation}
i.e. $\ee^{tx-\frac{1}{2}t^2}$ is the generating function for $H_n(x)$. This is used as the best way to study properties of $H_n(x)$. 
\begin{exe}
Show that $H_n(x)=\ee^{-\frac{\Delta}{2}}(x^n)$, where $\Delta=-\frac{\dd^2}{\dd x^2}$ (this is a conceptual way to think about Wick ordering).
\end{exe}
The following statements hold:
\begin{enumerate}
\item{$H_n(x)$ form an orthonormal basis of $L^2(\R,\mu)$.
}
\item{$$\int_\R H_n(x)^2\dd\mu(x)=n!.$$
}
\end{enumerate}
Note that here $L^2(\R,\mu)=\bigoplus_{n=0}^\infty span(H_n(x))$, i.e. $H_n(x)\perp \bigoplus_{k=0}^{n-1}span(H_k(x))$. More generally, 
$$H_n(x^{t_1},...,x^{t_r})=H_{t_1}(x)\dotsm H_{t_r}(x),\hspace{0.5cm}t_1+\dotsm +t_r=n,$$
and one can check that Hermite polynomials form an orthogonal basis of $L^2(\R^n,\mu)$. On the other hand $\R$ is a Hilbert space, in fact it is also a Cameron-Martin space (CMS) for standard Gaussian measure $\mu$. We can then talk about the \textsf{Bosonic Fock space} of $\R$.
We have

\[
\begin{tikzcd}
\widetilde{Sym}^\bullet(\R)\ar[d,equal]\ar[r,equal]&\ar[d,equal]L^2(\R,\mu)\\
\bigoplus_{n\geq 0}Sym^n(\R)\arrow[r,""]&\bigoplus_{n\geq 0}span(H_n(x))
\end{tikzcd}
\]


where the arrow on the bottom represents a canonical isomorphism. More generally, we want to prove
\[
L^2(W,\mu)\cong \widetilde{Sym}^\bullet(H(\mu)),
\]
where the isomorphism is canonical. Let thus $W$ be a seperable Banach space and $\mu$ a centered Gaussian measure on $W$ and $H(\mu)$ be its Cameron-Martin space with covariance $q_\mu$. Let $f\in W^*$ (or $f\in H(\mu)$, since it doesn't matter). Define\footnote{In the physics literature, this is usually written without the brackets, i.e. $:f^n:$. We write the brackets to avoid confusion with the ``double dot'' of a function or when two such objects are multiplied.} then $(:f^n:)$ recursively as follows:
\begin{align*}
(:f^0:)&=1\\
\frac{\partial}{\partial f}(:f^n:)&=n(:f^{n-1}:)\\
\int_W(:f^n:)\dd\mu&=0.
\end{align*}
This definition sais that $(:f^n:)$ comes from $f^n$ in the same way as $H_n(x)$ comes from $x^n$.
Let us use generating functions to observe properties of $(:f^n:)$. Define 
\[
(:\ee^{\alpha f}:)=\sum_{k=0}^\infty\frac{\alpha^k}{k!}(:f^k:)
\]
One can then check that 
\begin{equation}
\label{identity1}
\int_W(:\ee^{\alpha f}:)\dd\mu=1.
\end{equation}
Moreover, by definition
\[
\frac{\partial}{\partial f}(:\ee^{\alpha f}:)=\alpha (:\ee^{\alpha f}:),
\]
which implies that $(:\ee^{\alpha f}:)=c\ee^{\alpha f}$, where $c$ is a constant, which needs to be determined. Using \eqref{identity1}, we see that $c=\frac{1}{\int_W \ee^{\alpha f}\dd\mu}$.
\begin{exe}
Use Wick's theorem to show that 
\begin{align*}
\int_W f^{2n+1}\dd\mu&=0,\\
\int_W f^{2n}\dd\mu&=\frac{(2n)!}{2^n n!}q_\mu(f,f)^n.
\end{align*}
\end{exe}
Hence, regarding $\ee^{\alpha f}$ as a formal power series in $\alpha$, we see that 
\[
\int_W\ee^{\alpha f}\dd\mu=\sum_{k=0}^\infty\frac{\alpha^k}{k!}\int_W f^k\dd\mu=\sum_{k=0}^\infty\frac{\alpha^{2k}}{(2k)!}\frac{(2k)!}{k!}(q_\mu(f,f))^k=\ee^{\frac{1}{2}\alpha^2 q_\mu(f,f)},
\]
because $(:\ee^{\alpha f}:)=\ee^{\alpha f}\ee^{-\frac{1}{2}\alpha^2 q_\mu(f,f)}$, which is exactly \eqref{generating} when $W=\R$ and $f(x)=x$. One can use the generating function for $(:f^n:)$ to show that 
\begin{align*}
(:f^n:)&=\sum_{k=0}^{[\frac{n}{2}]}\frac{n!}{k!(n-2k)!}f^{n-2k}\left(-\frac{1}{2}q_\mu(f,f)\right)^k\\
f^n&=\sum_{k=0}^\frac{n}{2}\frac{n!}{k!(n-2k)!}(:f^{n-2k}:)\left(\frac{1}{2}q_\mu(f,f)\right)^k
\end{align*}

\begin{prop}
$$\int_W(:f^n:)(:g^m:)\dd\mu=\delta_{nm}n!\langle f,g\rangle^n.$$
\end{prop}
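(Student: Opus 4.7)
The plan is to prove this by a generating function computation, exploiting the Wick exponential $(:\ee^{\alpha f}:)$ constructed just above the proposition. Recall that we have established $(:\ee^{\alpha f}:) = \ee^{\alpha f - \frac{1}{2}\alpha^2 q_\mu(f,f)}$, and (identifying $W^*$ inside $\mathcal{H}(\mu)$) $q_\mu(f,g) = \langle f,g\rangle_{\mathcal{H}(\mu)}$. Moreover, by Theorem \ref{gross3} (or directly from the formula $\int_W \ee^{\alpha f}\dd\mu = \ee^{\frac{1}{2}\alpha^2 q_\mu(f,f)}$ used to define the Wick exponential), we know the Laplace-type transform of $\mu$ on linear combinations $\alpha f + \beta g$.

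First, I would compute the doubly-generating integral. Multiplying the two Wick exponentials and using the definition,
\begin{equation*}
\int_W (:\ee^{\alpha f}:)(:\ee^{\beta g}:)\dd\mu
= \ee^{-\frac{1}{2}\alpha^2 \langle f,f\rangle - \frac{1}{2}\beta^2\langle g,g\rangle}\int_W \ee^{\alpha f+\beta g}\dd\mu.
\end{equation*}
Applying the Gaussian moment generating formula to the linear functional $\alpha f + \beta g\in W^*$ gives
\begin{equation*}
\int_W \ee^{\alpha f+\beta g}\dd\mu = \ee^{\frac{1}{2}(\alpha^2\langle f,f\rangle + 2\alpha\beta\langle f,g\rangle + \beta^2\langle g,g\rangle)},
\end{equation*}
so after cancellation the entire expression collapses to the clean identity
\begin{equation*}
\int_W (:\ee^{\alpha f}:)(:\ee^{\beta g}:)\dd\mu = \ee^{\alpha\beta\langle f,g\rangle}=\sum_{n\geq 0}\frac{(\alpha\beta)^n}{n!}\langle f,g\rangle^n.
\end{equation*}

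Second, I would expand the left-hand side using the series definition $(:\ee^{\alpha f}:)=\sum_{n}\frac{\alpha^n}{n!}(:f^n:)$ and integrate term by term:
\begin{equation*}
\int_W (:\ee^{\alpha f}:)(:\ee^{\beta g}:)\dd\mu = \sum_{n,m\geq 0}\frac{\alpha^n \beta^m}{n!\,m!}\int_W (:f^n:)(:g^m:)\dd\mu.
\end{equation*}
Matching coefficients of $\alpha^n\beta^m$ with the previous display immediately yields vanishing of the integral when $n\neq m$ and the value $n!\langle f,g\rangle^n$ when $n=m$, which is the claim.

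The main obstacle is justifying the term-by-term integration and the coefficient comparison, since $(:\ee^{\alpha f}:)$ is an infinite series in $\alpha$ with values in $L^2(W,\mu)$. The cleanest way around this is either to treat the whole computation as an identity of formal power series in $\alpha,\beta$ (which is legitimate because each $(:f^n:)$ is a polynomial in $f$ of degree $n$, and polynomials in $f\in W^*$ lie in every $L^p(W,\mu)$ by Fernique's theorem, so all coefficients are finite), or, if one prefers an analytic justification, to invoke dominated convergence using the Fernique bound $\int_W \ee^{\varepsilon\|w\|_W^2}\dd\mu<\infty$ together with the fact that $|f(w)|\leq \|f\|_{W^*}\|w\|_W$ makes $\ee^{|\alpha f|+|\beta g|}$ integrable for all $\alpha,\beta$. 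Either route gives the identification of coefficients and completes the proof.
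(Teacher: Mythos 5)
Your proof is correct and is essentially the argument the paper gives: both compute $\int_W(:\ee^{\alpha f}:)(:\ee^{\beta g}:)\dd\mu=\ee^{\alpha\beta q_\mu(f,g)}$ via the Wick exponentials and then compare coefficients of $\alpha^n\beta^m$; the paper merely packages the intermediate step as $(:\ee^{\alpha f}:)(:\ee^{\beta g}:)=(:\ee^{\alpha f+\beta g}:)\ee^{\alpha\beta q_\mu(f,g)}$ together with $\int_W(:\ee^{h}:)\dd\mu=1$, which is the same computation you carry out directly with the Gaussian moment generating function. Your added remarks on justifying the term-by-term integration go beyond what the paper bothers to address, but do not change the route.
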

\begin{proof} We have
\begin{align*} 
(:\ee^{\alpha f}:)&=\ee^{\alpha f-\frac{1}{2}\alpha^2 q_\mu(f,f)}\\
(:\ee^{\beta f}:)&=\ee^{\beta g-\frac{1}{2}\beta^2 q_\mu(g,g)}.
\end{align*}
Then 
\[
(:\ee^{\alpha f}:)(:\ee^{\beta g}:)=(:\ee^{\alpha f+\beta g}:)\ee^{\alpha \beta q_\mu(f,g)}.
\]
Thus 
\[
\int_W (:\ee^{\alpha f}:)(:\ee^{\beta g}:)\dd\mu=\ee^{\alpha \beta q_\mu(f,g)}\underbrace{\int_W(:\ee^{\alpha f+\beta g})\dd\mu}_{=1}=\sum_{k=0}^\infty \frac{(\alpha\beta)^k}{k!}q_\mu(f,g)^k.
\]
Comparing the coeefficients of $(\alpha\beta)^k$ we get 
\[
\frac{1}{(k!)^2}\int_W (:f^k:)^k(:g^k:)\dd\mu=\frac{1}{(k!)!}q_\mu(f,g)^k.
\]
\end{proof}
The questions is how to define $(:f^{k_1}g^{k_2}:)$? The idea is to use the recursive definition. Therefore we get for $n=n_1+\dotsm +n_k$
\begin{align*} 
(:f_1^0\dotsm f_k^0:)&=1,\\
\int_W(:f_1^{n_1}\dotsm f_k^{n_k}:)\dd\mu&=0,\\
\frac{\partial}{\partial f_i}(:f_1^{n_1}\dotsm f_k^{n_k}:)&=n_i(:f_1^{n_1}\dotsm f_i^{n_{i-1}}\dotsm f_k^{n_k}:).
\end{align*}
One can easily check that $(:(\alpha f+\beta g)^n:)=\sum_{k=1}^n\binom{n}{k}\alpha^k\beta^{n-k}(:f^kg^{n-k}:)$ by using generating functions.
\begin{exe}
Show that 
\[
\int_W(:f_1-f_n:)(:g_1-g_m:)\dd\mu=\begin{cases}0,&\text{if $m\not=n$}\\ \sum_{\sigma\in S_n}\prod_{i=1}^n\langle f_i,g_{\sigma(i)}\rangle,&\text{if $m=n$}\end{cases}
\]
\end{exe}
As in the finite dimensional case, given a polynomial function $P$ on $W$, i.e. $$P(w)=\sum_{i_1,...,i_k} a_{i_1}\dotsm a_{i_k}f_{i_1}^{\alpha_1}(w)\dotsm f_{i_k}^{\alpha_k}(w),$$ where $f_{i_1},...,f_{i_k}\in W^*$, formally 
\[
(:P:)=\ee^{-\frac{\Delta}{2}}P,
\]
where we need to check what $\Delta$ is. One can use the Cameron-Martin space to make sense of $\Delta$. Take an orthonormal basis $\{e_n\}$ of $\mathcal{H}(\mu)$. Then loosely speaking $\Delta=-\sum_{n=1}^\infty\frac{\partial^2}{\partial e_n^2}$. 

\subsection{Wick ordering as a value of Feynman diagrams}
Given $f_1,f_2,f_3,f_4\in W^*$, and $\gamma\in\{\{1,2\},\{3,4\}\}$, we can construct a diagram as follows 

\begin{figure}[!ht]
\centering
\tikzset{
particle/.style={thick,draw=black},
gluon/.style={decorate, draw=black,
    decoration={coil,aspect=0}}
 }
\begin{tikzpicture}[node distance=1.5cm and 2cm]
\node[](f1) at (0,0){};
\node[](f1L) at (-0.2,0){};
\node[](f11) at (-0.5,0){$f_1$};
\node[](f2) at (4,0){};
\node[](f2L) at (4.2,0){};
\node[](f22) at (4.5,0){$f_2$};
\node[](f3) at (0,-2){};
\node[](f3L) at (-0.2,-2){};
\node[](f33) at (-0.5,-2){$f_3$};
\node[](f4) at (4,-2){};
\node[](f4L) at (4.2,-2){};
\node[](f44) at (4.5,-2){$f_4$};
\node[](q1) at (2,0.5){$q_\mu$};
\node[](q2) at (2,-1.5){$q_\mu$};
\draw[fill=black](f1) circle (.07cm);
\draw[fill=black](f2) circle (.07cm);
\draw[fill=black](f3) circle (.07cm);
\draw[fill=black](f4) circle (.07cm);
\draw[particle] (f1L) -- (f2L);
\draw[particle] (f3L) -- (f4L);
\end{tikzpicture}
\end{figure}

The value of such a diagram will be $q_\mu(f_1,f_2)q_\mu(f_3,f_4)$.
\begin{defn}[Feynman diagram]
A \textsf{Feynman diagram} with $n$ vertices and rank $r$, where $r\leq\frac{n}{2}$, consists of a set $V$ called the \textsf{set of vertices} (thus $\vert V\vert=n$), and a set $H$ called the \textsf{set of half edges}, which consists of $r$ disjoint pair of vertices. The remaining vertices are called \textsf{unpaired vertices}, which will be denoted by $A$.
\end{defn}
\begin{ex}

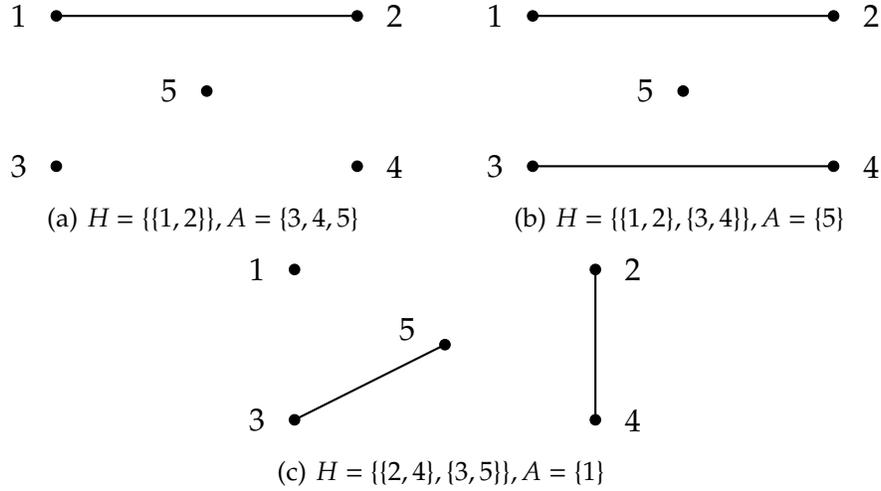
\begin{figure}[!ht]
\centering
\subfigure[$H=\{\{1,2\}\}$, $A=\{3,4,5\}$]{
\tikzset{
particle/.style={thick,draw=black},
gluon/.style={decorate, draw=black,
    decoration={coil,aspect=0}}
 }
\begin{tikzpicture}[node distance=1.5cm and 2cm]
\node[](f1) at (0,0){};
\node[](f1L) at (-0.2,0){};
\node[](f11) at (-0.5,0){$1$};
\node[](f2) at (4,0){};
\node[](f2L) at (4.2,0){};
\node[](f22) at (4.5,0){$2$};
\node[](f3) at (0,-2){};
\node[](f3L) at (-0.2,-2){};
\node[](f33) at (-0.5,-2){$3$};
\node[](f4) at (4,-2){};
\node[](f4L) at (4.2,-2){};
\node[](f44) at (4.5,-2){$4$};
\node[](f5) at (2,-1){};
\node[](f5L) at (1.5,-1){};
\node[](f55) at (1.5,-1){$5$};
\draw[fill=black](f1) circle (.07cm);
\draw[fill=black](f2) circle (.07cm);
\draw[fill=black](f3) circle (.07cm);
\draw[fill=black](f4) circle (.07cm);
\draw[fill=black](f5) circle (.07cm);
\draw[particle] (f1L) -- (f2L);
\end{tikzpicture}
}
\quad
\subfigure[$H=\{\{1,2\},\{3,4\}\}$, $A=\{5\}$]{
\tikzset{
particle/.style={thick,draw=black},
gluon/.style={decorate, draw=black,
    decoration={coil,aspect=0}}
 }
\begin{tikzpicture}[node distance=1.5cm and 2cm]
\node[](f1) at (0,0){};
\node[](f1L) at (-0.2,0){};
\node[](f11) at (-0.5,0){$1$};
\node[](f2) at (4,0){};
\node[](f2L) at (4.2,0){};
\node[](f22) at (4.5,0){$2$};
\node[](f3) at (0,-2){};
\node[](f3L) at (-0.2,-2){};
\node[](f33) at (-0.5,-2){$3$};
\node[](f4) at (4,-2){};
\node[](f4L) at (4.2,-2){};
\node[](f44) at (4.5,-2){$4$};
\node[](f5) at (2,-1){};
\node[](f5L) at (1.5,-1){};
\node[](f55) at (1.5,-1){$5$};
\draw[fill=black](f1) circle (.07cm);
\draw[fill=black](f2) circle (.07cm);
\draw[fill=black](f3) circle (.07cm);
\draw[fill=black](f4) circle (.07cm);
\draw[fill=black](f5) circle (.07cm);
\draw[particle] (f1L) -- (f2L);
\draw[particle] (f3L) -- (f4L);
\end{tikzpicture}
}
\quad
\subfigure[$H=\{\{2,4\},\{3,5\}\}$, $A=\{1\}$]{
\tikzset{
particle/.style={thick,draw=black},
gluon/.style={decorate, draw=black,
    decoration={coil,aspect=0}}
 }
\begin{tikzpicture}[node distance=1.5cm and 2cm]
\node[](f1) at (0,0){};
\node[](f1L) at (-0.2,0){};
\node[](f11) at (-0.5,0){$1$};
\node[](f2) at (4,0){};
\node[](f2L) at (4,0.1){};
\node[](f22) at (4.5,0){$2$};
\node[](f3) at (0,-2){};
\node[](f3L) at (-0.2,-2.1){};
\node[](f33) at (-0.5,-2){$3$};
\node[](f4) at (4,-2){};
\node[](f4L) at (4,-2.1){};
\node[](f44) at (4.5,-2){$4$};
\node[](f5) at (2,-1){};
\node[](f5L) at (2.1,-0.95){};
\node[](f55) at (1.5,-0.8){$5$};
\draw[fill=black](f1) circle (.07cm);
\draw[fill=black](f2) circle (.07cm);
\draw[fill=black](f3) circle (.07cm);
\draw[fill=black](f4) circle (.07cm);
\draw[fill=black](f5) circle (.07cm);
\draw[particle] (f4L) -- (f2L);
\draw[particle] (f3L) -- (f5L);
\end{tikzpicture}
}
\caption{Examples of Feynman graphs with $V=\{1,2,3,4,5\}$ and different $H$ and $A$.}
\label{Feynman_graph}
\end{figure}
\end{ex}

If we are given $f_1,...,f_n\in W^*$, we can think of them as vertices of a Feynman diagram. Hence, given $f_1,...,f_n\in W^*$ and a Feynman diagram of rank $r$, i.e. 
\[
\gamma(H)=\{\{i_1,j_1\},...,\{i_n,j_n\}\},
\]
we define the value $F(f_1,...,f_n;\gamma)$ of the Feynman diagram as 
\[
F(f_1,...,f_n;\gamma)=\left(\prod_{k=1}^rq_\mu(f_{i_k},f_{j_k})\right)\prod_{i\in A}f_i.
\]
Moreover, we say that $\gamma$ is \textsf{complete} if $n=2r$. With this notation, we can rephrase Wick's theorem as 
\[
\int f_1\dotsm f_n\dd\mu=\sum_{\text{$\gamma$ complete}\atop\text{Feynman diagram}}F(\gamma).
\]

\subsection{Abstract point of view on Wick ordering}
\begin{defn}[Gaussian Hilbert space]
A \textsf{Gaussian Hilbert space} $\mathcal{H}$ is a Hilbert space of random variables on some probability space $(\Omega,\sigma(\Omega),\mu)$ such that each $f\in\mathcal{H}$ is a Gaussian on $\R$, i.e. $f\in \mathcal{H}$, then $f_*\mu$ is Gaussian on $\R$. 
\end{defn}
\begin{ex}
If $K$ denotes the completion of $W^*$ in $L^2(W,\mu)$, then $K$ is a Gaussian Hilbert space. We assume that $\sigma(\Omega)$ is generated by elements of $\mathcal{H}$.
\end{ex}
Given a Gaussian Hilbert space $\mathcal{H}\subseteq L^2(\Omega,\sigma(\Omega),\mu)$, define the set 
\[
P_n(\mathcal{H})=\{P(\xi_1,...,\xi_n)\mid\xi_1,...,\xi_n\in \mathcal{H},\text{ $P$ is a polynomial of degree $\leq n$}\},
\]
and define 
\[
\mathcal{H}^{:n:}=\overline{P_n}(\mathcal{H})\cap (\overline{P_{n-1}}(\mathcal{H}))^\perp.
\]
We can then observe\footnote{We will define the \textsf{completed} direct sum at some later point.} that 
\begin{enumerate}
\item{
\[
\overline{P_n}(\mathcal{H})=\bigoplus_{k=0}^n\mathcal{H}^{:k:},
\]
}
\item{
\[
\bigoplus_{n=0}^\infty\mathcal{H}^{:n:}=\overline{\bigcup_{n=0}^\infty P_n(\mathcal{H})}.
\]
}
\end{enumerate}
\begin{thm}
\[
L^2(\Omega,\sigma(\Omega),\mu)=\bigoplus_{n=0}^\infty\mathcal{H}^{:n:}.
\]
\end{thm}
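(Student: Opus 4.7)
The plan is to reduce, via the two observations stated immediately before the theorem, to the density of the algebra of polynomials in Gaussian random variables inside $L^2(\Omega,\sigma(\Omega),\mu)$. Observation (2) identifies $\bigoplus_{n=0}^\infty \mathcal{H}^{:n:}$ with the $L^2$-closure $\overline{\bigcup_{n\geq 0} P_n(\mathcal{H})}$, so the theorem is equivalent to the statement that this closure is all of $L^2$. I prove this by showing that the orthogonal complement is trivial: assume $g\in L^2$ satisfies $\int g\cdot P(\xi_1,\dots,\xi_n)\,\dd\mu=0$ for every $n$, every $\xi_1,\dots,\xi_n\in\mathcal{H}$ and every polynomial $P$, and deduce $g=0$ $\mu$-a.e.

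Fix a finite collection $\xi_1,\dots,\xi_n\in\mathcal{H}$ and set $\mathcal{F}_n:=\sigma(\xi_1,\dots,\xi_n)$. Conditioning gives $\mathbb{E}[g\mid\mathcal{F}_n]=h_n(\xi_1,\dots,\xi_n)$ for some Borel $h_n\colon\R^n\to\R$, and the hypothesis becomes $\int_{\R^n} h_n\cdot P\,\dd\mu_n=0$ for every polynomial $P$, where $\mu_n:=(\xi_1,\dots,\xi_n)_*\mu$ is a Gaussian measure on $\R^n$. The next ingredient is the finite-dimensional density lemma: polynomials are dense in $L^2(\R^n,\mu_n)$. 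To establish it I would take $h\in L^2(\R^n,\mu_n)$ orthogonal to every monomial, form
\[
F(z):=\int_{\R^n} h(x)\,\ee^{\langle z,x\rangle}\,\dd\mu_n(x),\qquad z\in\mathbb{C}^n,
\]
and note that the Cauchy--Schwarz bound $|F(z)|\le\|h\|_{L^2(\mu_n)}\bigl(\int \ee^{2\langle\mathrm{Re}\,z,x\rangle}\,\dd\mu_n\bigr)^{1/2}$, together with the finiteness of the Gaussian moment generating function, guarantees that $F$ is well-defined and entire on $\mathbb{C}^n$ and that differentiation under the integral is legal. The Taylor coefficients of $F$ at $0$ are (up to factorials) the vanishing inner products of $h$ with monomials, so $F\equiv 0$; specializing $z=\I t$ yields that the Fourier transform of the finite measure $h\,\dd\mu_n$ is identically zero, hence $h=0$ $\mu_n$-a.e. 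Applied to $h_n$ this gives $\mathbb{E}[g\mid\mathcal{F}_n]=0$ for every finite $n$.

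To conclude, I use separability of $\mathcal{H}$, a standing assumption in this setting, to choose a countable total family $\{\xi_k\}_{k\geq 1}\subseteq\mathcal{H}$ and form $\mathcal{G}_n:=\sigma(\xi_1,\dots,\xi_n)$. Since $\sigma(\Omega)$ is generated by $\mathcal{H}$ we have $\mathcal{G}_n\uparrow\sigma(\Omega)$, and Doob's $L^2$ martingale convergence theorem gives $\mathbb{E}[g\mid\mathcal{G}_n]\to g$ in $L^2$. Every term vanishes by the previous step, so $g=0$ $\mu$-a.e., showing that the orthogonal complement of $\bigcup_n P_n(\mathcal{H})$ is trivial.

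The main obstacle will be the finite-dimensional density step: because $h$ is only $L^2$ and not bounded, justifying that $F$ is entire with Taylor coefficients equal to $\int h\cdot x^\alpha\,\dd\mu_n$ requires the Gaussian moment-generating estimate above to exchange the series expansion of $\ee^{\langle z,x\rangle}$ with the integral. Once that exchange is in place the rest is a clean combination of conditioning, the finite-dimensional density lemma, and martingale convergence.
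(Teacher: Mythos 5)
Your argument is correct, and it is essentially the standard proof of this fact (the one found in Janson's \emph{Gaussian Hilbert Spaces}, to which the paper's empty ``See'' proof implicitly defers --- the paper itself supplies no argument). The reduction via observation (2) to density of $\bigcup_n P_n(\mathcal{H})$, the conditioning on $\mathcal{F}_n=\sigma(\xi_1,\dots,\xi_n)$, the finite-dimensional density lemma via the entire function $F(z)=\int h\,\ee^{\langle z,x\rangle}\,\dd\mu_n$ and injectivity of the Fourier transform of the finite signed measure $h\,\dd\mu_n$, and the martingale-convergence finish are all sound. You correctly identify the one genuinely technical point: the exchange of the series $\ee^{\langle z,x\rangle}=\sum_\alpha z^\alpha x^\alpha/\alpha!$ with the integral is licensed by Cauchy--Schwarz against the Gaussian exponential moment $\int\ee^{2\langle\mathrm{Re}\,z,x\rangle}\,\dd\mu_n<\infty$, which dominates the partial sums. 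Two small points to make explicit in a final write-up: first, separability of $\mathcal{H}$ is available here because $\mathcal{H}$ is a closed subspace of $L^2$ of a probability space generated by countably many variables in the intended applications, but in the abstract setting you should simply note that it suffices to prove $g\perp P(\xi_1,\dots,\xi_n)$ forces $\mathbb{E}[g\mid\mathcal{F}]=0$ for every finite $\mathcal{F}$, and then invoke the assumption that $\sigma(\Omega)$ is generated by $\mathcal{H}$; second, the identification $\sigma(\xi_1,\xi_2,\dots)=\sigma(\Omega)$ for a countable dense family holds only up to $\mu$-null sets (an $L^2$-limit of the $\xi_k$ is measurable for the completed $\sigma$-algebra), which is harmless for the martingale convergence step but worth a sentence.
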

\begin{proof}
See 
\end{proof}
\begin{rem}
This is just a way to say that ``polynomial'' random variables are dense in $L^2(\Omega,\sigma(\Omega),\mu)$.
\end{rem}
\begin{thm}
Let $\xi_1,...,\xi_n\in\mathcal{H}$. Then 
\[
(:\xi_1\dotsm \xi_n:)=\pi_n(\xi_1\dotsm \xi_n),
\]
where $\pi_n:L^2(\Omega,\sigma(\Omega),\mu)\to \mathcal{H}^{:n:}$ is the projection.
\end{thm}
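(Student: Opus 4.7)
The plan is to verify the two defining conditions of the projection $\pi_n$: that $(:\xi_1\cdots\xi_n:)$ lies in $\mathcal{H}^{:n:}$, and that the difference $\xi_1\cdots\xi_n-(:\xi_1\cdots\xi_n:)$ lies in $\overline{P_{n-1}}(\mathcal{H})$. By uniqueness of the orthogonal decomposition $\overline{P_n}(\mathcal{H})=\mathcal{H}^{:n:}\oplus\overline{P_{n-1}}(\mathcal{H})$ this forces $(:\xi_1\cdots\xi_n:)=\pi_n(\xi_1\cdots\xi_n)$.

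First, I would record the \emph{explicit polynomial form} of $(:\xi_1\cdots\xi_n:)$. Starting from the multivariable generating identity
\[
(:\ee^{\alpha_1\xi_1+\cdots+\alpha_n\xi_n}:)=\ee^{\sum_i\alpha_i\xi_i-\tfrac{1}{2}q_\mu(\sum_i\alpha_i\xi_i,\sum_i\alpha_i\xi_i)},
\]
and reading off the coefficient of $\alpha_1\cdots\alpha_n$, one obtains
\[
(:\xi_1\cdots\xi_n:)=\sum_{\gamma}(-1)^{r(\gamma)}\Big(\prod_{\{i,j\}\in H(\gamma)}q_\mu(\xi_i,\xi_j)\Big)\prod_{k\in A(\gamma)}\xi_k,
\]
summed over Feynman diagrams $\gamma$ on $\{1,\dots,n\}$ with half-edges $H(\gamma)$ of rank $r(\gamma)$ and unpaired set $A(\gamma)$. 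The term with $\gamma=\emptyset$ is exactly $\xi_1\cdots\xi_n$, while all other terms have degree at most $n-2$ in the $\xi_i$'s. This immediately gives $(:\xi_1\cdots\xi_n:)\in P_n(\mathcal{H})\subseteq\overline{P_n}(\mathcal{H})$ and $\xi_1\cdots\xi_n-(:\xi_1\cdots\xi_n:)\in P_{n-1}(\mathcal{H})\subseteq\overline{P_{n-1}}(\mathcal{H})$, settling the second condition.

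Next, I would establish orthogonality $(:\xi_1\cdots\xi_n:)\perp\overline{P_{n-1}}(\mathcal{H})$. Since the closed subspace $\overline{P_{n-1}}(\mathcal{H})$ is spanned by monomials $\eta_1\cdots\eta_m$ with $m\leq n-1$ and $\eta_j\in\mathcal{H}$, it suffices to prove
\[
\int_\Omega(:\xi_1\cdots\xi_n:)\,\eta_1\cdots\eta_m\,\dd\mu=0\quad\text{for every }m<n.
\]
The cleanest route is a generating-function comparison: compute
\[
\int_\Omega(:\ee^{\alpha_1\xi_1+\cdots+\alpha_n\xi_n}:)\,\ee^{\beta_1\eta_1+\cdots+\beta_m\eta_m}\,\dd\mu=\ee^{\sum_{i,j}\alpha_i\beta_j q_\mu(\xi_i,\eta_j)+\tfrac{1}{2}\sum_{j,k}\beta_j\beta_k q_\mu(\eta_j,\eta_k)},
\]
by inserting the closed form of the Wick exponential and evaluating the resulting Gaussian integral with $\widehat{\mu}$. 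Expanding both sides as a formal power series and picking out the coefficient of $\alpha_1\cdots\alpha_n\beta_1\cdots\beta_m$, one sees that every term on the right contains a factor $\alpha_{i_1}\cdots\alpha_{i_n}$ multiplied by a polynomial in the $\beta_j$'s of degree \emph{at least} $n$ (each $\alpha_i$ forces a $\beta_j$-factor through $q_\mu(\xi_i,\eta_j)$); since $m<n$, the coefficient of $\beta_1\cdots\beta_m$ vanishes.

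Combining the three observations, $(:\xi_1\cdots\xi_n:)\in\overline{P_n}(\mathcal{H})\cap\overline{P_{n-1}}(\mathcal{H})^{\perp}=\mathcal{H}^{:n:}$ and $\xi_1\cdots\xi_n-(:\xi_1\cdots\xi_n:)\in\overline{P_{n-1}}(\mathcal{H})$, which is precisely the orthogonal decomposition that characterizes the projection, so $(:\xi_1\cdots\xi_n:)=\pi_n(\xi_1\cdots\xi_n)$. The main obstacle is the orthogonality step: the proposition proved earlier handles only the diagonal case $(:f^n:)$ versus $(:g^m:)$, and extending it to the multilinear expression above requires either polarization or—more efficiently—the multivariable generating-function identity written above, whose derivation relies on Fernique's theorem to justify interchanging the integral with the formal expansion in the $\alpha_i,\beta_j$.
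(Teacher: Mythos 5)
The paper states this theorem without proof (it is followed only by an interpretive remark), so there is no argument of the paper's to compare against; your proposal supplies a complete and correct one. The structure is exactly right: since $\overline{P_n}(\mathcal{H})=\mathcal{H}^{:n:}\oplus\overline{P_{n-1}}(\mathcal{H})$ orthogonally, it suffices to check that $(:\xi_1\dotsm\xi_n:)$ lies in $\mathcal{H}^{:n:}$ while the difference $\xi_1\dotsm\xi_n-(:\xi_1\dotsm\xi_n:)$ lies in $\overline{P_{n-1}}(\mathcal{H})$. Your explicit diagram expansion (signs $(-1)^{r(\gamma)}$ over partial pairings) is the correct inverse of Wick's theorem and immediately gives the second containment, and your generating-function computation
\[
\int_\Omega(:\ee^{g}:)\,\ee^{h}\,\dd\mu=\ee^{q_\mu(g,h)+\frac{1}{2}q_\mu(h,h)},\qquad g=\textstyle\sum_i\alpha_i\xi_i,\ h=\sum_j\beta_j\eta_j,
\]
does yield the orthogonality, since any monomial on the right containing $\alpha_1\dotsm\alpha_n$ must carry $\beta$-degree at least $n>m$. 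Two small points worth making explicit if you write this up: first, the multilinear Wick product $(:\xi_1\dotsm\xi_n:)$ in the abstract Gaussian Hilbert space setting should be pinned down via the same recursion (or equivalently via multilinearity, $(:(\sum_i\alpha_i\xi_i)^n:)$ expanded multinomially, which the paper only records as an exercise for two variables); second, the identity above holds pointwise in $(\alpha,\beta)$ as an identity of entire functions (using $\int\ee^{f}\dd\mu=\ee^{\frac{1}{2}q_\mu(f,f)}$ for centered Gaussians), so the coefficient comparison is legitimate without any delicate interchange of limits — Fernique is not really needed here, only the finiteness of all Gaussian exponential moments.
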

Hence, we saw that $(:\xi_1,...,\xi_n:)$ is nothing but an orthogonal projection of $\xi_1\cdot\xi_n$ onto $\mathcal{H}^{:n:}$. The idea here is that $\xi_1\dotsm\xi_n$ is not orthogonal to lower degree polynomials where as $(:\xi_1\dotsm\xi_n:)$ is orthogonal to lower degree polynomials. Hence, Wick ordering can be thought of as taking a polynomial and changing it into a new polynomial in such a way that the result is orthogonal to lower degree polynomials.

\section{Bosonic Fock Spaces}
Let $\mathcal{H}_1,\mathcal{H}_2$ be a seperable Hilbert spaces. Then we can look at the tensor product $\mathcal{H}_1\otimes\mathcal{H}_2$, where we have the inner product defined as 
\[
\langle h_1\otimes h_2,h_1'\otimes h_2'\rangle_{\mathcal{H}_1\otimes\mathcal{H}_2}:=\langle h_1,h_1'\rangle_{\mathcal{H}_1}\langle h_2,h_2'\rangle_{\mathcal{H}_2}.
\]
Moreover, we denote by $\mathcal{H}_1\widehat{\otimes}\mathcal{H}_2$ the completion of $\mathcal{H}_1\otimes\mathcal{H}_2$ with respect to $\langle\enspace,\enspace\rangle$. We call $\widehat{\otimes}$ the \textsf{Hilbert-Schmidt tensor product}. The space $\mathcal{H}^*_1\widehat{\otimes}\mathcal{H}_2$ is isomorphic to the space of Hilbert-Schmidt operators from $\mathcal{H}_1$ to $\mathcal{H}_2$. Let $\{\mathcal{H}\}_{i=0}^\infty$ be a familiy of Hilbert spaces. Then $\widehat{\bigotimes}_i \mathcal{H}_i$ is the completion of $\bigotimes_i\mathcal{H}_i$ with respect to the norm $\sum_i\|x_i\|_{\mathcal{H}_i}^2$, i.e. 
\[
\widehat{\bigotimes}_i\mathcal{H}_i=\left\{(x_i)\Big| \sum_{i=0}^\infty \|x_i\|_{\mathcal{H}_i}^2<\infty\right\}.
\]
We will drop $\widehat{\phantom{a}}$ from now on. We can also define the space $\mathcal{H}_1\otimes\dotsm\otimes \mathcal{H}_n$ and so on. Let now $\mathcal{H}$ be a real and sepereable Hilbert space. Then $T^n\mathcal{H}=\mathcal{H}^{\otimes n}$. Define the map $P:\mathcal{H}^{\otimes n}\to \mathcal{H}^{\otimes n}$ by $P(h_1\otimes\dotsm\otimes h_n)=\frac{1}{n!}\sum_{\sigma\in S_n}h_{\sigma(1)}\otimes\dotsm\otimes h_{\sigma(n)}$. Then $P$ is a projection. We define $Sym^n(\mathcal{H})=P(\mathcal{H}^{\otimes n})$. Now we can see that $S_n$ acts on $T^n\mathcal{H}$ and thus $Sym^n(\mathcal{H})$ is the invariant subspace of $T^n\mathcal{H}$ under this action. Given $h_1,...,h_n$, define 
\[
h_1\otimes_s\dotsm\otimes_s h_n=\frac{1}{\sqrt{n!}}\sum_{\sigma\in S_n}h_{\sigma(1)}\otimes\dotsm \otimes h_{\sigma(n)},
\]
i.e. $h_1\otimes_s\dotsm\otimes_s h_n=\sqrt{n!}P(h_1\otimes\dotsm\otimes h_n)$.
\begin{exe}
Show that
\[
\langle h_1\otimes_s\dotsm\otimes_sh_n,h_1'\otimes_s\dotsm\otimes_sh_n'\rangle_{\mathcal{H}^{\otimes n}}=\sum_{\sigma\in S_n}\prod_{i=1}^n\langle h_i,h_{\sigma(i)}\rangle_\mathcal{H}.
\]
In particular $\|h^{\otimes_s n}\|_{\calH^{\otimes_{s}n}}^2=n!\| h\|_{\calH}^{2n}$ (i.e. $\|h^{\otimes_sn}\|_{\calH^{\otimes_sn}}=\sqrt{n!}\|h\|_\calH$).
\end{exe}
Let us try to give an alternative definition for $Sym^n(\mathcal{H})$. For this, recall that closed subspaces are generated by elements of the form $h_1\otimes_s\dotsm \otimes_sh_n$. 
\begin{rem}
From now on we will not indicate the inner products.
\end{rem}
\begin{defn}[Bosonic Fock space]
The space $\widetilde{Sym}^\bullet(\mathcal{H})=\bigoplus_{n=0}^\infty Sym^n(\mathcal{H})$ is called the \textsf{Bosonic Fock space} of $\mathcal{H}$.
\end{defn}
\begin{rem}
We sometimes also write $\Gamma(\mathcal{H})$ or $\Exp(\mathcal{H})$ for the Bosonic Fock space.
\end{rem}
\begin{rem}
Similarly, one can define the \textsf{Fermionic Fock space} of $\mathcal{H}$ by 
\[
u_1\land\dotsm \land u_n=\frac{1}{\sqrt{n!}}\sum_{\sigma\in S_n}\sgn(\sigma)u_{\sigma(1)}\otimes\dotsm \otimes u_{\sigma(n)}.
\]
\end{rem}
One can check that $\Gamma(\mathcal{H}_1\otimes\mathcal{H}_2)=\Gamma(\mathcal{H}_1)\otimes\Gamma(\mathcal{H}_2)$. Now one can ask whether there is a functor $\mathcal{H}\mapsto\Gamma(\mathcal{H})$. Thus, given a bounded operator $A:\mathcal{H}_1\to\mathcal{H}_2$, we need to know whether $\Gamma(A)$ is bounded. As a matter of fact, this is not the case. On the other hand, If $\|A\|\leq 1$ then $\|\Gamma(A)\|\leq 1$. This leads to the functor 
\[
\Gamma:{\bf Hilb_B^{\leq 1}}\longrightarrow{\bf Hilb},
\]
where ${\bf Hilb_B^{\leq 1}}$ is the category with Hilbert spaces as objects and bounded linear operators of norm $\leq 1$ as morphisms and ${\bf Hilb}$ the category of Hilbert spaces.
\begin{rem}
In the Fermionic case, no restriction on $A$ is required. 
\end{rem}
Given $h\in\mathcal{H}$, we can define 
\[
\Exp(h)=\sum_{n=0}^\infty\frac{h^{\otimes n}}{n!}\in\Gamma(\mathcal{H}).
\]
Then, for $h_1,h_2\in\calH$, we observe
\[
\langle \Exp(h_1),\Exp(h_2)\rangle=\sum_{n=0}^\infty\frac{1}{(n!)^2}\langle h_1^{\otimes_sn},h_2^{\otimes_sn}\rangle=\sum_{n=0}^\infty\frac{1}{n!}\langle h_1,h_2\rangle^n=\Exp(\langle h_1,h_2\rangle).
\]
\begin{exe}
Show that $\Exp:\mathcal{H}\to \Gamma(\mathcal{H})$ is continuous (\emph{be aware that it is not linear}). Moreover, show that $\Exp$ is injective.
\end{exe}
\begin{lem}
The elements $\{\Exp(h)\mid h\in\mathcal{H}\}\subseteq\Gamma(\mathcal{H})$ are linearly independent in $\Gamma(\mathcal{H})$.
\end{lem}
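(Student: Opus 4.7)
Suppose, for contradiction, that there exist distinct $h_1,\dots,h_n\in\mathcal{H}$ and scalars $c_1,\dots,c_n$, not all zero, with
\[
\sum_{i=1}^n c_i\,\Exp(h_i)=0\quad\text{in }\Gamma(\mathcal{H}).
\]
The plan is to convert this into a vanishing statement for a linear combination of scalar exponentials and to invoke a classical Vandermonde argument. First I would take the inner product of the above relation with $\Exp(g)$ for an arbitrary $g\in\mathcal{H}$. Using the identity $\langle \Exp(h),\Exp(g)\rangle=\exp(\langle h,g\rangle)$ derived just before the lemma, this produces the scalar identity
\[
\sum_{i=1}^n c_i\,\exp(\langle h_i,g\rangle)=0\qquad\text{for every }g\in\mathcal{H}.
\]

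Next I would choose $g$ so that the numbers $\lambda_i:=\langle h_i,g\rangle$ are pairwise distinct. This is possible because, for each pair $i\ne j$, the set $\{g\in\mathcal{H}\mid\langle h_i-h_j,g\rangle=0\}$ is a proper closed subspace of $\mathcal{H}$ (it is proper since $h_i\ne h_j$), and a finite union of proper subspaces of a vector space cannot exhaust $\mathcal{H}$. Fix such a $g$ and replace $g$ by $tg$ for a real parameter $t$; the displayed identity then becomes
\[
F(t):=\sum_{i=1}^n c_i\,e^{t\lambda_i}=0\qquad\text{for every }t\in\mathbb{R},
\]
with the $\lambda_i$ pairwise distinct.

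Finally I would conclude by the standard linear independence of exponential monomials with distinct exponents: differentiating $F$ repeatedly and evaluating at $t=0$ yields the linear system
\[
\sum_{i=1}^n c_i\,\lambda_i^k=0,\qquad k=0,1,\dots,n-1,
\]
whose coefficient matrix is the Vandermonde matrix $(\lambda_i^k)$, which is invertible precisely because the $\lambda_i$ are distinct. Hence $c_1=\dots=c_n=0$, contradicting our assumption. This proves linear independence of $\{\Exp(h)\mid h\in\mathcal{H}\}$.

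The only non-formal step is the separation of the $h_i$ by a single $g$; this is where the hypothesis that the $h_i$ are distinct is used. Everything else—taking inner products with the overcomplete family $\{\Exp(g)\}$ and the Vandermonde argument—is mechanical given the key reproducing identity $\langle\Exp(h),\Exp(g)\rangle=\exp(\langle h,g\rangle)$.
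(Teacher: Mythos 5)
Your proof is correct and takes essentially the same route as the paper's: pair the relation $\sum_i c_i\Exp(h_i)=0$ against $\Exp(g)$ via the reproducing identity $\langle\Exp(h),\Exp(g)\rangle=\ee^{\langle h,g\rangle}$, choose $g$ separating the $h_i$, and reduce to linear independence of scalar exponentials with distinct exponents. In fact you supply two details the paper leaves implicit --- the existence of a separating $g$ (a vector space over an infinite field is not a finite union of proper subspaces) and the Vandermonde argument for the independence of $t\mapsto \ee^{t\lambda_i}$ --- so your write-up is, if anything, more complete.
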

\begin{proof}
Let $h_1,...,h_n\in\mathcal{H}$. We want to show
\[
\sum_{i=1}^n\lambda_i\Exp(h_i)=0\Rightarrow \lambda_i=0,\hspace{0.3cm}\forall i\geq 1,2,...,n.
\]
For this, choose $h\in\mathcal{H}$ such that 
\[
\langle h,h_i\rangle\not=\langle h,h_j\rangle,\hspace{0.3cm}\forall i\not=j.
\]
Then we get that $\sum_{i=1}^n\lambda_i\Exp(h_i)=0$ implies $\sum_{i=1}^n\lambda_i\ee^{\langle h_i,h\rangle}=0$ for all $h\in\mathcal{H}$. Thus $\sum_{i=1}^n\lambda_i\ee^{z\langle h_i,h\rangle}=0$ for all $z\in\mathbb{C}$ if we choose $h$ as above. Hence, $\lambda_i=0$ for all $i=1,2,...,n$. 
\end{proof}
\begin{exe}
Show that $\{\Exp(h)\mid h\in\mathcal{H}\}$ span $\Gamma(\mathcal{H})$.
\end{exe}
Recall that 
\begin{enumerate}
\item{$L^2(W,\mu)=\bigoplus_{n=0}^\infty K^{:n:}$.
}
\item{there is a canonical isomorphism of Hilbert spaces $\mathcal{H}(\mu)\xrightarrow{T}K$.
}
\end{enumerate}
Thus we can observe that there is a canonical isomorphism of Hilbert spaces 
\begin{align*}
Sym^n(K)&\longrightarrow K^{:n:}\\
\xi_1\otimes_s\dotsm\otimes_s\xi_n&\longmapsto (:\xi_1\dotsm \xi_n:), 
\end{align*}
which leads to a map
\begin{align*}
\Gamma(K)&\longrightarrow \bigoplus_{n=0}^\infty K^{:n:}\\
\Exp(\xi)&\longmapsto \sum_{n=0}^\infty\frac{(:\xi^n:)}{n!}=(:\Exp(\xi):)=\ee^{\xi}\ee^{-\frac{1}{2}\|\xi\|^2},
\end{align*}
that comes from the \textsf{Segal-Ito isomorphism}
\begin{align*}
\Gamma(\mathcal{H})&\longrightarrow\Gamma(K)\cong L^2(W,\mu)\\
\Exp(h)&\longmapsto \ee^{\langle h,\enspace\rangle}\ee^{-\frac{1}{2}\|h\|^2}.
\end{align*}

\part{Construction of Quantum Field Theories}
\section{Free Scalar Field Theory}

Recall that a classical scalar field theory on $\R^n$ consists of the following data. A \textsf{space of fields} $\mathcal{F}=C^\infty(\R^n)$ and an \textsf{action functional}, which is a map $S:\mathcal{F}\to\R$ that is \textsf{local}, i.e. it only depends on fields and derivatives of fields. In free theory we are interested in the action functional $S$ which is of the form 
\[
S(\phi)=\int_{\R^n}\phi(\Delta+m^2)\phi\dd x,
\]
where $\dd x$ denotes the Lebesgue measure on $\R^n$. In quantum theory, we are interested in defining a measure of the form 
\begin{equation}
\label{QFTmeasure}
\ee^{-\frac{1}{2}S(\phi)}\mathscr{D}\phi
\end{equation}
on $\mathcal{F}$. We will see that it is possible to define a measure of the form \eqref{QFTmeasure} but it lives on a much larger space than $\mathcal{F}$. Next, we will discuss Gaussian measures on locally convex spaces and as a consequence we will define a measure of the form \eqref{QFTmeasure}.
\subsection{Locally convex spaces}
\begin{defn}[seperating points]
Let $V$ be a vector space. A family $\{\rho_\alpha\}_{\alpha\in A}$ of seminorms on $V$ is said to \textsf{seperate points} if $\rho_\alpha(x)=0$ for all $\alpha\in A$ implies $x=0$.
\end{defn}
\begin{defn}[Natural topology]
Given a family of seminorms $\{\rho_\alpha\}_{\alpha\in A}$ on $V$ there exists a smallest topology for which each $\rho_\alpha$ is continuous and the addition operation is continuous. This topology, which is denoted by $\mathcal{O}(\{\rho_\alpha\})$, is called the \textsf{natural topology} on $V$.
\end{defn}
\begin{defn}[Locally convex space]
A \textsf{locally convex space} is a vector space $V$ together with a family $\{\rho_\alpha\}$ of seminorms that seperate points.
\end{defn}
\begin{exe}
Show that the natural topology on a locally convex space is Hausdorff.
\end{exe}
Let $\varepsilon>0$ and $\alpha_1,...,\alpha_n\in A$. Define the set 
\[
N(\alpha_1,...,\alpha_n;\varepsilon)=\{v\in V\mid \rho_{\alpha_i}(v)<\varepsilon,\hspace{0.2cm}\forall i=1,2,...,n\}.
\]
One can check that 
\begin{enumerate}
\item{$N(\alpha_1,...,\alpha_n;\varepsilon)=\bigcap_{i=1}^nN(\alpha_i;\varepsilon)$.
}
\item{$N(\alpha_1,...,\alpha_n;\varepsilon)$ is convex.
}
\end{enumerate}
\begin{exe}
Check that the elements of
\[
\{N(\alpha_1,...,\alpha_n;\varepsilon)\mid \alpha_1,...,\alpha_n\in A,n\in\N,\varepsilon>0\}
\]
form a neighbourhood basis at $0\in V$.
\end{exe}
From $(2)$ it follows that a locally convex space $V$ has a neighbourhood basis at $0\in V$, where each open set in this basis is convex. This justifies the name \textsf{locally convex space}. One can define the notion of a Cauchy sequence and the notion of convergence in a locally convex space. Let $V$ be a locally convex space. The following are equivalent:
\begin{enumerate}
\item{$V$ is metrizable.}
\item{$0\in V$ has a countable neighbourhood basis.}
\item{The natural topology on $V$ is generated by some countable family of seminorms.}
\end{enumerate}
\begin{defn}[Fr\'echet space]
A complete metrizable locally convex space is called a \textsf{Fr\'echet space}.
\end{defn}
\begin{ex}[Schwartz space]
Let $\phi\in C^\infty(\R^n)$ and let $\alpha=(\alpha_1,...,\alpha_k)\in (\N\cup\{0\})^k$ and $\beta=(\beta_1,...,\beta_\ell)\in (\N\cup\{0\})^\ell$. Let $D^\beta=\frac{\partial^{\beta_1+\dotsm +\beta_\ell}}{\partial x_{i_1}^{\beta_1}\dotsm\partial x_{i_\ell}^{\beta_\ell}}$. Moreover, define 
\[
\|\phi\|_{\alpha,\beta}=\sup_{x\in\R^n}\vert x^\alpha D^\beta\phi(x)\vert,
\]
and 
\[
\mathcal{S}(\R^n)=\{\phi\in C^\infty(\R^n)\mid \|\phi\|_{\alpha,\beta}<\infty,\hspace{0.2cm}\forall \alpha,\beta\}.
\]
The space $\mathcal{S}(\R^n)$ is called the \textsf{Schwartz space} on $\R^n$. One can easily check that $\mathcal{S}(\R^n)$ is a locally convex space. In general $\mathcal{S}(\R^n)$ is a Fr\'echet space.
\end{ex}
\subsection{Dual of a locally convex space}
Let $V$ be a locally convex space and $V^*$ be the set of continuous linear functionals on $V$, i.e. $\ell\in v^*$ iff $\ell:V\to\R$ is linear and continuous. Given $x\in V$, define $\rho_x:V^*\to\R$ by $\rho_x(\ell)=\vert\ell(x)\vert$. One can easily check that $\rho_x$ is a seminorm. In fact $\{\rho_x\mid x\in V\}$ is a family of seminorms on $V^*$ that seperates points. Hence, $(V^*,\{\rho_x\mid x\in V\})$ is a locally convex space. The natural topology on $V^*$ induces by $\{\rho_x\mid x\in V\}$ is called the \textsf{weak*-topology} on $V^*$. A sequence $\{\ell_n\}$ in $V^*$ converges to $\ell\in  V^*$ in the weak-*topology iff $\rho_x(\ell_n)\to\rho_x(\ell)$ for all $x\in V$, i.e. $\ell_n(x)\to\ell(x)$ for all $x\in V$. The weak-*topology on $V^*$ is denoted by $\mathcal{O}(V^*,V)$. 
\begin{rem}
The space of linear functionals on $(V^*,\mathcal{O}(V^*,V))$ is exactly $V$.
\end{rem}

\subsection{Gaussian measures on the dual of Fr\'echet spaces}
\begin{thm}
Let $V$ be a Fr\'echet space. Then there is a bijection between the following sets
\begin{multline*}
\Big\{\text{Continuous poisitve definite symmetric bilinear forms on V}\Big\}\\
\longleftrightarrow\Big\{\text{Centered Gaussian measures on $(V^*,\mathcal{O}(V^*,V))$}\Big\}
\end{multline*}
\end{thm}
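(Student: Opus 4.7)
The plan is to realize the bijection through the characteristic functional: a centered Gaussian $\mu$ on $V^*$ is uniquely encoded by $\widehat\mu(x) = \exp\bigl(-\tfrac12 B(x,x)\bigr)$, and the assignment $\mu \mapsto B$ will invert a Bochner--Minlos type construction $B \mapsto \mu$. Both directions can then be verified by matching characteristic functionals.

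First I would handle the forward map $\mu \mapsto B$. For each $x\in V$, the evaluation $\ev_x\colon V^*\to\R,\ \ell\mapsto\ell(x)$, is weak-$*$ continuous and hence Borel measurable, and by the defining property of a centered Gaussian $(\ev_x)_*\mu$ is centered Gaussian on $\R$; in particular $\ev_x\in L^2(V^*,\mu)$ with $\int_{V^*}\ell(x)^2\,\dd\mu(\ell)=\sigma(x)$. Define
\[
B(x,y) := \int_{V^*}\ell(x)\ell(y)\,\dd\mu(\ell).
\]
Symmetry, bilinearity, and positive (semi-)definiteness are immediate. For continuity I would invoke the closed graph theorem to show that $V\to L^2(V^*,\mu),\ x\mapsto\ev_x$, is continuous (using metrizability of the Fr\'echet space $V$ and the fact that $L^2$-convergence passes to a subsequence converging pointwise $\mu$-a.e., allowing one to match limits forced by linearity of $\ev$). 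Cauchy--Schwarz then yields $|B(x,y)|\le\|\ev_x\|_{L^2}\|\ev_y\|_{L^2}$, hence joint continuity of $B$.

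Next I would construct the inverse $B\mapsto\mu$, which is the heart of the theorem. Given a continuous positive definite symmetric bilinear form $B$ on $V$, for each finite tuple $x_1,\dots,x_n\in V$ the matrix $K_{ij}:=B(x_i,x_j)$ is positive semidefinite and determines a centered Gaussian $\mu_{x_1,\dots,x_n}$ on $\R^n$ with characteristic function $\exp\bigl(-\tfrac12\langle Kt,t\rangle\bigr)$. These form a compatible family under the coordinate projections, and by Kolmogorov's theorem (Theorem \ref{Kolmogorov}) they assemble into a cylindrical probability measure on the algebraic dual of $V$. The main obstacle, and the only deep point in the argument, is to promote this cylindrical measure to a countably additive Borel measure on $(V^*,\calO(V^*,V))$ concentrated on genuinely continuous linear functionals. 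This is where continuity of $B$ in the Fr\'echet topology is used decisively: it provides the equicontinuity/tightness estimates at $0\in V$ of precisely the type required by the classical Bochner--Minlos--Sazonov theorem, which I would invoke (cf.\ \cite{HK}) to conclude that $\exp\bigl(-\tfrac12 B(\cdot,\cdot)\bigr)$ is the Fourier transform of an honest Borel probability measure $\mu$ on $V^*$.

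Finally, bijectivity reduces to matching characteristic functionals. By construction each one-dimensional pushforward $(\ev_x)_*\mu$ is Gaussian of variance $B(x,x)$, so
\[
\widehat\mu(x)=\int_{V^*}\ee^{\I\ell(x)}\,\dd\mu(\ell)=\ee^{-\frac12 B(x,x)},
\]
and recomputing the covariance via the forward map recovers $B$. Conversely, starting from a centered Gaussian $\mu$ with associated form $B$, the measure produced by the backward construction has the same characteristic functional as $\mu$, and a Borel probability measure on $V^*$ is uniquely determined by its characteristic functional (cylinder sets generate the Borel $\sigma$-algebra of $(V^*,\calO(V^*,V))$ and finite-dimensional Fourier inversion applies on each), so the two measures coincide. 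The hardest step throughout is the Minlos-type countable additivity in the backward construction; every other ingredient is a linear-algebraic manipulation or a routine measure-theoretic check.
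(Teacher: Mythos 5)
Your overall architecture---the covariance form as the forward map, finite-dimensional marginals assembled by Kolmogorov into a cylinder measure as the backward map, and identification via characteristic functionals---is exactly the route the paper takes; the paper's own ``proof'' is a citation to \cite{VB,AG} followed by the same sketch of the cylinder-set construction, so your treatment of the forward map and of uniqueness is in fact more detailed than what appears in the text.

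The genuine gap is the step you yourself flag as the heart of the matter: promoting the cylindrical measure to a countably additive Borel measure on $(V^*,\mathcal{O}(V^*,V))$. You claim that continuity of $B$ in the Fr\'echet topology supplies ``precisely the type'' of tightness required by Bochner--Minlos--Sazonov, but it does not for a general Fr\'echet space: Minlos requires $V$ to be nuclear, and Sazonov requires continuity of the characteristic functional in the Sazonov topology, which is strictly stronger a demand than joint continuity of $B$. Concretely, take $V$ an infinite-dimensional separable Hilbert space and $B$ the inner product. Then $B$ is continuous, the weak-$*$ Borel $\sigma$-algebra of $V^*\cong V$ coincides with the norm Borel $\sigma$-algebra by separability, and the paper itself shows (via the trace-class condition on the covariance of a Gaussian measure and the failure of $\{X_n\}$ to be Cauchy in probability) that no countably additive centered Gaussian measure with this covariance exists. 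So under the theorem's stated hypotheses your appeal to Minlos cannot be made to work; the statement is really a theorem about nuclear Fr\'echet spaces such as $\calS(\R^n)$ (the only case the paper uses), and any complete proof must invoke nuclearity, or some substitute for it, at exactly the point where you invoke mere continuity of $B$. Everything else in your plan---the $L^2$ bound on $\ev_x$ from Gaussianity of the one-dimensional pushforwards, the closed-graph argument for continuity of $x\mapsto\ev_x$, the compatibility of the marginals, and the determination of $\mu$ by $\widehat{\mu}$ on cylinder sets---is sound and matches the paper's intent.
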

\begin{proof}
See \cite{VB,AG}.
\end{proof}
Let $C$ be a continuous positive definite symmetric bilinear form on $V$. The construction of the associated Gaussian measure on $V^*$ goes as follows. Let $F\subseteq V$ be a finite dimensional subspace of $V$. Let $C_F$ be the restriction of $C$ on $F$. Then $C_F$ is a symmetric bilinear form on $F$, which is positive definite. Hence $C_F$ defines a Gaussian measure $\mu_{C_F}$ on $F^*$ ($F^*$ can be identified with $F$) of the form 
\[
\dd\mu_{C_F}(x)=\left(\det\left(\frac{C_F}{2\pi}\right)\right)^{-\frac{1}{2}}\ee^{-\frac{1}{2}C_F(x,x)},
\]
where $C_F$ is identified with a positive definite matrix. In fact, one can think of $\mu_{C_F}$ to be a measure on the $F$ cylinder subsets of $V^*$. One can check that if $E\subseteq F$, then $\mu_{C_F}$ agrees with $\mu_{C_E}$ when restricted to the $E$ cylinder subsets of $V^*$. Now we can proceed as in the construction of the Wiener measure and show that there is a Gaussian measure $\mu_C$ on the $\sigma$-algebra of $V^*$ generated by cyclinder sets. This gives the construction of $\mu_C$.
\begin{cor}
Let $C$ be the bilinear form on $\mathcal{S}(\R^n)$ defined by 
\[
C(f,g)=\int_{\R^n}f(\Delta+m^2)^{-1}g\dd x.
\]
Then there is a Gaussian measure $\mu$ on $\mathcal{S}(\R^n)$ whose covariance is $C$.
\end{cor}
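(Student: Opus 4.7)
The plan is to invoke the theorem immediately preceding the corollary, applied to $V=\mathcal{S}(\R^n)$, which is a Fr\'echet space. It thus suffices to verify that $C$ is a continuous, symmetric, positive definite bilinear form on $\mathcal{S}(\R^n)$; the theorem will then produce the Gaussian measure on $\mathcal{S}(\R^n)^*$ (the tempered distributions) with covariance $C$.

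First, I would rewrite $C$ using the Fourier transform, where the operator $(\Delta+m^2)^{-1}$ becomes multiplication by the symbol $(|k|^2+m^2)^{-1}$ (using the convention $\widehat{\Delta f}(k)=|k|^2 \widehat{f}(k)$). For real-valued $f,g\in\mathcal{S}(\R^n)$ one obtains
\[
C(f,g)=\int_{\R^n}\frac{\widehat{f}(k)\,\overline{\widehat{g}(k)}}{|k|^2+m^2}\,\dd k.
\]
Since $\calF\colon\mathcal{S}(\R^n)\to\mathcal{S}(\R^n)$ is a continuous bijection and $(|k|^2+m^2)^{-1}$ is a smooth bounded positive function, this integral converges absolutely on $\mathcal{S}(\R^n)$ and the bilinearity, symmetry, and the formula $(\Delta+m^2)^{-1}$ acting on the Schwartz space are well defined.

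Next I would verify the three properties. Symmetry is immediate from the representation above (or from the self-adjointness of $\Delta+m^2$). Positive definiteness follows since $(|k|^2+m^2)^{-1}>0$ for all $k$, so $C(f,f)=\int (|k|^2+m^2)^{-1}|\widehat{f}(k)|^2\dd k\ge 0$, with equality forcing $\widehat{f}\equiv 0$ and hence $f\equiv 0$ by injectivity of $\calF$. For continuity, I would use the estimate
\[
|C(f,g)|\le \frac{1}{m^2}\int_{\R^n}|\widehat{f}(k)||\widehat{g}(k)|\,\dd k \le \frac{1}{m^2}\|\widehat{f}\|_2\,\|\widehat{g}\|_2 = \frac{1}{m^2}\|f\|_2\,\|g\|_2
\]
by Plancherel, and then dominate $\|f\|_2$ by a finite combination of Schwartz seminorms (for instance, $\|f\|_2^2\le C_n\sup_x(1+|x|^2)^{n}|f(x)|^2$ integrated against $(1+|x|^2)^{-n}$), which gives joint continuity in the Fr\'echet topology.

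The step I expect to be the main subtlety is continuity, since one has to express the $L^2$-bound in terms of the seminorms $\|\cdot\|_{\alpha,\beta}$ defining the Schwartz topology rather than leave it in terms of the $L^2$ norm itself. Once that bound is in place, the preceding theorem yields a centered Gaussian measure $\mu=\mu_C$ on $(\mathcal{S}(\R^n)^*,\mathcal{O}(\mathcal{S}(\R^n)^*,\mathcal{S}(\R^n)))$ whose Fourier transform is $\widehat{\mu}(f)=\ee^{-\frac{1}{2}C(f,f)}$, completing the proof.
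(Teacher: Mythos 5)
Your proposal is correct and follows exactly the route the paper intends: the corollary is an immediate application of the preceding theorem to the Fr\'echet space $\mathcal{S}(\R^n)$, and your Fourier-side verification that $C$ is symmetric, positive definite, and continuous (with the $L^2$ bound dominated by Schwartz seminorms) supplies precisely the hypotheses the paper leaves implicit. The only cosmetic point worth flagging is that, as you correctly note, the resulting measure lives on the dual $\mathcal{S}'(\R^n)$ rather than on $\mathcal{S}(\R^n)$ itself, despite the wording of the corollary's statement.
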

In this example the reproducing kernel space $K(\mu)$ of $\mu$ is $H^{-1}(\R^n)$, where $H^{-1}(\R^n)$ is the completion of $\mathcal{S}(\R^n)$ with respect to $C$. Hence, we have succeeding in defining the measure of the form $\ee^{-S(\phi)}\mathscr{D}\phi$, where 
\[
S(\phi)=\frac{1}{2}\int_{\R^n}\phi(\Delta+m^2)\phi\dd x.
\]
In other words, we have constructed the Gaussian measure associated to the free theory.
\begin{rem}
In this example, the Cameron-Martin space of $\mu$ is $H^1(\R^n)$, which is the completion of $\mathcal{S}(\R^n)$ with respect to the map 
\[
(f,g)\mapsto \int_{\R^n}f(\Delta+m^2)g\dd x.
\]
\end{rem}

\subsection{The operator $(\Delta+m^2)^{-1}$}

We regard $(\Delta+m^2)^{-1}$ as an operator on $L^2(\R^n)$. It is known that $(\Delta+m^2)^{-1}$ is a positive operator, and that it is an integral operator. Let $C(x,y)$ be the integral kernel of $(\Delta+m^2)^{-1}$. Then 
\[
C(f,g)=\iint_{\R^n\times\R^n}f(x)C(x,y)g(y)\dd x\dd y.
\]
In fact, one can show that $C(x,y)$ is the unique solution of 
\begin{equation}
\label{rel1}
\Delta_yC(x,y)=\delta_x(y).
\end{equation}
Using Fourier transform, we can give an explicit representation of $C(x,y)$. Formally we have the following chain of implications.
\[
(\Delta+m^2)^{-1}f=g\Rightarrow(\Delta+m^2)g=f\Rightarrow (\xi^2+m^2)g^{-1}=f^{-1}\Rightarrow g=\mathcal{F}^{-1}\left(\frac{1}{\xi^2+m^2}\widehat{f}\right).
\]
These hold since 
\[
g(x)=\frac{1}{(2\pi)^{n}}\iint_{\R^n\times \R^n}\frac{\ee^{\I\xi(x-y)}}{\xi^2+m^2}f(y)\dd y\dd\xi,
\]
and thus
\[
C(x,y)=\frac{1}{(2\pi)^n}\int_{\R^n}\frac{\ee^{\I\xi(x-y)}}{\xi^2+m^2}\dd\xi.
\]
For $x\not=y$ one can show that 
\[
C(x,y)=(2\pi)^{-\frac{1}{2}}\left(\frac{m}{\|x-y\|}\right)^{\frac{n-2}{2}}K_{\frac{n-2}{2}}(m\|x-y\|),
\]
where $K_\nu$ is a modified \textsf{Bessel function}. Next we will study $C(x,y)$ in more details. In particular the behaviour of $C(x,y)$ when $\|x-y\|$ is large and $\|x-y\|$ is small.
\begin{rem}
For $n=1$ we have $C(x,y)=\frac{\ee^{-m\vert x-y\vert}}{m}$ and for $n=3$ we have $C(x,y)=\frac{\ee^{-m\|x-y\|}}{4\pi\|x-y\|}$.
\end{rem}
\begin{prop}[Properties of $C(x,y)$]
The following hold:
\begin{enumerate}
\item{For every $m\vert x-y\vert$ bounded away from zero, there exists some $M\geq 0$ such that we have 
\[
C(x,y)\leq Mm^\frac{n-3}{2}\vert x-y\vert^{\frac{n-1}{2}}\ee^{-m\vert x-y\vert}.
\]
}
\item{For $n\geq 3$ and $m\vert x-y\vert$ in a neighbourhood of zero we get 
\[
C(x,y)\sim \vert x-y\vert^{-n+2}
\]
}
\item{For $n=2$ and $m\vert x-y\vert$ in a neighborhood of zero we get 
\[
C(x,y)\sim -\log\left(m\vert x-y\vert\right).
\]
}
\end{enumerate}
\end{prop}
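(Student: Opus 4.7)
The plan is to work from the explicit representation
\[
C(x,y) = (2\pi)^{-n/2}\left(\frac{m}{\|x-y\|}\right)^{(n-2)/2} K_{(n-2)/2}\!\left(m\|x-y\|\right),
\]
stated just before the proposition, and to derive each of the three asymptotic statements from the corresponding asymptotic behavior of the modified Bessel function $K_\nu(z)$. Since every piece reduces to a Bessel estimate, the real work is a careful bookkeeping of powers of $m$ and $\|x-y\|$.

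\emph{Step 1 (large $m\|x-y\|$).} I would invoke the classical asymptotic
\[
K_\nu(z) = \sqrt{\frac{\pi}{2z}}\,e^{-z}\bigl(1+O(z^{-1})\bigr), \qquad z\to\infty,
\]
valid for every fixed $\nu\ge 0$, which can be obtained from the integral representation $K_\nu(z)=\tfrac{1}{2}(z/2)^\nu\int_0^\infty t^{-\nu-1}e^{-t-z^2/(4t)}\dd t$ by Laplace's method at the saddle $t=z/2$. Plugging $z=m\|x-y\|$ and $\nu=(n-2)/2$ into the formula for $C(x,y)$ and collecting the factors of $m$ and $\|x-y\|$ gives the claimed bound with (after reinstating the sign of the exponent) $|x-y|^{-(n-1)/2}$ in the power rather than the displayed $|x-y|^{(n-1)/2}$; since the estimate is required only on the region where $m\|x-y\|$ is bounded below, the $O(z^{-1})$ error in Laplace's expansion is absorbed into the constant $M$.

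\emph{Step 2 (small $m\|x-y\|$, $n\ge 3$).} Here $\nu=(n-2)/2>0$ and I would use the other standard Bessel asymptotic
\[
K_\nu(z) = \tfrac{1}{2}\Gamma(\nu)(z/2)^{-\nu}\bigl(1+o(1)\bigr),\qquad z\to 0^+,
\]
obtainable either from the series expansion of $I_{\pm\nu}$ and the relation $K_\nu=\tfrac{\pi}{2\sin\pi\nu}(I_{-\nu}-I_\nu)$ (and its limit when $\nu$ is an integer) or from the same integral representation as above by restricting to small $t$. Substituting this into the formula for $C(x,y)$, the factors $m^{(n-2)/2}$ and $m^{-(n-2)/2}$ cancel, as do the opposite powers of $\|x-y\|^{(n-2)/2}$ coming from the prefactor and from $z^{-\nu}$; what is left is a constant times $\|x-y\|^{-(n-2)}$, which is exactly the claimed $\|x-y\|^{-n+2}$ behavior.

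\emph{Step 3 (small $m\|x-y\|$, $n=2$).} Now $\nu=0$, the previous expansion degenerates, and the correct asymptotic is
\[
K_0(z) = -\log(z/2) - \gamma + o(1),\qquad z\to 0^+,
\]
with $\gamma$ Euler's constant; this follows from differentiating the series for $I_\nu$ at $\nu=0$. In this case the prefactor in $C(x,y)$ is $(2\pi)^{-1}$ with no power of $\|x-y\|$, so one obtains $C(x,y)\sim -(2\pi)^{-1}\log(m\|x-y\|)$, which is the stated behavior up to the overall positive constant. The main obstacle, if any, is simply that the three statements of the proposition hinge on three genuinely different asymptotic regimes of $K_\nu$, so the cleanest presentation is to quote the Bessel asymptotics as a lemma and then let the proof itself reduce to three short substitution-and-simplification calculations; no analytic subtlety beyond Laplace's method and the series expansion of $I_\nu$ is required.
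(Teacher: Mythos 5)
Your proof is correct, but it takes a genuinely different route from the paper's. You treat the closed form $C(x,y)=c_n\,(m/\|x-y\|)^{(n-2)/2}K_{(n-2)/2}(m\|x-y\|)$ as the starting point and reduce all three claims to the three standard asymptotic regimes of the modified Bessel function ($z\to\infty$; $z\to 0$ with $\nu>0$; $z\to 0$ with $\nu=0$). The paper never uses the Bessel representation in its proof: it instead evaluates the $\xi_1$-integral in the Fourier representation by the residue theorem, passes to polar coordinates to get $C(x,y)=c\int_0^\infty r^{n-2}\,\mu(r)^{-1}e^{-t\mu(r)}\,\dd r$ with $\mu(r)=\sqrt{1+r^2}$ and $t=\|x-y\|$, and then estimates this integral elementarily by splitting at $r=1$ and using $\mu(r)\ge 1+\varepsilon r^2$ for $r\le 1$ and $\mu(r)\ge 1+\varepsilon r$ for $r\ge 1$ (plus the substitution $s=t\mu(r)$ for the small-$t$ regimes). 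What the paper's route buys is self-containedness and an explicit two-term bound $C\le k e^{-t}\bigl(t^{-(n-1)/2}+t^{-(n-1)}\bigr)$; what yours buys is brevity and a transparent structural explanation of why there are exactly three regimes. Two remarks: (i) you are right that the Bessel asymptotics yield $|x-y|^{-(n-1)/2}$ in item (1), consistent with the paper's own bound, so the positive exponent in the displayed statement is a sign typo; (ii) in Step 2 the powers of $m$ cancel but the two factors $\|x-y\|^{-(n-2)/2}$ \emph{combine} rather than cancel, giving $\|x-y\|^{-(n-2)}$ — your conclusion is correct but the wording there should be fixed. Also note that the large-argument expansion of $K_\nu$ is an asymptotic statement as $z\to\infty$; to get a bound on all of $\{z\ge\delta\}$ you should supplement it with continuity and boundedness of $z^{1/2}e^{z}K_\nu(z)$ on compact subsets of $(0,\infty)$, which is immediate but worth saying.
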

\begin{proof}
Recall first 
\begin{equation}
\label{rel2}
C(x,y)=\frac{1}{2\pi}\int_\R\frac{\ee^{\I\xi(x-y)}}{\xi^2+m^2}\dd\xi.
\end{equation}
\begin{exe}
Show that in general we have 
\[
C(x,y)=\frac{1}{(2\pi)^n}\int_{\R^n}\frac{\ee^{\I\xi\|x-y\|}}{\xi^2+m^2}\dd\xi.
\]
Hint: Choose an orthonormal basis $\{e_1,...,e_n\}$ of $\R^n$ such that $e_1=\frac{x-y}{\|x-y\|}$ and do a change of variables.
\end{exe}
Now, using the residue theorem, we have 
\begin{multline*}
C(x,y)=\frac{1}{(2\pi)^n}\int_{\R^n}\frac{\ee^{\I t\xi_1}}{\xi_1^2+\left(\sqrt{m^2+\xi_2^2+\dotsm +\xi_n^2}\right)^2}\dd\xi_1\dd\xi_2\dotsm \dd\xi_n\\=\frac{1}{(2\pi)^n}\int_{\R^{n-1}}\frac{\pi\ee^{-t\sqrt{m^2+\xi_2^2+\dotsm+\xi_n^2}}}{\sqrt{m^2+\xi_2^2+\dotsm +\xi^2_n}}\dd\xi_2\dotsm\dd\xi_n.
\end{multline*}
Without loss of generality, assume that $m=1$. Recall that for $f:\R^n\to\R$ with $f(x)=g(\vert x\vert)$ we have 
\begin{equation}
\label{relation1}
\int_{\R^n}f(x)\dd x=v(S^{n-1})\int_0^\infty r^{n-1}g(r)\dd r,
\end{equation}
where $v(S^{n-1})$ is the volume of $S^{n-1}$. Using \eqref{relation1}, we can write 
\begin{equation}
\label{relation1}
C(x,y)=\frac{\pi A_{n-1}}{(2\pi)^{\frac{n}{2}}}\int_0^\infty \frac{r^{n-2}\ee^{-t\mu(r)}}{\mu(r)}\dd r,
\end{equation}
where $\mu(r)=\sqrt{1+r^2}$. 
\begin{exe}
Show that there is some $\varepsilon>0$ such that 
\[
\mu(r)\geq\begin{cases}1+\varepsilon r^2,&\text{if $r\leq 1$}\\1+\varepsilon r,&\text{if $r\geq 1$}\end{cases}
\]
\end{exe}
We will claim that 
\begin{equation}
\label{rel3}
C(x,y)\leq k\ee^{-t}(t^{-\frac{n-1}{2}}+t^{-(n-1)}),
\end{equation}
where $t=\|x-y\|$ and $k$ some constant. Note then that 
\begin{equation*}
\int_0^1\frac{r^{n-2}\ee^{-t\mu(r)}}{\mu(r)}\leq\int_0^1r^{n-2}\ee^{-t(1+\varepsilon r^2)}\dd r
\leq k\ee^{-t}t^{-\frac{n-1}{2}}
\end{equation*}
and 
\[
\int_1^\infty\frac{r^{n-2}\ee^{-t\mu(r)}}{\mu(r)}\leq \int_1^\infty r^{n-2}\ee^{-t(1+\varepsilon r)}\dd r\leq k \ee^{-t}t^{-(n-1)}.
\]
If $t\geq 1$ then 
\[
C(x,y)\leq k \ee^{-t}t^{-\frac{n-1}{2}}.
\]
For $0< t\leq 1$ we have 
\begin{align*}
\int_0^\infty r^{n-2}\frac{\ee^{-t\sqrt{1+r^2}}}{1+r^2}\dd r&=t^{-(n-2)}\int_0^\infty s^{n-2}\frac{\ee^{-\sqrt{s^2+t^2}}}{\sqrt{s^2+t^2}}\dd s\\
&\sim_{\text{\{as $t\to 0$ in the integral\}}} t^{-(n-2)}\int_0^\infty \frac{s^{n-2}\ee^{-s}}{s}\dd s\\
&=t^{-(n-2)}\int_0^\infty s^{n-3}\ee^{-s}\dd s
\end{align*}
If $n=2$, let $s=t\mu(r)$. Then $\mu(r)=\frac{s}{t}$, $1+r^2=\frac{s^2}{t^2}$ and $r=\sqrt{\frac{s^2-t^2}{t^2}}$. Thus 
\[
C(x,y)=\int_t^\infty\frac{\ee^{-s}}{\sqrt{s^2+t^2}}\dd s\sim \int_t^\infty \frac{1}{\sqrt{s^2+t^2}}\dd s\sim -\log(t).
\]
\end{proof}

\section{Construction of self interacting theory}

To construct a theory with polynomial interaction, we want to define a measure of the form 
\[
\ee^{-S(\phi)}\mathscr{D}\phi
\]
rigorously where
\[
S(\phi)=\frac{1}{2}\int_{\R^n}\phi(\Delta+m^2)\phi\dd x+\int_{\R^n}P(\phi)\dd x=S_f(\phi)+S_I(\phi),
\]
with $P(y)=\sum_i a_iy^{i}$ some polynomial function on $\R$. We have succeeded in defining a measure $\mu$ of the form $\ee^{-S_f(\phi)}\mathscr{D}\phi$, but the price we had to pay was that it lives on $\mathcal{S}(\R^n)$. In fact $\mu(\mathcal{S}(\R^n))=0$ because for such measures the Cameron-Martin space is $H^1(\R^n)$ and $\mathcal{S}(\R^n)\subseteq H^1(\R^n)$. Hence, it is not obvious that we have to view $\phi\mapsto \int_{\R^n}\phi^n(x)\dd x$ as a measurable function on $\mathcal{S}(\R^n)$. Let us now define a measure of the form 
\[
\ee^{-S_f(\phi)}\ee^{-S_I(\phi)}\mathscr{D}\phi.
\]
First, we will try to ``define'' measurable functions of the form 
\begin{equation}
\label{map1}
\phi\longmapsto \int_{\R^n}\phi(x)^k\dd x.
\end{equation}
In particular, we will try to bypass the difficulties in making sense of \eqref{map1}. Let us pretend that we can define \eqref{map1}. Formally we have 
\begin{align*}
\left\|\int_{\R^n}\phi(x)^k\dd x\right\|^2_{L^2(\mathcal{S}(\R^n),\mu)}&=\int_{\phi\in \mathcal{S}(\R^n)}\left(\int_{\R^n}\phi(x)^k\dd x\right)\left(\int_{\R^n}\phi(y)^k\dd y\right)\dd\mu(\phi)\\
&=\iint_{\R^n\times\R^n}\left(\int_{\mathcal{S}(\R^n)}\phi(x)^k\phi(y)^k\dd\mu(\phi)\right)\dd x\dd y.
\end{align*}
Now, formally thinking of $\phi(x)$ as $\delta_x[\phi]=\langle \delta_x|\phi\rangle$, which is defined in terms of the Heaviside step function, and using Wick's theorem, we see that 
\begin{multline*}
\left\|\int_{\R^n}\phi(x)^k\dd x\right\|_{L^2(\mathcal{S}(\R^n),\mu)}=\text{Linear combination of integrals of the form} \\ \iint_{\R^n\times\R^n}C(x,x)^\alpha C(y,y)^\beta C(x,y)^\gamma \dd x\dd y. 
\end{multline*}
The existence of $\int_{\R^n}\phi(x)^k\dd x$ depends on the properties of $C(x,y)$ and hence it is \textsf{dimension sensitive}. In fact we can not define \eqref{map1} because $C(x,x)$ is not integrable. We want to try two different attempts to make sense of $\int_{\R^n}\phi(x)^k\dd x$:
\begin{enumerate}
\item{(Approximation of delta function) Let $h\in\mathcal{S}(\R^n)$ such that $h\geq 0, h(0)>0$ and $\int_{\R^n}h=1$. Consider e.g. $h_\varepsilon(y)=\frac{1}{\varepsilon^2}h\left(\frac{y}{\varepsilon}\right)$ for some $\varepsilon>0$. Then $h_\varepsilon\to \delta_0$ as $\varepsilon\to 0$. Similarly\footnote{With $x=\frac{1}{\varepsilon}$ we get $h_x(x)=x^2h(xx)\to \delta_0$ as $x\to\infty$.} we can construct $\delta_{x,\varepsilon}\in\mathcal{S}'(\R^n)$ (space of \textsf{Schwartz distributions on $\R^n$}) such that $\delta_{x,\varepsilon}\to\delta_x$. Then $\phi\mapsto \delta_{x,\varepsilon}[\phi]=\langle\delta_{x,\varepsilon}|\phi\rangle$ is a polynomial function on $\mathcal{S}(\R^n)$. We denote the polynomial type by $\phi(\delta_{x,\varepsilon})$. We know how to compute 
\[
\int_{\phi\in\mathcal{S}(\R^n)}\phi(\delta_{x,\varepsilon})^k\phi(\delta_{y,\varepsilon})^m\dd\mu(\phi),
\]
which is equal to the sum of terms of the form $A_{\alpha\beta\gamma}C(\delta_{x,\varepsilon},\delta_{x,\varepsilon})^\alpha C(\delta_{y,\varepsilon},\delta_{y,\varepsilon})^\beta C(\delta_{x,\varepsilon},\delta_{y,\varepsilon})^\gamma$. Formally we have that
\[
\left\langle \int_{\R^n}\phi(\delta_{x,\varepsilon})^k\dd x,\int_{\R^n}\phi(\delta_{y,\varepsilon})^m\right\rangle_{L^2(\mathcal{S}(\R^n),\mu)}
\]
is equal to sum of expressions of the form 
\[
A_{\alpha\beta\gamma}\iint_{\R^n\times\R^n}C(\delta_{x,\varepsilon},\delta_{x,\varepsilon})^\alpha C(\delta_{y,\varepsilon},\delta_{y,\varepsilon})^\beta C(\delta_{x,\varepsilon},\delta_{y,\varepsilon})^\gamma \dd x\dd y,
\]
and we can try to take the limit $\varepsilon\to 0$. The conclusion here is that this attempt does not lead to anywhere, since we still get a diagonal contribution. 
}
\item{(Redefine obseravbles) Let us try to get rid of diagonal contribution (i.e. terms like $C(x,x)$). This is where the Wick ordering comes into the play. We can think of Wick ordering as a renormalization process. Consider the map 
\[
\phi\longmapsto \int_{\Lambda\subset \R^n\atop\Lambda \text{ compact}}(:\phi(\delta_{x,\varepsilon})^k:)\dd x.
\]
Thus we get 
\[
\iint_{\R^n\times\R^n}\left(\int_{\mathcal{S}(\R^n)}(:\phi(\delta_{x,\varepsilon})^k:)(:\phi(\delta_{y,\varepsilon})^k:)\dd\mu(\phi)\right)\dd x\dd y=k!\iint_{\R^n\times\R^n} C(\delta_{x,\varepsilon},\delta_{y,\varepsilon})^k\dd x\dd y.
\]
Taking the limit $\varepsilon\to 0$ formally, it converges to 
\begin{equation}
\label{limit1}
k!\iint_{\R^n\times\R^n}C(x,y)^k\dd x\dd y.
\end{equation}
}
\end{enumerate}
Let us list what we know so far:
\begin{enumerate}[$(i)$]
\item{Wick ordering allows us to get rid of diagonal contribution of $C(x,y)$.}
\item{If $\iint_{\R^n\times\R^n}C(x,y)^k\dd x\dd y<\infty$, then there is a hope that we can define ``Wick ordered polynomial'' functions of the form 
\[
\phi\longmapsto \int_{\R^n}(:\phi(x)^k:)\dd x.
\]
}
\end{enumerate}
Recall that if $n\geq 3$, then $C(x,y)\sim \frac{1}{\|x-y\|^{n-2}}$ for $\|x-y\|\to 0$ and hence the integral of the form \eqref{limit1} will diverge in general. This means that if $n\geq 3$, Wick ordering renormalization may not kill all the infinites appearing in Feynmann amplitudes. However, if $n=2$, we get $C(x,y)\sim -\log(\|x-y\|)$ as $\|x-y\|\to 0$ and in this case it is possible to define observables of the form $\phi\mapsto \int_{\Lambda\subset \R^2\atop\Lambda\text{ compact}}(:P(\phi):)$. Define $\widetilde{S}_{I,\Lambda}(\phi)=\int (:P(\phi)(x):)\dd x$ as a measurable function on $\calS'(\R^2)$, where $P$ is any polynomial. Let $P(x)=x^4$ and $\Lambda\subset \R^2$. Then we can consider $\widetilde{S}_{I,\Lambda}^\varepsilon(\phi)=\int_\Lambda (:P(\phi,\delta_{\varepsilon,x}):)\dd x$, where $\delta_{\varepsilon,x}$ is a smooth approximation of $\delta_x$. More precisely, $\delta_{\varepsilon,x}$ can be constructed as follows. Let $h\in C_0^\infty(\R^2)$ with $h\geq 0$, $h(0)>0$, and $\int_{\R^2}h=1$. Then consider $\delta_{\varepsilon,x}(y)=\varepsilon^{-2}h\left(\frac{x-y}{\varepsilon}\right)$. In fact, $\delta_{\varepsilon,x}\to \delta_x$ in $\calS'(\R^2)$. If we take $\varepsilon=\frac{1}{k}$, we will get $\delta_{\varepsilon,x}=\delta_{k,x}$ and thus one can observe that the sequence $\{\widetilde{S}^k_{I,\Lambda}\}_k$ is Cauchy in $L^2(\calS'(\R^2,\mu))$. Define $\widetilde{S}_{I,\Lambda}(\phi):=\lim_{k\to\infty} \widetilde{S}_{I,\Lambda}^k(\phi)$.  

\begin{rem}
Recall \
$$\left\langle \int_\Lambda(:\phi(\delta_{k,x}):)^n,\int_{\Lambda}(:\phi(\delta_{k,y}):)^n\right\rangle_{L^2(\calS'(\R^2,\mu))}=n!\int_{\Lambda\times \Lambda}C(\delta_{k,x},\delta_{k,y})^n\dd x\dd y.$$
To see that $\{\widetilde{S}_{I,\Lambda}^k\}_k$ is Cauchy we only have to understand how $C$ behaves. In the Fourier picture it is easier understood. We have the Fourier transform of $\delta_{k,x}$ is $\left(\frac{1}{\xi^2+m^2}\right)\widehat{h}^2\left(\frac{\xi}{k}\right)$, which can be understood very easily. Then one can use the properties of $C_k(x,y):=C(\delta_{k,x},\delta_{k,y})$, which is a smooth approximation of Green function, to show that $\int_\Lambda (:\phi(\delta_{k,x}):)^n\dd x$ is Cauchy.
\end{rem}

\subsection{More random variables}
Let $f\in C_0^\infty(\underbrace{\R^2\times\dotsm \times\R^2}_k)$. Then 
$$\widetilde{S}_{I,\Lambda}(f,k)=\int_{\R^2\times\dotsm \times \R^2}(:\phi(x_1)\dotsm \phi(x_k):)f(x_1,...,x_k)\dd x_1\dotsm \dd x_k$$
can be defined as before. More generally we can take $f\in L^2(\R^2\times \dotsm \times \R^2)$. Moreover, we can also define 
$$A(\phi)=\prod_{i=1}^nS_I(f_i,k_i).$$
We want to know how we can compute $\int_{\calS'(\R^2)}A(\phi)\dd\mu(\phi)$. For that, we recall: For $(W,\mu)$ a measure space and $f\in W^*$ we have 
$$(:f:)^n=\sum_{k=0}^{\lfloor \frac{n!}{2}\rfloor}\frac{n!}{k!(n-2k)!}f^{n-2k}\left(\frac{-q_\mu
(f,f)}{2}\right)^k.$$
We would like to know if an expression of the form
$$(:f_1\dotsm f_k:)(:g_{k+1}\dotsm g_n:)$$
can be written as a linear combination of Wick ordered polynomials. The answer is yes, and the advantage is that it allows us to compute integration of product of Wick ordered polynomials easily.

\begin{ex} We can write
$$(:f_1\dotsm f_n:)=f_1(:f_2\dotsm f_n:)-\sum_{j=2}^n q_n(f_1,f_j)(:f_2\dotsm \widehat{f}_j\dotsm f_n:),$$
which is similar to integration by parts. Here the symbol $\widehat{\enspace}$ means that the element is omitted.
\end{ex}

\subsection{Generalized Feynman diagrams}

Let $I=I_1\sqcup I_2\sqcup\dotsm \sqcup I_n$, be the disjoint union of finite sets $I_i$ for $i\in\{1,...,n\}$. 
\begin{defn}[Generlized Feynman diagram] A \textsf{generalized Feynman diagram} is a pair $(I,E)$, where 
$$E\subseteq \{(a,b)\mid \text{ $a$ and $b$ do not belong to some $I_i$, $i\in\{1,...,n\}$}\}.$$ 
\end{defn}
We denote by $A_E$ the remaining vertices. Let $F=(V,E)$ be a generalized Feynman diagram associated to $(:f_1\dotsm f_k:)(:g_{k+1}\dotsm g_n:)$. Then 
\[
V(F)=\left(\prod_{e\in E}q_\mu(f_{\ell(e)},g_{r(e)})\right)\left(:\prod_{v\in A_E}\alpha_v:\right),
\]
where $\alpha_v$ is either $f_v$ or $g_v$ and $\ell(e)$ is the left end point and $r(e)$ the right end point.

\begin{cor}
$$\int (:f_1\dotsm f_k:)(: g_{k+1}\dotsm g_n:)\dd\mu=\text{sum of value of complete Feynman diagrams}.$$
\end{cor}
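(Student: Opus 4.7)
The plan is to deduce the corollary from a stronger product-expansion identity: for any $f_1,\ldots,f_k,g_{k+1},\ldots,g_n$,
\[
(:f_1\cdots f_k:)\,(:g_{k+1}\cdots g_n:)=\sum_{F}V(F),
\]
where the sum runs over \emph{all} generalized Feynman diagrams $F$ on the vertex set $\{1,\ldots,k\}\sqcup\{k+1,\ldots,n\}$, complete or not. Once this is established, integrating both sides against $\mu$ finishes the proof: the value $V(F)=\bigl(\prod_{e\in E}q_\mu(f_{\ell(e)},g_{r(e)})\bigr)\bigl(:\prod_{v\in A_E}\alpha_v:\bigr)$ of an incomplete diagram contains a nontrivial Wick product, whose integral vanishes by the very definition of Wick ordering. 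Only complete diagrams survive, and for each such $F$ the value is the constant $\prod_{e\in E}q_\mu(f_{\ell(e)},g_{r(e)})$, yielding exactly the claimed sum.

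To prove the expansion, the cleanest route is via generating functions. Setting $\xi=\sum_{i=1}^{k}\alpha_i f_i$ and $\eta=\sum_{j=k+1}^{n}\beta_j g_j$, the identity $(:e^{\zeta}:)=e^{\zeta-\tfrac12 q_\mu(\zeta,\zeta)}$ applied to $\zeta=\xi,\eta,\xi+\eta$ gives
\[
(:e^{\xi}:)(:e^{\eta}:)=(:e^{\xi+\eta}:)\,e^{q_\mu(\xi,\eta)}=(:e^{\xi+\eta}:)\,\exp\!\Big(\sum_{i,j}\alpha_i\beta_j\,q_\mu(f_i,g_j)\Big).
\]
I would then compare coefficients of $\alpha_1\cdots\alpha_k\beta_{k+1}\cdots\beta_n$ on both sides. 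On the left the coefficient is manifestly $(:f_1\cdots f_k:)(:g_{k+1}\cdots g_n:)$. On the right, expanding the second exponential as $\sum_{r\ge0}\tfrac1{r!}\bigl(\sum_{i,j}\alpha_i\beta_j\,q_\mu(f_i,g_j)\bigr)^r$ and demanding each $\alpha_i,\beta_j$ appear at most once selects an edge set $E$ consisting of $r$ vertex-disjoint pairs, the $r!$ orderings cancelling the $1/r!$ prefactor; the complementary coefficient from $(:e^{\xi+\eta}:)$ is exactly the Wick monomial $(:\prod_{v\in A_E}\alpha_v:)$ on the unpaired vertices.

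The main obstacle is therefore purely combinatorial: one must verify that this extraction enumerates each generalized Feynman diagram exactly once and with the correct value. This reduces to checking that the coefficient of $\prod_{v\in A_E}\alpha_v$ in $\tfrac{1}{|A_E|!}(:(\xi+\eta)^{|A_E|}:)$ is exactly $(:\prod_{v\in A_E}\alpha_v:)$, the multinomial factor $|A_E|!$ cancelling the $1/|A_E|!$; this is a direct application of the multinomial theorem together with the linearity of Wick ordering in its arguments. After this, the integration step described in the first paragraph, combined with the vanishing property $\int(:\prod\alpha_v:)\dd\mu=0$ for nonempty products, concludes the proof.
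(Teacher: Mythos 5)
Your proposal is correct, and its overall structure --- expand the product of two Wick monomials into a sum of values of generalized Feynman diagrams, then integrate and observe that every non-complete diagram contributes a nontrivial Wick monomial whose integral vanishes by the defining normalization $\int_W(:f_1^{n_1}\dotsm f_k^{n_k}:)\dd\mu=0$ --- is exactly the argument the paper intends; the text states the corollary without proof, having only gestured at the expansion via the contraction identity $(:f_1\dotsm f_n:)=f_1(:f_2\dotsm f_n:)-\sum_j q_\mu(f_1,f_j)(:f_2\dotsm\widehat{f}_j\dotsm f_n:)$. Where you genuinely add something is in how you prove the expansion: instead of iterating that recursion (which forces one to track how contractions distribute between the two blocks and to check that cross-block contractions are the only ones that survive), you polarize the one-variable generating identity $(:\ee^{\xi}:)(:\ee^{\eta}:)=(:\ee^{\xi+\eta}:)\ee^{q_\mu(\xi,\eta)}$ with $\xi=\sum_i\alpha_i f_i$, $\eta=\sum_j\beta_j g_j$ and extract the multilinear coefficient; the $r!/r!$ and $|A_E|!/|A_E|!$ cancellations you point out are exactly right, and the multilinearity of Wick ordering needed for the extraction is the identity $(:(\alpha f+\beta g)^n:)=\sum_k\binom{n}{k}\alpha^k\beta^{n-k}(:f^kg^{n-k}:)$ already recorded in the text. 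The generating-function route gives the whole sum over diagrams in one stroke and makes the enumeration (each admissible edge set counted exactly once, edges only between the two blocks, so complete diagrams exist only when $k=n-k$) transparent, at the cost of first justifying $(:\ee^{\zeta}:)=\ee^{\zeta-\frac12 q_\mu(\zeta,\zeta)}$ for linear combinations $\zeta$; the recursive route is more elementary but combinatorially messier. Either way the final step is the same, and your proof is complete.
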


Consider again the integral $\int_{\calS'(\R^2)}A(\phi)\dd\mu(\phi)$. This can be computed using generalized Feynman diagrams. Our goal is to show that $\ee^{-\widetilde{S}_{I,\Lambda}}\in L^1(\calS'(\R^n))$. Consider thus $\widetilde{S}^k_{I,\Lambda}(\phi)=\int_\Lambda (:\phi^{2k}(x):)\dd x$. We want to know whether $\ee^{-\widetilde{S}^k_{I,\Lambda}}\in L^1(\calS'(\R^n))$. E.g. $(:x^4:)=x^4-6x^2+3$, then $\ee^{-(:x^4:)}$ can behave bad. 

\begin{lem}
$$\widetilde{S}^k_{I,\Lambda}\geq -b(\log k)^n,$$
as $k\to\infty$ for some $b>0$.
\end{lem}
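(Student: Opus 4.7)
The plan is to express the Wick-ordered monomial in terms of Hermite polynomials evaluated at the suitably normalized smeared field, and then exploit the elementary fact that Hermite polynomials of even degree with positive leading coefficient are bounded below on $\R$. Writing $2n$ for the degree of the monomial being Wick-ordered (so that the claimed bound is $(\log k)^n$), the entire statement will reduce to two ingredients: a precise asymptotic for the variance of the mollified field, and the standard rewriting of Wick ordering via Hermite polynomials.

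First I would record the variance $c_k := C(\delta_{k,x},\delta_{k,x})$, which is independent of $x\in\R^2$ by translation invariance of $C$. Using the mollifier $\delta_{k,x}(y)=k^2 h(k(x-y))$ and the short-distance behavior $C(x,y)\sim -\log(m|x-y|)$ as $|x-y|\to 0$ that was established earlier for $n=2$, a change of variables $u=k(x-y)$, $v=k(x-z)$ gives
\begin{equation*}
c_k=\iint h(u)h(v)\,C\!\left(\tfrac{u}{k},\tfrac{v}{k}\right)du\,dv \;=\;\alpha\log k+O(1),\qquad \alpha>0,
\end{equation*}
as $k\to\infty$, where the constant $\alpha$ depends only on $h$. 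This is the step that packages all the analytic content about the Green's function.

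The algebraic core is the identity, valid for any Gaussian variable $X$ of variance $\sigma$,
\begin{equation*}
(:X^{2n}:)=\sigma^{n}H_{2n}\!\left(\frac{X}{\sqrt{\sigma}}\right),
\end{equation*}
where $H_{2n}$ is the $2n$-th Hermite polynomial. This follows immediately from comparing the generating function $\exp(tX-\tfrac12 t^2\sigma)=\sum_{n}\frac{t^n}{n!}(:X^n:)$ introduced earlier with the standard generating function $\exp(ty-\tfrac12 t^2)=\sum_n \frac{t^n}{n!}H_n(y)$. Applying this with $X=\phi_k(x):=\phi(\delta_{k,x})$ and $\sigma=c_k$ yields the pointwise (in $x$ and $\phi$) identity $(:\phi_k(x)^{2n}:)=c_k^{\,n}\,H_{2n}\!\left(\phi_k(x)/\sqrt{c_k}\right)$.

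The conclusion is then immediate: since $H_{2n}$ is a polynomial of even degree with positive leading coefficient, it attains a global minimum on $\R$, so $H_{2n}(y)\geq -M_n$ for a constant $M_n>0$ depending only on $n$. Consequently
\begin{equation*}
(:\phi_k(x)^{2n}:)\;\geq\;-M_n\,c_k^{\,n}\;\geq\;-M_n(\alpha\log k+O(1))^n\;\geq\;-b'(\log k)^n
\end{equation*}
for $k$ large enough, uniformly in $x$ and $\phi$. Integrating over the compact set $\Lambda$ absorbs the factor $|\Lambda|$ into the constant, giving $\widetilde{S}^{k}_{I,\Lambda}(\phi)\geq -b(\log k)^n$ as claimed. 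The only nontrivial obstacle is the first paragraph, namely rigorously extracting the logarithmic divergence $c_k\sim\alpha\log k$ from the short-distance expansion of $C(x,y)$; once that is in hand the rest is essentially algebra and the one-line convexity fact about Hermite polynomials.
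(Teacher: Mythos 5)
Your argument is correct and is essentially the paper's own proof: the identity $(:\phi_k(x)^{2n}:)=c_k^{\,n}H_{2n}\bigl(\phi_k(x)/\sqrt{c_k}\bigr)$ is exactly the explicit expansion $C_k(x,x)^n\sum_{j}\frac{(2n)!}{j!(2n-2j)!}\frac{(-1)^j}{2^j}\bigl(\phi(\delta_{k,x})/\sqrt{C_k(x,x)}\bigr)^{2n-2j}$ used in the text, and both proofs then combine the lower bound for an even-degree polynomial with positive leading coefficient with the logarithmic growth $C_k(x,x)=O(\log k)$. Your first paragraph merely makes explicit the variance asymptotic that the paper leaves implicit.
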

\begin{rem}
This shows that $\widetilde{S}_{I,\Lambda}^k$ does not gneralize to a polynomial, which is is not bounded from below.
\end{rem}
\begin{proof}
Let $Q(y)=\sum_{k=0}^{2n}a_ky^k$, for $a_{2n}>0$. Then 
\[
\inf_{y\in\R}Q(y)\geq -b,
\]
for some $0\leq b<\infty$, and 
\begin{multline*}
(:\phi(\delta_{k,x})^{2n}:)=\sum_{k=0}^{2n}=\frac{(2n)!}{k!}\phi(\delta_{k,x})^{2n-2k}\left((-1)\cdot\frac{C(\delta_{k,x},\delta_{k,x})}{2}\right)^k\\=C_k(x,x)^n\sum_{k=0}^{2n}\frac{(2n)!}{k!(2n-2k)!}\frac{(-1)^k}{2^k}\left(\frac{\phi(\delta_{k,x})}{\sqrt{C_k(x,x)}}\right)^{2n-2k}.
\end{multline*}
Thus $(:\phi(\delta_{k,x}):)^{2n}\geq -b\int_\Lambda C_k(x,x)^n$ for some $b>0$ and hence $\widetilde{S}_{I,\Lambda}^k(\phi)\geq -b\int_\Lambda C_k(x,x)^n\geq -\tilde{b}(\log k)^n$ as $k\to \infty$.
\end{proof}
\begin{cor}
$\ee^{-\widetilde{S}_{I,\Lambda}^k}\in L^p(\calS'(\R^n))$ for all $p$.
\end{cor}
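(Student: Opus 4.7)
The plan is to read off from the proof of the preceding lemma that the lower bound on $\widetilde{S}^k_{I,\Lambda}$ is in fact a pointwise bound uniform in $\phi\in\calS'(\R^n)$, and then to exploit the fact that $\mu$ is a probability measure. Concretely, the decomposition used in the lemma's proof,
$$(:\phi(\delta_{k,x})^{2n}:)=C_k(x,x)^n\,Q\!\left(\frac{\phi(\delta_{k,x})}{\sqrt{C_k(x,x)}}\right),$$
exhibits the integrand as $C_k(x,x)^n$ times a real polynomial $Q$ of degree $2n$ in one real variable with positive leading coefficient. Such a $Q$ attains its infimum on $\R$ at a finite value, so there exists $b=b(n)>0$ (independent of $\phi$, $x$ and $k$) with $Q(y)\geq -b$ for all $y\in\R$.

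Integrating this inequality over $\Lambda$ yields, for every $\phi\in\calS'(\R^n)$,
$$\widetilde{S}^k_{I,\Lambda}(\phi)\;\geq\;-b\int_\Lambda C_k(x,x)^n\,\dd x\;=:\;-M_k,$$
with $M_k<\infty$ for each fixed $k$ since $\Lambda$ is compact and $C_k(x,x)$ is smooth (moreover $M_k=O((\log k)^n)$ as $k\to\infty$, by the lemma). Exponentiating gives the uniform upper bound
$$0\;\leq\;\ee^{-\widetilde{S}^k_{I,\Lambda}(\phi)}\;\leq\;\ee^{M_k}\qquad\text{for all }\phi\in\calS'(\R^n).$$

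Since $\mu$ is a Borel probability measure on $\calS'(\R^n)$, every bounded measurable function lies in $L^p(\mu)$ for all $1\leq p\leq\infty$. Applied to our situation,
$$\bigl\|\ee^{-\widetilde{S}^k_{I,\Lambda}}\bigr\|_{L^p(\mu)}^p\;=\;\int_{\calS'(\R^n)}\ee^{-p\,\widetilde{S}^k_{I,\Lambda}(\phi)}\,\dd\mu(\phi)\;\leq\;\ee^{p\,M_k}\;<\;\infty,$$
which is the corollary.

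The only point worth emphasizing is conceptual rather than technical: the ``as $k\to\infty$'' phrasing of the lemma could be misread as an asymptotic statement only, whereas its proof in fact delivers a $\phi$-independent lower bound at each fixed $k$. There is no serious analytic obstacle in the corollary itself; the genuinely hard step — which is \emph{not} claimed here — is to obtain integrability bounds uniform in the regularization parameter $k$, as one would need in order to pass to the limit $\widetilde{S}^k_{I,\Lambda}\to\widetilde{S}_{I,\Lambda}$ and ultimately conclude that $\ee^{-\widetilde{S}_{I,\Lambda}}\in L^1(\mu)$.
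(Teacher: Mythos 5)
Your proof is correct and is essentially the argument the paper intends: the lemma's proof yields a $\phi$-independent lower bound $\widetilde{S}^k_{I,\Lambda}\geq -b\int_\Lambda C_k(x,x)^n\,\dd x>-\infty$ at each fixed $k$, so $\ee^{-\widetilde{S}^k_{I,\Lambda}}$ is bounded and hence lies in every $L^p$ of the probability measure $\mu$. Your remark that the lemma's ``as $k\to\infty$'' phrasing masks a fixed-$k$ pointwise bound, and that the real difficulty is uniformity in $k$ (addressed later via the bad sets $X(k)$), is exactly the right reading.
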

Consider $S^{I}(P,f)(\phi)=\int_{\R^2}f(x)(:P(\phi(x)):)\dd x$, where $P(x)=\sum_n a_nx^n$ is a function on $\calS'(\R^2)$ and $f\in L^2(\R^2)$. We showed, $S^{I}(P,f)\in L^2(\calS'(\R^2),\mu)$. Let $S^{I,k}(P,f)=\int f(x)P(\phi,\delta_{k,x})\dd x$, where $\delta_{k,x}$ is a smooth approximation of $\delta_x$. We showed, if $S^{I,k}$ is Cauchy, then $S^{I}(P,f)=\lim_{k\to \infty}S^{I,k}(P,f)$. In fact, 
\[
\| S^{I,k}(P,f)-S^{I}(P,f)\|_{L^2(\calS'(\R^2),\mu)}\leq Cf^{-\delta},
\]
for some $\delta>0$ as $k\to\infty$. Moreover, $\ee^{-S^{I,k}(P)}\in L^1(\calS'(\R^n))$, where $S^{I,k}_\Lambda(P)(\phi)=\int_\Lambda(:P(\phi(x)):)\dd x$ with $P(x)=x^{2k}$. The idea is that $S^{I,k}_\Lambda(\phi)\geq -C(\log k)^n$ for some $C>0$. We can observe that $S^{I,k}_\Lambda(\phi)\geq 1-\widetilde{C}\log (k)^n$ for some $\widetilde{C}>0$ for large $k$ (take $\widetilde{C}=\frac{2}{3}C$). The goal was to show that $\ee^{-S^{I}_\Lambda(P)}\in L^1(\calS'(\R^2),\mu)$. The strategy is to study the sets, where $S^{I}_\Lambda(P)$ is bad, and then show that these sets have measure zero. Define a ``bad set''
\[
X(k):=\{\phi\in\calS'(\R^2)\mid S^{I}_\Lambda(P)(\phi)\leq \widetilde{C}(\log k)^n\}.
\]
\begin{lem}
$$X(k)\subseteq \{\phi\in\calS'(\R^2)\mid \vert S^{I}_\Lambda(P)(\phi)-S^{I,k}_\Lambda(P)(\phi)\vert\geq 1\}.$$
\end{lem}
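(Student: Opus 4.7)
The plan is to prove this inclusion directly by a three-line arithmetic comparison of the two lower bounds; the whole content is packaging earlier estimates properly, so I would not expect any real obstacle beyond reconciling signs in the definition of $X(k)$ (as written in the excerpt the inequality defining $X(k)$ must be read as $S^I_\Lambda(P)(\phi)\le -\widetilde C(\log k)^n$, since otherwise the inclusion is vacuous/wrong and the motivation of bounding $\ee^{-S^I_\Lambda(P)}$ on a ``bad'' set is lost).

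First I would take $\phi\in X(k)$, so by definition
\[
S^{I}_\Lambda(P)(\phi)\;\le\;-\widetilde C(\log k)^n.
\]
Second, I would invoke the lower bound on the cut-off interaction established just above the lemma: from the proof that the Wick-ordered integrand satisfies $(:\phi(\delta_{k,x})^{2n}:)\ge -b\,C_k(x,x)^n$ together with $C_k(x,x)\lesssim \log k$, one obtains $S^{I,k}_\Lambda(P)(\phi)\ge -\tilde b(\log k)^n$, and after replacing the constant by a strictly smaller one $\widetilde C<\tilde b$ and absorbing the error for $k$ large, one gets the sharpened bound
\[
S^{I,k}_\Lambda(P)(\phi)\;\ge\;1-\widetilde C(\log k)^n,\qquad k\gg 1,
\]
which holds for every $\phi\in\calS'(\R^2)$ (it is a deterministic pointwise bound).

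Third, I would subtract the two inequalities: for $\phi\in X(k)$,
\[
S^{I,k}_\Lambda(P)(\phi)-S^{I}_\Lambda(P)(\phi)\;\ge\;\bigl(1-\widetilde C(\log k)^n\bigr)-\bigl(-\widetilde C(\log k)^n\bigr)\;=\;1,
\]
so in particular $|S^{I}_\Lambda(P)(\phi)-S^{I,k}_\Lambda(P)(\phi)|\ge 1$, which is exactly the statement that $\phi$ lies in the set on the right-hand side. This proves the claimed inclusion.

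The only genuinely substantive ingredient is the deterministic pointwise lower bound on $S^{I,k}_\Lambda(P)$ of the form $1-\widetilde C(\log k)^n$; this is the place where one must know that the leading coefficient of $P$ is positive (so that the Wick rewriting exhibits a dominant positive quadratic-in-Hermite term), that $C_k(x,x)=O(\log k)$ coming from the logarithmic short-distance behaviour of the Green function in two dimensions (the third item of the proposition on properties of $C(x,y)$), and that $|\Lambda|<\infty$ makes the $x$-integration harmless. Once this deterministic bound is in hand, the inclusion itself is a one-line triangle-inequality argument, and it is then used in the follow-up step (not requested here) together with $L^p$-convergence $S^{I,k}\to S^{I}$ to prove $\mu(X(k))\to 0$ fast enough that $\ee^{-S^{I}_\Lambda(P)}\in L^1(\calS'(\R^2),\mu)$.
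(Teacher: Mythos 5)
Your proof is correct and follows the same route as the paper: combine the deterministic pointwise bound $S^{I,k}_\Lambda(P)(\phi)\geq 1-\widetilde{C}(\log k)^n$ (established just before the lemma) with the defining inequality of $X(k)$ to conclude $S^{I}_\Lambda(P)(\phi)-S^{I,k}_\Lambda(P)(\phi)\leq -1$. Your reading of the definition of $X(k)$ with the sign $S^{I}_\Lambda(P)(\phi)\leq -\widetilde{C}(\log k)^n$ is also the one the paper's own computation implicitly uses (its displayed definition and the ``$\leq 1$'' under the brace are typos; the underbraced quantity must be $\leq 0$), so no further comment is needed.
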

\begin{proof}
Let $\phi\in X(k)$. Then $$S^{I}_\Lambda(P)(\phi)-S^{I,k}_\Lambda(P)(\phi)\leq S_\Lambda^{I}(P)(\phi)-(1-\widetilde{C}(\log k)^n=\underbrace{S^{I}_\Lambda(P)(\phi)+\widetilde{C}(\log k)^n}_{\leq 1}-1\leq -1.$$
\end{proof}

\begin{prop}
There is a $B>0$ and $\delta>0$ such that $\mu(X(k))\leq Bk^{-\delta}$ as $k\to\infty$.
\end{prop}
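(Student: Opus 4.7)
The plan is Chebyshev's inequality combined with the $L^2$-convergence rate for $S^{I,k}_\Lambda(P) \to S^I_\Lambda(P)$ that was recorded earlier in the excerpt. First, by the preceding lemma,
$$X(k) \subseteq Y(k) := \{\phi \in \calS'(\R^2) : |S^I_\Lambda(P)(\phi) - S^{I,k}_\Lambda(P)(\phi)| \ge 1\},$$
so it suffices to bound $\mu(Y(k))$.

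Second, Chebyshev's inequality at the $L^2$ level gives $\mu(Y(k)) \le \|S^I_\Lambda(P) - S^{I,k}_\Lambda(P)\|_{L^2(\mu)}^2$. Third, I invoke the quantitative approximation estimate
$$\|S^I_\Lambda(P) - S^{I,k}_\Lambda(P)\|_{L^2(\mu)} \le C\,k^{-\delta'},$$
recorded (modulo typographical corrections) earlier in the excerpt for some constants $C,\delta' > 0$. Combining the three steps gives
$$\mu(X(k)) \le \mu(Y(k)) \le C^2 k^{-2\delta'} =: B k^{-\delta}$$
with $B:=C^2$ and $\delta := 2\delta'$, which is exactly the claimed bound.

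The real content is the third step. The difference $S^I_\Lambda(P) - S^{I,k}_\Lambda(P)$ is a finite sum of random variables living in the Wiener chaoses $K^{:j:}$ of order $j \le \deg P$, so by Wick's theorem its squared $L^2$-norm reduces to a finite linear combination of integrals over $\Lambda \times \Lambda$ of polynomial expressions in $C(x,y)$ and in $C_k(x,y):=C(\delta_{k,x},\delta_{k,y})$; the desired polynomial decay follows from the logarithmic near-diagonal behaviour $C(x,y) \sim -\log\|x-y\|$ on $\R^2$ together with quantitative convergence estimates for $C_k \to C$ in the relevant integral norms. This is the main obstacle. Because all $L^p$-norms are mutually equivalent on a fixed Wiener chaos (Nelson's hypercontractivity), one may in fact upgrade the $L^2$ rate to $\|S^I_\Lambda(P) - S^{I,k}_\Lambda(P)\|_{L^p(\mu)} \le C_p k^{-\delta'}$ for every $p \ge 1$, and Markov's inequality with arbitrary $p$ then shows that the exponent $\delta$ in $\mu(X(k)) \le B k^{-\delta}$ can be taken as large as desired, at the expense of a larger constant $B$.
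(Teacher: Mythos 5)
Your argument is exactly the paper's: the inclusion $X(k)\subseteq\{\vert S^{I}_\Lambda(P)-S^{I,k}_\Lambda(P)\vert\geq 1\}$ from the preceding lemma, followed by the Chebyshev bound $\mu(X(k))\leq\int\vert S^{I}_\Lambda(P)-S^{I,k}_\Lambda(P)\vert^2\dd\mu$ and the previously recorded $L^2$ convergence rate $\leq Ck^{-\delta'}$ (correctly reading the paper's typo $Cf^{-\delta}$ as $Ck^{-\delta}$). Your closing remarks on Wick's theorem and hypercontractivity go beyond what the paper does but do not change the route.
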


\begin{proof}
We have 
\begin{equation*}
\mu(X(k))=\int_{X(k)}\dd\mu\leq \int_{X(k)}\vert S^{I}_\Lambda(P)-S^{I,k}_\Lambda(P)\vert^2\dd\mu\leq \int_{\calS'}\vert S^{I}_\Lambda(P)-S^{I,k}_\Lambda(P)\vert^2\dd\mu\leq B_\Lambda k^{-\delta}
\end{equation*}
as $k\to\infty$.
\end{proof}

\begin{rem}
One can show that $\mu(X(k))\leq C \Exp\left({-k^\alpha}\right)$ for some $\alpha>0$ as $k\to\infty$.
\end{rem}

Let $(\Omega,\sigma(\Omega),\mu)$ be a probability space and $f\colon \Omega\to\R$ a measurable function on $\Omega$. Denote by
\[
\mu_f(x)=\mu(\{\omega\in\Omega\mid f(\omega)\geq x\}).
\]
Let $F$ be an increasing positive function on $\R$ such that $\lim_{x\to \infty}F(x)=\infty$. Then $$\int_\Omega F(f(\omega))\dd\mu(\omega)=\int_\R F(x)\mu_f(x)\dd x$$.

\begin{thm}
Let $f$ be a measurable function on $\Omega$ such that $\mu(\{\omega\in \Omega\mid -f(\omega)\geq C(\log k)^n\})\leq C\ee^{-k_0^\alpha}$ for $k\geq k_0$. Then 
\[
\int_\Omega \ee^{-f(\omega)}\dd\mu(\omega)<\infty.
\]
\end{thm}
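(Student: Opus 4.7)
The plan is to reduce $\int \ee^{-f}\dd\mu$ to an integral of the tail distribution of $g := -f$ via the layer cake formula, and then to use the super-exponential decay encoded in the hypothesis to establish convergence.

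I would begin by writing, for the non-negative function $\ee^{g}$,
\[
\int_\Omega \ee^{-f(\omega)}\dd\mu(\omega) \;=\; \int_0^\infty \mu(\{\ee^{g}> t\})\dd t \;\leq\; 1 + \int_0^\infty \ee^{s}\,\mu(\{g>s\})\dd s,
\]
where the change of variables $t = \ee^{s}$ is applied on $[1,\infty)$ and the contribution from $t\in[0,1]$ is bounded using $\mu(\Omega)=1$. It thus suffices to control the last integral.

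Next, define $s_k := C(\log k)^n$ for $k\geq k_0$. The hypothesis gives $\mu(\{g\geq s_k\})\leq C\ee^{-k^\alpha}$, and since $s\mapsto \mu(\{g>s\})$ is non-increasing,
\[
\mu(\{g>s\}) \;\leq\; C\ee^{-k^\alpha} \qquad \text{for } s\in[s_k,s_{k+1}).
\]
The part of the integral below $s_{k_0}$ is trivially bounded by $\ee^{s_{k_0}}$, and the tail decomposes as
\[
\sum_{k=k_0}^\infty \int_{s_k}^{s_{k+1}} \ee^{s}\,\mu(\{g>s\})\dd s \;\leq\; C\sum_{k=k_0}^\infty \ee^{s_{k+1}}\ee^{-k^\alpha} \;=\; C\sum_{k=k_0}^\infty \ee^{C(\log(k+1))^n - k^\alpha}.
\]

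The only remaining point is the elementary growth estimate $(\log k)^n = o(k^\alpha)$ as $k\to\infty$, valid for every $\alpha>0$ and every $n$. Hence the exponent $C(\log(k+1))^n - k^\alpha$ tends to $-\infty$ faster than $-M\log k$ for any $M$, so each summand decays faster than any inverse power of $k$ and the series converges. Combined with the finite low-$s$ contribution, this yields $\int \ee^{-f}\dd\mu<\infty$. The argument involves no serious analytic difficulty; the only care required is the bookkeeping to pass from the discrete bound given at $s=s_k$ to a continuous $s$-integral, which is achieved by the monotonicity of the distribution function.
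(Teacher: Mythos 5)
Your proof is correct and follows essentially the same route as the paper's: both reduce $\int_\Omega\ee^{-f}\dd\mu$ to the tail distribution of $-f$ via the layer-cake formula and then exploit that the hypothesised decay dominates $\ee^{s}$, the only difference being that you pass from the discrete bound to the integral by partitioning $[s_{k_0},\infty)$ into the intervals $[s_k,s_{k+1})$, whereas the paper inverts $x=C(\log k)^n$ to obtain the continuous estimate $\mu(\{-f\geq x\})\leq C\exp\bigl(-\ee^{\alpha(x/C)^{1/n}}\bigr)$. You have also (correctly) read the hypothesis as $\mu(\{-f\geq C(\log k)^n\})\leq C\ee^{-k^{\alpha}}$; the $\ee^{-k_0^{\alpha}}$ printed in the statement is a typo, since with a constant bound the conclusion would fail.
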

\begin{proof}
We have 
\begin{multline*}
\int_\Omega \ee^{-f(\omega)}\dd\mu(\omega)=\int_{\{\omega\in \Omega\mid f(\omega)<C(\log k)^n\}}\ee^{-f(\omega)}\dd\mu(\omega)+\int_{\{\omega\in \Omega\mid -f(\omega)\geq C(\log k_0)^n\}}\ee^{-f(\omega)}\dd\mu(\omega)\\
\leq B_1\int \ee^{x}\mu_f(x)\dd x\leq B_1+\int \ee^x\Exp\left(-\ee^{\alpha\left(\frac{x}{C}\right)^{1/n}}\right)\dd x<\infty.
\end{multline*}
\end{proof}

\begin{cor}
$\ee^{-S^{I}_\Lambda(P)}\in L^1(\calS'(\R^n))$.
\end{cor}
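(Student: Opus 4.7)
The plan is to apply the immediately preceding theorem with $\Omega = \mathcal{S}'(\mathbb{R}^2)$, the probability measure $\mu$ coming from the free scalar Gaussian on $\mathcal{S}'(\mathbb{R}^2)$, and the measurable function $f := S^{I}_\Lambda(P)$. What must be checked is the decay hypothesis
\[
\mu\bigl(\{\phi \in \mathcal{S}'(\mathbb{R}^2) \mid -S^{I}_\Lambda(P)(\phi) \geq C(\log k)^n\}\bigr) \leq C'\, \ee^{-k^\alpha} \qquad (k \geq k_0)
\]
for some constants $C, C', \alpha > 0$. Once this is in hand, the theorem outputs $\int_{\mathcal{S}'(\mathbb{R}^2)} \ee^{-S^{I}_\Lambda(P)}\, \dd\mu < \infty$, which is precisely the claim.

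The set appearing on the left is exactly the bad set $X(k)$ introduced above (up to the sign convention clarified in the lemma just before: $X(k)$ should be read as $\{\phi \mid S^{I}_\Lambda(P)(\phi) \leq -\widetilde{C}(\log k)^n\}$, so that the lemma's comparison with the uniformly-bounded-below cutoff $S^{I,k}_\Lambda(P) \geq 1-\widetilde{C}(\log k)^n$ produces $|S^{I}_\Lambda(P) - S^{I,k}_\Lambda(P)| \geq 1$ on $X(k)$). The required tail bound is then the stronger version of the proposition announced in the subsequent remark, namely $\mu(X(k)) \leq C\ee^{-k^\alpha}$ for some $\alpha > 0$, which upgrades the polynomial bound $Bk^{-\delta}$ proved there.

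The only real work is to justify that strengthening, and this is the main obstacle. The polynomial bound alone is insufficient: the theorem's integral estimate $\int \ee^{x}\mu_f(x)\dd x$ needs $\mu_f(x)$ to decay faster than any exponential in $x$, since on $-f \geq C(\log k)^n$ one has $\ee^{-f} \geq \ee^{C(\log k)^n}$, and only a doubly-exponential tail in $k$ will suppress this. The way to obtain it is to replace the Chebyshev inequality $\mu(X(k)) \leq \|S^{I}_\Lambda(P) - S^{I,k}_\Lambda(P)\|_{L^2(\mu)}^2$ used in the proof of the proposition with a high-moment inequality $\mu(X(k)) \leq \|S^{I}_\Lambda(P) - S^{I,k}_\Lambda(P)\|_{L^p(\mu)}^p$, and then to optimize in $p$. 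The moments are controllable because $S^{I}_\Lambda(P) - S^{I,k}_\Lambda(P)$ lies in a fixed finite direct sum of Wiener chaoses $\bigoplus_{j\leq 2n} K^{:j:}$ associated with the Gaussian Hilbert space $K$ of the free field; on each such chaos Gaussian hypercontractivity (Nelson's theorem, equivalently the Bosonic Fock space norm comparison developed earlier in the notes) gives $\|\xi\|_{L^p} \leq (p-1)^{n}\|\xi\|_{L^2}$.

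Combining this with the already-established estimate $\|S^{I}_\Lambda(P) - S^{I,k}_\Lambda(P)\|_{L^2(\mu)} \leq C k^{-\delta}$ yields $\mu(X(k)) \leq \bigl(C_n\, p^n\, k^{-\delta}\bigr)^p$, and the choice $p = c\, k^{\delta/(n+1)}$ gives a bound of the form $C\ee^{-k^\alpha}$ with $\alpha = \delta/(n+1) > 0$. Feeding this into the preceding theorem closes the proof, so that $\ee^{-S^{I}_\Lambda(P)} \in L^1(\mathcal{S}'(\mathbb{R}^2), \mu)$.
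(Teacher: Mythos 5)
Your proof takes the same route as the paper: the paper's entire proof of this corollary is ``take $f=S^{I}_\Lambda(P)$ and $\Omega=\calS'(\R^2)$'' in the preceding theorem, exactly as you propose. The one substantive difference is that you also supply an argument for the exponential tail bound $\mu(X(k))\leq C\ee^{-k^\alpha}$, which the paper merely asserts in the remark following the proposition $\mu(X(k))\leq Bk^{-\delta}$; your hypercontractivity argument (Markov at level $p$, Nelson's estimate $\|\xi\|_{L^p}\leq (p-1)^{n/2}\|\xi\|_{L^2}$ on a fixed finite sum of chaoses, then optimizing $p\sim k^{\delta/(n+1)}$) is a correct and standard way to obtain it, and you are right that the polynomial bound alone would not suffice to feed the theorem. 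You also correctly diagnose the sign convention in the definition of $X(k)$ needed to make the preceding lemma consistent.
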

\begin{proof}
Take $f=S^{I}_\Lambda(P)$ and $\Omega=\calS'(\R^2)$.
\end{proof}

\begin{rem}
If $P$ is a polynomial of the form $P(x)=\sum_{k=0}^{2n}a_kx^k$ with $a_{2n}>0$, and $f\in L^1(\R^2)\cap L^2(\R^2)$ with $f\geq 0$, then we can show that $\ee^{-S^{I}(P,f)}\in L^1(\calS'(\R^2),\mu)$.
\end{rem}

\begin{cor}
$\frac{\ee^{-S^{I}_\Lambda(P,f)}}{\int_{\calS'(\R^2)} \ee^{-S^{I}_\Lambda(P,f)}}$ is a probability measure on $\calS'(\R^2)$.
\end{cor}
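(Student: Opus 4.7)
The plan is to verify the three defining properties of a probability measure, namely nonnegativity, countable additivity, and total mass one. All the real work has already been done in the preceding theorem, corollary, and remark that established the $L^1$-integrability of $\ee^{-S^{I}_\Lambda(P,f)}$ with respect to the Gaussian measure $\mu$.

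First, I would write
\[
Z := \int_{\calS'(\R^2)} \ee^{-S^{I}_\Lambda(P,f)(\phi)}\,\dd\mu(\phi)
\]
and check that $0 < Z < \infty$. Finiteness is exactly the statement of the preceding remark (with $f\ge 0$, $f\in L^1\cap L^2$, and $P$ of the required even-degree form with positive leading coefficient). Strict positivity holds because the integrand is pointwise strictly positive and $\mu$ is a probability measure on $\calS'(\R^2)$, so the integral over any set of positive $\mu$-measure is strictly positive.

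Next I would define, for any Borel set $E\in \calB(\calS'(\R^2))$,
\[
\nu(E) := \frac{1}{Z}\int_E \ee^{-S^{I}_\Lambda(P,f)(\phi)}\,\dd\mu(\phi).
\]
Since the density $\frac{1}{Z}\ee^{-S^{I}_\Lambda(P,f)}$ is a nonnegative measurable function on $\calS'(\R^2)$, the standard theorem that a nonnegative measurable density against a measure defines a new measure shows that $\nu$ is a (nonnegative) measure on $\calB(\calS'(\R^2))$; in particular countable additivity follows from the monotone convergence theorem applied to the partial sums $\sum_{i=1}^N \chi_{E_i}$ for a disjoint family $\{E_i\}$. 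Nonnegativity is immediate from the nonnegativity of the integrand.

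Finally, setting $E = \calS'(\R^2)$ gives $\nu(\calS'(\R^2)) = Z/Z = 1$, so $\nu$ is a probability measure, which is the claim. There is no genuine obstacle at this stage: the potentially delicate issue of integrability was handled in the preceding result via the tail-bound $\mu(X(k))\lesssim k^{-\delta}$ together with the Fernique-type lemma showing that $\int \ee^{-f}\,\dd\mu<\infty$ under logarithmic lower bounds; once that is in hand, the present corollary reduces to the formal observation that normalizing a nonnegative $L^1$-density produces a probability measure.
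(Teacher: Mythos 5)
Your proposal is correct and matches the paper's (implicit) argument: the paper offers no separate proof because the corollary is an immediate consequence of the preceding results establishing $\ee^{-S^{I}_\Lambda(P,f)}\in L^1(\calS'(\R^2),\mu)$, and your write-up simply makes explicit the standard normalization-of-a-nonnegative-$L^1$-density argument, including the correct observations that $Z>0$ since the density is a.e.\ strictly positive and that countable additivity follows from monotone convergence.
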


\subsection{Theories with exponential interaction}
Consider the potetial $$V_g^\alpha(\phi)=\int g(x)(:\Exp(\alpha\phi(x)):)\dd x.$$ We want to define a theory for this type of interaction. Moreover, we want to show that $V_g^\alpha\in L^2(\calS'(\R^2),\mu)$ with certain assumption on $\alpha$ and $g$. Define 
\[
V_k^\alpha(g)=\int_{\R^2}g(x)(:\Exp(\alpha(\phi,\delta_{k,x})):)\dd x.
\]
Recall $(:\Exp(\alpha f):)=\sum_{k=0}^\infty\frac{\alpha^k}{k!}(:f^k:)$ for $f\in \calS(\R^2)$.

\begin{lem}
We get $V_k^\alpha(g)\in L^2(\calS'(\R^2),\mu)$, whenever $g\in L^1(\R^2)\cap L^2(\R^2)$ and $0\leq \alpha^2\leq 4\pi$.
\end{lem}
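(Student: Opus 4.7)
The plan is to reduce the $L^2$-norm squared of $V_k^\alpha(g)$ to a concrete double integral over $\mathbb{R}^2\times\mathbb{R}^2$, whose kernel involves only the mollified covariance $C_k$, and then to estimate that integral using the known short-distance behavior of $C$ in two dimensions. The starting observation is that the identity
\[
\int (:\exp(\alpha f):)\,(:\exp(\beta h):)\,d\mu = \exp(\alpha\beta\, q_\mu(f,h)),
\]
which was established earlier from the generating-function characterization of Wick powers together with $\int (:\exp(\alpha f+\beta h):)\,d\mu=1$, applies in particular to $f=\delta_{k,x}$ and $h=\delta_{k,y}$. Since $q_\mu(\delta_{k,x},\delta_{k,y})=C(\delta_{k,x},\delta_{k,y})=:C_k(x,y)$, an application of Fubini (justified after first truncating $g$ to a bounded compactly supported function and passing to the limit) yields
\[
\|V_k^\alpha(g)\|_{L^2(\mu)}^2 = \iint_{\mathbb{R}^2\times\mathbb{R}^2} g(x)\,g(y)\,\exp\!\bigl(\alpha^2 C_k(x,y)\bigr)\,dx\,dy.
\]

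Next, I would exploit the structure of $C_k$. By construction $C_k(x,y)=\iint h_k(x-u)\,C(u,v)\,h_k(y-v)\,du\,dv$, so $C_k$ is a symmetric, nonnegative, smooth function on $\mathbb{R}^2\times\mathbb{R}^2$ that inherits the decay of $C$ at infinity and is mollified near the diagonal. Using the asymptotic $C(u,v)\sim -\frac{1}{2\pi}\log(m|u-v|)$ from the propositon on properties of $C$, a routine change of variables gives the estimate
\[
C_k(x,y)\le -\frac{1}{2\pi}\log\bigl(m\max(|x-y|,1/k)\bigr) + M,
\]
for a universal constant $M$ absorbing the exponential tail. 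Hence
\[
\exp\!\bigl(\alpha^2 C_k(x,y)\bigr) \le e^{\alpha^2 M}\,\bigl(m\max(|x-y|,1/k)\bigr)^{-\alpha^2/(2\pi)}.
\]

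Finally, I would bound the double integral. The kernel on the right is pointwise dominated (for each $k$) by $\mathrm{const}\cdot|x-y|^{-\alpha^2/(2\pi)}$ for $|x-y|\le 1$ together with an exponentially decaying tail for $|x-y|\ge 1$; in $\mathbb{R}^2$ this kernel is locally integrable precisely when $\alpha^2/(2\pi)<2$, i.e.\ $\alpha^2<4\pi$. Splitting the double integral into the region $|x-y|\le 1$ and its complement, the first piece is controlled by the weak-Young inequality
\[
\iint_{|x-y|\le 1}|g(x)||g(y)||x-y|^{-\alpha^2/(2\pi)}\,dx\,dy \le C_\alpha\|g\|_1\|g\|_p
\]
for an appropriate $p\in[1,2]$ determined by $\alpha^2/(2\pi)$, while the second piece is bounded by $\|g\|_1^2$ times the $L^\infty$-norm of the exponentially decaying tail of the kernel. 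Both are finite under the hypothesis $g\in L^1\cap L^2$ and $\alpha^2\le 4\pi$, giving $\|V_k^\alpha(g)\|_{L^2(\mu)}<\infty$. The subtle step is the endpoint $\alpha^2=4\pi$: here the diagonal behavior is borderline non-integrable in the limit $k\to\infty$, so the $k$-dependent cutoff $1/k$ in the estimate above is essential, and one must check that $k$-dependent constants do not spoil the $L^2$-membership for fixed $k$. This is where I expect to spend the most care, replacing the Young-inequality estimate by a direct computation that exploits the logarithmic nature of the singularity and the explicit $1/k$ regularization.
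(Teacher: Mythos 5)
Your proposal is correct and follows essentially the same route as the paper: reduce $\|V_k^\alpha(g)\|^2_{L^2(\mu)}$ to $\iint_{\R^2\times\R^2} g(x)g(y)\exp\bigl(\alpha^2 C_k(x,y)\bigr)\dd x\dd y$ via the covariance identity for Wick exponentials, then control the kernel through the logarithmic short-distance singularity of $C$ in two dimensions, which makes $\exp(\alpha^2 C)$ behave like $\|x-y\|^{-\alpha^2/(2\pi)}$ near the diagonal and hence locally integrable exactly when $\alpha^2<4\pi$, with the far-field handled by the exponential decay of $C$ and $g\in L^1\cap L^2$. The only differences are cosmetic: the paper replaces your mollified estimate on $C_k$ by the cleaner pointwise bound $C_k(x,y)\le C(x,y)$, and, like you, it really only establishes the strict case $\alpha^2<4\pi$ — your observation that for fixed $k$ the $1/k$-regularization (indeed the boundedness of $C_k$ for fixed $k$) rescues the endpoint is a sensible refinement the paper does not bother with.
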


\begin{proof}
We have 
\begin{equation*}
\left\langle (:\Exp(\alpha f):),(:\Exp(\alpha g):)\right\rangle=\Exp(\alpha^2C(f,g)),
\end{equation*}
and thus 
\[
\|V_k^\alpha(g)\|^2=\iint_{\R^2\times\R}g(x)g(y)\Exp(\alpha^2C(\delta_{k,x},\delta_{k,x}))\dd x\dd y=\iint_{\R^2\times \R^2}g(x)g(y)\Exp(\alpha^2 C_k(x,y))\dd x\dd y,
\]
where $C_k(x,y)=C(\delta_{k,x},\delta_{k,y})$. We know $C_k(x,y)\leq C(x,y)$ and $\iint_{\R^2\times\R^2} g(x)g(y)\exp(\alpha^2C(x,y))\dd x\dd y<\infty$, whenever $0\leq \alpha^2<4\pi$, and $g\in L^1(\R^2)\cap L^2(\R^2)$. The latter is true for $\| x-y\|\geq 1$ and $\| x-y\|<1$ gives $\exp\left(\alpha^2-\frac{\log\| x-y\|}{2\pi}\right)=\|x-y\|-\frac{\alpha^2}{4\pi}$. Hence $\|V_k^\alpha(g)\|^2<\infty$.
\end{proof}

\begin{prop}
$\{V_k\}_k$ converges in $L^2(\calS'(\R^2),\mu)$.
\end{prop}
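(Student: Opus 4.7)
The plan is to show that $\{V_k^\alpha(g)\}$ is a Cauchy sequence in $L^2(\calS'(\R^2),\mu)$, from which convergence will follow by completeness. By the bilinearity of the inner product, one computes
\[
\|V_k-V_j\|^2_{L^2} = \langle V_k,V_k\rangle - 2\langle V_k,V_j\rangle + \langle V_j,V_j\rangle.
\]
Using the identity $\langle (:\Exp(\alpha f):),(:\Exp(\alpha g):)\rangle_{L^2} = \Exp(\alpha^2 C(f,g))$ (established previously for Wick-ordered exponentials) together with Fubini, I obtain
\[
\|V_k - V_j\|^2_{L^2} = \iint_{\R^2\times\R^2} g(x)g(y)\bigl[\Exp(\alpha^2 C_k(x,y)) - 2\Exp(\alpha^2 C_{k,j}(x,y)) + \Exp(\alpha^2 C_j(x,y))\bigr]\dd x\,\dd y,
\]
where $C_{k,j}(x,y):=C(\delta_{k,x},\delta_{j,y})$ and $C_k(x,y)=C_{k,k}(x,y)$.

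Next I would verify pointwise convergence of the integrand to zero for $x\neq y$. Since $\delta_{k,x}\to \delta_x$ in $\calS'(\R^2)$ and $C$ is jointly continuous away from the diagonal, we have $C_k(x,y),\ C_j(x,y),\ C_{k,j}(x,y) \to C(x,y)$ as $k,j\to\infty$. Continuity of $\Exp$ then yields pointwise convergence of the bracketed expression to zero on the complement of the (measure-zero) diagonal. The next step is to invoke the dominated convergence theorem with the dominating function $4|g(x)g(y)|\Exp(\alpha^2 C(x,y))$, which, by the computation in the preceding lemma, is integrable whenever $g\in L^1(\R^2)\cap L^2(\R^2)$ and $0\leq \alpha^2<4\pi$. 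To justify the domination, I would use that for an appropriate choice of mollifier $h$ (positive, even, with $\hat h\geq 0$ and $\hat h\leq 1$), the representations
\[
C_{k,j}(x,y) = \frac{1}{(2\pi)^2}\int_{\R^2}\frac{\hat h(\xi/k)\hat h(\xi/j)}{\xi^2+m^2}\ee^{-\I\xi\cdot(x-y)}\dd\xi
\]
together with the positive-definiteness arguments from the previous lemma produce the bound $\Exp(\alpha^2 C_{k,j}(x,y)) \leq \Exp(\alpha^2 C(x,y))$ (and similarly for $C_k,C_j$).

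The main obstacle is the last step, namely the uniform domination of $\Exp(\alpha^2 C_{k,j})$ by $\Exp(\alpha^2 C)$: this is where the choice of mollifier $h$ becomes essential, and it is here that the critical threshold $\alpha^2<4\pi$ enters, exactly as in the proof that $V_k^\alpha(g)\in L^2$. Once this pointwise control is in place, dominated convergence gives $\|V_k-V_j\|^2_{L^2}\to 0$ as $k,j\to\infty$, so $\{V_k^\alpha(g)\}$ is Cauchy and hence converges in $L^2(\calS'(\R^2),\mu)$. The limit is by definition the candidate random variable $V^\alpha(g)=\int g(x)(:\Exp(\alpha\phi(x)):)\dd x$, completing the proof.
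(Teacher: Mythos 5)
Your proof is correct in outline, but it takes a genuinely different route from the paper. The paper expands $V_k^\alpha(g)$ as the power series $\sum_{n\ge 0}\frac{\alpha^n}{n!}\int g(x)(:\phi(\delta_{k,x})^n:)\dd x$, observes that each Wick monomial term converges in $L^2$ as $k\to\infty$ (this was the content of the preceding construction of the polynomial interactions), and then uses a Weierstrass M-test --- i.e.\ a bound $\bigl\|\frac{\alpha^n}{n!}\int g(:\phi(\delta_{k,x})^n:)\dd x\bigr\|_{L^2}^2\le C_n$ uniform in the mollifier index with $\sum_n C_n<\infty$ --- to interchange the limit in $k$ with the sum over $n$. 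You instead work directly with the Wick exponential: you compute $\|V_k-V_j\|^2_{L^2}$ via the identity $\langle(:\Exp(\alpha f):),(:\Exp(\alpha g):)\rangle=\Exp(\alpha^2 C(f,g))$ and close the Cauchy estimate by dominated convergence against $4|g(x)g(y)|\Exp(\alpha^2C(x,y))$. The two arguments rest on exactly the same analytic inputs: the covariance monotonicity $C(\delta_{k,x},\delta_{j,y})\le C(x,y)$ (which the paper also invokes without proof in the preceding lemma, and which does require a suitable mollifier, e.g.\ $0\le\widehat h\le 1$ --- you are right to flag this as the delicate point) and the integrability $\iint|g(x)g(y)|\Exp(\alpha^2C(x,y))\dd x\dd y<\infty$ for $\alpha^2<4\pi$, which is where the threshold enters in both versions. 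What your approach buys is directness and self-containment: one dominated-convergence argument replaces the term-by-term limit plus the interchange of limit and sum. What the paper's approach buys is reuse of the already-established convergence of the polynomial interactions $\widetilde S^k_{I,\Lambda}$, so that the exponential case is reduced to uniform summability of known quantities. Either is acceptable; just make sure, if you write yours out in full, to record the choice of $h$ that makes $C_{k,j}\le C$ valid, since without it the domination step genuinely fails.
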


\begin{proof}
Recall that $V_k=\sum_{k=0}^\infty\frac{\alpha^k}{k!}\int g(x)(:f_k(x)^k:)\dd x$. The Weierstrass M-test tells us that for any metric space $(X,d)$, a Banach space $W$, $f_k\colon X\to W$ with $\vert f_k(x)\vert \leq M_k$ with numbers $M_k>0$ such that $\sum_{k=0}^\infty M_k<\infty$, then $\sum_{k=0}^\infty f_k(x)$ converges uniformly for $x$.
\end{proof}

\begin{exe}
Show that there is a $C_k>0$ such that $\left\|\frac{\alpha^2}{k!}\int g(x)(:\phi_k(x)^k:)\dd x\right\|^2_{L^2(\calS'(\R^2),\mu)}\leq C_k$.
\end{exe}

This implies that $V_k$ converges uniformly on $X$ (by the M-test). Recall that
$$\frac{\alpha^k}{k!}\int g(x)(:\phi_k(x)^k:)\dd x\to \frac{\alpha^k}{k!}\int g(x)(:\phi(x)^k:)\dd x,$$
where $\phi_k=(\phi,\delta_{k,x})$. If $V=\lim_{k\to\infty}V_k$, then $$V=\sum_{k=0}^\infty\frac{\alpha^k}{k!}\int g(x)(:\phi(x)^k:)\dd x=\int g(x)(:\Exp(\alpha\phi(x)):)\dd x.$$
Thus we have shown that $V\in L^2(\calS'(\R^2),\mu)$.
We can observe the following:
\begin{enumerate}
\item{$V_k\geq 0$ for all $k$, whenever $g\geq 0$. this implies that for such $g$, $V\geq 0$ and hence $\ee^{-V}\in L^1$.
}
\item{For $0\geq \alpha^2<4\pi$ and $g\geq 0$ with $g\in L^1(\R^2)\cap L^2(\R^2)$, we showed $\ee^{-V^\alpha_g}\in L^1(\calS'(\R^2),\mu)$. Let $\nu$ be a measure on $[-\alpha,\alpha]$. Then $$\int_{[-\alpha,\alpha]}\ee^{-V_g^{\alpha'}}\dd\nu(\alpha')\in L^1(\calS'(\R^2),\mu).$$
}
\end{enumerate}

\subsection{The Osterwalder-Schrader Axioms}
Let $\mu$ be a Borel measure on $\calS'(\R^n)$.
\subsubsection{Analyticity (OS0)}
Let $f_1,...,f_k\in\calS'(\R^n)$. Define a function $\widehat{\mu}(f_1,...,f_k)\colon \mathbb{C}^k\to \mathbb{C}$ 

by $\widehat{\mu}(f_1,...,f_k)(z_1,...,z_k)=\widehat{\mu}\left(\sum_jz_jf_j\right)$, where $\widehat{\mu}$ is the characteristic function of $\mu$, i.e. 
\[
\widehat{\mu}(f)=\int_{\calS'(\R^n)}\ee^{\I\phi(f)}\dd\mu(\phi).
\]
\begin{defn}[Analyticity]
We say that $\mu$ is \textsf{analytic} or $\mu$ has \textsf{analyticity} if $\widehat{\mu}(f_1,...,f_k)$ is entire on $\mathbb{C}^k$ for all $f_1,...,f_k\in \calS(\R^n)$ and $k\in \N$. This means that $\mu$ decays faster than any exponential map. 
\end{defn}
\begin{rem}
An immediate consequence is that $\int_{\calS'(\R^n)}\phi(f)\dd\mu(\phi)<\infty$ for all $k$ and for all $f\in \calS(\R^n)$. Then 
$$\widehat{\mu}(\I f)=\int_{\calS'(\R^n)}\ee^{-\phi(f)}\dd\mu(\phi)<\infty$$
and 
$$\widehat{\mu}(-\I f)=\int_{\calS'(\R^n)}\ee^{\phi(f)}\dd\mu(\phi)<\infty,$$
if $\mu$ is analytic.
\end{rem}

\begin{ex}
Let $\mu$ be the Gaussian measure on $\calS'(\R^n)$, whose covariance is given by $(\Delta+m^2)^{-1}$, and let $C(f,g)=\int f(x)C(x,y)g(y)\dd x\dd y=\langle f,(\Delta+m^2)^{-1}f\rangle_{C^2(\R^n)}$. We claim that $\mu$ has analyticity. We prove this via an example. Consider $\widehat{\mu}(z_1f_1+z_2f_2)=\int_{\calS'(\R^n)}\ee^{\I\phi (z_1f_1+z_2f_2)}\dd\mu(\phi)=\ee^{-\frac{1}{2}(2z_1z_2C(f_1,f_2)+z_1^2C(f_1,f_1)+z_2^2C(f_2,f_2))}$, which is obviously an entire function, and $\widehat{\mu}(z_1f_1+z_2f_2)=\widehat{\mu}(f_1,f_2)(z_1,z_2)$ which is analytic.
\end{ex}

\begin{prop}
Let $\nu$ be any Gaussian measure on $\calS'(\R^n)$. Then $\nu$ has analyticity.
\end{prop}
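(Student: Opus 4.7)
The plan is to show that $\widehat{\nu}(f_1,\dots,f_k)$ is the exponential of a polynomial in $(z_1,\dots,z_k)$, so that it is manifestly entire. First, I would invoke the characterization of Gaussian measures on $\calS'(\R^n)$ established earlier: there exist a continuous linear functional $m\colon \calS(\R^n)\to\R$ (the mean) and a continuous positive semi-definite symmetric bilinear form $C$ on $\calS(\R^n)$ (the covariance) such that for every real $f\in\calS(\R^n)$,
$$\widehat{\nu}(f)=\exp\!\left(\I\, m(f)-\frac{1}{2}C(f,f)\right).$$
Setting $f=\sum_j z_j f_j$ for real $z_j$ and using the linearity of $m$ together with the bilinearity of $C$ gives
$$\widehat{\nu}\!\left(\sum_{j=1}^{k} z_j f_j\right)=\exp\!\left(\I\sum_{j=1}^{k} z_j\, m(f_j)-\frac{1}{2}\sum_{i,j=1}^{k} z_i z_j\, C(f_i,f_j)\right).$$
The right-hand side is the exponential of a polynomial of total degree $2$ in $(z_1,\dots,z_k)$, so it defines an entire function $F$ on $\mathbb{C}^k$.

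Next, I would verify that this entire $F$ genuinely agrees with the characteristic functional interpreted at complex arguments, namely $\widehat{\nu}(\sum_j z_j f_j)=\int\ee^{\I\,\phi(\sum_j z_j f_j)}\,\dd\nu(\phi)$ with $\phi$ extended $\mathbb{C}$-linearly to complex test functions. Writing $z_j=x_j+\I y_j$, $g=\sum_j x_j f_j$ and $h=\sum_j y_j f_j$, the integrand becomes $\ee^{\I\phi(g)-\phi(h)}$. For every real $s$ the scalar $\phi(g+sh)=\phi(g)+s\phi(h)$ is Gaussian on $\R$ by the defining property of a Gaussian measure on $\calS'(\R^n)$, so by the usual Cramér--Wold characterization $(\phi(g),\phi(h))$ is jointly Gaussian. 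Its joint moment generating function
$$M(s,t)=\int_{\calS'(\R^n)} \ee^{s\phi(g)+t\phi(h)}\,\dd\nu(\phi)$$
is then finite on $\R^2$ and equals $\exp(s\,m(g)+t\,m(h)+\tfrac{1}{2}(s^2 C(g,g)+2st\,C(g,h)+t^2 C(h,h)))$. By the Gaussian tail bound (Fernique's theorem as stated earlier applied to the push-forward measure on $\R^2$) the integral converges absolutely for every $(s,t)\in\mathbb{C}^2$, so dominated convergence and uniqueness of analytic continuation show that this closed-form expression extends $M$ to all of $\mathbb{C}^2$. Evaluating at $(s,t)=(\I,-1)$ and re-expressing $C(g,g), C(g,h), C(h,h)$ in terms of the $f_j$'s recovers exactly $F(z_1,\dots,z_k)$.

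The main technical point is the holomorphic extension of the Gaussian MGF from $\R^2$ to $\mathbb{C}^2$, but this is entirely standard and is handed to us by Fernique together with dominated convergence; no further analysis of the measure $\nu$ is required. Once that is in place the statement is immediate, since the formula we have written is literally an entire function by construction. In particular the argument works without any assumption of centeredness, and one sees that analyticity of $\widehat{\nu}(f_1,\dots,f_k)$ reduces to the purely algebraic fact that $m$ is linear and $C$ is bilinear on $\calS(\R^n)$.
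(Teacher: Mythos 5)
Your proof is correct and follows the same core idea as the paper, which ``proves via an example'': for a Gaussian measure the characteristic functional evaluated at $\sum_j z_j f_j$ is the exponential of a polynomial of degree two in $(z_1,\dots,z_k)$, hence entire. The paper in fact never proves the general proposition --- it only writes out the quadratic exponent for the specific covariance $(\Delta+m^2)^{-1}$ with $k=2$ and declares it ``obviously entire.'' What you add, and what the paper silently skips, is the verification that the integral $\int \ee^{\I\phi(\sum_j z_j f_j)}\,\dd\nu(\phi)$ actually converges for complex $z_j$ and agrees with that entire extension: your reduction to the two-dimensional marginal $(\phi(g),\phi(h))$ being jointly Gaussian, the everywhere-finiteness of its moment generating function, and the identification via analytic continuation at $(s,t)=(\I,-1)$ is exactly the point that makes $\widehat{\nu}(f_1,\dots,f_k)$ well defined on all of $\mathbb{C}^k$ rather than just formally equal to an entire expression. (Fernique is heavier machinery than needed here --- for a bivariate Gaussian the MGF is finite everywhere by direct computation --- but the argument is sound.) So: same route as the paper, carried out completely.
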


\subsubsection{Euclidean invariance (OS1)}
Let $E(n)$ be the Euclidean group of $\R^n$, i.e. the group generated by rotations, reflections and translations. Let $R\in O(n)$ and $a\in \R^n$. Let $T(a,R)\in E(n)$ be defined by $(T(a,R))(x)=Rx+a$. Notice that $E(n)$ acts on $\calS(\R^n)$ by $(T(a,R)f)(x)=f(T(a,R)^{-1}x)$. $E(n)$ acts also on $\calS'(\R^n)$ by $(T(a,R)\phi)(f)=\phi(T(a,R)f)$.

\begin{defn}[Euclidean invariance I]
We say that $\mu$ is \textsf{Euclidean invariant} if $(T(a,R))_*\mu=\mu$ for all $T(a,R)\in E(n)$.
\end{defn}

\begin{lem}
$\mu$ is Euclidean invariant if and only if $\widehat{\mu}(f)=\widehat{\mu}(T(a,R)f)$ for all $f\in \calS(\R^n)$. 
\end{lem}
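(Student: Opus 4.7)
The plan is to prove the two implications separately, with both directions hinging on the standard change-of-variables formula for pushforward measures together with the fact that the characteristic functional uniquely determines a Borel probability measure on $\calS'(\R^n)$.

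First I would handle the forward direction. Assuming $(T(a,R))_*\mu=\mu$, I would apply the general identity $\int F\,\dd(T_*\mu)=\int F\circ T\,\dd\mu$ to the function $F(\phi)=\ee^{\I\phi(f)}$. Using the definition of the $E(n)$-action on $\calS'(\R^n)$ given in the excerpt, namely $(T(a,R)\phi)(f)=\phi(T(a,R)f)$, this rewrites
\[
\widehat{\mu}(f)=\int_{\calS'(\R^n)}\ee^{\I\phi(f)}\dd\mu(\phi)=\int_{\calS'(\R^n)}\ee^{\I(T(a,R)\phi)(f)}\dd\mu(\phi)=\int_{\calS'(\R^n)}\ee^{\I\phi(T(a,R)f)}\dd\mu(\phi)=\widehat{\mu}(T(a,R)f),
\]
which is the desired identity for every $f\in\calS(\R^n)$ and every $T(a,R)\in E(n)$.

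For the converse, suppose $\widehat{\mu}(f)=\widehat{\mu}(T(a,R)f)$ for all $f\in\calS(\R^n)$. By running the same change-of-variables computation in the other direction, I get $\widehat{\mu}(T(a,R)f)=\widehat{(T(a,R))_*\mu}(f)$, so $\widehat{\mu}=\widehat{(T(a,R))_*\mu}$ as functions on $\calS(\R^n)$. Since both $\mu$ and $(T(a,R))_*\mu$ are Borel (probability) measures on $\calS'(\R^n)$, and since the characteristic functional determines such a measure uniquely (this is the content of the Bochner--Kolmogorov--Milnor--Prokhorov theorem stated earlier), I conclude $(T(a,R))_*\mu=\mu$.

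The main obstacle, or rather the only nontrivial input, is the uniqueness statement for measures from their characteristic functionals on the dual of a nuclear space — everything else is a direct application of the definitions. I would simply invoke the theorem already stated in the excerpt rather than reprove it.
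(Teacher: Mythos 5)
Your proof is correct; the paper states this lemma without any proof, and your argument — the change-of-variables identity for the pushforward combined with injectivity of the characteristic functional — is exactly the standard one it leaves implicit. One small caveat on the only nontrivial input: the Bochner--Kolmogorov--Milnor--Prokhorov theorem as stated earlier in the text concerns Hilbert spaces and characterizes which positive definite functionals \emph{arise} as characteristic functionals, rather than asserting uniqueness; the uniqueness you actually need is more elementary and should be cited as such — the characteristic functional determines all finite-dimensional marginals of $\mu$ by the uniqueness of the Fourier transform on $\R^k$, hence determines $\mu$ on cylinder sets, and these generate the Borel $\sigma$-algebra of $(\calS'(\R^n),\mathcal{O}(\calS'(\R^n),\calS(\R^n)))$, so a $\pi$-$\lambda$ argument finishes the job.
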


\begin{defn}[Euclidean invariance II]
Let $\nu$ be a Gaussian measure on $\calS'(\R^n)$ whose covariance is $C_\nu\colon \calS(\R^n)\times \calS(\R^n)\to \R$. We say $C_\nu$ is \textsf{Euclidean invariant} if $Cov(T(a,R)f,T(a,R)g)=C_\nu(f,g)$ for all $T(a,R)\in E(n)$ and $f,g\in\calS(\R^n)$. 
\end{defn}

\begin{lem}
Let $\nu$ be Gaussian. Then $\nu$ is Euclidean invariant if and only if $C_\nu$ is Euclidean invariant.
\end{lem}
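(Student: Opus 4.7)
The strategy is to reduce both implications to the previous lemma characterizing Euclidean invariance through the characteristic functional, and to exploit that a Gaussian measure is completely determined by its mean and covariance via the explicit formula
\[
\widehat{\nu}(f)=\ee^{\I m_\nu(f)-\frac{1}{2}C_\nu(f,f)},
\]
where $m_\nu(f)=\int_{\calS'(\R^n)}\phi(f)\dd\nu(\phi)$. The previous lemma says $\nu$ is Euclidean invariant iff $\widehat{\nu}(f)=\widehat{\nu}(T(a,R)f)$ for every $T(a,R)\in E(n)$ and every $f\in\calS(\R^n)$.

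For the reverse implication, I would simply substitute: assuming $C_\nu(T(a,R)f,T(a,R)g)=C_\nu(f,g)$ for all $f,g$, one gets in particular $C_\nu(T(a,R)f,T(a,R)f)=C_\nu(f,f)$, and (for the centered case) this immediately yields $\widehat{\nu}(T(a,R)f)=\widehat{\nu}(f)$, so the previous lemma closes the argument. For a non-centered $\nu$, the same works provided one also checks $m_\nu(T(a,R)f)=m_\nu(f)$, which follows from Euclidean invariance of the covariance together with the observation that the mean of a Gaussian measure on $\calS'(\R^n)$ is determined by its first moment, which enters the characteristic functional linearly.

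For the forward implication, Euclidean invariance of $\nu$ gives $\widehat{\nu}(f)=\widehat{\nu}(T(a,R)f)$ by the previous lemma. Taking logarithms of this identity yields
\[
\I m_\nu(f)-\tfrac12 C_\nu(f,f)=\I m_\nu(T(a,R)f)-\tfrac12 C_\nu(T(a,R)f,T(a,R)f).
\]
Separating real and imaginary parts gives $m_\nu(f)=m_\nu(T(a,R)f)$ and $C_\nu(f,f)=C_\nu(T(a,R)f,T(a,R)f)$ for all $f$. Euclidean invariance of the full bilinear form $C_\nu$ then follows by polarization: since $C_\nu$ is symmetric bilinear,
\[
C_\nu(f,g)=\tfrac12\bigl(C_\nu(f+g,f+g)-C_\nu(f,f)-C_\nu(g,g)\bigr),
\]
and each of the three terms on the right is invariant under $f\mapsto T(a,R)f$, $g\mapsto T(a,R)g$.

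The only mild obstacle is the bookkeeping around the mean $m_\nu$: the cleanest presentation treats centered Gaussians first (where the argument is a one-line substitution in each direction), then notes that the non-centered case reduces to the centered case by translating by $m_\nu$ after checking that $m_\nu$ is itself Euclidean invariant, which again comes for free from the characteristic-functional identity by comparing imaginary parts. Everything else is formal manipulation.
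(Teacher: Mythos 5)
Your overall strategy is the right one, and in fact the paper states this lemma without proof, so there is nothing to compare against directly; the intended argument is exactly the reduction you describe, via the preceding lemma ($\nu$ is Euclidean invariant iff $\widehat{\nu}(f)=\widehat{\nu}(T(a,R)f)$ for all $f$) together with the closed form of the Gaussian characteristic functional. For a \emph{centered} Gaussian $\nu$ — which is the case actually relevant here, since the Gaussian measures on $\calS'(\R^n)$ produced by the bijection with continuous positive definite symmetric bilinear forms are centered — your argument is clean and complete: $\widehat{\nu}(f)=\ee^{-\frac{1}{2}C_\nu(f,f)}$ is a positive real number, so equality of characteristic functionals is literally equivalent to equality of the diagonal values $C_\nu(f,f)=C_\nu(T(a,R)f,T(a,R)f)$, and polarization of the symmetric bilinear form recovers the off-diagonal invariance. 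The converse is the one-line substitution you give.

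Two corrections to your remarks about the non-centered case. First, ``taking logarithms'' of the complex identity $\widehat{\nu}(f)=\widehat{\nu}(T(a,R)f)$ only determines the phase modulo $2\pi$; to conclude $m_\nu(f)=m_\nu(T(a,R)f)$ you should compare moduli (which gives the variance equality outright) and then replace $f$ by $tf$, using linearity of $m_\nu$ and continuity in $t$, to force the phase discrepancy $t\bigl(m_\nu(f)-m_\nu(T(a,R)f)\bigr)\in 2\pi\bbZ$ to vanish. Second, and more substantively, your claim that invariance of the mean ``follows from Euclidean invariance of the covariance'' is false: the mean is not determined by the covariance. If you translate a centered Euclidean-invariant Gaussian by a fixed $\phi_0\in\calS'(\R^n)$ that is not itself Euclidean invariant, the resulting measure has an invariant covariance but is not an invariant measure, so the reverse implication of the lemma genuinely fails for non-centered Gaussians. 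The statement should therefore be read (as the paper's conventions implicitly do) for centered $\nu$, where your proof is correct; the forward implication alone survives in the non-centered case, with mean invariance obtained as an additional conclusion rather than an input.
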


\begin{ex}
Let $\mu$ be the Gaussian measure on $\calS'(\R^n)$ with covariance $(\Delta+m^2)^{-1}$. Then $\mu$ is Euclidean invariant. We have 
\[
C(x,y)=\frac{1}{(2\pi)^n}\int\frac{\ee^{\I\xi_1\|x-y\|}}{m^2+\xi^2}\dd\xi.
\]
\end{ex}

Next, we want to construct a Hilbert space $\calE=L^2(\calS'(\R^n),\mu)$.

\subsubsection{Reflection positivity (OS3)}
Let $f_1,...,f_k\in\calS(\R^n)$, such that $supp(f_i)\subseteq \R^n_+$. Write $\R^n=\R^{n-1}\times\R$ and $\R^n_+=\R^{n-1}\times (0,\infty)$.
\begin{defn}[Reflection positivity I]
We say that $\mu$ has \textsf{reflection positivity} if for all $z_1,...,z_k\in\mathbb{C}$ we have $\sum_{i,j}\bar z_i\widehat{\mu}(f_i \cdot\text{im}(\theta)\cdot f_j)z_j\geq 0$, where $\theta(x,t)=(x,-t)$. for all $k\in\N$, $f_1,...,f_k\in\calS(\R^n)$ with $supp(f_j)\subseteq \R^n_+$.
\end{defn}

Assume that $\nu$ is Gaussian and let $C_\nu$ be its covariance. 
\begin{defn}[Reflection Positivity II]
We say that $C_\nu$ has \textsf{reflection positivity} if $$C_\nu(f,\theta f)=\iint_{\R^n\times\R^n}f(x,t)f(y,-s)C((x,t),(y,s))\dd x\dd y\dd t\dd s\geq 0$$ for all $f\in\calS(\R^n)$ such that $supp(f)\subseteq \R^n_+$.
\end{defn}

\begin{exe}
Let $\nu$ be a Gaussian measure on $\calS'(\R^)$. Then $\nu$ has reflection positivity if and only if $C_\nu$ has reflection positivity.
\end{exe}

\begin{ex}
Let $\mu$ be the Gaussian measure with covariance $(\Delta+m^2)^{-1}$. Then $\mu$ has reflection positivity.
\begin{proof}
We will show that $C(f,g)=\langle f,(\Delta+m^2)^{-1}g\rangle_{L^1(\R^n)}$ is reflection positive. Let $f\in \calS(\R^n)$ with $supp(f)\subseteq\R^n_+$, and $x=(\bar x,t)$. Then 
$$C(f,g)=\iint_{\R^n\times\R^n}f(x)C(x,y)g(y)\dd y=\frac{1}{(2\pi)^n}\iint_{\R^n\times\R^n}\left(\int_{\R^n}\frac{\ee^{\I\xi(x-y)}}{\xi^2+m^2}\dd\xi\right)g(y)\dd x\dd y=\int \frac{\widehat{f}(\xi)\widehat{g}(\xi)}{\xi^2+m^2}\dd\xi.$$
Moreover, we have $C(f,\theta f)=\int_{\R^n}\frac{\overline{\widehat{\theta}}(f(\xi))\widehat{f}(\xi)}{\xi^2+m^2}\dd\xi$, which we want to be positive. We have $\widehat{f}(\widetilde{\xi},\I\xi_n)=\frac{1}{(2\pi)^{n/2}}\int_0^\infty\left(\int_{\R^{n-1}}f(x,t)\ee^{\I\xi\cdot x-\xi_n\cdot t}\right)\dd t$. Similarly $\widehat{\theta f}(\widetilde{\xi},\I\xi_n)=\frac{1}{(2\pi)^{n/2}}\int_0^\infty\left(\int_{\R^{n-1}}\overline{f(x,t)}\ee^{-I\xi\cdot x-\xi_n\cdot t}\right)\dd t$. Thus we get $$\widehat{f}(\widetilde{\xi},\I\xi_n)=\overline{\widehat{\theta f}(\widetilde{\xi},\I\xi_n)}.$$
Using these relations, we can show 
$$C(f,\theta f)=\int_{\R^{n-1}}\frac{\vert \widehat{f}(\widetilde{\xi},\I\mu(\widetilde{\xi}))\vert^2}{\sqrt{m^2+\widetilde{\xi}^2}}\dd\widetilde{\xi}\geq 0,$$
where $\mu(\widetilde{\xi})=\sqrt{m^2+\widetilde{\xi}^2}$.
\end{proof}
\end{ex}

Considering $\calE=L^2(\calS'(\R^n),\nu)$, we assume that $\nu$ has analyticity and Euclidean invariance. Consider the set 
\[
\calA=\left\{ A(\phi)=\sum_{j=1}^k c_j\ee^{\I\phi(f_j)}\Big|c_j\in\mathbb{C},k\in\N\right\}
\]
In fact, $\calA\subseteq \calE$, because of analyticity, and $\calA$ is an algebra. Moreover, let $\calE_+=\{A(\phi)\in \calA\mid supp(f_i)\subseteq \R^n_+\}\subseteq \calE$. We define a bilinear form $b$ on $\calE_+$ by $b(A,B)=\int \overline{\theta A}B\dd\nu(\phi)$.

\begin{exe}
The measure $\nu$ is reflection positive if and only if for all $A\in\calE_+$, $b(A,A)\geq 0$. Let $N=\{A\in\calE_+\mid b(A,A)=0\}$, and let $\calH$ be the completion of $\calE_+/N$.
\end{exe}

\begin{defn}
$\calH$ is called the physical Hilbert space. 
\end{defn}

We can observe that if we have $T\colon \calE\to\calE$ such that $T(\calE_+)\subseteq \calE_+$ and $T(N)\subseteq N$, then $T$ induces a map $T(t)\colon \calH\to\calH$, where $T(t)(\vec{x},s)=(\vec{x},s+t)$ for $t\geq 0$. We know that $T(t)$ acts on $\calE$ unitarly. 
\begin{lem}
We have $T(t)\calE_+\subseteq \calE_+$ and $T(t)N\subseteq N$.
\end{lem}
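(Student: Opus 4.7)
The proof splits naturally into the two stated inclusions, and the second one rests on the first plus reflection positivity (via a Cauchy--Schwarz trick).

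For the first inclusion $T(t)\calE_+\subseteq \calE_+$ with $t\geq 0$, the plan is purely a support-chasing argument. The action of $T(t)$ on $\calE$ is induced from the action on $\calS'(\R^n)$, which in turn is induced from the action on Schwartz functions by $(T(t)f)(\vec x,s)=f(\vec x,s-t)$. Thus if $A(\phi)=\sum_j c_j\ee^{\I\phi(f_j)}$ with $\supp(f_j)\subseteq\R^n_+$, then the translated element $T(t)A$ is of the same exponential form, now with testfunctions whose supports lie in $\R^{n-1}\times(t,\infty)\subseteq\R^n_+$. So $T(t)A\in\calE_+$.

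For the second inclusion $T(t)N\subseteq N$, the main observation I would use is the intertwining identity
\[
\theta\circ T(t)=T(-t)\circ\theta,
\]
which is immediate from $\theta(\vec x,s)=(\vec x,-s)$ and $T(t)(\vec x,s)=(\vec x,s+t)$. Combined with $T(t)$-invariance of the measure $\nu$ (which is part of (OS1)) and the fact that the induced operator on $\calE$ is a unitary algebra-homomorphism (in particular commutes with complex conjugation), a short computation yields
\begin{equation*}
b(T(t)A,T(t)B)\;=\;\int\overline{\theta T(t)A}\cdot T(t)B\,\dd\nu
\;=\;\int\overline{T(-t)\theta A}\cdot T(t)B\,\dd\nu
\;=\;b(A,T(2t)B),
\end{equation*}
where in the last step I push the factor $T(-t)$ from one side to $T(t)$ on the other using invariance of $\nu$ under $T(t)$.

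Now the punchline: the form $b$ is a positive semidefinite Hermitian form on $\calE_+$ by reflection positivity (OS3), so it satisfies Cauchy--Schwarz, and the first inclusion guarantees $T(2t)B\in\calE_+$ whenever $B\in\calE_+$ and $t\geq 0$. Hence if $A\in N$, i.e.\ $b(A,A)=0$, then
\[
b(T(t)A,T(t)A)\;=\;b(A,T(2t)A)\;\leq\;b(A,A)^{1/2}\,b(T(2t)A,T(2t)A)^{1/2}\;=\;0,
\]
so $T(t)A\in N$. The only subtle point I anticipate is keeping conventions consistent (whether $T(t)$ acts on test functions, on distributions, or on $L^2$-observables, and the direction of each action), since the intertwining identity with $\theta$ and the use of $\nu$-invariance depend on these being compatibly installed; beyond that, the argument is a direct consequence of the axioms already invoked.
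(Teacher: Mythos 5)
Your proof is correct and follows essentially the same route as the paper: the first inclusion by support considerations, and the second via the intertwining identity $\theta\circ T(t)=T(-t)\circ\theta$, unitarity of $T(t)$ on $\calE$ to produce $b(T(t)A,T(t)A)=b(A,T(2t)A)$, and then Cauchy--Schwarz for the positive semidefinite form $b$ together with $b(A,A)=0$. Nothing further is needed.
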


\begin{proof}
The first part is obvious. For the second part, observe that $\theta\circ T(t)=T(-t)\circ \theta$. Let $A\in N$. Then 
\begin{align*}
\langle T(t)A,\theta T(t)A\rangle_\calE&=\langle T(t)A,T(-t)\theta A\rangle_\calE=\langle T(2t)A,\theta, \theta A\rangle_\calE\\
&=b(T(2t)A,A)\leq \underbrace{b(A,A)^{1/2}}_{=0}b(T(2t)A,T(2t)A)^{1/2}=0,
\end{align*}
which iplies that $T(t)A\in N$ because of reflection positivity.
\end{proof}
One can also check that the map $T(t)^\land\colon\mathcal{H}\to\calH$ is a semigroup for $t\geq 0$.

\begin{lem}
We have $\|T(t)\|_\calH\leq 1$, for $t\geq 0$. Moreover, $t\mapsto T(t)$ is strongly continuous.
\end{lem}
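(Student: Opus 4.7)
The strategy is to prove both assertions at the level of the pre-quotient algebra $\calE_+$ and then push them down to $\calH$. The previous lemma already tells us that $T(t)$ preserves $\calE_+$ and $N$ for $t\ge 0$, so $T(t)^\land$ descends to $\calE_+/N$. Contraction will come from reflection positivity (Cauchy--Schwarz for the positive semi-definite form $b$) together with the intertwining $\theta T(t)=T(-t)\theta$, while strong continuity will descend from strong continuity of the translation unitaries on $\calE$.

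\emph{Contraction.} For $A\in\calE_+$ I would set $f(t):=b(T(t)A,T(t)A)=\|T(t)^\land [A]\|_\calH^{2}$ and first derive the doubling identity
\[
f(t)=\langle\theta T(t)A,T(t)A\rangle_\calE=\langle T(-t)\theta A,T(t)A\rangle_\calE=\langle\theta A,T(2t)A\rangle_\calE=b(A,T(2t)A),
\]
using $\theta T(t)=T(-t)\theta$ and unitarity of $T(\cdot)$ on $\calE$ (OS1). The form $b$ is Hermitian (because $\theta_{*}\nu=\nu$) and positive semi-definite (OS3), so Cauchy--Schwarz yields $f(t)^{2}\le f(0)\,f(2t)$. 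Iterating gives $f(t)^{2^{n}}\le f(0)^{2^{n}-1}f(2^{n}t)$ for all $n\ge 1$, and since $\theta$ and $T(s)$ are both unitary on $\calE$ one has the uniform bound $f(s)\le\|A\|_\calE^{2}$ for every $s\ge 0$. Taking $2^{n}$th roots and letting $n\to\infty$ forces $f(t)\le f(0)$, i.e.\ $\|T(t)^\land\|_\calH\le 1$.

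\emph{Strong continuity.} By the contraction bound and density of $\calE_+/N$ in $\calH$, it suffices to check that $\|T(t)^\land[A]-[A]\|_\calH\to 0$ as $t\to 0^{+}$ for each $A\in\calE_+$. Expanding,
\[
\|T(t)^\land[A]-[A]\|_\calH^{2}=f(t)-2\mathrm{Re}\,b(T(t)A,A)+b(A,A).
\]
For $A(\phi)=\sum_{j}c_j\,e^{\I\phi(f_j)}$ one has $T(t)A(\phi)=\sum_{j}c_j\,e^{\I\phi(T(t)f_j)}$, and $T(t)f_j\to f_j$ in $\calS(\R^n)$. Combined with OS0 (so that analytically extended integrals are finite uniformly near $t=0$), dominated convergence gives $T(t)A\to A$ in $\calE$; continuity of $b$ in each argument then forces $f(t)\to f(0)$ and $b(T(t)A,A)\to b(A,A)$, so strong continuity at $t=0^{+}$ follows. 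Extension to arbitrary $t_{0}>0$ is automatic from the semigroup property together with the uniform contraction bound.

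\emph{Main obstacle.} The only non-formal step is the iterated Cauchy--Schwarz: converting the doubling inequality $f(t)^{2}\le f(0)f(2t)$ into the genuine contraction $f(t)\le f(0)$ requires both OS3 (for Cauchy--Schwarz on the semi-definite form $b$) \emph{and} the ambient $L^{\infty}$ control of $f(s)$ that comes from unitarity on $\calE$; without this uniform bound the induction cannot be closed. Everything else is either inherited from the previous lemma (invariance of $\calE_+$ and $N$ under $T(t)$ for $t\ge 0$) or a standard density-plus-continuity routine.
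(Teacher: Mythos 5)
The paper states this lemma without proof, so there is nothing to compare against line by line; your argument is correct and is the standard multiple-reflection proof, built on exactly the same ingredients (the intertwining $\theta\circ T(t)=T(-t)\circ\theta$, unitarity of $T(t)$ on $\calE$, and Cauchy--Schwarz for the positive semi-definite form $b$) that the paper itself uses to prove the preceding lemma $T(t)N\subseteq N$. The only step you gloss over is the case $f(0)=b(A,A)=0$ in the iterated inequality $f(t)^{2^n}\le f(0)^{2^n-1}f(2^n t)$, but there the bound gives $f(t)=0$ directly (equivalently, it is the content of the previous lemma), so the contraction estimate and the density-plus-semigroup argument for strong continuity both go through.
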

\begin{cor}
$T(t)=\ee^{-t H}$, where $H$ is a positive self adjoint operator on $\calH$. Moreover, $H(1)=0$.
\end{cor}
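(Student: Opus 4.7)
The plan is to apply functional calculus to the semigroup $\{T(t)\}_{t \geq 0}$ on $\calH$ produced by the preceding lemma. The two preparatory steps are: first, upgrade $T(t)$ from a strongly continuous contraction semigroup to a \emph{self-adjoint} contraction semigroup, and second, invoke the spectral theorem to write $T(t) = \ee^{-tH}$ with $H$ positive self-adjoint. The invariance of the constant functional under time translation will then force $H 1 = 0$.

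For the self-adjointness step, I would verify that $T(t)$ is symmetric on $\calE_+/N$, and hence self-adjoint on $\calH$ after bounded extension. On $\calE = L^2(\calS'(\R^n),\nu)$, Euclidean invariance gives $(T(s))_*\nu = \nu$, so the full group $\{T(s)\}_{s \in \R}$ is unitary on $\calE$ with $T(s)^* = T(-s)$. A direct check on $\calS(\R^n)$ yields $\theta \circ T(t) = T(-t) \circ \theta$, and this relation transfers to $\calS'(\R^n)$ and then to functionals. Combining these with $\langle A, B\rangle_\calH = \int \overline{\theta A}\, B\, \dd\nu$, for $A, B \in \calE_+$ I compute
\[
\langle T(t)A, B\rangle_\calH
= \int \overline{\theta T(t) A}\, B\, \dd\nu
= \int \overline{T(-t)\theta A}\, B\, \dd\nu
= \int \overline{\theta A}\, T(t) B\, \dd\nu
= \langle A, T(t) B\rangle_\calH.
\]
Since $T(t)$ already preserves both $\calE_+$ and $N$ by the preceding lemma, this descends to a bounded symmetric, and therefore self-adjoint, operator on $\calH$.

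With $\{T(t)\}_{t \geq 0}$ now a strongly continuous semigroup of self-adjoint contractions on $\calH$, the spectral theorem (equivalently, the Hille-Yosida theorem in its self-adjoint form) yields a unique self-adjoint operator $H$ such that $T(t) = \ee^{-tH}$ for all $t \geq 0$. Positivity $H \geq 0$ is automatic: using the semigroup property, $T(t) = T(t/2)^2$ is a nonnegative self-adjoint operator, so its spectrum lies in $[0, 1]$, which via the spectral mapping theorem forces $\sigma(H) \subseteq [0, \infty)$.

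Finally, for $H 1 = 0$, the constant functional $A(\phi) \equiv 1$ lies in $\calE_+$ (the $k=0$ element of $\calA$) and is invariant under time translation, since $(T(t) A)(\phi) = A(T(-t)\phi) = 1$. Denoting its class in $\calH$ by $\hat 1$, we have $T(t) \hat 1 = \hat 1$ for all $t \geq 0$, i.e.\ $\ee^{-tH} \hat 1 = \hat 1$; differentiating at $t = 0^+$ gives $\hat 1 \in D(H)$ with $H \hat 1 = 0$. The only real care point in the whole argument is the bookkeeping of operator identities as they pass from $\calE$ through the quotient $\calE_+/N$ to the completion $\calH$, but this is immediate given that both $\calE_+$ and $N$ are $T(t)$-invariant.
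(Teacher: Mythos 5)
Your proposal is correct and is exactly the argument the paper intends: the corollary is stated without proof, but it rests on the two preceding lemmas (invariance of $\calE_+$ and $N$, contractivity and strong continuity), and the only missing ingredient is the symmetry of $T(t)$ with respect to $b$, which you supply correctly via $\theta\circ T(t)=T(-t)\circ\theta$ and unitarity of $T(t)$ on $\calE$. The remaining steps (spectral theorem for a strongly continuous self-adjoint contraction semigroup, positivity of $H$ from $T(t)=T(t/2)^2$, and $H\hat 1=0$ from invariance of the constant functional) are all standard and handled properly.
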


\begin{ex}[Free massive scalar field theory]
Consider a measure $\mu$ and the Green's functions $C(x,y)$. We want to know whether we can find an ``explicit'' representation of $\calH$ in terms of time zero hypersurfacesin $\R^{n-1}$. We can indeed write $\calH\cong L^2(\calS'(\R^{n-1}),\nu)\subseteq \Gamma(H^{-1}(\R^n))=L^2(\calS'(\R^n),\mu)$, where $\nu$ is a Gaussian measure.
\end{ex}

Let $f\in\calS(\R^{n-1})$. Define then $j_0 f=f\otimes \delta_0$, where $(f\otimes \delta_0)(\vec{x},t)=f(\vec{x})\delta_0(t)$. We claim that $f\otimes\delta_0\in H^{-1}(\R^n_+)$. Indeed, we have 
$\widehat{f\otimes\delta_0}(\vec{\xi},\xi_n)=\widehat{f}(\vec{\xi})$, and we know 
\begin{multline*}
\langle f\otimes \delta_0,C(f\otimes\delta_0)\rangle=\frac{1}{2\pi}\int_{\R^n}\frac{\vert \widehat{f}(\vec{\xi})\vert^2}{\xi^2+m^2}\dd\xi\\=\frac{1}{2\pi}\int_{\R^{n-1}}\vert\widehat{f}(\vec{\xi})\vert^2\left(\int_\R\frac{1}{\xi^2+m^2}\dd\xi_n\right)\dd\vec{\xi}=\frac{1}{2}\int_{\R^{n-1}} \frac{\vert\widehat{f}(\vec{\xi})\vert^2}{\sqrt{\vec{\xi}^2+m^2}}\dd\vec{\xi}
\end{multline*}
Thus $f\otimes\delta_0\in H^{-1}(\R^n)$. Moreover, $\langle f\otimes \delta_0,C(f\otimes\delta_0)\rangle_{L^2(\R^n)}=\frac{1}{2}\left\langle f,\left(\sqrt{\Delta_{\R^{n-1}}+m^2}\right)^{-1}f\right\rangle_{L^2(\R^{n-1})}$. If we define $B(f,g)=\frac{1}{2}\int_{\R^{n-1}}f\left(\Delta_{\R^{n-1}}+m^2\right)^{-1/2}g \dd x$, we can see that $j_0$ defines an isometry $K_{B(\nu)}\to H^{-1}(\R^n)$, where $K_{B(\nu)}$ is the completion of $\calS(\R^n)$ with respect to $B$.

\begin{lem}
For $t\in\R$, define $(j_tf)=f\otimes\delta_t$, with $f\in\calS(\R^{n-1})$. Then for $t\geq s$, $\langle j_t f,j_sg\rangle_{L_2(\R^n)}=\frac{1}{2}\left\langle f,\left(\Delta_{\R^n}+m^2\right)^{-1/2}\ee^{-(t-s)\sqrt{\Delta+m^2}}g\right\rangle_{L^2(\R^n)}$.
\end{lem}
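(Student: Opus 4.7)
The plan is to reduce the pairing to a one-variable contour integral via Fourier transform, exactly in the spirit of the preceding computation (the special case $t=s=0$). Note that $j_t f$ lies in $H^{-1}(\R^n)$ rather than $L^2(\R^n)$, so the bracket $\langle\,\cdot\,,\,\cdot\,\rangle_{L^2(\R^n)}$ on the left has to be interpreted as the $H^{-1}(\R^n)$-pairing $\langle j_t f, C\, j_s g\rangle$ using the covariance kernel $C=(\Delta+m^2)^{-1}$, just as in the preceding lemma. (The operator $\Delta$ appearing on the right is then the Laplacian on $\R^{n-1}$; the symbol $\Delta_{\R^n}$ in the statement should be read as $\Delta_{\R^{n-1}}$.)

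First I would compute the Fourier transform of $j_t f = f\otimes \delta_t$: writing a point of $\R^n$ as $(\vec x,\tau)\in\R^{n-1}\times\R$, one obtains $\widehat{j_t f}(\vec\xi,\xi_n)=(2\pi)^{-1/2}\,\widehat f(\vec\xi)\,\ee^{-\I\xi_n t}$. Inserting this into the $H^{-1}$-pairing and using Plancherel gives
\[
\langle j_t f, j_s g\rangle \;=\; \frac{1}{2\pi}\int_{\R^{n-1}}\overline{\widehat f(\vec\xi)}\,\widehat g(\vec\xi)\left(\int_\R\frac{\ee^{\I\xi_n(t-s)}}{\vec\xi^{\,2}+\xi_n^{2}+m^2}\,\dd\xi_n\right)\dd\vec\xi.
\]

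Second, I would evaluate the inner integral by the residue theorem. Setting $\mu(\vec\xi):=\sqrt{\vec\xi^{\,2}+m^2}$, the denominator factors as $(\xi_n-\I\mu)(\xi_n+\I\mu)$. Since $t-s\geq 0$ the integrand $\ee^{\I\xi_n(t-s)}/(\xi_n^2+\mu^2)$ decays in the upper half plane; closing the contour there picks up only the pole at $\xi_n=\I\mu$, with residue $\ee^{-\mu(t-s)}/(2\I\mu)$, yielding $\int_\R \ee^{\I\xi_n(t-s)}/(\xi_n^{2}+\mu^2)\,\dd\xi_n = (\pi/\mu)\,\ee^{-\mu(t-s)}$. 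This is precisely the place where the hypothesis $t\geq s$ is used; for $t<s$ one would close downward and get the same expression with $|t-s|$.

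Finally, substituting back gives
\[
\langle j_t f, j_s g\rangle \;=\; \frac{1}{2}\int_{\R^{n-1}}\overline{\widehat f(\vec\xi)}\,\widehat g(\vec\xi)\,\frac{\ee^{-(t-s)\mu(\vec\xi)}}{\mu(\vec\xi)}\,\dd\vec\xi,
\]
and I would recognize $\mu(\vec\xi)^{-1}\ee^{-(t-s)\mu(\vec\xi)}$ as the Fourier multiplier of $(\Delta_{\R^{n-1}}+m^2)^{-1/2}\ee^{-(t-s)\sqrt{\Delta_{\R^{n-1}}+m^2}}$ via the spectral calculus on $L^2(\R^{n-1})$, obtaining the claimed identity. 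The main obstacle is not the calculation itself, which is essentially a single residue, but rather pinning down the correct interpretation of the left-hand side; once that is fixed to match the $t=s=0$ computation just done in the text, everything is forced.
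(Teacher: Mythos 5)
Your proposal is correct and follows exactly the route the paper itself takes for the special case $t=s=0$ immediately before the lemma (Fourier transform of $f\otimes\delta_t$, reduction to the one-variable integral $\int_\R \ee^{\I\xi_n(t-s)}(\xi_n^2+\mu(\vec\xi)^2)^{-1}\dd\xi_n$, evaluated by closing the contour in the upper half-plane, which is where $t\geq s$ enters); the paper states the lemma without proof, and your residue computation is the evident extension. Your reading of the left-hand side as the covariance pairing $\langle j_tf, C\,j_sg\rangle$ and of $\Delta_{\R^n}$ as $\Delta_{\R^{n-1}}$ correctly fixes what are clearly typos in the statement.
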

Let $\nu$ be the Gaussian measure on $\calS'(\R^{n-1})$ whose covariance is $B$. Denote by $H^{-1/2}(\R^{n-1}):=K_{B(\nu)}$. Then we know $L^2(\calS'(\R^{n-1}),\nu)\cong \Gamma(H^{-1/2}(\R^{n-1}))$. Given an operator $A$ on $\calH$, one can define an operator $\dd\Gamma(A)$ on $\Gamma(\calH)$ as follows: on $Sym^n(\calH)$ we get $\dd\Gamma(A)=A\otimes I\otimes\dotsm\otimes I+I\otimes A\otimes\dotsm \otimes I+...$, and on $Sym^0(\calH)=\mathbb{C}$ we get $\dd\Gamma(A)=0$. If we identify $\calH$ with $L^2(\calS'(\R^{n-1}),\nu)$ or $\Gamma(H^{-1/2}(\R^{n-1}))$, then $$\dd\Gamma(\sqrt{\Delta_{\R^n}+m^2}).$$

\section{QFT as operator valued distribution}

The motivation of this section is to get a better understanding of relativistic quantum meachanics. Recall the data for a quantum mechanical system:
\begin{itemize}
\item{Hilbert space of states $\calH$ (e.g. $L^2(\R^n)$)
}
\item{Obsrevables, which are represented by self adjoint operatos on $\calH$, 
}
\item{
``symmetries'', which are unitary representations on $\calH$, and $1$-parameter group of symmetries, leading to specific observables (e.g. time translation $\rightsquigarrow$ Hamiltonian of the system).
}
\item{Dynamics is controlled by the Schr\"odinger equation $\I\hbar\frac{\partial\psi}{\partial t}=\widehat{H}\psi$.
}
\end{itemize}

\subsection{Relativistic quantum mechanics}
In relativistic quantum mechanics we want to have unitary representation of the \textsf{Poincar\'e group} $\calP$, which is the group of all ``space-time'' symmetries. Recall that Minkowski space-time is given by $\mathbb{M}^n=\R^{1,n-1}$, where we can have coordinates in position space (such as ($t,\vec{x}$)) or in momentum space (such as ($\xi_0,\vec{\xi}$)). Denote by $\calL$ the \textsf{Lorentz group}, which is the set of all linear isometries of $\mathbb{M}^n$, i.e. $\{(\Lambda_{ij})\mid \Lambda^T g\Lambda =g\}$, thus for $\Lambda\in \calL$ we have $\det\Lambda\in \{\pm1\}$. Moreover, we can write $\calL$ as a union of subspaces:
$$\calL=\calL^{\uparrow}_+\cup\calL_1^\uparrow\cup\calL^\downarrow_+\cup\calL^\downarrow_-,$$
where the label $\uparrow$ ($\downarrow$) means the determinant is $+1$ ($-1$), and the label $+$ ($-$) means $\Lambda_{00}>0$ ($<0$). Note that $I\in \calL_+^\uparrow$, which we call the restricted Lorentz group. We define the Poincar\'e group by 
\[
\calP=\{ T(\Lambda,a)\mid \Lambda\in \calL,a\in\R^n\},
\]
where $T(\Lambda,a)(x)=\Lambda x+a$. Thus $\calP=\calL\ltimes \R^n$. We can write $\calP$ as the union of subspaces in the same way as for $\calL$. We call $\calP^\uparrow_+$ the restricted Poincar\'e group. We want to have a projective unitary representation of $\calP^\uparrow_+$. 

\subsubsection{Bergmann's construction ($n=4$)}
In this construction, the Projective unitary representation of $\calP^\uparrow_+$ come from unitary representation of $\widetilde{\calP}^\uparrow_+$, which is the universal cover of $\calP^\uparrow_+$. In fact $SL(2,\mathbb{C})$ is the universal cover of $\calL^\uparrow_+$. Hence, in this case $G=\widetilde{\calP}^\uparrow_+=SL(2,\mathbb{C})\ltimes\R^4$.

\subsubsection{Wigner's construction}
Take $p\in\R^4$, and let $H_p$ be the stabilizer of $p$ by the action of $SL(2,\mathbb{C})$. Moreover, take a unitary irreducible representation $\calH_{\sigma,p}$ of $H_p$. One can then use the Mackey machine: choose a $G$ invariant measure on $G/H_p$ and define the Hilbet space $\calH$ to be the $\calH_{\sigma,p}$ valued functions on $G/H_p$ anf use the invariant measure to define an inner product.

\begin{prop}[Wigner]
$\calH$ is an irreducible unitary representation of $G$. Moreover, all irreducible unitary representations of $G$ arise this way.
\end{prop}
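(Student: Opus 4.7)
The plan is to apply the Mackey machine for semidirect products $G = N \rtimes K$ with $N$ abelian, specialized to $N = \R^4$ and $K = SL(2,\mathbb{C})$. First I would identify the Pontryagin dual $\widehat{N} \cong \R^4$ (momentum space), on which $K$ acts by the contragredient of its action on position space. The preliminary step is to decompose $\widehat{N}$ into $K$-orbits: the mass hyperboloids $p^2 = m^2$ (split by sign of $p_0$), the forward and backward light cones $p^2 = 0$ (again split), the tachyonic hyperboloids $p^2 < 0$, and the origin; for each orbit the stabilizer $H_p$ is the corresponding little group ($SU(2)$, $\widetilde{E(2)}$, $SL(2,\R)$, or all of $K$).

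For the first assertion, given an irreducible unitary $(\sigma, \calH_{\sigma,p})$ of $H_p$, the plan is to define the character $\chi_p : N \to U(1)$ by $\chi_p(a) = \ee^{\I \langle p, a\rangle}$ and form the representation $\sigma \otimes \chi_p$ of the subgroup $H_p \ltimes N \subset G$. The Hilbert space $\calH$ constructed by Wigner's recipe is then canonically the induced representation $\mathrm{Ind}_{H_p \ltimes N}^{G}(\sigma \otimes \chi_p)$: upon choosing a $K$-quasi-invariant measure on $G/H_p \cong \calO_p$, the space of $\calH_{\sigma,p}$-valued $L^2$-sections carries a unitary $G$-action in which $N$ acts multiplicatively by $\chi_{\Lambda \cdot p}$ fiberwise and $K$ acts by translation on $G/H_p$ twisted by a Radon--Nikodym cocycle. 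Irreducibility then follows from Mackey's criterion: any bounded intertwiner commutes with the $N$-action, hence with the projection-valued measure on $\widehat{N}$; it therefore lives over the single orbit $\calO_p$, whereupon equivariance and the irreducibility of $\sigma$ force it to be a scalar by Schur's lemma.

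For the converse, I would invoke the Stone--Naimark--Ambrose--Godement theorem to write the restriction to the abelian subgroup as $\pi|_N = \int^{\oplus}_{\widehat{N}} \chi \, \dd E(\chi)$ for a unique projection-valued spectral measure $E$ on $\widehat{N}$. Conjugation by $\pi(\Lambda)$ gives the covariance $\pi(\Lambda) E(B) \pi(\Lambda)^{-1} = E(\Lambda \cdot B)$, so $E$ is a $K$-equivariant system of imprimitivity. Irreducibility of $\pi$ forces $E$ to be concentrated on a single $K$-orbit $\calO_p$ (otherwise a measurable $K$-invariant partition of $\widehat{N}$ would produce non-trivial commuting projections via $E$). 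Mackey's imprimitivity theorem then identifies systems of imprimitivity based on the transitive $G$-space $G/H_p$ with unitary representations of $H_p$, so $\pi$ is unitarily equivalent to $\mathrm{Ind}_{H_p \ltimes N}^{G}(\sigma \otimes \chi_p)$ for some unitary $\sigma$ of $H_p$; irreducibility of $\pi$ is equivalent, via the same intertwiner count, to irreducibility of $\sigma$, completing the classification.

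The hard part is the imprimitivity theorem itself, whose proof requires constructing an explicit unitary intertwiner between the abstractly given system of imprimitivity and the canonical one on $L^2(G/H_p, \calH_{\sigma,p})$; this relies on disintegration of measures and on measurable cross-sections $G/H_p \to G$. The key regularity hypothesis---that the orbit space $\widehat{N}/K$ is a standard Borel space, equivalently that every $K$-orbit is locally closed---is precisely what fails for pathological group actions but holds here because the Lorentz orbits in $\R^4$ are semi-algebraic submanifolds; this is what makes the Mackey machine applicable to $\widetilde{\calP}^\uparrow_+$ and yields the complete physical classification by mass and spin (or helicity in the massless case).
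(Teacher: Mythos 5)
Your proposal is correct and follows exactly the route the paper itself sketches but does not carry out: the Mackey machine for the semidirect product $SL(2,\mathbb{C})\ltimes\R^4$, with the orbit decomposition of momentum space, induction from the little groups, the SNAG theorem plus the covariant spectral measure for the converse, and the imprimitivity theorem together with the regularity (locally closed orbits) hypothesis doing the real work. Since the paper states the proposition without proof, your outline is a faithful and essentially complete account of the standard argument it is implicitly relying on.
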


\begin{rem}
$H_p$ can have $(2s+1)$-dimensional irreducible representations. Here $s\in\{0,\frac{1}{2},1\}$ represents the ``spin'' of the particle.
\end{rem}

Assume $s=0$. We start with the trivial representation, which is a $1$-dimensional representation of $H_p$. Consider the sets
\begin{align*}
X_m^+&=\{\xi^1-m^2=0\mid \xi_0>0\}\\
X_m^-&=\{\xi^1-m^2=0\mid \xi_0<0\}
\end{align*}
and write $X_m=X_m^+\cup X_m^-$ and $X=\bigcup_{m\geq 0}X_m$. Take $p=(m,0,0,0)$. Then we have $G/H_p=X_m^+$. We want to construct an invariant measure on $X_m^+$. Let $f$ be a positive function on $(0,\infty)$. Then $f(\xi^2)\dd\xi$ is an invariant measure on $X$. We would like to have an invariant measure of the form $\delta(\xi^1-m^2)\dd\xi$. We define $\phi\colon (0,\infty)\times \R^3\to X^+_m$, $(y,\vec{\xi})\mapsto (\sqrt{y+\vert\vec{\xi\vert^2,\vec{\xi}}})$. Then $\phi^*(f(\xi^2)\dd\xi) =\frac{f(y)\dd y\dd\vec{\xi}}{\sqrt{y^2+\vert\vec{\xi}\vert^2}}$. We want to have the pushforward of $\delta_m=\frac{\dd y\dd\vec{\xi}}{\sqrt{y^1+\vec{\xi}\vert^2}}$ to be our measure on $X_m^+$. More precisely, define $\alpha\colon \R^3\to X_m^+$ by $\alpha(\vec{\xi})=(\sqrt{m^2+\vert\vec{\xi}\vert^2},\vec{\xi})$. As an invariant measure, we want the pushforward of $\frac{1}{2\sqrt{m^2+\vert\vec{\xi}\vert^2}}\dd\vec{\xi}$ on $\R^3$ to $X_m^+$. Wigner's theorem gives us $\calH=L^2(X_m^+,\mu_m)\cong L^2(\R^3,\nu)$, where $\frac{\dd\nu}{\dd\vec{\xi}}=\frac{1}{2\sqrt{m^2+\vert\vec{\xi}\vert^2}}$. The position operator is then given by $\frac{1}{2}(\Delta+m^2)^{-1/2}$. One can summarize the result by saying that the Hilbert space for spin zero particles is given by $L^2(\R^3,\nu)\cong H^{-1/2}(\R^3)$.

\subsection{Garding-Wightman formulation of QFT}
We want to give the axioms of the so-called \textsf{Garding-Wightman} formulation of QFT. We have the following axioms:
\begin{enumerate}[(GW1)]
\item{We have a Hilbet space $\calH$ , a vaccum state $\Omega\in\calH$, and a unitary representation $\calP_+^\uparrow$ on $\calH$.
}
\item{We have a field operator $\Phi\colon \calS(\R^+)\to \text{Operators on $\calH$}$ together with a dense subspace $D$ of $\calH$ such that
\begin{enumerate}
\item{$\Omega\in D$}
\item{$D\subseteq D(\Phi(f))$ for all $f$,}
\item{$f\mapsto \Phi(f)\mid_D$ is linear,}
\item{for all $\Omega_1,\Omega_2\in D$, the assignment $f\mapsto \langle \Phi(f)\Omega_1,\Omega_2\rangle$ is a Schwarz distribution (regularity),}
\item{$\Phi(f)^*=\Phi(\bar f)$.}
\end{enumerate}
}
\item{(Covariance) We have 
\begin{enumerate}
\item{$U(a,\Lambda)D\subseteq D$, where $U(a,\Lambda)$ is the unitary representation of $T(a,\Lambda)\in \calP_+^\uparrow$ on $\calH$,
}
\item{$U(a,\Lambda)\cdot \Phi(f)\cdot U(a,\Lambda)^{-1}=\Phi(T(a,\Lambda)f)$.
}
\end{enumerate}
}
\item{(Spectrum)
Since $\R^4$ acts unitarly on $\calH$ via $U(a,\Lambda)$, we can take $P_1,...,P_4$ to be the infinitesimal generators of this action. One can show that $P_1,...,P_4$ are essentially self adjoint.
The axio is then given by: The joint spectrum of $(P_1,...,P_4)$ lies in $X^+=\{\xi^2\geq 0\mid \xi_0>0\}$, where physically $\xi^2=E^2-\vec{p}^2$.
}
\item{(Locality)
If $f$ and $g$ have space-like disjoint support, then $[\Phi(f),\Phi(g)]=0$.
}
\end{enumerate}

\begin{rem}
By the axioms, one can show that the vaccum is unique: If $U(a,\Lambda)\Omega'=\Omega'$ for all $T(a,\Lambda)\in\calP_+^\uparrow$, then $\Omega'=c\Omega$, where $c\in\mathbb{C}$.
\end{rem}
Given $f_1,...,f_k\in\calS(\R^4)$, we can define $(f_1,...,f_k)\mapsto \langle\Phi(f_i)\dotsm \Phi(f_k)\Omega,\Omega\rangle$. By (GW2) this assignment is a distribution in $\calS'(\R^4)$, i.e. 
$$\langle\Phi(f_1)\dotsm \Phi(f_k)\Omega,\Omega\rangle=W_k(f_1\otimes\dotsm \otimes f_k)=\int_{\R^4}W_k(x_1,...,x_k)f_1(x_1)\dotsm f_k(x_k)\dd x_1\dotsm \dd x_k.$$
\begin{defn}[Wightman distribution]
$W_k(x_1,...,x_k)$ are called \textsf{Wightman distribution}.
\end{defn}

We can now formulate the Wightman axioms:

\begin{enumerate}[(W1)]
\item{$W_k$ are $\calP_+^\uparrow$-invariant,}
\item{If $f_1\in\calS(\R^4),...,f_k\in\calS(\R^{4k})$, then $\sum_{i,j=0}^kW_{i+j}(\bar f_i\otimes f_j)\geq 0$.
}
\item{(Locality)
$W_k(x_1,...,x_j,x_{j+1},...,x_k)=W_k(x_1,...,x_{j+1},x_j,...,x_k)$, whenever $x_j$ and $x_{j+1}$ are space-like seperated.

We recall the Euclidean setting: We have a measure $\ee^{-S(\phi)}\mathscr{D}(\phi)$ on $\calS'(\R^4)$, a two point function $C(f,g)$ and for $f_1,...,f_k\in\calS(\R^4)$ we have a map $(f_1,...,f_k)\mapsto \int \phi(f_1)\dotsm \phi(f_k)\ee^{-S(\phi)}\mathscr{D}(\phi)$. We would like to know how we can relate the Minkowski to the Euclidean setting.

We can observe that $W_k(x_1,...,x_k)=\omega_k(x_1-x_2,...,x_{k-1}-x_k)$ because of translation invariance.
}
\item{(Spectral condition)
The Fourier transform of $\widehat{\omega}_k$ of $\omega_k$ has support in $\underbrace{X^+\times\dotsm\times X^+}_{k}$. 
}
\end{enumerate}

\begin{rem}
There is one more Wightman axiom, called ``Cluster property'' (W6), which is related to uniqueness of vaccum.
\end{rem}

\begin{thm}[Wightman reconstruction theorem]
If we have distributions $(W_k)_k$ satisfying (W1)-(W6), then there is a ``unique'' GW field theory, whose Wightman distributions are $W_k$. 
\end{thm}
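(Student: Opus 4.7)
The plan is to carry out a GNS-type construction on the Borchers--Uhlmann tensor algebra, using the sequence $(W_k)_k$ as a positive functional. Concretely, I would first form the algebra $\mathcal{B}=\bigoplus_{k\geq 0}\mathcal{S}(\mathbb{R}^{4k})$ (finite sums), equipped with the tensor product as multiplication and the involution $(f_1\otimes\cdots\otimes f_k)^*(x_1,\dots,x_k)=\overline{f_k(x_k)\cdots f_1(x_1)}$. Define a linear functional $\mathsf{W}\colon\mathcal{B}\to\mathbb{C}$ by $\mathsf{W}(f_1\otimes\cdots\otimes f_k)=W_k(f_1\otimes\cdots\otimes f_k)$, with $\mathsf{W}(1)=1$. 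Axiom (W2) is exactly the statement that $\mathsf{W}(F^*F)\geq 0$, so $\langle F,G\rangle:=\mathsf{W}(F^*G)$ is a positive semi-definite sesquilinear form. Setting $N=\{F\in\mathcal{B}\mid\mathsf{W}(F^*F)=0\}$ and using Cauchy--Schwarz to see that $N$ is a left ideal, I let $\calH$ be the Hilbert space completion of $\mathcal{B}/N$, and take $\Omega=[1]$.

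Next I build the remaining structure. The field operator is defined on the dense subspace $D=\mathcal{B}/N$ by $\Phi(f)[F]=[f\otimes F]$, which is well-defined since $N$ is a left ideal; linearity and the reality condition $\Phi(f)^*\supseteq\Phi(\bar f)$ are immediate from the definitions, while the Schwartz-distribution regularity (GW2(d)) follows from $\langle\Phi(f)[F],[G]\rangle=\mathsf{W}(G^*\otimes f\otimes F)$ together with the assumed continuity of $W_k$. Poincar\'e invariance (W1) provides, for each $T(a,\Lambda)\in\calP_+^\uparrow$, a well-defined linear map on $\mathcal{B}/N$ given by the natural action on each $\mathcal{S}(\mathbb{R}^{4k})$; invariance of $\mathsf{W}$ makes this an isometry, and it extends uniquely to a unitary $U(a,\Lambda)$ on $\calH$ satisfying the covariance relation $U(a,\Lambda)\Phi(f)U(a,\Lambda)^{-1}=\Phi(T(a,\Lambda)f)$ on $D$, with $U(a,\Lambda)\Omega=\Omega$. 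Strong continuity of the translations $a\mapsto U(a,I)$ reduces, via matrix elements $\langle U(a,I)[F],[G]\rangle$, to continuity of the translated Wightman distributions in $\calS'$.

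The spectral condition (GW4) is the main technical obstacle. I would deduce it from (W5) as follows: Stone's theorem applied to the continuous unitary group $U(a,I)$ produces commuting self-adjoint generators $P_0,\dots,P_3$, and by spectral calculus the joint spectral measure $E$ on $\mathbb{R}^4$ is determined through the formula $\langle E(B)[F],[G]\rangle=\int_B\widehat{\mu_{F,G}}$, where $\mu_{F,G}$ is the Fourier transform of $a\mapsto\langle U(a,I)[F],[G]\rangle$. Translation invariance reduces these matrix elements to Fourier transforms of the $\omega_k$ up to total translations, so (W5) — support of $\widehat{\omega_k}$ in $X^+\times\cdots\times X^+$ — forces $E$ to be supported in $X^+$. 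The locality axiom (GW5) follows directly from (W3): for space-like separated supports, the symmetry of $W_k$ under the adjacent transposition translates into the vanishing of $\langle[\Phi(f),\Phi(g)]F,G\rangle$ on all of $D$, hence the commutator vanishes strongly there. Uniqueness of the vacuum (up to scalar) uses (W6) in the standard way: the cluster property implies that the projector onto translation invariants equals $|\Omega\rangle\langle\Omega|$.

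Uniqueness of the reconstructed theory is the cleanest part. Given any other GW system $(\calH',\Omega',U',\Phi',D')$ realizing the same $W_k$, the map $[f_1\otimes\cdots\otimes f_k]\mapsto\Phi'(f_1)\cdots\Phi'(f_k)\Omega'$ is isometric by the very definition of Wightman functions, hence extends to an isometry $\calH\to\overline{\mathrm{span}}\{\Phi'(f_1)\cdots\Phi'(f_k)\Omega'\}\subseteq\calH'$. Cyclicity of $\Omega'$ (which one must either assume or build in) makes it a unitary intertwining all the structure; if $\Omega'$ is not cyclic, one still obtains a canonical embedding onto the cyclic subspace it generates, which is the standard sense in which the reconstruction is ``unique''. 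The most delicate single step in all of this remains the passage from support properties of tempered distributions in $\calS'(\mathbb{R}^{4(k-1)})$ to the joint spectral support of the unbounded operators $(P_0,\dots,P_3)$ on $\calH$; here one has to justify carefully the use of the polarization identity and Stone's theorem before invoking (W5).
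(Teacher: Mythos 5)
Your proposal is correct and follows essentially the same route as the paper, which only sketches the GNS construction on the Borchers--Uhlmann algebra $\bigoplus_k\calS(\R^{4k})$ with inner product $(f,g)=\sum_{i,j}W_{i+j}(\bar f_i\otimes g_j)$, null space $N$, vacuum $\Omega=(1,0,\dots)$, and field operator given by tensoring. You supply the details the paper omits (covariance, the spectral condition via Stone's theorem and the support of $\widehat{\omega}_k$, locality, cluster/uniqueness of the vacuum, and the intertwining unitary for uniqueness), and you correctly flag cyclicity of the vacuum as the hypothesis needed for the ``uniqueness'' claim.
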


\begin{proof}[Sketch of the proof]
Let $f\in V=\bigoplus_{k\geq 0}\calS(\R^4)$ (for $k=0$ we get $V=\mathbb{C}$), such that $f=(f_0,f_1,...,f_j,...)$, where everything is zero except for finitely many $j$s. Moreover, let $(f,g)=\sum_{i,j}W_{i+j}(\bar f_i\otimes g_j)$ and $N=\{f\in V\mid (f,f)=0\}$. Let $\calH$ be the completion of $V/N$ and $\Omega=(1,0,0,...)$, and for $h\in\calS(\R^4)$, we have $\Phi\colon V\to V$, $\Phi(h)(f_0,f_1,...)=(0,f_0\otimes h,f_1\otimes h,...)$.
\end{proof}

\subsubsection{Wick rotation}
Consider the distribution $W_k(x_1,...,x_k)$ for $x_1,..,x_k\in\mathbb{M}^4$, where $x_1=(t_1,\vec{x}_1),...,x_k=(t_k,\vec{x}_k)$ such that $x_i^2=t_i^2-\vec{x}_i^2$. Formally, we want to define $W_k(\I t_1,\vec{x}_1,\I t_2,\vec{x}_2,...,\I t_k,\vec{x}_k)$, with $x_i^2=-i_i^2-\vec{x}_i^2$. We want to do this by considering complex variables $z_i=x_i+\I y_i$ and pass from $W_k(x_1,...,x_k)$ to $W_k(z_1,...,z_k)$. First, we canalytically continue $W_k(x_1,...,x_k)$ to a holomorphic function. Next, we think of $W_k(x_1,...,x_k)$ as a boundary value of an analytic function.

\begin{defn}[Boundary value]
Let $\phi$ be a distribution in $\calS'(\R^n)$. Let $F$ be a holomorphic function. We say $\phi$ is a \textsf{boundary value} of $F$ if for fixed $y_0\in\R^n$, we have 
\[
\phi(f)=\lim_{t\to 0}\int_{\R^n}F(x+\I ty_0)f(x)\dd x,
\]
or equivalently we say $F$ is an analytic continuation of $\phi$.
\end{defn}

\begin{rem}
It is not clear whether all $\phi\in\calS'(\R^n)$ have analytic continuations. In fact, let $T\in\calS'(\R^n)$ such that $supp(T)\subseteq$ some cone $C$, where $C$ is the intersection of two hyperplanes. Then $T$ is a boundary value of an analytic function on $\R^n-\I C^*$, where $C^*$ is the dual cone.
\end{rem}

\begin{cor}
Recall $\widehat{\omega}_k$ has support in $X^+\times\dotsm\times X^+$. Then $\omega_k$ can be analytically continued to a holomorphic function $\omega_k(z_1,...,z_k)$ on $\underbrace{(\R^4-\I X^+)\times\dotsm\times (\R^4-\I X^+)}_{k-1}$.
\end{cor}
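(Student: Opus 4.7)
The plan is to apply the general Paley--Wiener--Schwartz principle recorded in the remark just before the corollary: a tempered distribution $T\in\mathcal{S}'(\mathbb{R}^n)$ with support in a closed convex cone $C$ is the boundary value of a function holomorphic on the tube $\mathbb{R}^n-\I C^*$, where $C^*$ is the Euclidean dual cone. By translation invariance (W1), $W_k$ factors through the $k-1$ difference variables as a tempered distribution $\omega_k$, and the spectral condition (W4) says $\widehat{\omega}_k$ is supported in $C=(X^+)^{k-1}$. Hence the corollary is immediate once $C^*$ is identified.

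The first concrete step is to verify $(X^+)^*=\overline{X^+}$. Given $y\in\mathbb{R}^4$, requiring $\langle y,\xi\rangle\geq 0$ for every $\xi\in X^+=\{\xi:\xi_0>|\vec\xi|\}$ reduces, by choosing $\xi$ with $\vec\xi$ anti-parallel to $\vec y$ and $|\vec\xi|\nearrow\xi_0$, to $y_0\geq|\vec y|$, i.e.\ $y\in\overline{X^+}$; the converse is Cauchy--Schwarz. Since cone duality distributes over Cartesian products, $C^*=(\overline{X^+})^{k-1}$, whose interior is $(X^+)^{k-1}$, and the tube becomes $(\mathbb{R}^4-\I X^+)^{k-1}$, matching the statement.

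The second step is to spell out the continuation via the Fourier representation
\[
\omega_k(z_1,\ldots,z_{k-1}) \;=\; \frac{1}{(2\pi)^{2(k-1)}}\bigl\langle\widehat{\omega}_k(\xi),\,\chi(\xi)\,e^{\I\sum_j \xi_j\cdot z_j}\bigr\rangle,\qquad z_j=x_j-\I y_j,\;y_j\in X^+,
\]
where $\chi$ is a smooth cutoff equal to $1$ on a conic neighbourhood of $C$ and vanishing outside a slightly larger one. On $\mathrm{supp}(\chi)$ the factor $e^{-\sum_j \xi_j\cdot y_j}$ decays since $\xi_j\cdot y_j\geq 0$ by the duality just computed, so $\chi(\xi)\,e^{\I\xi\cdot z}\in\mathcal{S}((\mathbb{R}^4)^{k-1})$ and the pairing is well defined. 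Holomorphy in $z$ follows by differentiating under the pairing, and the boundary values as $y_j\searrow 0$ reproduce $\omega_k$ by Fourier inversion in $\mathcal{S}'$.

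The main obstacle is the uniform control of the Schwartz seminorms of $\xi\mapsto\chi(\xi)\,e^{\I\xi\cdot z}$ as $z$ varies over compact subsets of $(\mathbb{R}^4-\I X^+)^{k-1}$: one must show the exponential weight keeps every seminorm bounded so that the tempered pairing converges absolutely and one may exchange $z$-derivatives with the pairing via dominated convergence and Morera. This is routine but it is the only analytical content beyond the dual-cone calculation; invoking the Paley--Wiener--Schwartz statement from the remark absorbs it, and the corollary reduces to the dual-cone identification together with the product-cone structure.
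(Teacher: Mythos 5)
Your proof is correct and follows exactly the route the paper intends: the corollary is presented as an immediate consequence of the preceding remark on boundary values of cone-supported distributions, and you supply the two ingredients left implicit there, namely the self-duality $(X^+)^*=\overline{X^+}$ (hence $C^*=(\overline{X^+})^{k-1}$ for the product cone) and the convergence of the Fourier--Laplace integral on the resulting tube. The only blemish is a sign-bookkeeping slip: with $z_j=x_j-\I y_j$ the exponent $\I\xi_j\cdot z_j$ produces the factor $\ee^{+\xi_j\cdot y_j}$, so to obtain the damping $\ee^{-\xi_j\cdot y_j}$ you must pair against $\ee^{-\I\xi\cdot z}$ (or equivalently take $y_j\in -X^+$); this is a convention adjustment, not a gap in the argument.
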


Now we can observe that the $W_k$ have analytic continuation to $W_k(z_1,...,z_k)$ on $\calT_k=\{(z_1,...,z_k)\mid \text{Im}(z_{i+1}-z_i)\in X^+\}$.

\subsubsection{Schwinger functions}

We want to construct $W_k(\I t_1,\vec{x}_1,...,\I t_k,\vec{x}_k)$ having $W_k(x_1,...,x_k)$ with $x_j=(t_j,\vec{x}_j)$. The problem is that all the points $(\I t_1,\vec{x}_i),...,(\I t_k,\vec{x}_k)\not\in \calT_k$.
\begin{exe}
Take $k=1$ and show $(\I t,\I)\in\calT_1$ if and only if $t_1>0$.
\end{exe}

Hence, we want to enlarge $\calT_k$ and extend $W_k$ to this bigger set. Take 
\[
\calT_k^{e}=\{(\Lambda z_1,...,\Lambda z_k)\mid \Lambda\in\calL, \det \Lambda=1\}.
\]
E.g. we had $\Lambda=-I$ before. Then we can extend $W_k(w_1,...,w_k)=W_k(z_1,...,z_k)$ if $(w_1,...,w_k)=(\Lambda z_1,...,\Lambda z_k)$ for some $\Lambda\in\calL$.

\begin{lem}
It is possible to extend $W_k$ as before (using a lot of assumptions). 
\end{lem}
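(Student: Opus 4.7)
The plan is to extend $W_k$ from the forward tube $\calT_k$ to the \emph{extended tube} $\calT_k^e$ via the action of the complex Lorentz group $\calL(\mathbb{C})=\{\Lambda\in\GL(4,\mathbb{C})\mid\Lambda^T g\Lambda=g\}$ restricted to its identity component $\calL_+(\mathbb{C})$, and to check that the definition $W_k(\Lambda z_1,\ldots,\Lambda z_k):=W_k(z_1,\ldots,z_k)$ for $(z_1,\ldots,z_k)\in\calT_k$ and $\Lambda\in\calL_+(\mathbb{C})$ is unambiguous and holomorphic. This is the content of the Bargmann--Hall--Wightman theorem.

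First I would promote the real Lorentz invariance (W1) of $W_k$ to complex invariance by analytic continuation. Starting from the identity $W_k(\Lambda x_1,\ldots,\Lambda x_k)=W_k(x_1,\ldots,x_k)$ for $\Lambda\in\calL_+^\uparrow$ and $(x_1,\ldots,x_k)\in\R^{4k}$, both sides extend holomorphically in $(z_1,\ldots,z_k)$ to $\calT_k$ (using the corollary obtained from the spectral condition), so the identity theorem gives
\[
W_k(\Lambda z_1,\ldots,\Lambda z_k)=W_k(z_1,\ldots,z_k),\qquad (z_1,\ldots,z_k)\in\calT_k,\ \Lambda\in\calL_+^\uparrow,
\]
whenever the left-hand side makes sense, i.e.\ whenever $\Lambda\cdot(z_1,\ldots,z_k)\in\calT_k$. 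Next, for fixed $(z_1,\ldots,z_k)\in\calT_k$ one regards both sides as holomorphic functions of $\Lambda$ on the open set
\[
U_{(z_i)}=\{\Lambda\in\calL_+(\mathbb{C})\mid (\Lambda z_1,\ldots,\Lambda z_k)\in\calT_k\}.
\]
Since $\calL_+(\mathbb{C})$ is connected (in contrast with $\calL$), and since $U_{(z_i)}$ contains a real neighbourhood of the identity in $\calL_+^\uparrow$ on which the identity already holds, analytic continuation in $\Lambda$ yields the equality on each connected component of $U_{(z_i)}$ meeting $\calL_+^\uparrow$.

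With complex Lorentz invariance in hand, I define on $\calT_k^e$ the function $W_k(w_1,\ldots,w_k):=W_k(z_1,\ldots,z_k)$ whenever $(w_i)=(\Lambda z_i)$ with $(z_i)\in\calT_k$ and $\Lambda\in\calL_+(\mathbb{C})$. Well-definedness amounts to the following: if $(\Lambda z_1,\ldots,\Lambda z_k)=(\Lambda' z_1',\ldots,\Lambda' z_k')$ with both $(z_i),(z_i')\in\calT_k$, then setting $M=\Lambda^{-1}\Lambda'\in\calL_+(\mathbb{C})$ one has $z_i=Mz_i'$, so $M$ lies in $U_{(z_i')}$, and the complex Lorentz invariance forces $W_k(z_1,\ldots,z_k)=W_k(z_1',\ldots,z_k')$. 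Holomorphy of the extension on $\calT_k^e$ then follows because $\calT_k^e$ is covered by images of open sets of the form $V\times W\subseteq\calL_+(\mathbb{C})\times\calT_k$ under the holomorphic map $(\Lambda,z)\mapsto\Lambda z$, and local sections of this map exist by the implicit function theorem.

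The main obstacle is the well-definedness step, because it relies crucially on the connectedness of $\calL_+(\mathbb{C})$: this is what allows one to propagate the real Lorentz identity to all of $\calL_+(\mathbb{C})$ by the identity theorem, whereas the real Lorentz group $\calL$ has four disconnected components and no such argument is available. A subtler difficulty hides in the fact that $U_{(z_i)}$ need not be connected for arbitrary $(z_i)\in\calT_k$; the classical resolution is to show by an explicit geometric argument (the ``tube theorem'' of Bargmann--Hall--Wightman, using Jost points or the structure of $X^+$) that each connected component of $U_{(z_i)}$ does meet $\calL_+^\uparrow$, so the identity propagates everywhere. Once this is secured, the remaining verifications (covariance, compatibility with W1--W6, holomorphy) are straightforward consequences of the identity theorem and the invariance under translations.
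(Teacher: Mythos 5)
The paper states this lemma without proof --- it is in effect a citation of the Bargmann--Hall--Wightman theorem --- so there is no argument in the text to compare yours against. What you propose is exactly the standard route: (i) promote invariance of $W_k$ under the real restricted Lorentz group to invariance under the connected complex group $\calL_+(\mathbb{C})$ by analytic continuation in $\Lambda$, using that the analytically continued $W_k$ is holomorphic on the forward tube $\calT_k$ by the spectral condition; (ii) use that invariance to show that the assignment $W_k(\Lambda z_1,\ldots,\Lambda z_k):=W_k(z_1,\ldots,z_k)$ is unambiguous and holomorphic on $\calT_k^e$. Your reduction of well-definedness to the statement about $M=\Lambda^{-1}\Lambda'$ lying in $U_{(z_i')}$ is correct, and you correctly identify connectedness of $\calL_+(\mathbb{C})$ (as opposed to the four-component real group $\calL$) as the mechanism that makes $\Lambda=-I$ reachable from the identity.

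That said, as a proof this is incomplete at precisely the point you flag: everything rests on the geometric lemma that for $(z_i)\in\calT_k$ the identity $W_k(\Lambda z)=W_k(z)$ propagates to all of $U_{(z_i)}=\{\Lambda\in\calL_+(\mathbb{C})\mid \Lambda z\in\calT_k\}$, i.e.\ that every point of $U_{(z_i)}$ can be joined to the identity by a path staying in $U_{(z_i)}$ (equivalently, every component meets a neighbourhood of $\calL_+^\uparrow$, on which the identity theorem already forces the equality because $\calL_+^\uparrow$ is a totally real submanifold of maximal dimension). This connectedness statement, proved by factoring complex Lorentz transformations into one-parameter complex subgroups along which the tube is preserved, is the entire content of the Bargmann--Hall--Wightman theorem; asserting it is not proving the lemma. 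Two smaller points: Jost points are not what resolves this step --- they enter afterwards, in showing that $\calT_k^e$ contains real (and, after permutation, Euclidean) points, which is what the subsequent exercise and the definition of the Schwinger functions require; and you should note explicitly that the paper's $\calT_k^e$, literally read with real $\Lambda$ of determinant one, must be understood with $\Lambda\in\calL_+(\mathbb{C})$ (as you do), since otherwise the Euclidean points would not lie in it and the construction of $S_n$ would collapse.
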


Denote by $$\Sigma_k(\calT_k^{e})=\{(z_{\sigma(1)},...,z_{\sigma(n)})\mid \sigma\in\Sigma_k, (z_1,...,z_k)\in\calT_k^{e}\}$$
the permutation group of order $k$ on $\calT_k^{e}$.

\begin{defn}
$\calT_k^{p,e}:=\Sigma_n(\calT_k^{e})$.
\end{defn}

\begin{exe}
Show that the Euclidean points $\calE_n\subseteq \calT_n^{p,e}$, where $\calE_n\subseteq \mathbb{C}$ and $(z_1,...,z_n)\in\calE_n$ if and only if $z_j=(\I t_j,\vec{x}_j)$.
\end{exe}

\begin{rem}
In fact, $W_n(z_1,...,z_n)$ can be extended to an analytic function on $\calT_n^{p,e}$ (technical result).
\end{rem}

\begin{defn}
$$\widetilde{\calE}_n=\{(\underbrace{(t_1,\vec{x}_1)}_{y_1},...,\underbrace{(t_n,\vec{x}_n)}_{y_n})\mid (\I t_1,\vec{x}_1,...,\I t_n,\vec{x}_n)\in\calE_n\}.$$
\end{defn}

We call $(y_1,...,y_n)$ non-coincident if $y_k\not=y_\ell$ for all $k\not=\ell$. 

\begin{defn}[Schwinger function]
For a non.coincident Euclidean point $(y_1,...,y_n)$, we define $S_n(y_1,...,y_n)=W_n(\I t_1,\vec{x}_1,...,\I t_n,\vec{x}_n)$. We call $S_n$ \textsf{Schwinger functions}.
\end{defn}

\subsubsection{Properties of Schwinger functions}

For a free massive scalar field theory we can compute $W_2(x,y)$ explicitely. It is given by 
\[
W_2(x,y)=C_{W_2}\int_{\R^3}\frac{\ee^{-\I(\omega(\vec{\xi})(x_0-y_0)+\vec{\xi}(\vec{x}-\vec{y}))}}{\omega(\vec{\xi})}\dd\vec{\xi},
\]
where $\omega(\vec{\xi}):=\sqrt{m^2+\vec{\xi}^2}$ and $C_{W_2}$ some constant. Let $$W_2(t,\vec{x})=C_{W_2}\int_{\R^3}\frac{\ee^{\I t(\omega(\vec{\xi})+\vec{\xi}\cdot\vec{x})}}{\omega(\vec{\xi})}\dd\vec{\xi}.$$ Thus we get 
\begin{multline*}
S_2(y)=W_2(\I t,\vec{x})=C_{W_2}\int_{\R^3}\frac{\ee^{t\omega(\vec{\xi})\ee^{-\I \vec{x}\cdot\vec{x}}}}{\omega(\vec{\xi})}\dd\vec{\xi}\\=C_{W_2}\int_{\R^3}\ee^{-\I\vec{\xi}\cdot\vec{x}}\dd\vec{\xi}\int_0^\infty\frac{\ee^{-\I t\xi_0}}{\xi_0^2+m^2+\vec{\xi}^2}\dd\xi_0=C_{W_2}\int_{\R^4}\frac{\ee^{-\I y\xi}}{m^2+\xi^2}\dd\xi=G(y),
\end{multline*}
where $G$ is the Green's function. The Osterwalder-Schrader axioms can be reformulated with the Schwinger functions as follows:
\begin{enumerate}[(OS1)]
\item{$S(y_1,...,y_n)$ defines a distribution on $\calS_{\not=}(\R^{4n})$, where 
\[
\calS_{\not=}(\R^{4n})=\{f\in\calS(\R^{4n})\mid f(y_i-y_j)=0,\forall u\not=j\}.
\]
Moreover, $\overline{S_n(f)}=S_n(\overline{\theta f})$, where $\theta(t,\vec{x})=(-t,\vec{x})$, and if $h\in\calS((\R^4_+)^{n-1})$ we get 
\[
\vert S_n(h)(y_2-y_1,...,y_n-y_{n-1})\vert\leq \| h\|,
\]
where $\|\cdot\|$ is some norm on $\calS((\R^4_+)^{n-1})$.
}
\item{(Euclidean invariance)
}
\item{Let $f_n\in\calS((\R^4_+)^n)$. Then $\sum_{m,n}S_{n+m}(\overline{\theta f_n}\otimes f_m)\geq 0$.
}
\item{$S_n(y_{\sigma(1)},...,y_{\sigma(n)})=S_n(y_1,...,y_n)$ for all $\sigma\in \Sigma_n$.
}
\item{(Cluster property)
}
\end{enumerate}

\begin{thm}[Reconstruction theorem]
If we have $S_n(y_1,y_n)$ satisfying (OS1)-(OS5), then there is an unique Garding-Wightman theory.
\end{thm}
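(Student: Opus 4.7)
The plan is to invert the Wick rotation: from the Schwinger functions $S_n$ build a sequence of Wightman distributions $W_n$, and then appeal to the Wightman reconstruction theorem already stated in the excerpt to produce the Garding--Wightman theory. The whole program is thus a careful analytic continuation from Euclidean to Minkowski signature, with each OS axiom feeding into one Wightman axiom.

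First I would encode the regularity of OS1 as a Laplace-transform representation. Write $y_j = (t_j, \vec{x}_j)$ with $0 < t_1 < t_2 < \dots < t_n$ and set $\eta_j = t_{j+1} - t_j > 0$. The norm bound in OS1 on $\calS((\R^4_+)^{n-1})$ says that $S_n$, as a function of the differences, is the Laplace transform in the $\eta_j$ and Fourier transform in the $\vec{x}_{j+1}-\vec{x}_j$ of a tempered distribution supported in $(\R^4)^{n-1}_+$ in the dual (momentum) variables. This is exactly the statement that $S_n(y_2-y_1,\dots,y_n-y_{n-1})$ is the boundary value of a function $W_n(z_1,\dots,z_n)$ holomorphic on the forward tube $\calT_n = \{\mathrm{Im}(z_{j+1}-z_j)\in V^+\}$, with the correct polynomial growth at the boundary to give a tempered distribution $W_n\in\calS'(\R^{4n})$ when one approaches real Minkowski points. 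Taking the boundary value \emph{defines} the candidate Wightman distributions.

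Next I would verify each Wightman axiom in turn. Euclidean invariance (OS2) together with holomorphicity on $\calT_n$ and the edge-of-the-wedge theorem extends $W_n$ to the permuted extended tube $\calT_n^{p,e}$, and restriction to real Minkowski points yields Poincar\'e invariance (W1). Permutation symmetry (OS4) at non-coincident Euclidean points, combined with the analytic continuation through the spacelike real points in $\calT_n^{p,e}$, forces $W_n(\dots x_j,x_{j+1}\dots) = W_n(\dots x_{j+1},x_j\dots)$ whenever $x_j,x_{j+1}$ are spacelike separated, giving locality (W3 / W5 in the excerpt's numbering). The spectral condition (W4) is equivalent to the Laplace-transform support being in $X^+$, which is built into the Laplace representation constructed in the first step. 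The cluster property for $W_n$ is a direct translation of OS5 under the continuation, using translation invariance to send points to spacelike infinity. The central step is positivity: for $f = (f_0,f_1,\dots)\in V$, the quantity $\sum_{m,n} W_{n+m}(\bar f_n \otimes f_m)$ equals, by continuation back to Euclidean times and careful manipulation of the time reflection $\theta$, precisely $\sum_{m,n} S_{n+m}(\overline{\theta f_n}\otimes f_m)$, which is non-negative by OS3. This gives Wightman positivity (W2).

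Having produced $\{W_n\}$ satisfying (W1)--(W6), I apply the Wightman reconstruction theorem: form $V = \bigoplus_n \calS(\R^{4n})$, the semi-inner product $(f,g) = \sum_{i,j} W_{i+j}(\bar f_i\otimes g_j)$, divide by the null space $N$, complete to obtain $\calH$, set $\Omega = (1,0,0,\dots)$, and define $\Phi(h)$ and $U(a,\Lambda)$ on $V$ as in the sketch. The covariance, spectrum and locality axioms (GW1)--(GW5) for this data are immediate consequences of (W1)--(W4) and the corresponding OS axioms; uniqueness follows from the fact that the Wightman functions of the reconstructed theory are by construction the $W_n$ we started with, and these are uniquely determined by the $S_n$ via analytic continuation.

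The main obstacle will be the functional-analytic step of showing the boundary values exist as tempered distributions on $\R^{4n}$ with the right polynomial growth; this is where OS1's norm estimate on $\calS((\R^4_+)^{n-1})$ must be chosen sharply enough to control the Laplace inversion. The translation \textbf{OS3 $\Rightarrow$ W2} is the second delicate point: one must check that the time-reflection $\theta$ used in OS3 becomes, under $t\mapsto \I t$, complex conjugation of the Minkowski test functions in precisely the way needed by the Wightman positivity sum, so that the two positive cones are identified. Once those two technical results are secured, the rest is a bookkeeping exercise translating Euclidean symmetries and locality into their Minkowskian counterparts.
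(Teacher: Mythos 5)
The paper states this theorem without any proof, so there is nothing to compare against directly; your outline is the standard Osterwalder--Schrader reconstruction strategy (Laplace-transform representation of the $S_n$ in the time differences, analytic continuation to the forward tube, boundary values as candidate Wightman distributions, verification of (W1)--(W6) axiom by axiom, then the Wightman reconstruction theorem). In structure this is the right and essentially only known route, and your identification of the two delicate points --- temperedness of the boundary values, and the translation of reflection positivity into Wightman positivity --- is accurate.

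However, the first of those two points is not merely a technical obstacle to be ``secured''; it is a genuine gap that cannot be closed from the axioms as stated. The bound $\vert S_n(h)\vert\leq\Vert h\Vert$ for ``some norm'' on $\calS((\R^4_+)^{n-1})$ is a qualitative statement for each fixed $n$, and it does \emph{not} control the Laplace inversion uniformly enough to guarantee that the boundary values $W_n$ are tempered distributions, nor that the resulting $n$-point functions satisfy the continuity needed for the Wightman reconstruction. This is precisely the error in the original 1973 Osterwalder--Schrader argument; the corrected 1975 version replaces the qualitative bound by a \emph{linear growth condition}: the constants $\sigma_n$ in $\vert S_n(h)\vert\leq\sigma_n\Vert h\Vert_{s(n)}$ must grow at most like $C^n(n!)^\beta$ with the order $s(n)$ of the Schwartz norms growing at most linearly in $n$. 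Without such a quantitative hypothesis your Laplace-transform step produces analytic functions in the tube whose boundary behaviour you cannot bound polynomially, and the chain breaks before you ever reach (W1)--(W6). So either you must add the growth condition to (OS1) as a hypothesis, or you must accept a weaker conclusion (reconstruction of a theory whose Wightman functionals need not be tempered). The remaining steps of your outline --- edge-of-the-wedge for covariance, Jost points for locality, the support of the Laplace transform for the spectral condition, and the identification of the reflection-positive cone with the Wightman-positive cone under $t\mapsto\I t$ --- are correct as sketched, modulo the caveat that the positivity identity should be checked on the dense set of vectors generated by Euclidean test functions supported at strictly positive times rather than claimed for arbitrary Minkowski test functions.
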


\end{document}